\def\SUBMISSION{0}
\def\CONFERENCE{1}
\def\ARXIV{2}
\def\version{\ARXIV}  
\newcommand{\bigo}[1]{\ensuremath{\mathcal{O}(#1)}}
\newcommand{\true}{\ensuremath{\textsc{true}}}
\newcommand{\false}{\ensuremath{\textsc{false}}}
\newcommand{\nil}{\ensuremath{\textsc{null}}}
\newcommand{\Gtri}{\ensuremath{G_{\Delta}}}
\newcommand{\Connected}{\ensuremath{\textsc{Connected}}}
\newcommand{\Read}{\ensuremath{\textsc{Read}}}
\newcommand{\Write}{\ensuremath{\textsc{Write}}}
\newcommand{\Contract}{\ensuremath{\textsc{Contract}}}
\newcommand{\Expand}{\ensuremath{\textsc{Expand}}}
\newcommand{\Push}{\ensuremath{\textsc{Push}}}
\newcommand{\Pull}{\ensuremath{\textsc{Pull}}}
\newcommand{\Lock}{\ensuremath{\textsc{Lock}}}
\newcommand{\Unlock}{\ensuremath{\textsc{Unlock}}}
\newcommand{\numAmoebots}{\ensuremath{n}}
\newcommand{\capacity}{\ensuremath{\kappa}}
\newcommand{\demand}{\ensuremath{\delta}}
\newcommand{\tree}{\ensuremath{\mathcal{T}}}
\newcommand{\forest}{\ensuremath{\mathcal{F}}}
\newcommand{\alg}{\ensuremath{\mathcal{A}}}
\newcommand{\energyAlg}{\textsf{Energy-Sharing}}
\newcommand{\forestRepairAlg}{\textsf{Forest-Prune-Repair}}
\newcommand{\erosionAlg}{\textsf{Leader-Election-by-Erosion}}
\newcommand{\hexagonAlg}{\textsf{Hexagon-Formation}}
\newcommand{\xstate}{\ensuremath{\texttt{state}}}
    \newcommand{\source}{\textsc{source}}
    \newcommand{\idle}{\textsc{idle}}
    \newcommand{\xactive}{\textsc{active}}
    \newcommand{\asking}{\textsc{asking}}
    \newcommand{\growing}{\textsc{growing}}
    \newcommand{\pruning}{\textsc{pruning}}
\newcommand{\battery}{\ensuremath{e_{bat}}}
\newcommand{\parent}{\ensuremath{\texttt{parent}}}
\newcommand{\xflag}{\ensuremath{\texttt{flag}}}
\newcommand{\energydist}{\textsc{EnergyDistribution}}
\newcommand{\getpruned}{\textsc{GetPruned}}
\newcommand{\askgrowth}{\textsc{AskGrowth}}
\newcommand{\growforest}{\textsc{GrowForest}}
\newcommand{\harvestenergy}{\textsc{HarvestEnergy}}
\newcommand{\shareenergy}{\textsc{ShareEnergy}}
\newcommand{\algruntime}{\ensuremath{T_\alg(\numAmoebots)}}
\newcommand{\sched}{\ensuremath{\mathcal{S}}}
\algrenewcommand\ALG@beginalgorithmic{\small}
\algrenewcommand\alglinenumber[1]{\footnotesize #1:}
\newtheorem{convention}{Convention}
\newtheorem{invariant}{Invariant}
\newif\ifcomment
\newif\iffigabbrv
\newcommand{\figtext}{\iffigabbrv Fig.\else Figure\fi}
\title{\texorpdfstring{Energy-Constrained Programmable Matter\\Under Unfair Adversaries}{Energy-Constrained Programmable Matter Under Unfair Adversaries}}
\titlerunning{Energy-Constrained Programmable Matter Under Unfair Adversaries}
\author{Anonymous Author(s)}{Anonymous Affiliation(s)}{}{}{}
\authorrunning{Anonymous Author(s)}
\author{Jamison W. Weber}{School of Computing and Augmented Intelligence, Arizona State University, Tempe, AZ, USA}{jwweber@asu.edu}{https://orcid.org/0000-0002-9573-1783}{}
\author{Tishya Chhabra}{School of Computing and Augmented Intelligence, Arizona State University, Tempe, AZ, USA}{tchhabr2@asu.edu}{https://orcid.org/0000-0002-3555-1078}{}
\author{Andr\'ea W. Richa}{School of Computing and Augmented Intelligence \and Biodesign Center for Biocomputing, Security and Society\\Arizona State University, Tempe, AZ, USA}{aricha@asu.edu}{https://orcid.org/0000-0003-3592-3756}{}
\author{Joshua J. Daymude}{School of Computing and Augmented Intelligence \and Biodesign Center for Biocomputing, Security and Society\\Arizona State University, Tempe, AZ, USA}{jdaymude@asu.edu}{https://orcid.org/0000-0001-7294-5626}{}
\authorrunning{J.\ W.\ Weber, T.\ Chhabra, A.\ W.\ Richa, and J.\ J.\ Daymude}
\keywords{Programmable matter, amoebot model, energy distribution, concurrency}
\begin{document}

\maketitle

\begin{abstract}
    Individual modules of \textit{programmable matter} participate in their system's collective behavior by expending energy to perform actions.
    However, not all modules may have access to the external energy source powering the system, necessitating a local and distributed strategy for supplying energy to modules.
    In this work, we present a general \textit{energy distribution framework} for the \textit{canonical amoebot model} of programmable matter that transforms energy-agnostic algorithms into energy-constrained ones with equivalent behavior and an $\bigo{\numAmoebots^2}$-round runtime overhead---even under an \textit{unfair adversary}---provided the original algorithms satisfy certain conventions.
    We then prove that existing amoebot algorithms for \textit{leader election} (ICDCN 2023) and \textit{shape formation} (Distributed Computing, 2023) are compatible with this framework and show simulations of their energy-constrained counterparts, demonstrating how other unfair algorithms can be generalized to the energy-constrained setting with relatively little effort.
    Finally, we show that our energy distribution framework can be composed with the \textit{concurrency control framework} for amoebot algorithms (Distributed Computing, 2023), allowing algorithm designers to focus on the simpler energy-agnostic, sequential setting but gain the general applicability of energy-constrained, asynchronous correctness.
    \ifnum\version=\SUBMISSION
    
    \smallskip
    \noindent \textbf{This paper is eligible for the best student paper award.}
    \else\fi
\end{abstract}

\ifnum\version=\CONFERENCE
\clearpage
\setcounter{page}{1}
\fi

\section{Introduction}  \label{sec:intro}

\textit{Programmable matter}~\cite{Toffoli1991-programmablematter} is often envisioned as a material composed of simple, homogeneous modules that collectively change the system's physical properties based on environmental stimuli or user input.
These modules participate in the system's overall collective behavior by expending energy to perform internal computation, communicate with their neighbors, and move.
But as the number of modules per collective increases and individual modules are miniaturized from the centimeter/millimeter-scale~\cite{Gilpin2010-robotpebbles,Goldstein2005-programmablematter,Piranda2018-designingquasispherical} to the micro- and nano-scale~\cite{Dolev2016-invivoenergy,Kriegman2020-scalablepipeline,Blackiston2021-cellularplatform}, traditional methods of robotic power supply such as internal battery storage and tethering become infeasible.
Many programmable matter systems instead make use of an external energy source accessible by at least one module and rely on \textit{module-to-module power transfer} to supply the system with energy~\cite{Campbell2005-robottether,Gilpin2010-robotpebbles,Goldstein2009-audiovideo,Piranda2018-designingquasispherical}.
This external energy can be supplied directly to modules in the form of electricity~\cite{Gilpin2010-robotpebbles} or may be ambiently available as light, heat, sound, or chemical energy in the environment~\cite{MacLennan2015-morphogeneticpath,Napp2011-setpointregulation}.
Since energy may not be uniformly accessible to all modules in the system, a strategy for \textit{energy distribution}---sharing energy among modules such that the system can achieve its desired function---is imperative.

Algorithmic theory for programmable matter---including population protocols~\cite{Angluin2006-computationnetworks}, the nubot model~\cite{Woods2013-activeselfassembly}, mobile robots~\cite{Flocchini2019-distributedcomputing}, hybrid programmable matter~\cite{Gmyr2020-formingtile}, and the amoebot model~\cite{Daymude2023-canonicalamoebot,Derakhshandeh2014-amoebotba}---has largely ignored energy constraints, focusing instead on characterizing individual modules' necessary and sufficient capabilities for goal collective behaviors.
Besides a few notable exceptions~\cite{Dolev2016-invivoenergy,Piranda2018-designingquasispherical}, this literature only references energy to justify assumptions (e.g., why a system should remain connected~\cite{Michail2019-transformationcapability}) and ignores the impact of energy usage and distribution on an algorithm's efficiency.
In contrast, both programmable matter practitioners and the modular and swarm robotics literature incorporate energy constraints as influential aspects of algorithm design~\cite{Bartashevich2017-energysavingdecision,Kernbach2013-handbookcollective,Mostaghim2016-energyaware,Pickem2017-robotariumremotely,Wei2012-stayingalivepath}.

This gap motivated the prior \energyAlg\ algorithm for energy distribution~\cite{Daymude2021-bioinspiredenergy} under the \textit{amoebot model} of programmable matter~\cite{Derakhshandeh2014-amoebotba}.
When amoebots do not move and are activated \textit{sequentially and fairly}, \energyAlg\ distributes any necessary energy to all $\numAmoebots$ amoebots within at most $\bigo{\numAmoebots}$ rounds.
Combined with the \forestRepairAlg\ algorithm introduced in the same work to repair energy distribution networks as amoebots move, it was suggested that any amoebot algorithm could be composed with these two to handle energy constraints, though this was only shown for one algorithm in simulation.

In this work, we introduce a general \textit{energy distribution framework} that provably converts any energy-agnostic amoebot algorithm satisfying certain conventions into an \textit{energy-constrained} version that exhibits the same system behavior while also distributing the energy amoebots need to meet the demands of their actions.
In particular, we use the message passing-based \textit{canonical amoebot model}~\cite{Daymude2023-canonicalamoebot} to address the challenges of \textit{unfair} adversarial schedulers---the most general of all fairness assumptions---that can activate any amoebot that is able to perform an action regardless of how long others have been waiting to do the same.
Under an unfair adversary, the prior \forestRepairAlg\ algorithm may not terminate, rendering it unusable for maintaining energy distribution networks.
In contrast, energy-constrained algorithms produced by our framework not only terminate despite unfairness, but do so within an $\bigo{\numAmoebots^2}$-round overhead, where $\numAmoebots$ is the number of amoebots in the system.

\ifnum\version=\SUBMISSION
\medskip
\noindent \textsf{\textbf{Our Contributions.}}
\else
\subparagraph{Our Contributions.}

\fi
We summarize our contributions as follows.
We introduce the \textit{energy distribution framework} that transforms any energy-agnostic amoebot algorithm $\alg$ satisfying some basic conventions and a demand function $\demand$ specifying its energy costs into an energy-constrained algorithm $\alg^\demand$ that provably exhibits equivalent behavior to $\alg$, even under an unfair adversary, while incurring at most an $\bigo{\numAmoebots^2}$-round runtime overhead (Section~\ref{sec:framework}).
We then prove that both the \erosionAlg\ algorithm from~\cite{Briones2023-invitedpaper} and the \hexagonAlg\ algorithm from~\cite{Daymude2023-canonicalamoebot} satisfy the framework's conventions and show simulations of their energy-constrained counterparts produced by the framework (Section~\ref{sec:edfcompatible}).

Finally, we prove that a particular class of ``expansion-corresponding'' algorithms that are compatible with the established \textit{concurrency control framework} for amoebot algorithms~\cite{Daymude2023-canonicalamoebot}---including \erosionAlg\ and \hexagonAlg---remain so after transformation by our energy distribution framework, establishing a general pipeline for lifting energy-agnostic, non-concurrent amoebot algorithms (which are easier to design and analyze) to the more realistic \textit{energy-constrained, asynchronous setting} (Section~\ref{sec:concurrency}).

\section{Preliminaries}  \label{sec:prelims}

We begin with necessary background on the (canonical) amoebot model in Section~\ref{subsec:model} and our extensions for energy constraints in Section~\ref{subsec:energymodel}.

\subsection{The Amoebot Model} \label{subsec:model}

In the \textit{canonical amoebot model}~\cite{Daymude2023-canonicalamoebot}, programmable matter consists of individual, homogeneous computational elements called \textit{amoebots}.
The structure of an amoebot system is represented as a subgraph of an infinite, undirected graph $G = (V,E)$ where $V$ represents all relative positions an amoebot can occupy and $E$ represents all atomic movements an amoebot can make.
Each node in $V$ can be occupied by at most one amoebot at a time.
Here, we adopt the \underbar{geometric space variant} in which $G = \Gtri$, the triangular lattice (\figtext~\ref{fig:model:lattice}).

\begin{figure}
    \centering
    \begin{subfigure}{.3\textwidth}
        \centering
        \includegraphics[width=\textwidth]{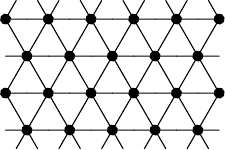}
        \caption{\centering}
        \label{fig:model:lattice}
    \end{subfigure}
    \hfill
    \begin{subfigure}{.3\textwidth}
        \centering
        \includegraphics[width=\textwidth]{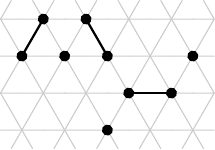}
        \caption{\centering}
        \label{fig:model:particles}
    \end{subfigure}
    \hfill
    \begin{subfigure}{.3\textwidth}
        \centering
        \includegraphics[width=0.8\textwidth]{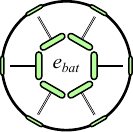}
        \caption{\centering}
        \label{fig:model:energy}
    \end{subfigure}
    \caption{\textit{The Amoebot Model}.
    (a) A section of the triangular lattice $\Gtri$ used in the geometric space variant; nodes of $V$ are shown as black circles and edges of $E$ are shown as black lines.
    (b) Expanded and contracted amoebots; $\Gtri$ is shown in gray and amoebots are shown as black circles.
    Amoebots with a black line between their nodes are expanded.
    (c) When modeling energy, each amoebot $A$ has a battery $A.\battery$ storing energy for its own use and for sharing with its neighbors.}
    \label{fig:model}
\end{figure}

An amoebot has two \textit{shapes}: \textsc{contracted}, meaning it occupies a single node in $V$, and \textsc{expanded}, meaning it occupies a pair of adjacent nodes in $V$ (\figtext~\ref{fig:model:particles}).
Each amoebot keeps a collection of ports---one for each edge incident to the node(s) it occupies---that are labeled consecutively according to its own local, persistent \textit{orientation}.
All results in this work allow for \underbar{assorted orientations}, meaning amoebots may disagree on both direction (which incident edge points ``north'') and chirality (clockwise vs.\ counter-clockwise rotation).
Two amoebots occupying adjacent nodes are said to be \textit{neighbors}.
Although each amoebot is \textit{anonymous}, lacking a unique identifier, an amoebot can locally identify its neighbors using their port labels.
In particular, amoebots $A$ and $B$ connected via ports $p_A$ and $p_B$ know each other's orientations and labels for $p_A$ and $p_B$.

Each amoebot has memory whose size is a model variant; all results in this work assume \underbar{constant-size memories}.
An amoebot’s memory consists of two parts: a persistent \textit{public memory} that is only accessible to an amoebot algorithm via communication operations (defined next) and a volatile \textit{private memory} that is directly accessible by amoebot algorithms for temporary variables, computation, etc.
\textit{Operations} define the programming interface for amoebot algorithms to communicate and move (see~\cite{Daymude2023-canonicalamoebot} for details):
\begin{itemize}
    \item The \Connected\ operation tests the presence of neighbors.
    $\Connected(p)$ returns \true\ if and only if there is a neighbor connected via port $p$.
    
    \item The \Read\ and \Write\ operations exchange information in public memory.
    $\Read(p, x)$ issues a request to read the value of a variable $x$ in the public memory of the neighbor connected via port $p$ while $\Write(p, x, x_{val})$ issues a request to update its value to $x_{val}$.
    If $p = \bot$, an amoebot's own public memory is accessed instead of a neighbor's.

    \item An expanded amoebot can \Contract\ into either node it occupies; a contracted amoebot can \Expand\ into an unoccupied adjacent node.
    Neighboring amoebots can coordinate their movements in a \textit{handover}, which occurs in one of two ways.
    A contracted amoebot $A$ can \Push\ an expanded neighbor $B$ by expanding into a node occupied by $B$, forcing it to contract.
    Alternatively, an expanded amoebot $B$ can \Pull\ a contracted neighbor $A$ by contracting, forcing $A$ to expand into the node it is vacating.
\end{itemize}

Amoebot algorithms are sets of \textit{actions}, each of the form $\langle label\rangle : \langle guard\rangle \to \langle operations\rangle$.
An action's \textit{label} specifies its name.
Its \textit{guard} is a Boolean predicate determining whether an amoebot $A$ can execute it based on the ports $A$ has connections on---i.e., which nodes adjacent to $A$ are (un)occupied---and information from the public memories of $A$ and its neighbors.
An action is \textit{enabled} for an amoebot $A$ if its guard is true for $A$, and an amoebot is \textit{enabled} if it has at least one enabled action.
An action's \textit{operations} specify the finite sequence of operations and computation in private memory to perform if this action is executed.

An amoebot is \textit{active} while executing an action and is \textit{inactive} otherwise.
An \textit{adversary} controls the timing of amoebot activations and the resulting action executions, whose \textit{concurrency} and \textit{fairness} are assumption variants.
In this work, we consider two concurrency variants: \underbar{sequential}, in which at most one amoebot can be active at a time; and \underbar{asynchronous}, in which any set of amoebots can be simultaneously active.
We consider the most general fairness variant: \underbar{unfair}, in which the adversary may activate any enabled amoebot.

An amoebot algorithm's time complexity is evaluated in terms of \textit{rounds} representing the time for the slowest continuously enabled amoebot to execute a single action.
Let $t_i$ denote the time at which round $i \in \{0, 1, 2, \ldots\}$ starts, where $t_0 = 0$, and let $\mathcal{E}_i$ denote the set of amoebots that are enabled or already executing an action at time $t_i$.
Round $i$ completes at the earliest time $t_{i+1} > t_i$ by which every amoebot in $\mathcal{E}_i$ either completed an action execution or became disabled at some time in $(t_i, t_{i+1}]$.
\ifnum\version=\ARXIV
Depending on the adversary's concurrency, action executions may span more than one round.
\else\fi

\subsection{Extensions for Energy Modeling} \label{subsec:energymodel}

In addition to the standard model, we introduce new assumptions and terminology specific to modeling energy in amoebot systems.
We consider amoebot systems that are finite, initially connected, and contain at least one \textit{source amoebot} with access to an external energy source.
Although system connectivity is not generally required by the (canonical) amoebot model, it is necessary for sharing energy from a single source amoebot to the rest of the system via module-to-module power transfer.
Each amoebot $A$ has an \textit{energy battery} denoted $A.\battery$ with capacity $\capacity > 0$ representing energy that $A$ can use to perform actions or share with its neighbors (\figtext~\ref{fig:model:energy}).
In this paper, we assume $\capacity = \Theta(1)$ is a fixed integer constant that does not scale with the number of amoebots $\numAmoebots$, but all results in this paper would hold even if $\capacity = \bigo{\numAmoebots}$.
Source amoebots can harvest energy directly into their batteries while those without access depend on their neighbors to share with them.
In either case, we assume an amoebot transfers at most a single unit of energy per activation.\footnote{One could assume that the battery capacity $\capacity > 0$ is any positive real number and that the energy demands are $\demand : \alg \to (0, \capacity]$.
However, this generality complicates our analysis without meaningfully extending our results, so we make the simplifying assumption that there exists a fundamental unit of energy that divides all action demands $\demand(\alpha_i)$ and the battery capacity $\capacity$.}
For modeling purposes, we treat $A.\battery$ as a variable stored in the public memory of $A$.
An amoebot $A$ harvesting energy from an external source can be expressed as $\Write(\bot, \battery, \Read(\bot, \battery) + 1)$ and likewise an amoebot $A$ transferring energy to a neighbor $B$ connected via a port $p$ is a pair of operations $\Write(\bot, \battery, \Read(\bot, \battery) - 1)$ and $\Write(p, \battery, \Read(p, \battery) + 1)$.

The energy costs for an amoebot algorithm $\alg = \{[\alpha_i : g_i \to ops_i] : i \in \{1, \ldots, m\}\}$ are given by a \textit{demand function} $\demand : \alg \to \{1, 2, \ldots, \capacity\}$; i.e., an amoebot must use $\demand(\alpha_i)$ energy to execute action $\alpha_i$.
Energy is incorporated into actions $\alpha_i \in \alg$ by (1) including $A.\battery \geq \demand(\alpha_i)$ in each guard $g_i$ and (2) setting $\Write(\bot, \battery, \Read(\bot, \battery) - \demand(\alpha_i))$ as the first operation of $ops_i$ to spend the corresponding amount of energy.
\ifnum\version=\ARXIV

Finally, we give two definitions central to our energy distribution results.
The first characterizes amoebots that, due to a lack of energy in their batteries, may be blocked from executing an action.
The second names our algorithm regimes of interest.


\begin{definition} \label{def:deficient}
    An amoebot $A$ is \underbar{deficient} w.r.t.\ an action $\alpha_i \in \alg$ if $A.\battery < \demand(\alpha_i)$.
\end{definition}

\begin{definition} \label{def:agnosticconstrained}
    An amoebot algorithm $\alg$ is \underbar{energy-agnostic} if it is not associated with a demand function $\demand$ and is \underbar{energy-constrained} (w.r.t.\ $\demand$) otherwise.
\end{definition}
\else
An amoebot $A$ is \textit{deficient} w.r.t.\ an action $\alpha_i \in \alg$ if $A.\battery < \demand(\alpha_i)$.
An amoebot algorithm $\alg$ is \textit{energy-agnostic} if it is not associated with a demand function $\demand$ and is \textit{energy-constrained} (w.r.t.\ $\demand$) otherwise.
\fi

The remainder of this paper is dedicated to transforming amoebot algorithms that were designed for the energy-agnostic setting into algorithms with equivalent behavior in the energy-constrained setting w.r.t.\ any valid demand function under an unfair adversary.

\section{A General Framework for Energy-Constrained Algorithms} \label{sec:framework}

Amoebot algorithm designers prove the correctness of their algorithms with respect to a \textit{safety} condition (related to the desired system behavior) and a \textit{liveness} condition (ensuring that until this behavior is achieved, some amoebot can make progress towards it).
Moving from energy-agnosticism to respecting energy constraints does not affect safety, but may threaten liveness.
Some amoebot that was critical to achieving progress in the energy-agnostic setting may now be deficient under the constraints of actions' energy costs, deadlocking the system until it is provided with sufficient energy.
Since not all amoebots have access to an external energy source, simply waiting to recharge is not an option.
There must be an active strategy for energy distribution embedded in any energy-constrained algorithm.

Instead of placing the burden on algorithm designers to create bespoke implementations of energy distribution for each algorithm, we introduce a general \textit{energy distribution framework}.
This framework transforms energy-agnostic algorithms $\alg$ that terminate under an unfair adversary and satisfy certain \textit{conventions} into algorithms $\alg^\demand$ that are energy-constrained w.r.t.\ any valid demand function $\demand$ and retain their unfair correctness.
We give a narrative description and pseudocode for our framework in Section~\ref{subsec:edf} and analyze it in Section~\ref{subsec:edfanalysis}.

\subsection{The Energy Distribution Framework} \label{subsec:edf}

Our \textit{energy distribution framework} (Algorithm~\ref{alg:framework}) takes as input any energy-agnostic amoebot algorithm $\alg = \{[\alpha_i : g_i \to ops_i] : i \in \{1, \ldots, m\}\}$ and demand function $\demand : \alg \to \{1, 2, \ldots, \capacity\}$ and outputs an energy-constrained algorithm
\ifnum\version=\ARXIV
\[\alg^\demand = \{[\alpha_i^\demand : g_i^\demand \to ops_i^\demand] : i \in \{1, \ldots, m\}\} \cup \{\alpha_\energydist\},\]
\else
$\alg^\demand = \{[\alpha_i^\demand : g_i^\demand \to ops_i^\demand] : i \in \{1, \ldots, m\}\} \cup \{\alpha_\energydist\}$,
\fi
where actions $\alpha_i^\demand$ are energy-constrained versions of the original actions and $\alpha_\energydist$ is a new action that handles energy distribution.
\ifnum\version=\ARXIV
Algorithm $\alg^\demand$ will achieve the same system behavior as algorithm $\alg$ so long as $\alg$ satisfies certain conventions.
Formally, we say:
\else
Algorithm $\alg^\demand$ will achieve the same system behavior as algorithm $\alg$ so long as $\alg$ satisfies certain conventions:
\fi

\begin{definition} \label{def:edfcompatible}
    An energy-agnostic amoebot algorithm $\alg$ is \underbar{energy-compatible}---i.e., it is compatible with the energy distribution framework---if every (unfair) sequential execution of $\alg$ terminates and $\alg$ satisfies Conventions~\ref{conv:valid}--\ref{conv:connect} (defined below).
\end{definition}

Our first two conventions are taken directly from the analogous concurrency control framework for amoebot algorithms~\cite{Daymude2023-canonicalamoebot}.
The first convention requires an algorithm's actions to execute successfully in isolation, allowing the framework to ignore invalid actions like attempting to \Read\ on a disconnected port or \Expand\ when already expanded.
Formally, we define a \textit{system configuration} as the mapping of amoebots to the node(s) they occupy and the contents of each amoebot's public memory.
Throughout the remainder of this paper, we assume configurations are \textit{legal}; i.e., they meet the requirements of the amoebot model.

\begin{convention}[Validity] \label{conv:valid}
    All actions $\alpha$ of an amoebot algorithm $\alg$ should be \underbar{valid}, i.e., for all (legal) system configurations in which $\alpha$ is enabled for some amoebot $A$, the execution of $\alpha$ by $A$ should be successful whenever all other amoebots are inactive.
\end{convention}

The second convention defines a common structure for an algorithm's actions by controlling the order and number of their operations, similar to the ``look-compute-move'' paradigm in the mobile robots literature~\cite{Flocchini2019-distributedcomputing}.

\begin{convention}[Phase Structure] \label{conv:phases}
    Each action of an amoebot algorithm $\alg$ should structure its operations as: (1) a \underbar{compute phase}, during which an amoebot performs a finite amount of computation and a finite sequence of \Connected, \Read, and \Write\ operations, and (2) a \underbar{move phase}, during which an amoebot performs at most one movement operation decided upon in the compute phase.
    In particular, no action should use the canonical amoebot model's concurrency control operations, \Lock\ and \Unlock.
\end{convention}

Our third and final convention is specific to the energy distribution framework.
Recall from Section~\ref{subsec:energymodel} that we consider amoebot systems that are initially connected.
This last convention requires an algorithm to maintain system connectivity throughout its execution, ensuring that every amoebot has a path to a source amoebot with access to external energy.

\begin{convention}[Connectivity] \label{conv:connect}
    All system configurations reachable by any sequential execution of an amoebot algorithm $\alg$ starting in a connected configuration must also be connected.
\end{convention}

\begin{algorithm*}[tp]
    \caption{Energy Distribution Framework for Amoebot $A$} \label{alg:framework}
    \begin{algorithmic}[1]
        \Statex \textbf{Input}: An energy-compatible algorithm $\alg = \{[\alpha_i : g_i \to ops_i] : i \in \{1, \ldots, m\}\}$ and a demand function $\demand : \alg \to \{1, 2, \ldots, \capacity\}$.
        \For {each action $[\alpha_i : g_i \to ops_i] \in \alg$} construct action $\alpha_i^\demand : g_i^\demand \to ops_i^\demand$ as:  \label{alg:framework:alphai_start}
            \State Set $g_i^\demand \gets \big(g_i \wedge (A.\battery \geq \demand(\alpha_i)) \wedge (\forall B \in N(A) \cup \{A\} : B.\xstate \not\in \{\idle, \pruning\})\big)$.
            \State Set $ops_i^\demand \gets$ ``Do:
            \Indent
                \State $\Write(\bot, \battery, \Read(\bot, \battery) - \demand(\alpha_i))$.
                \State Execute the compute phase of $ops_i$.
                \If {the movement phase of $ops_i$ contains a movement operation $M_i$}
                    \If {$M_i$ is $\Contract(\,)$ or $\Pull(p)$}
                        \State $\Write(\bot, \parent, \nil)$ and \Call{Prune}{\,}.  \label{alg:framework:alphai_prune}
                    \ElsIf {$M_i$ is $\Push(p)$} \label{alg:framework:push_start}
                        \State $\Write(\bot, \parent, \nil)$ and $\Write(p, \parent, \nil)$.
                        \State $\Write(\bot, \xstate, \pruning)$ and $\Write(p, \xstate, \pruning)$. \label{alg:framework:push_end}
                    \EndIf
                    \State Execute $M_i$.''
                \EndIf
            \EndIndent
        \EndFor  \label{alg:framework:alphai_end}
        \State Construct $\alpha_\energydist : g_\energydist \to ops_\energydist$ as:  \label{alg:framework:energydistaction_start}
        \Indent
            \State Set $g_\energydist \gets \bigvee_{g \in \mathcal{G}} (g)$, where $\mathcal{G} = \{$  \label{alg:framework:energydistguard_start}
            \Indent
                \State \hspace{-1.8mm} $\begin{array}{l@{}l}
                    g_\getpruned &= (A.\xstate = \pruning), \\
                    g_\askgrowth &= (A.\xstate = \xactive) \wedge (A \text{ has an \idle\ neighbor or \asking\ child}), \\
                    g_\growforest &= (A.\xstate = \growing) \;\vee \\
                    &\quad \big((A.\xstate = \source) \wedge (A \text{ has an \idle\ neighbor or \asking\ child})\big), \\
                    g_\harvestenergy &= (A.\xstate = \source) \wedge (A.\battery < \capacity), \\
                    g_\shareenergy &= (A.\xstate \not\in \{\idle, \pruning\}) \;\wedge \\
                    &\quad (A.\battery \geq 1) \wedge (A \text{ has a child } B : B.\battery < \capacity)\}
                \end{array}$  \label{alg:framework:energydistguard_end}
            \EndIndent
            \State Set $ops_\energydist \gets$ ``Do:  \label{alg:framework:energydistops_start}
            \Indent
                \If {$g_\getpruned$} \Call{Prune}{\,}.  \Comment{\getpruned}  \label{alg:framework:getpruned}
                \EndIf
                \If {$g_\askgrowth$} $\Write(\bot, \xstate, \asking)$. \Comment{\askgrowth} \label{alg:framework:askgrowth}
                \EndIf
                \If {$g_\growforest$}  \Comment{\growforest}  \label{alg:framework:growforest_start}
                    \For {each port $p$ for which $\Connected(p) = \true$ and $\Read(p, \xstate) = \idle$}
                        \State $\Write(p, \parent, p')$, where $p'$ is any port of the neighbor on port $p$ facing $A$.
                        \State $\Write(p, \xstate, \xactive)$.  \label{alg:framework:growforest_addchild}
                    \EndFor
                    \For {each port $p \in \Call{Children}{\,} : (\Read(p, \xstate) = \asking)$}
                        \State $\Write(p, \xstate, \growing)$.
                    \EndFor
                    \If {$\Read(\bot, \xstate) = \growing$} $\Write(\bot, \xstate, \xactive)$.  \label{alg:framework:growforest_end}
                    \EndIf
                \EndIf
                \If {$g_\harvestenergy$} $\Write(\bot, \battery, \Read(\bot, \battery) + 1)$.  \Comment{\harvestenergy}  \label{alg:framework:harvestenergy}
                \EndIf
                \If {$g_\shareenergy$}  \Comment{\shareenergy}  \label{alg:framework:shareenergy_start}
                    \State Let port $p \in \Call{Children}{\,}$ be one for which $\Read(p, \battery) < \capacity$.
                    \State $\Write(\bot, \battery, \Read(\bot, \battery) - 1)$.
                    \State $\Write(p, \battery, \Read(p, \battery) + 1)$.''  \label{alg:framework:shareenergy_end}
                \EndIf
            \EndIndent
        \EndIndent  \label{alg:framework:energydistops_end}
        \State \Return $\alg^\demand = \{[\alpha_i^\demand : g_i^\demand \to ops_i^\demand] : i \in \{1, \ldots, m\}\} \cup \{\alpha_\energydist\}$.
        \Statex 
        \Function {Children}{\,}
            \State \Return $\{\text{ports } p : \Connected(p) \wedge (\Read(p, \parent) \text{ points to } A)\}$.
        \EndFunction
        \Function {Prune}{\,}  \label{alg:frameworkhelper:prune_start}
            \For {each port $p \in \Call{Children}{\,}$}  \label{alg:frameworkhelper:prunechildren_start}
                \State $\Write(p, \xstate, \pruning)$.  \label{alg:frameworkhelper:pruneset}
                \State $\Write(p, \parent, \nil)$.  \label{alg:frameworkhelper:prunechildren_end}
            \EndFor
            \If {$\Read(\bot, \xstate) \neq \source$} $\Write(\bot, \xstate, \idle)$.
            \EndIf  \label{alg:frameworkhelper:prunereset_end}
        \EndFunction  \label{alg:frameworkhelper:prune_end}
    \end{algorithmic}
\end{algorithm*}

\begin{table}[t]
    \centering
    \caption{Variables used in the Energy Distribution Framework.}
    \label{tab:frameworkvariables}
    \begin{tabular}{llll}
        \toprule
        \textbf{Variable} & \textbf{Notation} &\textbf{Domain} &\textbf{Initialization} \\
        \midrule
        Forest State & $\xstate$ & \begin{tabular}{@{}l@{}}
            $\{\source, \idle, \xactive,$ \\
            $\asking, \growing, \pruning\}$
        \end{tabular} & $\left\{
        \begin{array}{@{}ll@{}}
            \source & \text{if source amoebot}; \\
            \idle & \text{otherwise}.
        \end{array} \right.$ \\
        Parent Pointer & $\parent$ & $\{\nil, 0, \ldots, 9\}$\tablefootnote{Amoebots maintain one port per incident lattice edge (see Section~\ref{subsec:model}), so an expanded amoebot has ten ports despite having a maximum of eight neighbors.} & \nil \\
        Battery Energy & $\battery$ & $\{0, 1, 2, \ldots, \capacity\}$ & 0 \\
        \bottomrule
    \end{tabular}
\end{table}

\ifnum\version=\SUBMISSION
\noindent \textsf{\textbf{Framework Overview.}}
\else
\subparagraph{Framework Overview.}

\fi
With the conventions defined, we now describe how the energy distribution framework (Algorithm~\ref{alg:framework}) transforms an energy-compatible algorithm $\alg$ and a demand function $\demand : \alg \to \{1, 2, \ldots, \capacity\}$ into an energy-constrained algorithm $\alg^\demand$ with ``equivalent'' behavior (defined formally in Section~\ref{subsec:edfanalysis}).
At a high level, $\alg^\demand$ works as follows.
The amoebot system first self-organizes as a spanning forest $\forest$ rooted at source amoebots with access to external energy sources.
Energy is harvested by source amoebots and transferred from parents to children in $\forest$ as there is need.
Amoebots spend energy on enabled actions of algorithm $\alg$ until they become deficient, when they will once again need to wait to recharge.
This process repeats until termination, which must occur since $\alg$ is energy-compatible.

Algorithm $\alg^\demand$ comprises two types of actions.
First, every action $\alpha_i \in \alg$ is transformed into an energy-constrained version $\alpha_i^\demand \in \alg^\demand$ (Algorithm~\ref{alg:framework}, Lines~\ref{alg:framework:alphai_start}--\ref{alg:framework:alphai_end}).
By including $A.\battery \geq \demand(\alpha_i)$ in its guard $g_i^\demand$ and spending $\demand(\alpha_i)$ energy at the start of its operations $ops_i^\demand$, the transformed action $\alpha_i^\demand$ is only executed if there is sufficient energy to do so and any such execution spends the corresponding energy.
The guard $g_i^\demand$ also ensures any amoebot executing an $\alpha_i^\demand$ action and all of its neighbors are part of the forest structure $\forest$.

Second, there is a singular $\alpha_\energydist$ action that defines how amoebots self-organize as a spanning forest and distribute energy throughout the system (Algorithm~\ref{alg:framework}, Lines~\ref{alg:framework:energydistaction_start}--\ref{alg:framework:energydistops_end}).
Its operations are organized into five blocks---\getpruned, \askgrowth, \growforest, \harvestenergy, and \shareenergy---each of which has a corresponding logical predicate in the set $\mathcal{G}$.
These predicates appear in the guard $\bigvee_{g \in \mathcal{G}} (g)$, which ensures that $\alpha_\energydist$ is only enabled when its execution would progress towards distributing energy to deficient amoebots.
The latter is critical for proving that $\alg^\demand$ achieves energy distribution even under an unfair adversary, which we show in Section~\ref{subsec:edfanalysis}.
The remainder of this section details the five blocks; their local variables are summarized in Table~\ref{tab:frameworkvariables}.

\ifnum\version=\SUBMISSION
\medskip
\noindent \textsf{\textbf{Forming and Maintaining a Spanning Forest.}}
\else
\subparagraph{Forming and Maintaining a Spanning Forest.}

\fi
Recall from Section~\ref{subsec:energymodel} that we consider amoebot systems that are initially connected and contain at least one source amoebot with access to an external energy source.
The \getpruned, \askgrowth, and \growforest\ blocks (Algorithm~\ref{alg:framework}, Lines~\ref{alg:framework:getpruned}--\ref{alg:framework:growforest_end}) continuously organize the amoebot system as a spanning forest $\forest$ of trees rooted at the source amoebot(s).
These trees act as an acyclic resource distribution network for energy transfers, which is important for avoiding non-termination under an unfair adversary.

The well-established \textit{spanning forest primitive}~\cite{Daymude2019-computingprogrammable} and the recent \textit{feather tree formation} algorithm~\cite{Kostitsyna2022-briefannouncement} are both guaranteed to organize an amoebot system as a spanning forest $\forest$ under an unfair sequential adversary, assuming no parent--child relationship in $\forest$ is ever disrupted after it is formed.
However, many amoebot algorithms $\alg$---and by extension, the actions $\alpha_i^\demand$ of algorithms $\alg^\demand$---cause amoebots to move, partitioning $\forest$ into ``unstable'' trees whose connections to source amoebots have been disrupted and ``stable'' trees that remain rooted at sources.
This necessitates a protocol for dynamically repairing $\forest$ as amoebots move.
To this end, the earlier \forestRepairAlg\ algorithm~\cite{Daymude2021-bioinspiredenergy} was designed to ``prune'' unstable trees, allowing their amoebots to rejoin stable trees.
Unfortunately, \forestRepairAlg\ requires fairness for termination, which we do not have here.
In the following, we describe a new algorithm that dynamically maintains $\forest$ under an unfair sequential adversary.

Each amoebot has a $\xstate$ variable that is initialized to \source\ for source amoebots and \idle\ for all others.
Additionally, each amoebot has a $\parent$ pointer indicating the port incident to their parent in the forest $\forest$; these pointers are initially set to $\nil$.
A source amoebot adopts its \idle\ neighbors into its tree by making them \xactive\ and setting their $\parent$ pointers to itself (\growforest, Algorithm~\ref{alg:framework}, Lines~\ref{alg:framework:growforest_start}--\ref{alg:framework:growforest_addchild}).
\xactive\ amoebots, however, must ask the source amoebot at the root of their tree for permission before adopting their \idle\ neighbors (\askgrowth, Algorithm~\ref{alg:framework}, Line~\ref{alg:framework:askgrowth}).
\ifnum\version=\ARXIV
Although indirect, this ensures that \idle\ amoebots only join trees that are (or were recently) stable, stopping the unfair adversary from creating non-terminating executions (see Lemma~\ref{lem:forestblocksfinite}).
\else
Although indirect, this ensures that \idle\ amoebots only join trees that are (or were recently) stable, stopping the unfair adversary from creating non-terminating executions (see Lemma~\ref{lem:energyrunfinite}).
\fi
Specifically, an \xactive\ amoebot with an \idle\ neighbor becomes \asking.
Any \xactive\ amoebot with an \asking\ child also becomes \asking, propagating this ``asking signal'' towards the tree's source amoebot.
When the source amoebot receives this asking signal, it updates all its \asking\ children to \growing, granting them permission to grow the tree.
A \growing\ amoebot adopts its \idle\ neighbors as \xactive\ children, updates its \asking\ children to \growing, and resets its $\xstate$ to \xactive.
This process repeats until no \idle\ amoebots remain.

If an amoebot's movement during an $\alpha_i^\demand$ execution would disrupt $\forest$, it initiates a pruning process to dissolve disrupted subtrees.
Amoebots performing \Contract\ or \Pull\ movements must prune immediately since their movement may disconnect them from their neighbors; \Push\ movements instead make the two involved amoebots \pruning, which will cause them to prune during their next action.
When an amoebot prunes, it makes its children \pruning\ and resets both its own and its children's $\parent$ pointers, severing them from their tree (Algorithm~\ref{alg:framework}, Lines~\ref{alg:framework:alphai_prune} and~\ref{alg:frameworkhelper:prunechildren_start}--\ref{alg:frameworkhelper:prunechildren_end}).
If it is not a source, it also becomes \idle\ (Algorithm~\ref{alg:framework}, Line~\ref{alg:frameworkhelper:prunereset_end}).
The \getpruned\ block ensures that any \pruning\ amoebot does the same, dissolving the unstable tree (Algorithm~\ref{alg:framework}, Line~\ref{alg:framework:getpruned}).
These newly \idle\ amoebots are then collected into stable trees by the \askgrowth\ and \growforest\ blocks as described above.

\ifnum\version=\SUBMISSION
\medskip
\noindent \textsf{\textbf{Sharing Energy.}}
\else
\subparagraph{Sharing Energy.}

\fi
The \harvestenergy\ and \shareenergy\ blocks (Algorithm~\ref{alg:framework}, Lines~\ref{alg:framework:harvestenergy}--\ref{alg:framework:shareenergy_end}) define how source amoebots harvest energy from external energy sources and how all non-\idle, non-\pruning\ amoebots transfer energy to their neighbors, respectively.
If its battery is not already full, a source amoebot harvests a unit of energy from its external energy source into its own battery.
Any non-\idle, non-\pruning\ amoebot with at least one unit of energy to share and a child whose battery is not full will then transfer a unit of energy from its own battery to that of its child.

\subsection{Analysis} \label{subsec:edfanalysis}

\ifnum\version=\ARXIV
In this section, we prove the following theorem.
\else
In this section, we sketch the results building to the following theorem.
\fi
Informally, it states that an energy-constrained algorithm $\alg^\demand$ produced by the energy distribution framework (1) only yields system outcomes that could have been achieved by the original energy-agnostic algorithm $\alg$, provided $\alg$ is energy-compatible, and (2) incurs an $\bigo{\numAmoebots^2}$ runtime overhead.

\begin{theorem} \label{thm:main}
    Consider any energy-compatible amoebot algorithm $\alg$ and demand function $\demand : \alg \to \{1, 2, \ldots, \capacity\}$, and let $\alg^\demand$ be the algorithm produced from $\alg$ and $\demand$ by the energy distribution framework (Algorithm~\ref{alg:framework}).
    Let $C_0$ be any (legal) connected initial configuration for $\alg$ and let $C_0^\demand$ be its extension for $\alg^\demand$ that designates at least one source amoebot and adds the energy distribution variables with their initial values (Table~\ref{tab:frameworkvariables}) to all amoebots.
    Then for any configuration $C^\demand$ in which an unfair sequential execution of $\alg^\demand$ starting in $C_0^\demand$ terminates, there exists an unfair sequential execution of $\alg$ starting in $C_0$ that terminates in a configuration $C$ that is identical to $C^\demand$ modulo the energy distribution variables.
    Moreover, if all unfair sequential executions of $\alg$ on $\numAmoebots$ amoebots terminate after at most $\algruntime$ action executions, then any unfair sequential execution of $\alg^\demand$ on $\numAmoebots$ amoebots terminates in $\bigo{\numAmoebots^2\algruntime}$ rounds.
\end{theorem}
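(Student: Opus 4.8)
The proof splits along the two sentences of Theorem~\ref{thm:main}. For the \emph{behavioral equivalence} part the plan is a one‑directional simulation. Let $\pi$ be the projection that forgets the three energy‑distribution variables $\xstate$, $\parent$, $\battery$, so $\pi(C_0^\demand)=C_0$. First I would prove a ``two‑layer'' invariant: along any execution of $\alg^\demand$, the action $\alpha_\energydist$ writes only to $\xstate,\parent,\battery$, whereas each $\alpha_i^\demand$ writes to the original variables and performs its movement exactly as $\alpha_i$ does after its energy bookkeeping. This holds because $\alg$ is energy‑agnostic, the compute phase of $ops_i$ is copied verbatim into $ops_i^\demand$, and the extra writes the framework inserts (subtracting $\demand(\alpha_i)$, the $\parent/\xstate$ resets around a movement) touch only the new variables and do not influence the movement $M_i$ chosen in the compute phase (Convention~\ref{conv:phases} puts the decision in the compute phase, and Convention~\ref{conv:valid} makes the copied operations succeed in isolation). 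Given an unfair sequential execution $E^\demand$ of $\alg^\demand$ from $C_0^\demand$, delete every $\alpha_\energydist$ step and replace each $\alpha_i^\demand$ step by amoebot $A$ with an $\alpha_i$ step by $A$, obtaining $E$. Since $g_i^\demand = g_i \wedge (\cdots)$ trivially implies $g_i$, and since by the two‑layer invariant the original layer of the configuration evolves along $E$ exactly as along $E^\demand$, each step of $E$ is enabled when taken; as the unfair adversary imposes no fairness obligation, $E$ is an unfair sequential execution of $\alg$ from $C_0$ ending in $\pi(C^\demand)$.

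It remains to show that if $E^\demand$ terminates in $C^\demand$ then $\pi(C^\demand)$ is terminal for $\alg$. The key claim is: in any configuration where $\alpha_\energydist$ is disabled for \emph{every} amoebot, all amoebots satisfy $\xstate\in\{\source,\xactive\}$ and $\battery=\capacity$. For the first half, Convention~\ref{conv:connect} keeps the system connected and \textsc{Prune} never downgrades a source, so a source is never \idle\ or \pruning; if some amoebot were \idle, \pruning, \asking, or \growing, then walking along parent pointers toward a source (or to an adjacent non‑\idle\ neighbor) exposes an amoebot for which one of $g_\getpruned, g_\askgrowth, g_\growforest$ holds, a contradiction. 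For the second half, with no \idle/\pruning\ amoebots, $g_\harvestenergy$ false forces every source to have $\battery=\capacity$, and $g_\shareenergy$ false forces any amoebot with $\battery\ge 1$ to have only full children, so a downward induction on the forest gives $\battery=\capacity$ everywhere. Consequently, for any amoebot $A$ and action $\alpha_i$ with $g_i$ true at $A$ in $\pi(C^\demand)$, all conjuncts of $g_i^\demand$ hold at $A$ in $C^\demand$ ($\battery=\capacity\ge\demand(\alpha_i)$; every amoebot in $N(A)\cup\{A\}$ is \xactive\ or \source), so $\alpha_i^\demand$ would be enabled, contradicting termination of $E^\demand$. Hence no $\alpha_i$ is enabled in $\pi(C^\demand)$, and taking $C=\pi(C^\demand)$ finishes this part.

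For the \emph{runtime} bound, the plan is: by the projection above, the $\alpha_i^\demand$ steps of any unfair sequential execution of $\alg^\demand$ form a valid unfair sequential execution of $\alg$ (possibly non‑maximal), which by hypothesis has at most $\algruntime$ steps; hence $\alg^\demand$ performs at most $\algruntime$ $\alpha_i$‑type action executions. These partition the execution into at most $\algruntime+1$ \emph{movement‑free stretches} (the maximal blocks of $\alpha_\energydist$ steps between, before, and after the $\alpha_i^\demand$ steps), within each of which the system's shape is fixed and no energy is spent. I would then invoke the crucial lemma that each stretch contains only $\bigo{\numAmoebots^2}$ executions of $\alpha_\energydist$. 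Granting it, the total number of action executions is $\bigo{\numAmoebots^2\algruntime}$, and since in the sequential setting every round in which an amoebot is enabled completes only after at least one action execution finishes, the number of rounds is at most the number of action executions up to an additive constant, giving the claimed $\bigo{\numAmoebots^2\algruntime}$ rounds.

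The main obstacle is exactly that per‑stretch bound on $\alpha_\energydist$ executions; I would split $\alpha_\energydist$'s effect into its forest‑maintenance blocks (\getpruned, \askgrowth, \growforest) and its energy blocks (\harvestenergy, \shareenergy). For the forest side: within a stretch an amoebot is made \pruning\ only by its parent's \textsc{Prune} call, so the ``pruning wave'' triggered by the preceding movement visits each amoebot at most once, giving $\bigo{\numAmoebots}$ \getpruned\ executions; and an ``asking signal'' needs $\bigo{\numAmoebots}$ \askgrowth/\growforest\ executions to travel up to a source and back, with each completed round trip (or each direct adoption by a source or \growing\ amoebot) enrolling at least one previously \idle\ amoebot into a stable tree, so at most $\numAmoebots$ round trips occur, for $\bigo{\numAmoebots^2}$ total. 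This also shows the forest stabilizes after $\bigo{\numAmoebots^2}$ such executions. For the energy side: \harvestenergy\ strictly decreases the total deficit $\sum_A(\capacity - A.\battery)\le\numAmoebots\capacity=\bigo{\numAmoebots}$, which is never increased within a stretch, bounding harvests by $\bigo{\numAmoebots}$; and once the forest is stable the parent--child structure is frozen, \shareenergy\ moves energy strictly downward in a forest of depth $\bigo{\numAmoebots}$, and the weighted deficit $\sum_A (\capacity - A.\battery)\,(\mathrm{depth}(A)+1)=\bigo{\numAmoebots^2}$ strictly decreases per transfer, bounding the remaining \shareenergy\ executions by $\bigo{\numAmoebots^2}$. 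The delicate point---and where I expect the bulk of the technical work to lie---is controlling \shareenergy\ transfers \emph{before} the forest stabilizes, i.e.\ ruling out unbounded ``sloshing'' of energy while pruning and regrowth are still reorganizing the trees; here a plain deficit potential is not monotone, so one needs an amortized argument charging each out‑of‑tree energy movement against the $\bigo{\numAmoebots}$ structural changes (each amoebot is pruned and readopted at most once per stretch) that a single stretch can incur, keeping the per‑stretch cost at $\bigo{\numAmoebots^2}$.
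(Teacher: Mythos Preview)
Your behavioral-equivalence argument is essentially the paper's: project out the energy variables, observe that $\alpha_\energydist$ never touches $\alg$'s layer while each $\alpha_i^\demand$ simulates $\alpha_i$, and then argue that in a terminal configuration for $\alg^\demand$ the extra conjuncts of every $g_i^\demand$ are automatically satisfied.  The paper phrases the last step as a direct contradiction (``if $g_i$ holds but $g_i^\demand$ fails, some predicate in $\mathcal G$ is satisfied somewhere''), whereas you first prove the stronger fact that at termination every amoebot has $\xstate\in\{\source,\xactive\}$ and $\battery=\capacity$; both work.

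Your runtime argument takes a genuinely different route.  The paper does \emph{not} bound the number of $\alpha_\energydist$ executions per energy run by $\bigo{\numAmoebots^2}$; it only proves that number is finite (their Lemmas on $g$-supported executions), and then bounds \emph{rounds} directly.  The key device is a ``one-round'' lemma: because each predicate supports only finitely many executions, any amoebot for which $\alpha_\energydist$ is continuously enabled must fire within one additional round.  This converts eventual progress into per-round progress, after which depth-$d$ pruning takes $d{+}1$ rounds, an \asking\ amoebot at depth $d$ becomes \growing\ in $2d{-}2$ rounds, hence one \idle\ amoebot is absorbed into a stable tree every $\bigo{\numAmoebots}$ rounds, giving $\bigo{\numAmoebots^2}$ rounds to stabilize; recharging then takes $\bigo{\numAmoebots}$ further rounds via a dominance argument against a greedy parallel schedule on a path.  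Your plan instead bounds \emph{executions} and uses ``rounds $\le$ executions''.  That is a legitimate alternative, and your $\bigo{\numAmoebots^2}$ execution bound does appear to hold, but your sketch understates the work.  Two points in particular: (i) an amoebot is not pruned ``at most once'' per stretch---it can be adopted into an unstable tree by a \growing\ amoebot already there and then pruned again, up to a constant number of times (this is exactly the paper's Lemma bounding rejoin events); (ii) a completed asking/growing round trip need \emph{not} enroll an \idle\ amoebot, since the \idle\ neighbor that triggered the signal may have been adopted elsewhere before the \growing\ wave arrives.  Both are only constant-factor corrections, but they mean your ``$\le\numAmoebots$ round trips'' accounting needs the same unstable-tree bookkeeping the paper develops.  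For \shareenergy\ before stabilization, your weighted-deficit potential actually goes through cleanly once you let depth be defined by the current forest and charge each adoption (at most $\bigo{\numAmoebots}$ of them) with an $\bigo{\numAmoebots}$ jump in the potential---so the ``delicate point'' you flag is less troublesome than the forest-side count.  In short: same correctness argument, different (and somewhat harder to make rigorous) runtime argument; the paper's round-based method avoids the execution-counting subtleties at the cost of the extra ``one-round'' lemma.
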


\ifnum\version=\ARXIV
\subparagraph{Analysis Overview.}

We outline our analysis as follows.
We start by considering an arbitrary sequential execution $\sched^\demand$ of $\alg^\demand$ starting in $C_0^\demand$.
One way of conceptualizing $\sched^\demand$ is as a sequence of \textit{energy runs}---i.e., maximal sequences of consecutive $\alpha_\energydist$ executions---that are delineated by sequences of $\alpha_i^\demand$ executions.
In fact, $\sched^\demand$ contains only a finite number of $\alpha_i^\demand$ executions (and thus a finite number of energy runs) because the corresponding sequence of $\alpha_i$ executions forms a possible sequential execution $\sched_\alpha$ of $\alg$ (Lemma~\ref{lem:equivalence}), which must terminate because $\alg$ is energy-compatible.
It is exactly this execution $\sched_\alpha$ of $\alg$ that we will argue terminates in a configuration $C$ corresponding to the final configuration $C^\demand$ of $\sched^\demand$.

Of course, we have not yet shown that $\sched^\demand$ terminates at all under an unfair adversary, let alone in a final configuration corresponding to $\sched_\alpha$.
To do so, we will show that any energy run in $\sched^\demand$ is finite (Lemmas~\ref{lem:forestblocksfinite} and~\ref{lem:energyblocksfinite}); specifically, it either reaches a configuration where $\alpha_\energydist$ is disabled for all $\numAmoebots$ amoebots within $\bigo{\numAmoebots^2}$ rounds, or ends earlier because some $\alpha_i^\demand$ action is executed (Lemmas~\ref{lem:stabletime} and~\ref{lem:rechargetime}).
Since each energy run terminates within $\bigo{\numAmoebots^2}$ rounds and is delineated by a sequence of $\alpha_i^\demand$ executions, each $\alpha_i^\demand$ execution in $\sched^\demand$ can be mapped to an $\alpha_i$ execution in $\sched_\alpha$, and $\sched_\alpha$ contains at most $\algruntime$ action executions, we conclude that $\sched^\demand$ is not only finite, but terminates within $\bigo{\numAmoebots^2\algruntime}$ rounds.

Once it is established that both $\sched^\demand$ and $\sched_\alpha$ terminate, we argue that their respective final configurations $C^\demand$ and $C$ are identical (modulo the energy distribution variables).
Because every $\alpha_i^\demand$ execution in $\sched^\demand$ corresponds to a possible $\alpha_i$ execution in $\sched_\alpha$ (Lemma~\ref{lem:equivalence}), we know that any configuration reachable by $\sched^\demand$ is also reachable by $\sched_\alpha$.
So $\sched_\alpha$ must be able to reach a configuration $C$ corresponding to $C^\demand$, but we need to show that it will also terminate there; i.e., that the energy distribution aspects of $\alg^\demand$ don't impede it from making as much progress as $\alg$.
This will follow from the above energy run arguments, concluding the analysis.

\bigskip

We begin our analysis with two sets of invariants maintained by the energy distribution framework that we will reference repeatedly.
The first set describes useful properties of energy runs, i.e., maximal sequences of consecutive $\alpha_\energydist$ executions.
The second set characterizes all configurations reachable by algorithm $\alg^\demand$.

\begin{invariant} \label{inv:energyrun}
    In any energy run of any sequential execution of $\alg^\demand$ starting in $C_0^\demand$,
    \begin{enumerate}[label=(\alph*),ref=\ref{inv:energyrun}\alph*,leftmargin=1cm]
        \item \label{inv:energyrun:nospend} Energy is only harvested or transferred; it is never spent.

        \item \label{inv:energyrun:nomove} No amoebot ever moves.
    
        \item \label{inv:energyrun:stabletrees} Any amoebot that belongs to a stable tree of forest $\forest$ (i.e., one that is rooted at a source amoebot) will never change its \parent\ pointer.
    \end{enumerate}
\end{invariant}
\begin{proof}
    We prove each part independently.
    \begin{enumerate}[label=(\alph*),leftmargin=1cm]
        \item The only way for an amoebot to spend energy is during an $\alpha_i^\demand$ execution, which never occurs during an energy run by definition.

        \item The only way for an amoebot to move is during an $\alpha_i^\demand$ execution, which never occurs during an energy run by definition.
    
        \item The \parent\ pointer of an amoebot $A$ is only updated if $A$ contracts or is involved in a handover, calls $\textsc{Prune}(\,)$, or is adopted during \growforest.
        No amoebot moves during an energy run (Invariant~\ref{inv:energyrun:nomove}) and stable trees never prune by definition.
        So members of stable trees remain there throughout an energy run.
        \qedhere
    \end{enumerate}
\end{proof}

\begin{invariant} \label{inv:reachable}
    Any configuration reached by any sequential execution of $\alg^\demand$ starting in $C_0^\demand$:
    \begin{enumerate}[label=(\alph*),ref=\ref{inv:reachable}\alph*,leftmargin=1cm]
        \item \label{inv:reachable:connected} is connected.

        \item \label{inv:reachable:sources} contains at least one source amoebot.

        \item \label{inv:reachable:batteries} maintains $A.\battery \in \{0, 1, \ldots, \capacity\}$ for all amoebots $A$.
    \end{enumerate}
\end{invariant}
\begin{proof}
    We prove each part independently.
    \begin{enumerate}[label=(\alph*),leftmargin=1cm]
        \item The initial configuration $C_0^\demand$ is connected by supposition.
        All amoebot movements in $\alg^\demand$ originate from the movement phases of $\alpha_i$ actions from the original algorithm $\alg$.
        Since $\alg$ satisfies the connectivity convention (Convention~\ref{conv:connect}) by supposition, no configuration reachable from $C_0^\demand$ could ever be disconnected.

        \item The initial configuration $C_0^\demand$ contains at least one source amoebot by supposition.
        By inspection of Algorithm~\ref{alg:framework}, a source amoebot never updates its $\xstate$, so any source amoebot in $C_0^\demand$ remains a source amoebot throughout the execution of $\alg^\demand$.

        \item All amoebot batteries are initially empty in $C_0^\demand$.
        The guards $g_i^\demand$ and predicates $g_\harvestenergy$ and $g_\shareenergy$ ensure that $A.\battery \in [0, \capacity]$.
        Moreover, all changes to $A.\battery$ are integral: the $\alpha_i^\demand$ actions spend $\demand(\alpha_i) \in \{1, 2, \ldots, \capacity\}$ energy, \harvestenergy\ always harvests a single unit of energy into a source amoebot's battery, and \shareenergy\ always transfers a single unit of energy from a parent to one of its children.
        Noting that the battery capacity $\capacity$ is an integer, the invariant follows.
        \qedhere
    \end{enumerate}
\end{proof}

With the invariants in place, we can move on to analyzing sequential executions of $\alg^\demand$ representing any sequence of activations the unfair sequential adversary could have chosen.

\begin{lemma} \label{lem:equivalence}
    Consider any sequential execution $\sched^\demand$ of $\alg^\demand$ starting in initial configuration $C_0^\demand$ and let $\sched_\alpha^\demand$ denote its subsequence of $\alpha_i^\demand$ action executions.
    Then the corresponding sequence $\sched_\alpha$ of $\alpha_i$ executions is a valid sequential execution of $\alg$ starting in initial configuration $C_0$.
\end{lemma}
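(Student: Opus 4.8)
The plan is to exhibit $\sched_\alpha$ explicitly as a ``projection'' of $\sched^\demand$ and prove the correspondence by induction on prefixes of $\sched^\demand$. For a configuration $C^\demand$ of $\alg^\demand$, let $\pi(C^\demand)$ be the configuration of $\alg$ obtained by forgetting the three energy-distribution variables $\xstate$, $\parent$, and $\battery$ from every amoebot's public memory, leaving positions and all original public-memory variables untouched. By construction of $C_0^\demand$ as the extension of $C_0$ (it only adds these variables), $\pi(C_0^\demand) = C_0$. Since $\sched^\demand$ is sequential, it is a sequence of single-amoebot action executions, each either an $\alpha_\energydist$ execution or an $\alpha_i^\demand$ execution by some amoebot; $\sched_\alpha^\demand$ is the subsequence of the latter, and $\sched_\alpha$ replaces each $\alpha_i^\demand$ in it by $\alpha_i$, performed by the same amoebot.

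The induction claim is: for every finite prefix $P$ of $\sched^\demand$ ending in configuration $C_P^\demand$, if the $\alpha_i^\demand$ executions in $P$ are $\alpha_{i_1}^\demand, \dots, \alpha_{i_k}^\demand$ (by amoebots $A_1, \dots, A_k$, in order), then $\alpha_{i_1}, \dots, \alpha_{i_k}$ (by $A_1, \dots, A_k$) is a valid sequential execution of $\alg$ from $C_0$ and reaches a configuration $C_k$ with $\pi(C_P^\demand) = C_k$. The base case ($P$ empty) is immediate. For the inductive step I append one action execution to $P$ and split on its type. If it is an $\alpha_\energydist$ execution by some amoebot $A$, then by inspection of $ops_\energydist$ together with the helper functions \textsc{Prune} and \textsc{Children}, it writes only to $\xstate$, $\parent$, and $\battery$ fields of $A$ and its neighbors and performs no movement; hence both $\pi(C_P^\demand)$ and the $\alpha_i^\demand$-subsequence are unchanged, and the claim persists with the same $\sched_\alpha$-prefix.

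The interesting case is an $\alpha_i^\demand$ execution by some amoebot $A$. Since $\alpha_i^\demand$ is executed, its guard $g_i^\demand$ holds for $A$ in $C_P^\demand$; because $g_i^\demand$ has $g_i$ as a conjunct and $g_i$ is a predicate over ports/connectivity and original public-memory variables only, $g_i$ holds for $A$ in $\pi(C_P^\demand) = C_k$, so $\alpha_i$ is enabled for $A$ there. As $\sched^\demand$ is sequential, $A$ is the only active amoebot, so by the Validity convention (Convention~\ref{conv:valid}) the execution of $\alpha_i$ by $A$ in $C_k$ is successful, reaching some $C_{k+1}$; thus $\alpha_{i_1}, \dots, \alpha_{i_k}, \alpha_i$ is a valid sequential execution of $\alg$ from $C_0$. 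To see $\pi$ of the post-state equals $C_{k+1}$, I compare $ops_i^\demand$ with $ops_i$ step by step: the prepended $\Write(\bot, \battery, \Read(\bot,\battery) - \demand(\alpha_i))$ touches only $\battery$; ``execute the compute phase of $ops_i$'' runs exactly the \Connected/\Read/\Write/computation sequence of $ops_i$, and since $\alg$ is energy-agnostic it never references $\xstate$, $\parent$, or $\battery$, so—because the original parts of $A$'s and its neighbors' memories agree in $C_P^\demand$ and $C_k$—every \Read/\Connected returns the same value and every \Write has the same effect on the original variables, and in particular the same movement $M_i$ (if any) is selected; the pruning operations spliced in before the movement (Lines~\ref{alg:framework:alphai_prune} and~\ref{alg:framework:push_start}--\ref{alg:framework:push_end}, via \textsc{Prune}) again write only $\parent$ and $\xstate$ of $A$ and its neighbors; and finally $M_i$ is the same movement as in $ops_i$, changing positions identically (including the coordinated position changes of a \Push/\Pull handover). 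Hence the post-state and $C_{k+1}$ agree on positions and all original public-memory variables, i.e.\ their images under $\pi$ coincide, and the $\alpha_i^\demand$-subsequence has grown by exactly one entry matched to $\alpha_i$. Applying the claim to every finite prefix of $\sched^\demand$ gives the lemma. I expect the only delicate point to be this last bookkeeping: making rigorous that the energy-spending and pruning operations interleaved into $ops_i^\demand$ are invisible to the original computation. This rests on $\alg$ being energy-agnostic (so it cannot observe the energy-distribution variables) and on the Phase Structure convention (Convention~\ref{conv:phases}), which guarantees $ops_i$ is a compute phase followed by at most one movement, leaving a well-defined place to insert the extra operations without disturbing the movement decision; the sequential assumption is what lets us invoke Validity to conclude each spliced-in $\alpha_i$ actually succeeds.
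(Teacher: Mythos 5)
Your proposal is correct and follows essentially the same route as the paper's proof: an induction showing the configurations of $\alg^\demand$ and $\alg$ agree modulo the energy-distribution variables, using that $\alpha_\energydist$ executions never move amoebots or touch $\alg$'s variables, that $g_i^\demand$ contains $g_i$ as a conjunct, and that $\alpha_i^\demand$ emulates $\alpha_i$ on the original variables and movements. The only differences are cosmetic---you induct on every prefix of $\sched^\demand$ and spell out the projection map and the Validity convention explicitly, whereas the paper inducts on the count of $\alpha_i^\demand$ executions and treats each intervening energy run as a block.
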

\begin{proof}
    Let $C_r^\demand$ (resp., $C_r$) denote the configuration reached by the first $r$ action executions in $\sched_\alpha^\demand$ starting in $C_0^\demand$ (resp., in $\sched_\alpha$ starting in $C_0$).
    Argue by induction on $r \geq 0$ that $C_r^\demand \cong C_r$; i.e., these configurations are identical with respect to amoebots' positions and the variables of $\alg$.
    This implies that $\sched_\alpha$ is a valid sequential execution of $\alg$ starting in $C_0$, as desired.

    If $r = 0$, then trivially $C_0^\demand \cong C_0$ by definition (see the statement of Theorem~\ref{thm:main}).
    So suppose $r \geq 1$.
    By the induction hypothesis, $C_{r-1}^\demand \cong C_{r-1}$.
    By definition, there is at most one energy run of $\alpha_\energydist$ executions in $\sched^\demand$ between $C_{r-1}^\demand$ and the configuration $C_{r'}^\demand$ in which the $r$-th $\alpha_i^\demand$ execution of $\sched_\alpha^\demand$ is enabled.
    But $\alpha_\energydist$ executions do not move amoebots or modify any variables of algorithm $\alg$, so $C_{r'}^\demand \cong C_{r-1}^\demand \cong C_{r-1}$.
    Also, any amoebot $A$ for which some $\alpha_i^\demand$ action is enabled must also satisfy the guard $g_i$ of action $\alpha_i$, by definition of the guard $g_i^\demand$.
    Thus, if $A$ executes $\alpha_i^\demand$ in $C_{r'}^\demand$, action $\alpha_i$ can also be executed by $A$ in $C_{r-1}$.
    Moreover, any amoebot movements or updates to variables of $\alg$ must be identical in both action executions, since $\alpha_i^\demand$ emulates $\alpha_i$.
    Therefore, $C_r^\demand \cong C_r$.
\end{proof}

Lemma~\ref{lem:equivalence} gives us a handle on the $\alpha_i^\demand$ action executions in any sequential execution of $\alg^\demand$, so it remains to analyze the energy runs between them.
In this first series of lemmas, we show that if $\alpha_\energydist$ is continuously enabled for some amoebot $A$ during an energy run, then within one additional round either $A$ is activated or the energy run is ended by some $\alpha_i^\demand$ action execution (Lemma~\ref{lem:round}).
Formally, we say an execution of $\alpha_\energydist$ by an amoebot $A$ is \textit{$g$-supported} if predicate $g \in \mathcal{G}$ is satisfied when $A$ is activated and executes $\alpha_\energydist$.
To prove eventual execution, we argue that any predicate $g \in \mathcal{G}$ can support at most a finite number of executions per energy run (Lemmas~\ref{lem:forestblocksfinite} and~\ref{lem:energyblocksfinite}).
Combining this with the definition of a round from Section~\ref{subsec:model} yields the one round upper bound on how long an $\alpha_\energydist$ action can remain continuously enabled in an energy run.

We begin with the \getpruned, \askgrowth, and \growforest\ blocks that maintain the spanning forest $\forest$.
Recall from Section~\ref{subsec:edf} that amoebots may move and disrupt the forest structure.
Thus, at the start of any energy run, the amoebot system is partitioned into \textit{stable trees} rooted at source amoebots, \textit{unstable trees} rooted at \pruning\ amoebots, and \idle\ amoebots that do not belong to any tree.
In the following lemma, we argue that amoebots cannot be trapped in an infinite loop of pruning and rejoining the forest $\forest$.

\begin{lemma} \label{lem:constantrejoin}
    In any energy run of $\sched^\demand$, no amoebot is pruned from and adopted into the forest $\forest$ more than eight times.
\end{lemma}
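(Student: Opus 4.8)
The plan is to bound how many times a single amoebot $A$ can cycle through being pruned from and re-adopted into $\forest$ during one energy run, using the monotone progression of the \xstate\ variable enforced by the \askgrowth/\growforest\ blocks. First I would fix an energy run and observe, via Invariant~\ref{inv:energyrun:nomove}, that no amoebot moves for its duration; consequently, an amoebot is pruned only through the \getpruned\ block (or through a call to \textsc{Prune} made by its parent when that parent executes \textsc{Prune}). So the only dynamics in the run are the propagation of \pruning\ down unstable subtrees and the propagation of \idle\ $\to$ \xactive\ $\to$ \asking\ $\to$ \growing\ $\to$ \xactive\ that grows stable trees. The key structural fact I want is that once an amoebot lands in a \emph{stable} tree it never leaves: this is exactly Invariant~\ref{inv:energyrun:stabletrees}. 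Hence each amoebot can be pruned at most once after it joins a stable tree is impossible — rather, every prune event for $A$ happens while $A$ is in an \emph{unstable} tree or is \idle, and each adoption either places $A$ permanently in a stable tree (ending the count) or places it in an unstable tree.

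Next I would argue that the number of distinct unstable trees an amoebot can pass through in one energy run is a small constant, which is where the factor eight should come from. At the start of the run the system is partitioned into stable trees, unstable trees rooted at \pruning\ amoebots, and \idle\ amoebots; since no amoebot moves, no \emph{new} unstable tree is ever created during the run — unstable trees only shrink as their \pruning\ amoebots call \textsc{Prune} and become \idle. So the relevant quantity is the length of a path in the (frozen) underlying graph from $A$ to the nearest source, but more precisely the number of times $A$'s ancestor chain can be severed. I expect the bookkeeping to go: $A$ is \idle, gets adopted into some tree $T_1$; if $T_1$ turns out to be (a fragment of) an unstable tree, $A$ is eventually \pruning\ and prunes, returning to \idle; it is then adopted into $T_2$, and so on. Because each $T_j$ was an unstable tree present at the \emph{start} of the run (no new ones form) and the collection of unstable trees present at the start is fixed, I would bound the number of distinct unstable trees $A$ can be adopted into — using that once an unstable tree's root prunes, that tree is gone, so $A$ cannot re-enter a tree it has already been pruned from. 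The constant $8$ should then fall out of the local geometry: an amoebot has at most $6$ neighbors (or $8$ when expanded, but no one moves here, so $6$ suffices) from which its \parent\ could point, and I would show that each re-adoption must use a \emph{different} incident edge as the parent edge, or that each re-adoption strictly decreases the number of surviving unstable trees in $A$'s vicinity; either way the total is at most the degree bound, giving the claimed bound of eight.

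The main obstacle I anticipate is ruling out a subtle livelock in which $A$ is adopted into a tree $T$ that is \emph{currently} stable, but whose root is a \pruning\ amoebot that has not yet executed \getpruned — i.e., distinguishing "rooted at a source" from "transitively connected to a \pruning\ amoebot that will soon sever the subtree." I would handle this by a careful definition of stable vs.\ unstable keyed to the \pruning\ state rather than to eventual fate, and by showing that the \growforest/\askgrowth\ protocol only lets an \idle\ amoebot join a subtree through an \xactive/\growing\ chain that traces back to a \source\ amoebot (never through a \pruning\ one, since \pruning\ amoebots execute \getpruned\ before doing anything else and a \pruning\ parent's children are immediately set \pruning). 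Establishing that invariant — that the "asking signal" and "growing permission" genuinely reach a source before any adoption is committed — is the crux, and I'd prove it by induction on the step of the energy run, leaning on the guard $g_\growforest$ and the fact (Invariant~\ref{inv:reachable:sources}) that a source always exists. Once that is in place, every adoption of $A$ that does not terminate the count consumes one of the finitely many unstable trees incident to $A$, and counting those (at most the degree, hence at most eight) finishes the proof.
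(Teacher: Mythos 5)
There is a genuine gap, and it sits exactly where you place your ``crux.'' Your plan leans on proving that an \idle\ amoebot can only ever be adopted through an \xactive/\growing\ chain that traces back to a \source\ amoebot, so that adoptions into unstable trees essentially do not happen. That invariant is false in this framework: at the start of an energy run an unstable tree is rooted at a \pruning\ amoebot, but its descendants may still be \xactive, \asking, or \growing\ (the prune wave has not yet reached them), and a leftover \growing\ amoebot in such a tree satisfies $g_\growforest$ and will adopt \idle\ neighbors into the unstable tree before it is itself pruned. The paper's own prose only claims \idle\ amoebots join trees that are ``or were recently'' stable, precisely because adoption into unstable trees is possible. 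For the same reason your claim that ``unstable trees only shrink'' is not accurate, and your fallback counting step --- ``each re-adoption must use a different incident edge, or strictly decreases the number of surviving unstable trees'' --- is asserted rather than proved; nothing in your argument rules out the same neighbor $B$ adopting $A$, $A$ being pruned, and $B$ adopting $A$ again.

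The paper closes exactly this hole with a different key observation that your proposal is missing: adoption by a non-source requires the adopter to be \growing, and within an unstable tree no amoebot can become \growing\ more than once during an energy run. The reason is that a $g_\growforest$-supported execution passes \growing\ only to children that are already \asking\ (newly adopted children are \xactive\ and excluded) and then resets the executor to \xactive, and since the unstable tree is severed from every source, no fresh \growing\ permission can ever be injected; \growing\ states can only flow one step down from amoebots that were already \growing\ at the start of the run. Hence each fixed neighbor of $A$ can adopt $A$ into an unstable tree at most once, and since $A$ has at most eight neighbors (it may itself be expanded --- your reduction to six is also slightly off) none of which move during the run (Invariant~\ref{inv:energyrun:nomove}), while adoptions into stable trees are permanent by Invariant~\ref{inv:energyrun:stabletrees}, the bound of eight follows. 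To repair your proof you would need to replace the false ``traces back to a source'' invariant with this once-per-amoebot \growing\ argument (or an equivalent per-neighbor adoption bound).
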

\begin{proof}
    By Invariant~\ref{inv:energyrun:stabletrees}, any amoebot that was already in a stable tree at the start of the energy run or is adopted into a stable tree during the energy run will remain there throughout the energy run.
    So suppose to the contrary that an amoebot $A$ is pruned from and adopted into unstable trees of the forest $\forest$ more than eight times.
    Since amoebot $A$ can have at most eight neighbors (if it is expanded) and none of these neighbors can move during an energy run (Invariant~\ref{inv:energyrun:nomove}), there must exist a neighbor $B$ that adopts $A$ into an unstable tree more than once.
    By the predicate $g_\growforest$ and the fact that $B$ cannot be a source if it is in an unstable tree, this implies that $B$ must become \growing\ multiple times.

    Observe that when a \growing\ amoebot transfers its \xstate\ to its \asking\ children during a $g_\growforest$-supported execution, it excludes any newly adopted child (which is \xactive) and then becomes \xactive.
    Moreover, because unstable trees are severed from source amoebots, no new \growing\ ancestors can be introduced in an unstable tree.
    Thus, the only amoebots that can become \growing\ in an unstable tree are those that had \growing\ ancestors in this tree at the start of the energy run, but even those will become \growing\ at most once.
    So $B$ cannot become \growing\ multiple times to adopt $A$ more than once, a contradiction.
\end{proof}

We next show that all amoebots eventually join and remain in stable trees.

\begin{lemma} \label{lem:forestblocksfinite}
    Any energy run of $\sched^\demand$ contains at most a finite number of $g_\getpruned$-, $g_\askgrowth$-, and $g_\growforest$-supported executions of $\alpha_\energydist$.
\end{lemma}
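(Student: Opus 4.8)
The plan is to bound the three kinds of supported executions separately, exploiting throughout the energy-run invariants (Invariant~\ref{inv:energyrun}: no movement, no energy spent, and amoebots in stable trees keep their \parent\ pointers) together with Lemma~\ref{lem:constantrejoin} (each amoebot is pruned from and re-adopted into $\forest$ only $\bigo{1}$ times during an energy run, hence also becomes \idle\ and is adopted only $\bigo{1}$ times). The $g_\getpruned$ case is immediate: since no \Push, \Contract, or \Pull\ occurs in an energy run, an amoebot $A$ acquires state \pruning\ only by starting the run that way or by its current parent calling $\textsc{Prune}(\,)$ on it; executing \getpruned\ resets $A$ to \idle\ and clears $A.\parent$, so $A$ can become \pruning\ again only after being re-adopted, which by Lemma~\ref{lem:constantrejoin} happens $\bigo{1}$ times. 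Hence there are $\bigo{\numAmoebots}$ $g_\getpruned$-supported executions.

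The main work is bounding the $g_\askgrowth$-supported executions, which are exactly the events in which some amoebot transitions from \xactive\ to \asking\ (call these \emph{asking events}). To each asking event $E_A$ I attach a \emph{witness}: an \idle\ neighbor of $A$ if it has one (then $E_A$ is \emph{idle-rooted}), and otherwise the lowest-port \asking\ child $C$ of $A$ together with $C$'s current \asking-episode $E_C$ (then I say $E_A$ \emph{points to} $E_C$). Three observations finish the count. (a) Each amoebot has only $\bigo{1}$ idle-rooted events: after $A$ becomes \asking\ it cannot become \asking\ again before performing a $g_\growforest$-supported execution (barring the $\bigo{1}$ prunings of $A$), and such an execution adopts every \idle\ neighbor of $A$, so two asking events of $A$ with the same witness $D$ force two distinct \idle-episodes of $D$; since each of $A$'s $\leq 8$ neighbors has only $\bigo{1}$ such episodes by Lemma~\ref{lem:constantrejoin}, there are $\bigo{\numAmoebots}$ idle-rooted events overall. (b) ``Points-to'' is injective on episodes: each asking event points to at most one episode by definition, and a given episode $E_C$ is pointed to by at most one asking event, because the parent of $C$ is fixed throughout $E_C$ (re-parenting would end $E_C$) and that parent's next asking event after it first points to $E_C$ must be preceded by a $g_\growforest$-supported execution of the parent that promotes $C$ to \growing\ and thereby terminates $E_C$. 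So the points-to graph is a disjoint union of directed paths whose terminal episodes are precisely the idle-rooted ones, giving $\bigo{\numAmoebots}$ paths. (c) Each path has length $\bigo{\numAmoebots}$: if $E_{A_0}\rightsquigarrow E_{A_1}\rightsquigarrow\cdots\rightsquigarrow E_{A_k}$ and $E_{A_0}$ occurs at time $t_0$, then no $E_{A_i}$ ends before $t_0$ (ending $E_{A_{i+1}}$ would require its parent $A_i$ to perform a $g_\growforest$-supported execution or be pruned, either of which ends $E_{A_i}$), so $A_0,\dots,A_k$ are simultaneously \asking\ and, their parent--child links being stable over this interval, form a path in the forest at time $t_0$; hence they are distinct and $k<\numAmoebots$. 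Multiplying, there are $\bigo{\numAmoebots}\cdot\numAmoebots=\bigo{\numAmoebots^2}$ asking events, in particular finitely many.

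Finally, a non-source amoebot can perform a $g_\growforest$-supported execution only while \growing, leaving that state immediately afterward, and it becomes \growing\ only by being promoted from \asking\ by its parent; since consecutive promotions of a fixed amoebot are separated by asking events, each amoebot becomes \growing\ — and thus performs a $g_\growforest$-supported execution as a non-source — only $\bigo{\numAmoebots}$ times by the previous paragraph. A source's $g_\growforest$-supported executions either adopt at least one \idle\ neighbor, of which there are $\bigo{\numAmoebots}$ in total by Lemma~\ref{lem:constantrejoin}, or only promote an \asking\ child to \growing, each such promotion charged to a distinct becomes-\growing\ event already counted. Hence there are $\bigo{\numAmoebots^2}$ $g_\growforest$-supported executions, which with the two cases above proves the lemma. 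I expect observations (b) and (c) — taming how the ``asking'' signal propagates up the trees without the bound degenerating into a circular recurrence — to be the only real difficulty; the points-to injection is exactly what collapses the episode structure into short, disjoint chains.
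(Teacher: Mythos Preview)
Your argument is correct and takes a genuinely different route from the paper's. The paper treats each of the three predicates by contradiction: it assumes infinitely many $g$-supported executions, pigeonholes to a single amoebot performing infinitely many, and then traces the required cycle of state transitions (e.g., \xactive$\to$\asking$\to$\growing$\to$\xactive) up through ancestors to a source, where the induced infinitely many adoptions of \idle\ neighbors contradict Lemma~\ref{lem:constantrejoin}. You instead give a constructive, quantitative bound: the points-to injection of~(b) and the ancestry argument of~(c) decompose all asking events into $\bigo{\numAmoebots}$ vertex-disjoint chains of length at most $\numAmoebots$, yielding an explicit $\bigo{\numAmoebots^2}$ bound that is strictly more than the lemma requires. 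The chain decomposition is an elegant substitute for the paper's case analysis at the root and avoids any circularity in controlling the upward propagation of asking signals. Two places to tighten: the per-amoebot $\bigo{\numAmoebots}$ bound on asking events (used for the non-source $g_\growforest$ count) is not explicitly derived, but it follows because by~(c) each amoebot occurs at most once per chain and by~(a) there are $\bigo{\numAmoebots}$ chains; and you should note that the at most $\numAmoebots$ amoebots that begin the energy run already \asking\ or \growing\ contribute initial episodes that are additional sinks in your points-to graph without being idle-rooted events, though this adds only an $\bigo{\numAmoebots}$ term and does not affect the conclusion.
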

\begin{proof}
    The predicates $g_\getpruned$, $g_\askgrowth$, and $g_\growforest$ depend only on the $\xstate$ and $\parent$ variables, neither of which are updated by the \harvestenergy\ and \shareenergy\ blocks.
    Thus, we may consider only the \getpruned, \askgrowth, and \growforest\ blocks when analyzing executions of $\alpha_\energydist$ supported by their predicates.

    Suppose to the contrary that an energy run of $\sched^\demand$ contains an infinite number of $g_\getpruned$-supported executions.
    With only a finite number of amoebots in the system, there must exist an amoebot $A$ that performs an infinite number of $g_\getpruned$-supported executions.
    Then an infinite number of times, $A$ must start as \pruning\ to satisfy $g_\getpruned$ and end as \idle\ after executing $\getpruned$.
    But by Lemma~\ref{lem:constantrejoin}, $A$ can only be pruned from and adopted into the forest a constant number of times in an energy run, a contradiction.

    Suppose instead that an energy run of $\sched^\demand$ contains an infinite number of $g_\askgrowth$-supported executions.
    Again, this implies some amoebot $A$ performs an infinite number of $g_\askgrowth$-supported executions.
    Then an infinite number of times, $A$ must be \xactive\ and have either an \idle\ neighbor or \asking\ child to satisfy $g_\askgrowth$ and then become \asking\ after executing $\askgrowth$.
    One way $A$ can return to \xactive\ from \asking\ is via pruning and later readoption into the forest, but Lemma~\ref{lem:constantrejoin} states that this can only happen a constant number of times per energy run.
    The only alternative is for $A$ to become \growing\ during a $g_\growforest$-supported execution by its parent and later reset itself to \xactive\ during its own $g_\growforest$-supported execution.
    So if $A$ performs an infinite number of $g_\askgrowth$-supported executions in this energy run, it must also perform an infinite number of $g_\growforest$-supported executions, which we address in the following final case.

    Suppose to the contrary that an amoebot $A$ executes an infinite number of $g_\growforest$-supported executions in an energy run of $\sched^\demand$.
    At the start of each of these infinite executions, $A$ must either be \growing\ or be a source with an \idle\ neighbor or \asking\ child.
    If $A$ is \growing, then it becomes \xactive\ after executing $\growforest$.
    The only way for $A$ to become \growing\ again is if its parent performs a $g_\growforest$-supported execution, which in turn is only possible if its grandparent performed an earlier $g_\growforest$-supported execution, and so on all the way up to the source amoebot rooting this tree.
    
    So it suffices to analyze the case when $A$ satisfies $g_\growforest$ as a source.
    Each time $A$ performs a $g_\growforest$-supported execution as a source, it adopts all its \idle\ neighbors into its (stable) tree.
    By Invariant~\ref{inv:energyrun:stabletrees}, these adopted amoebots will remain children of $A$ throughout this energy run.
    Thus, $A$ can perform a $g_\growforest$-supported execution as a source with an \idle\ neighbor only as many times as the number of its \idle\ neighbors, which is at most six if $A$ is contracted and at most eight if $A$ is expanded.
    
    The remaining possibility is that $A$ performs an infinite number of $g_\growforest$-supported executions as a source with an \asking\ child.
    The predicate $g_\askgrowth$ ensures that every asking signal that reaches $A$ originates at an \xactive\ amoebot with an \idle\ neighbor.
    Again, because there are only a finite number of amoebots in the system, an infinite number of asking signals reaching $A$ implies the existence of an amoebot $B$ in the stable tree rooted at $A$ that performs an infinite number of $g_\askgrowth$-supported executions as an \xactive\ amoebot with an \idle\ neighbor.
    Because $B$ is in a stable tree, the only way it can return to \xactive\ from \asking\ is to become \growing\ during a $g_\growforest$-supported execution by its parent and later reset itself to \xactive\ during its own $g_\growforest$-supported execution.
    During its own $g_\growforest$-supported execution, $B$ adopts any \idle\ neighbors it has.
    But it is not guaranteed that $B$ will have an \idle\ neighbor at the time of its $g_\growforest$-supported execution, even though it had one earlier: some neighbor could be \idle\ at the time $B$ performs its $g_\askgrowth$-supported execution, get adopted by a different amoebot by the time $B$ performs its $g_\growforest$-supported execution, and then become \idle\ again via pruning before $B$ performs its next $g_\askgrowth$-supported execution.
    However, $B$ can only ask but fail to adopt an \idle\ neighbor a constant number of times by Lemma~\ref{lem:constantrejoin}.    
    With any adoptee remaining in the stable tree throughout the energy run by Invariant~\ref{inv:energyrun:stabletrees} and at most a constant number of \idle\ neighbors to adopt, $B$ can perform at most a constant total number of $g_\askgrowth$-supported executions before adopting all its \idle\ children, a contradiction.

    Therefore, we conclude that the number of $g_\getpruned$-, $g_\askgrowth$-, and $g_\growforest$-supported executions in any energy run is finite, as desired.
\end{proof}

The next lemma is an analogous result for the \harvestenergy\ and \shareenergy\ blocks that move energy throughout the system. 

\begin{lemma} \label{lem:energyblocksfinite}
    Any energy run of $\sched^\demand$ contains at most a finite number of $g_\harvestenergy$- and $g_\shareenergy$-supported executions of $\alpha_\energydist$.
\end{lemma}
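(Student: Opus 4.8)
The plan is to bound the two families of executions separately via monotone potential arguments, leaning on three facts that hold throughout any energy run: no energy is spent (Invariant~\ref{inv:energyrun:nospend}), no amoebot moves (Invariant~\ref{inv:energyrun:nomove}), and the spanning forest $\forest$ is restructured only finitely often (Lemma~\ref{lem:forestblocksfinite}).

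The $g_\harvestenergy$-supported executions are the easy half. Because nothing is spent during an energy run, the total system energy $\sum_A A.\battery$ is non-decreasing; each $g_\harvestenergy$-supported execution harvests exactly one unit into a source amoebot with a non-full battery and hence strictly increases this total; and by Invariant~\ref{inv:reachable:batteries} the total never exceeds $\numAmoebots\capacity$. So at most $\numAmoebots\capacity = \bigo{\numAmoebots}$ such executions can occur in an energy run.

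The $g_\shareenergy$-supported executions are the heart of the lemma and, I expect, the main obstacle: an energy transfer from a parent to a child conserves the total system energy, so the potential above is useless here. Instead I would track the depth-weighted potential $\Psi = \sum_A d(A) \cdot A.\battery$, where $d(A)$ is the depth of $A$ in its tree (with source amoebots, and amoebots currently in no tree, at depth $0$). Each $g_\shareenergy$-supported execution moves one unit of energy from a parent at depth $d$ to a child at depth $d+1$, increasing $\Psi$ by exactly one, whereas a $g_\harvestenergy$-supported execution harvests into a depth-$0$ source and leaves $\Psi$ untouched. Since depths are at most $\numAmoebots - 1$ and batteries are bounded (Invariant~\ref{inv:reachable:batteries}), $\Psi$ always stays in an interval of size $\bigo{\numAmoebots^2}$. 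The catch is that this bookkeeping is only valid while $\forest$ is fixed: a $g_\getpruned$-, $g_\askgrowth$-, or $g_\growforest$-supported execution can change amoebots' depths and hence shift $\Psi$ arbitrarily within its range. This is precisely where Lemma~\ref{lem:forestblocksfinite} enters---there are only finitely many (say $k$) such executions, so removing them splits the energy run into at most $k+1$ maximal segments during each of which $\forest$, and hence every amoebot's depth, is fixed. On each such segment $\Psi$ rises by one per $g_\shareenergy$-supported execution while remaining within a range of size $\bigo{\numAmoebots^2}$, bounding the number of those executions per segment; summing over the $k+1$ segments gives a finite total (of order $k\,\numAmoebots^2$).

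A few routine verifications I would fold into the write-up: that idle and pruning amoebots never take part in energy sharing (the guard $g_\shareenergy$ excludes them as senders, and their $\parent$ pointer is \nil\ so they are no one's child), which is what makes assigning them depth $0$ harmless; and that source amoebots remain roots of their trees during any energy run (they never become \idle\ or \pruning\ under $\alpha_\energydist$), so that $g_\harvestenergy$ genuinely harvests at depth $0$ and leaves $\Psi$ unchanged. Combining the two bounds establishes the lemma.
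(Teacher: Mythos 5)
Your proof is correct, but the \shareenergy\ half takes a genuinely different route from the paper. For \harvestenergy\ the two arguments coincide: no energy is spent in an energy run (Invariant~\ref{inv:energyrun:nospend}) and total battery content is capped at $\numAmoebots\capacity$, so at most $\numAmoebots\capacity$ harvests can occur. For \shareenergy, the paper does \emph{not} use a potential and does not invoke Lemma~\ref{lem:forestblocksfinite} at all: it argues per amoebot that each $g_\shareenergy$-supported execution pushes one unit downward to a child in $\forest$, so an amoebot with $d$ descendants can perform at most $d\capacity$ such executions before the combined capacity of its descendants' batteries would be exceeded---a local downstream-capacity argument. You instead use the global depth-weighted potential $\Psi = \sum_A d(A)\cdot A.\battery$, which rises by exactly one per share and is confined to a range of size $\bigo{\numAmoebots^2\capacity}$, and you handle the fact that depths are only meaningful for a fixed forest by cutting the run at the finitely many $g_\getpruned$-, $g_\askgrowth$-, and $g_\growforest$-supported executions guaranteed by Lemma~\ref{lem:forestblocksfinite}. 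Your route is more explicit about the one delicate point---that pruning and re-adoption can reshuffle ancestor/descendant relations mid-run, which the paper's terse descendant-capacity count quietly glosses over---at the price of a bound that scales with the number $k$ of forest-restructuring executions rather than the paper's direct $\bigo{\numAmoebots^2\capacity}$-type count; since the lemma only asserts finiteness (and the later runtime bounds come from Lemmas~\ref{lem:stabletime} and~\ref{lem:rechargetime}, not from this count), nothing downstream is affected. Your side remarks check out: \idle\ and \pruning\ amoebots have $\parent = \nil$ and are excluded by $g_\shareenergy$, so they neither send nor receive shared energy, and sources never leave the root position, so harvesting indeed occurs at your depth $0$. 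One cosmetic point: an execution can be simultaneously $g_\shareenergy$-supported and, say, $g_\growforest$-supported; such executions fall outside your segments, but there are at most $k$ of them, so the finiteness conclusion is unaffected---worth one sentence in a final write-up.
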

\begin{proof}
    Energy is never spent in an energy run (Invariant~\ref{inv:energyrun:nospend}).
    Thus, since every $g_\harvestenergy$-supported execution harvests a single unit of energy into the system, there can be at most $\numAmoebots\capacity$ such executions before the total harvested energy exceeds the total capacity of all $\numAmoebots$ amoebots' batteries.
    Analogously, since every $g_\shareenergy$-supported execution transfers one unit of energy from some parent amoebot to one of its children in $\forest$, any amoebot with $d$ descendants in $\forest$ can perform at most $d\capacity$ such executions before exceeding the total capacity of its descendants' batteries.
    None of the other blocks (\getpruned, \askgrowth, and \growforest) transfer energy, so once all amoebots' batteries are full, $g_\harvestenergy$ and $g_\shareenergy$ will be continuously dissatisfied for the remainder of the energy run.
\end{proof}

Combining Lemmas~\ref{lem:forestblocksfinite} and~\ref{lem:energyblocksfinite} shows that any energy run is finite.
But more importantly, they show that the unfair adversary exhibits weak fairness in an energy run.
Since the total number of $\alpha_\energydist$ executions in an energy run is finite, the unfair adversary will eventually be forced to activate any continuously enabled amoebot.
We formalize this result in the next lemma, concluding our arguments on energy run termination.

\begin{lemma} \label{lem:round}
    Consider any amoebot $A$ for which $\alpha_\energydist$ is enabled and would remain so until execution in some energy run of $\sched^\demand$.
    Then within one additional round, either $A$ executes $\alpha_\energydist$ or this energy run is ended by some $\alpha_i^\demand$ execution.
\end{lemma}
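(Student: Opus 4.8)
The plan is to obtain this lemma from the definition of a round (Section~\ref{subsec:model}), using the two preceding lemmas to rule out the one scenario---an unfair adversary starving $A$---that could otherwise prevent a round from completing. The first step is the observation that every execution of $\alpha_\energydist$ is $g$-supported for some $g \in \mathcal{G}$: since the guard $g_\energydist$ is exactly $\bigvee_{g \in \mathcal{G}}(g)$, whenever an amoebot is activated and executes $\alpha_\energydist$, at least one predicate of $\mathcal{G}$ holds at that moment. Consequently, Lemmas~\ref{lem:forestblocksfinite} and~\ref{lem:energyblocksfinite} together bound the \emph{total} number of $\alpha_\energydist$ executions in any energy run of $\sched^\demand$ by a finite quantity (in fact $\bigo{\numAmoebots^2}$, though only finiteness is needed here).

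Given this, I would argue as follows. Fix the energy run in which $\alpha_\energydist$ is enabled for $A$ and remains enabled until $A$ executes it, and let round $i$ be the round in progress at the moment the lemma's hypothesis begins to hold. If $A$ has not executed $\alpha_\energydist$ by time $t_{i+1}$ and no $\alpha_i^\demand$ execution has ended the energy run by then, then $A$ is still enabled at $t_{i+1}$, so $A \in \mathcal{E}_{i+1}$. By definition, round $i+1$ completes at the earliest time $t_{i+2}$ by which every amoebot of $\mathcal{E}_{i+1}$---in particular $A$---has completed an action execution or become disabled. Since $A$ never becomes disabled (it stays enabled until execution), $A$ must complete an action execution by $t_{i+2}$, and that action is either $\alpha_\energydist$ itself (the first disjunct) or some $\alpha_i^\demand$, which immediately ends the energy run (the second disjunct). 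It then remains to check that $t_{i+2}$ is finite, i.e., that round $i+1$ really does complete: otherwise the adversary would perform an unbounded sequence of activations within round $i+1$, none of them an activation of $A$ and none an $\alpha_i^\demand$ execution (which would end the energy run), hence all of them $\alpha_\energydist$ executions by amoebots other than $A$---contradicting the finiteness established in the first step. Thus the conclusion holds by time $t_{i+2}$, i.e., within one additional round.

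I expect the main obstacle to be exactly this last point: unlike under a fair adversary, the bare round definition does not by itself guarantee progress for a continuously enabled amoebot, so the lemma is genuinely powered by Lemmas~\ref{lem:forestblocksfinite} and~\ref{lem:energyblocksfinite}, whose role is to exhaust the adversary's finite budget of $\alpha_\energydist$ moves within an energy run and thereby force it to either activate $A$ or break the energy run with an $\alpha_i^\demand$ action. Everything else---unwinding $\mathcal{E}_{i+1}$ and round completion, and noting that an $\alpha_i^\demand$ execution by any amoebot (including $A$ itself) ends the energy run by definition---is routine, and Invariant~\ref{inv:energyrun} enters only implicitly, through the two finiteness lemmas it underpins.
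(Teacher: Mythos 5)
Your proposal is correct and follows essentially the same route as the paper: both arguments combine the definition of a round with the observation that every $\alpha_\energydist$ execution is $g$-supported for some $g \in \mathcal{G}$, so Lemmas~\ref{lem:forestblocksfinite} and~\ref{lem:energyblocksfinite} forbid an infinite sequence of $\alpha_\energydist$ executions by amoebots other than $A$, forcing the round to complete and hence forcing either an activation of $A$ or an energy-run-ending $\alpha_i^\demand$ execution. The only cosmetic difference is that the paper explicitly also treats the possibility that the \emph{current} round (your round $i$) never completes, whereas you argue only about round $i+1$; your finiteness argument applies verbatim to that case, so nothing substantive is missing.
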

\begin{proof}
    Suppose $\alpha_\energydist$ is enabled for amoebot $A$ in round $r$.
    If an $\alpha_i^\demand$ execution ends this energy run by the completion of round $r+1$, we are done.
    Otherwise, this energy run extends through the remainder of round $r$ and---if round $r$ is finite---all of round $r+1$.
    
    Suppose to the contrary that $A$ is not activated in the remainder of round $r$ or at any time in round $r+1$.
    Recall from Section~\ref{subsec:model} that a (sequential) round ends once every amoebot that was enabled at its start has either completed an action execution or become disabled.
    By supposition, $A$ will remain enabled until its $\alpha_\energydist$ action is executed.
    So at least one of rounds $r$ and $r+1$ must never complete; i.e., at least one of them contains an infinite sequence of $\alpha_\energydist$ executions by enabled amoebots other than $A$.    
    There are only finitely many amoebots, so there must exist an amoebot $B \neq A$ that performs an infinite number of $\alpha_\energydist$ executions.
    Moreover, there are only five predicates that could support these executions, so there must exist a predicate $g \in \mathcal{G}$ such that $B$ performs an infinite number of $g$-supported executions of $\alpha_\energydist$.
    But Lemmas~\ref{lem:forestblocksfinite} and~\ref{lem:energyblocksfinite} show that any predicate can support at most a finite number of $\alpha_\energydist$ executions per energy run of $\sched^\demand$, a contradiction.
\end{proof}

With Lemma~\ref{lem:round} in place, we now argue about the progress and runtime of energy runs towards their overall goal of distributing energy to deficient amoebots in the system.
This next series of lemmas proves an $\bigo{\numAmoebots^2}$ upper bound on the number of rounds any energy run can take before all $\numAmoebots$ amoebots belong to stable trees (Lemma~\ref{lem:stabletime}).
Of course, an energy run could be ended by an $\alpha_i^\demand$ execution before all amoebots join stable trees, but this only helps our overall progress argument.
In the following lemmas, we prove our upper bound for \textit{uninterrupted energy runs} that continue until $\alpha_\energydist$ is disabled for all amoebots.
We first upper bound the time for any unstable tree to be dissolved by pruning.

\begin{lemma} \label{lem:prunetime}
    In an uninterrupted energy run of $\sched^\demand$, any amoebot $A$ at depth $d$ of an unstable tree $\tree$ will be pruned (i.e., set its children to \pruning, reset their $\parent$ pointers, and become \idle) within at most $d + 1$ rounds.\footnote{The \textit{depth} of a amoebot $A$ in a tree $\tree$ rooted at an amoebot $R$ is the number of nodes in the $(R,A)$-path in $\tree$ (i.e., the root $R$ is at depth $1$, and so on). The depth of a tree $\tree$ is $\max_{A \in \tree}\{\text{depth of } A\}$.}
\end{lemma}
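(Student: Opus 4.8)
The approach is a straightforward induction on the depth $d$, using the fact that in an uninterrupted energy run the pruning process propagates strictly downward one level per round, and that Lemma~\ref{lem:round} forces the unfair adversary to eventually activate any continuously enabled $\alpha_\energydist$ action. The plan is: (1) record two structural facts about how $\xstate$ and $\parent$ can change during an energy run; (2) use them to show that a \pruning\ amoebot stays continuously enabled (via $g_\getpruned$) until it prunes itself, so Lemma~\ref{lem:round} gives it one round to do so; and (3) push this down the tree by induction on depth.

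For step~(1) I would inspect Algorithm~\ref{alg:framework}. Since the run is uninterrupted, no $\alpha_i^\demand$ action executes and (Invariant~\ref{inv:energyrun:nomove}) no amoebot moves, so the only writes to $\xstate$ and $\parent$ come from the \growforest\ block and from the helper $\textsc{Prune}(\,)$ invoked by \getpruned. The \growforest\ block only overwrites \idle\ neighbors (making them \xactive) and \asking\ children (making them \growing), and $\textsc{Prune}(\,)$ is only invoked from a $g_\getpruned$-supported execution (which requires the executing amoebot to be \pruning) or from within an $\alpha_i^\demand$ action (excluded). This yields: \textbf{(i)} once an amoebot is \pruning\ it stays \pruning\ until it itself executes $\textsc{Prune}(\,)$; in particular $g_\getpruned$, and hence $\alpha_\energydist$, is continuously enabled for it, so by Lemma~\ref{lem:round} it executes $\textsc{Prune}(\,)$ within one additional round. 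And \textbf{(ii)} a non-\pruning\ member of $\tree$ cannot become \idle\ on its own (the only writer of the \idle\ state is $\textsc{Prune}(\,)$, which requires being \pruning), hence cannot be re-adopted, hence retains its $\tree$-parent pointer until the moment that parent executes $\textsc{Prune}(\,)$ and sets it to \pruning\ with a $\nil$ parent. (I would also note that an unstable tree contains no source amoebot---sources always have $\parent = \nil$ and so are never anyone's child---so ``pruned'' for a member of $\tree$ really does include becoming \idle.)

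For step~(3), index the rounds of the energy run $1, 2, 3, \ldots$ (round $1$ possibly being a partial round, since the run may begin mid-round). I would prove by induction on $d \ge 1$: every amoebot at depth $d$ of $\tree$ becomes \pruning\ within $d$ rounds and therefore, by fact~(i), executes $\textsc{Prune}(\,)$ within $d+1$ rounds. The base case $d=1$ holds because the root $R$ of the unstable tree $\tree$ is \pruning\ from the start (within round $1$). For the step, an amoebot $A$ at depth $d$ has its $\tree$-parent $P$ at depth $d-1$; by the inductive hypothesis $P$ executes $\textsc{Prune}(\,)$ within $(d-1)+1 = d$ rounds, and by fact~(ii) $A$ is still a child of $P$ at that moment, so $P$'s $\textsc{Prune}(\,)$ sets $A$ to \pruning. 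Thus $A$ is \pruning\ within $d$ rounds, and by fact~(i) it executes $\textsc{Prune}(\,)$ within $d+1$ rounds.

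\textbf{Main obstacle.} The content of the argument is light; the delicate points are bookkeeping ones. First, one must justify that a \pruning\ amoebot really is \emph{continuously} enabled---i.e., that nothing (in particular a \growforest\ execution by a neighbor, or re-adoption) can flip it out of \pruning\ before it prunes---which is exactly facts (i)/(ii) and is where the inspection of the algorithm is needed. Second, the off-by-one in the bound: Lemma~\ref{lem:round} only promises activation ``within one additional round,'' and since the energy run may start partway through a round, a level costs one round but the root costs an extra one, producing $d+1$ rather than $d$. I would take care to phrase the induction in terms of ``within $d$ rounds of the energy run'' so this slack is absorbed cleanly.
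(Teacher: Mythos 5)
Your proposal is correct and follows essentially the same route as the paper's proof: induction on depth, using Lemma~\ref{lem:round} to force a continuously $g_\getpruned$-enabled (\pruning) amoebot to prune within one round, with the observation that only a \pruning\ amoebot can change its own \xstate\ ensuring continuous enablement. Your facts (i) and (ii) simply make explicit the persistence of the \pruning\ state and of the parent--child edge that the paper's more terse argument takes for granted.
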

\begin{proof}
    Argue by induction on $d$, the depth of $A$ in $\tree$.
    If $d = 1$, $A$ is the root of the unstable tree $\tree$ and thus must be \pruning\ by definition.
    So $A$ continuously satisfies $g_\getpruned$ since only a \pruning\ amoebot can change its own \xstate.
    By Lemma~\ref{lem:round}, $A$ will be activated and perform a $g_\getpruned$-supported execution within $d = 1$ additional round.
    Now suppose $d > 1$ and that every amoebot at depth at most $d - 1$ in $\tree$ is pruned within $d$ rounds.
    If $A$ is also pruned by round $d$, we are done.
    Otherwise, $A$ has been \pruning\ since at least the end of round $d$ when its parent in $\tree$ performed its own $g_\getpruned$-supported execution.
    So $A$ again continuously satisfies $g_\getpruned$ and must be activated by the end of round $d + 1$ by Lemma~\ref{lem:round}.
    Thus, in all cases, $A$ is pruned in at most $d + 1$ rounds.
\end{proof}

Once all unstable trees are dissolved, the newly \idle\ amoebots need to be adopted into stable trees.
Recall that members of stable trees must become \asking\ and then \growing\ before they can adopt their \idle\ neighbors as \xactive\ children.

\begin{lemma} \label{lem:growtime}
    In an uninterrupted energy run of $\sched^\demand$, any \asking\ amoebot $A$ at depth $d$ of a stable tree $\tree$ will become \growing\ within at most $2d - 2$ rounds.
\end{lemma}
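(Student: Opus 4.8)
The plan is to prove the bound by induction on the depth $d$ of $A$ in the stable tree $\tree$. The base case $d=1$ is vacuous: the root of a stable tree is a source amoebot, which is never \asking, and $2d-2 = 0$. Before the inductive step I would record two facts that make the argument go through. First, since $\tree$ is stable it never prunes, so by Invariant~\ref{inv:energyrun:stabletrees} the entire root-to-$A$ path — in particular, the parent $P$ of $A$ at depth $d-1$ — is fixed for the rest of the energy run, so all depths stay well-defined. Second, once $A$ is \asking\ the only operation that can change $A.\xstate$ is $P$ running its \growforest\ block (which turns \asking\ children \growing); hence $A$ stays continuously \asking, and is a continuously-\asking\ child of $P$, until the very moment it becomes \growing. (Here I use again that stable trees never prune, so $A$ is never reset to \idle.)

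For the inductive step, fix $d \ge 2$ and split on $P$. If $P$ is a source (which forces $d = 2$), then $P$ continuously satisfies $g_\growforest$ — it is a source with the \asking\ child $A$, and only $P$'s own execution can end this — so by Lemma~\ref{lem:round}, and because the energy run is uninterrupted, $P$ executes $\alpha_\energydist$ within one round and its \growforest\ block turns $A$ \growing; this is $1 \le 2d-2$ rounds. If $P$ is not a source, then throughout the run $P \in \{\xactive, \asking, \growing\}$: it cannot be \pruning\ (stable tree) and cannot be \idle\ because $A$'s \parent\ pointer was set by $P$ during a \growforest\ execution, so $P$ was non-\idle\ then and stays so. I would then bound the time until $P$ first becomes \growing: $0$ rounds if $P$ is already \growing; if $P$ is \asking, then (now $d \ge 3$, so $P$ is at depth $d-1 \ge 2$) the induction hypothesis gives $2(d-1)-2 = 2d-4$ rounds; and if $P$ is \xactive, then $P$ has the \asking\ child $A$ and so continuously satisfies $g_\askgrowth$ until it acts, hence by Lemma~\ref{lem:round} it becomes \asking\ within one round and thereafter stays \asking\ (no block other than its parent's \growforest\ takes it out of \asking\ in a stable tree), after which the induction hypothesis adds at most $2d-4$ more rounds — a total of at most $2d-3$. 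In every subcase $P$ becomes \growing\ within at most $2d-3$ rounds. From that point $P$, being \growing, continuously satisfies $g_\growforest$ until it executes \growforest\ (which turns $A$ \growing), so by Lemma~\ref{lem:round} one further round suffices, for a total of at most $2d-2$ rounds.

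Combining the cases closes the induction. The step I expect to be the main obstacle is not the round arithmetic but verifying that the relevant predicates ($g_\growforest$ and $g_\askgrowth$ for $P$) remain \emph{continuously} satisfied, so that Lemma~\ref{lem:round} can legitimately be invoked each time; this hinges precisely on the two recorded facts — that $A$ cannot leave state \asking\ except via $P$'s \growforest\ execution, and that stable-tree \parent\ pointers (hence child relationships and depths) are frozen during an energy run by Invariant~\ref{inv:energyrun:stabletrees}.
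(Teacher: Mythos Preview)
Your proof is correct. The paper proves the same bound by a direct chain argument rather than by induction: it assumes the worst case in which every non-source ancestor of $A$ is \xactive, then counts $d-2$ rounds for the asking signal to propagate up to the source's child (one ancestor per round via $g_\askgrowth$ and Lemma~\ref{lem:round}), one round for the source to respond via $g_\growforest$, and $d-1$ rounds for the \growing\ state to propagate back down to $A$, summing to $2d-2$. Your inductive decomposition replaces this global counting by a local reduction to the parent $P$ at depth $d-1$, which forces you to do an explicit case split on $P$'s state; the paper sidesteps that split by simply asserting that all-\xactive\ ancestors is the worst case. What your approach buys is that this worst-case assertion is actually justified (rather than left implicit), at the cost of a slightly longer argument. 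Both proofs rest on exactly the same ingredients---Invariant~\ref{inv:energyrun:stabletrees} freezing the stable-tree structure, the observation that an \asking\ amoebot in a stable tree can only leave \asking\ by becoming \growing, and repeated one-round progress via Lemma~\ref{lem:round}---so the difference is organizational rather than conceptual.
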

\begin{proof}
    Recall that asking signals are propagated to the source root of a stable tree by \xactive\ parents performing $g_\askgrowth$-supported executions when they have \asking\ children.
    In the worst case, all non-source ancestors of $A$ are \xactive; i.e., no progress has been made towards propagating this asking signal.
    Since $A$ is in a stable tree and thus can't become \pruning, $A$ remains \asking\ until it becomes \growing.
    Thus, the \xactive\ parent of $A$ continuously satisfies $g_\askgrowth$ and will become \asking\ within one additional round by Lemma~\ref{lem:round}.
    Any \xactive\ ancestor of $A$ with an \asking\ child also continuously satisfies $g_\askgrowth$ and thus will become \asking\ within one additional round by Lemma~\ref{lem:round}.
    There are $d - 2$ \xactive\ ancestors strictly between $A$ and the source amoebot rooting this stable tree, so within at most $d - 2$ rounds the source amoebot will have an \asking\ child.
    The source amoebot will continuously satisfy $g_\growforest$ because of its \asking\ child, so it will make all its \asking\ children \growing\ within one additional round by Lemma~\ref{lem:round}.
    Similarly, \growing\ amoebots continuously satisfy $g_\growforest$ and pass their \growing\ \xstate\ to their \asking\ children within one additional round by Lemma~\ref{lem:round}.
    So $A$ must become \growing\ within another $d - 1$ additional rounds, for a total of at most $(d - 2) + 1 + (d - 1) = 2d - 2$ rounds.
\end{proof}

Combining Lemmas~\ref{lem:prunetime} and~\ref{lem:growtime} yields an upper bound on the time an uninterrupted energy run requires to organize all amoebots into stable trees.

\begin{lemma} \label{lem:stabletime}
    After at most $\bigo{\numAmoebots^2}$ rounds of any uninterrupted energy run of $\sched^\demand$, all $\numAmoebots$ amoebots belong to stable trees.
\end{lemma}
\begin{proof}
    If all amoebots already belong to stable trees, we are done.
    So suppose at least one amoebot is \idle\ or in an unstable tree.
    The system always contains at least one source amoebot (Invariant~\ref{inv:reachable:sources}), so the depth of any unstable tree is at most $\numAmoebots - 1$.
    By Lemma~\ref{lem:prunetime}, all members of unstable trees will be pruned and become \idle\ within at most $\numAmoebots$ rounds.
    
    Since the system remains connected (Invariant~\ref{inv:reachable:connected}) and always contains a source amoebot (Invariant~\ref{inv:reachable:sources}), there must exist an \idle\ amoebot $A$ that has at least one neighbor in a stable tree.
    \idle\ amoebots do not execute any actions, so at least one of its \xactive\ neighbors will continuously satisfy $g_\askgrowth$ and become \asking\ within one additional round by Lemma~\ref{lem:round}.
    The depth of any of these \asking\ neighbors of $A$ in their respective stable trees can be at most $\numAmoebots - 1$, counting all amoebots except $A$.
    So by Lemma~\ref{lem:growtime}, at least one of these \asking\ neighbors of $A$ will become \growing\ within at most $2(\numAmoebots - 1) - 2 \leq 2\numAmoebots$ rounds.
    \growing\ amoebots continuously satisfy $g_\growforest$, so within one additional round a \growing\ neighbor of $A$ will attempt to adopt an \idle\ neighbor by Lemma~\ref{lem:round}.
    The first such \growing\ neighbor must succeed in an adoption because $A$ is in its neighborhood.
    
    Thus, at least one \idle\ amoebot is adopted into a stable tree every $\bigo{\numAmoebots}$ rounds.
    There can be at most $\numAmoebots - 1$ amoebots initially outside stable trees, so we conclude that all amoebots are adopted into stable trees within $\numAmoebots + (\numAmoebots - 1) \cdot \bigo{\numAmoebots} = \bigo{\numAmoebots^2}$ rounds.
\end{proof}

Lemma~\ref{lem:stabletime} shows that after at most $\bigo{\numAmoebots^2}$ rounds of any energy run, all amoebots will belong to stable trees.
By Invariant~\ref{inv:energyrun:stabletrees}, they will remain there throughout the energy run; in particular, no amoebot will execute $g_\getpruned$-, $g_\askgrowth$-, or $g_\growforest$-supported executions after this point of the energy run.
For convenience, we refer to these sub-runs as \textit{stabilized energy runs}.
This next series of lemmas proves an $\bigo{n}$ upper bound on the \textit{recharge time}, i.e., the worst case number of rounds any stabilized energy run can take to fully recharge all $\numAmoebots$ amoebots, i.e., $A.\battery = \capacity$ for all amoebots $A$ (Lemma~\ref{lem:rechargetime}).

We make four observations that simplify this analysis, w.l.o.g.
First, we again consider uninterrupted energy runs as it only helps our overall progress argument if some $\alpha_i^\demand$ execution ends an energy run earlier.
Second, we assume all amoebots have initially empty batteries as this can only increase the recharge time.
Third, it suffices to analyze the recharge time of any one stable tree $\tree$ since trees are not reconfigured and do not interact in stabilized energy runs.
Fourth and finally, we show in the following lemma that the recharge time for $\tree$ is at most the recharge time for a simple path of the same number of amoebots.

\begin{lemma} \label{lem:pathrecharge}
    Suppose $\tree$ is a (stable) tree of $k$ amoebots rooted at a source amoebot $A_1$.
    If all amoebots in $\tree$ have initially empty batteries, then the recharge time for $\tree$ is at most the recharge time for a simple path $\mathcal{L} = (A_1, \ldots, A_k)$ in which $A_1$ is a source amoebot, $A_i.\parent = A_{i-1}$ for all $1 < i \leq k$, and all $k$ amoebots have initially empty batteries.
\end{lemma}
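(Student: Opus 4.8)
The plan is to construct an explicit coupling (simulation) between any uninterrupted stabilized energy run on the tree $\tree$ and a corresponding run on the path $\mathcal{L}$, such that at every point the path has distributed \emph{no more} total energy than the tree — more precisely, such that each amoebot in $\mathcal{L}$ has a battery level that is a lower bound for the battery level of the ``corresponding'' amoebot in $\tree$. The key structural observation is that a rooted tree on $k$ nodes can be linearized: fix any ordering $A_1, A_2, \ldots, A_k$ of the amoebots of $\tree$ that is consistent with the tree's ancestor relation (e.g.\ a BFS or DFS preorder, so that every amoebot's parent in $\tree$ precedes it). The path $\mathcal{L} = (A_1, \ldots, A_k)$ then has the property that whenever $A_{i}.\parent = A_{i-1}$ in $\mathcal{L}$, the true parent of $A_i$ in $\tree$ is some $A_j$ with $j < i$, i.e.\ $A_i$ is ``at least as far'' from the source along $\mathcal{L}$ as along $\tree$ (depth in $\mathcal{L}$ dominates depth in $\tree$ node-for-node).

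First I would set up the following invariant to be maintained by the coupling: after any prefix of the path's run, $\sum_{j \le i} A_j.\battery \text{ (path)} \le \sum_{j \le i} A_j.\battery \text{ (tree)}$ for every $i$, where on the tree side the sum is over the first $i$ amoebots in the chosen linear order. Intuitively, in the tree, energy harvested at the source can fan out to many children simultaneously (over successive activations), whereas in the path it must be pushed sequentially through a single chain; so the tree is never ``behind'' the path in how deep energy has penetrated. To make this precise I would run the two systems in lockstep: whenever \shareenergy\ transfers a unit from $A_j$ to a child in $\tree$, I schedule a corresponding \shareenergy\ (or a short sequence of them) in $\mathcal{L}$ to keep the prefix-sum domination; whenever \harvestenergy\ fires at $A_1$ in $\tree$, fire it at $A_1$ in $\mathcal{L}$ as well. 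Since \harvestenergy\ and \shareenergy\ are the only blocks that move energy in a stabilized energy run (the forest blocks are quiescent by Lemma~\ref{lem:stabletime} and Invariant~\ref{inv:energyrun:stabletrees}), this accounts for all relevant activity. Because the path is the ``slowest'' topology in which to percolate energy to all $k$ nodes — a unit destined for $A_k$ must traverse $k-1$ edges one activation at a time — the path's run reaches the fully-charged configuration ($A_i.\battery = \capacity$ for all $i$) no sooner than the tree's; equivalently, the tree's recharge time is at most the path's.

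The main obstacle I anticipate is making the coupling robust against the \emph{unfair adversary}: the adversary controls which amoebot is activated next in each system independently, so I cannot simply ``mirror'' activations one-to-one. I would handle this by phrasing the claim in the worst-case/sup-over-schedules form that the lemma actually needs: the recharge time of $\tree$ is $\max$ over tree-schedules of the number of rounds to full charge, and I must show this is $\le$ the analogous $\max$ for $\mathcal{L}$. For this direction it suffices to take an \emph{arbitrary} worst-case schedule on $\tree$ and exhibit \emph{some} schedule on $\mathcal{L}$ that takes at least as long — and here I would instead argue the contrapositive via a potential/lower-bound argument: define for each amoebot its distance-to-source, observe that a unit of energy can advance at most one hop per activation of its current holder, and note the path maximizes, over all trees on $k$ nodes rooted at the source, the quantity $\sum_i (\text{distance of } A_i \text{ to source})$, which is a lower bound on the total number of \shareenergy\ operations needed to fill all batteries. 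Combined with the one-round-per-continuously-enabled-amoebot bound from Lemma~\ref{lem:round}, this yields the round-count domination. A secondary subtlety is the \emph{order} in which \shareenergy\ distributes a parent's surplus among multiple children in $\tree$ — but since we only need a one-directional inequality and any distribution order in $\tree$ is at least as fast as the forced single-successor order in $\mathcal{L}$, this asymmetry works in our favor and requires no case analysis.
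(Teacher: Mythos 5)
Your final argument---the one you actually rely on---does not go through. You propose to deduce ``round-count domination'' from the fact that the path maximizes $\sum_i \mathrm{dist}(A_i)$ and hence requires at least as many \shareenergy\ operations as the tree. But the lemma bounds the worst-case number of \emph{rounds}, and rounds are not monotone in total operation count: under the sequential round definition, a single round can contain arbitrarily many activations (it ends only once every amoebot enabled at its start has acted or become disabled), and how much useful work is forced per round depends on which amoebots are enabled, which differs drastically between the tree and the path. Lemma~\ref{lem:round} cannot bridge this: it gives an \emph{upper} bound on how long a continuously enabled amoebot waits (useful for upper-bounding recharge time), whereas your step needs a \emph{lower} bound on the path's worst-case recharge time relative to the given tree schedule. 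So ``the path needs more transfers'' does not yield ``for every tree schedule there is a path schedule taking at least as many rounds,'' which is what the lemma requires.

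Your first idea (the lockstep coupling along a linearization with a prefix-sum invariant) is closer in spirit to what is needed, but you abandon it, and as sketched it leaves open exactly the hard step: when the tree transfers a unit from $A_j$ to a child that sits far later in the linearization, the path must relay that unit through all intermediate amoebots, and you verify neither that such a relay is always guard-feasible nor that the resulting path schedule has at least as many rounds as the tree schedule---this is the multi-children asymmetry you dismiss as ``requiring no case analysis,'' and it is the heart of the lemma. The paper instead proceeds by reverse induction on the length of the maximum non-branching path: at the first branching node $A_\ell$ with children $B_1,\ldots,B_c$ it reattaches $B_2,\ldots,B_c$ under $B_1$, and converts any schedule for $\tree$ into one for the modified tree by replacing each transfer $A_\ell \to B_j$ with a two-activation relay through $B_1$ (ordered according to whether $B_1$'s battery is full), which keeps all battery levels identical and never decreases the round count; iterating this local reattachment transforms $\tree$ into $\mathcal{L}$. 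Some argument of this careful schedule-transformation type (or a correct completion of your coupling) is needed; the work-counting shortcut is a genuine gap.
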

\begin{proof}
    Consider any tree $\mathcal{U}$ of $k$ amoebots rooted at a source amoebot $A_1$ and any sequence of amoebot activations $S$ representing an uninterrupted, stabilized energy run in which all amoebots' batteries are initially empty.
    Let $t_S(\mathcal{U})$ denote the number of rounds required to fully recharge all amoebots in $\mathcal{U}$ with respect to $S$ and let $t(\mathcal{U}) = \max_S\{t_S(\mathcal{U})\}$ denote the worst-case recharge time for $\mathcal{U}$.
    With this notation, our goal is to show that $t(\tree) \leq t(\mathcal{L})$.

    The \textit{maximum non-branching path} of a tree $\mathcal{U}$ is the longest directed path $(A_1, \ldots, A_\ell)$ starting at the source amoebot such that $A_{i+1}$ is the only child of $A_i$ in $\mathcal{U}$ for all $1 \leq i < \ell$.
    We argue by (reverse) induction on $\ell$, the length of the maximum non-branching path of $\tree$.
    If $\ell = k$, then $\tree$ and $\mathcal{L}$ are both simple paths of $k$ amoebots with initially empty batteries and thus $t(\tree) = t(\mathcal{L})$.
    So suppose $\ell < k$ and $t(\mathcal{U}) \leq t(\mathcal{L})$ for any tree $\mathcal{U}$ that comprises the same $k$ amoebots as $\tree$ with initially empty batteries, is rooted at amoebot $A_1$, and has at least $\ell + 1$ amoebots in its maximum non-branching path.
    Our goal is to modify the $\parent$ pointers in $\tree$ to form another tree $\tree'$ that has exactly one more amoebot in its maximum non-branching path and satisfies $t(\tree) \leq t(\tree')$.
    Since $\tree'$ has exactly $\ell + 1$ amoebots in its maximum non-branching path, the induction hypothesis implies that $t(\tree) \leq t(\tree') \leq t(\mathcal{L})$.

    We construct $\tree'$ from $\tree$ as follows.
    Let $(A_1, \ldots, A_\ell)$ be a maximum non-branching path of $\tree$, where $A_\ell$ is the ``closest'' amoebot to $A_1$ with multiple children, say $B_1, \ldots, B_c$ for some $c \geq 2$.
    Note that such an $A_\ell$ must exist because $\ell < k$.
    We form $\tree'$ by reassigning $B_i.\parent$ from $A_\ell$ to $B_1$ for each $2 \leq i \leq c$.
    Then $B_1$ is the only child of $A_\ell$ in $\tree'$, and thus $(A_1, \ldots, A_\ell, B_1)$ is the maximum non-branching path of $\tree'$ which has length $\ell + 1$.
    By the induction hypothesis, $t(\tree') \leq t(\mathcal{L})$.
    So it suffices to show that $t(\tree) \leq t(\tree')$.

    Consider any activation sequence $S = (s_1, \ldots, s_f)$ representing an uninterrupted, stabilized energy run where $s_f$ is the first amoebot activation after which all amoebots in $\tree$ have fully recharged batteries.
    Note that Lemma~\ref{lem:energyblocksfinite} implies $S$ has finite length and hence $s_f$ exists.
    We must show that there exists an activation sequence $S'$ such that $t_S(\tree) \leq t_{S'}(\tree')$.
    We construct $S'$ from $S$ so that the flow of energy through $\tree'$ mimics that of $\tree$.
    For each $s_i \in S$, we append a corresponding subsequence of activations $s_i'$ to the end of $S'$ that activates the same amoebot as $s_i$ and possibly some others as well, if needed.

    In almost all cases, $s_i$ is valid and has the same effect in both $\tree$ and $\tree'$, so we simply add $s_i' = (s_i)$ to $S'$.
    However, any activations $s_i$ in which $A_\ell$ passes energy to a child $B_j$, for $2 \leq j \leq c$, cannot be performed directly in $\tree'$ since $B_j$ is a child of $B_1$---not of $A_\ell$---in $\tree'$.
    We instead add a pair of activations $s_i' = (s_i^1, s_i^2)$ to $S'$ that have the effect of passing energy from $A_\ell$ to $B_j$ but use $B_1$ as an intermediary.
    There are two cases.
    If the battery of $B_1$ is not full (i.e., $B_1.\battery < \capacity$) just before $s_i$, then $s_i^1$ is a $g_\shareenergy$-supported execution of $\alpha_\energydist$ by $A$ passing a unit of energy to $B_1$ and $s_i^2$ is a $g_\shareenergy$-supported execution of $\alpha_\energydist$ by $B_1$ passing a unit of energy to $B_j$.
    Otherwise, these executions are reversed: $B_1$ passes a unit of energy to $B_j$ in $s_i^1$ and $A$ passes a unit of energy to $B_1$ in $s_i^2$.
    In any case, these activations are valid as their respective amoebots satisfy $g_\shareenergy$.
    
    Since all amoebots start with empty batteries and no energy is ever spent in an energy run (Invariant~\ref{inv:energyrun:nospend}), this construction of $S'$ ensures all amoebots' battery levels in $\tree$ and $\tree'$ are the same after each $s_i \in S$ and $s_i' \in S'$, respectively, for all $1 \leq i \leq f$.
    Thus, amoebots in $\tree$ and $\tree'$ only finish recharging after $s_f$ and $s_f'$, respectively.
    Each $s_i'$ activates the same amoebot as $s_i$ does and possibly one additional amoebot, so the number of rounds in $S'$ must be at least that in $S$.
    Therefore, we have $t_S(\tree) \leq t_{S'}(\tree')$, and since the choice of $S$ was arbitrary, we have $t(\tree) \leq t(\tree')$, as desired.
\end{proof}

By Lemma~\ref{lem:pathrecharge}, it suffices to analyze the case where $\tree$ is a simple path of $k$ amoebots with initially empty batteries.
To bound the recharge time, we use a \textit{dominance argument} between the sequential setting of stabilized energy runs and a parallel setting that is easier to analyze.
First, we prove that for any stabilized energy run, there exists a parallel version that makes at most as much progress towards recharging the system in the same number of rounds (Lemma~\ref{lem:dominance}).
We then upper bound the recharge time in parallel rounds (Lemma~\ref{lem:paralleltime}).
Combining these results gives an upper bound on the recharge time in sequential rounds.

Let an \textit{energy configuration} $E$ of the path $\mathcal{L} = (A_1, \ldots, A_k)$ encode the battery values of each amoebot $A_i$ as $E(A_i)$.
An \textit{energy schedule} is a sequence of energy configurations $(E_1, \ldots, E_t)$.
Given any sequence of amoebot activations $S$ representing a stabilized energy run, we define a \textit{sequential energy schedule} $(E_1^S, \ldots, E_t^S)$ where $E_r^S$ is the energy configuration of the path $\mathcal{L}$ at the start of sequential round $r$ in $S$.
Our dominance argument compares these schedules to parallel energy schedules, defined below.

\begin{definition} \label{def:parallelschedule}
    A \underline{parallel energy schedule} $(E_1, \ldots, E_t)$ is a schedule such that for all energy configurations $E_r$ and amoebots $A_i$ we have $E_r(A_i) \in [0, \capacity]$ and, for every $1 \leq r < t$, $E_{r+1}$ is reached from $E_r$ using the following for each amoebot $A_i$:
    \begin{itemize}
        \item $E_r(A_1) < \capacity$, so the source amoebot $A_1$ harvests energy from the external source with:
        \[E_{r+1}(A_1) = E_r(A_1) + 1\]
        
        \item $E_r(A_i) \geq 1$ and $E_r(A_{i+1}) < \capacity$, so $A_i$ passes energy to its child $A_{i+1}$ with:
        \[E_{r+1}(A_i) = E_r(A_i) - 1, \quad
        E_{r+1}(A_{i+1}) = E_r(A_{i+1}) + 1\]
    \end{itemize}
    Such a schedule is \underline{greedy} if the above actions are taken in parallel whenever possible.
\end{definition}

For an amoebot $A_i$ in an energy configuration $E$, let $\Delta_E(A_i) = \sum_{j=i}^k E(A_j)$ denote the total amount of energy in the batteries of amoebots $A_i, \ldots, A_k$ in $E$.
For any two battery configurations $E$ and $E'$, we say $E$ \textit{dominates} $E'$---denoted $E \succeq E'$---if and only if $\Delta_E(A_i) \geq \Delta_{E'}(A_i)$ for all amoebots $A_i \in \mathcal{L}$.

\begin{lemma} \label{lem:dominance}
    Given any activation sequence $S$ representing an uninterrupted, stabilized energy run on a simple path $\mathcal{L}$ of $k$ amoebots starting in an energy configuration $E_1^S$ in which all amoebots have empty batteries, there exists a greedy parallel energy schedule $(E_1, \ldots, E_t)$ with $E_1 = E_1^S$ such that $E_r^S \succeq E_r$ for all $1 \leq r \leq t$.
\end{lemma}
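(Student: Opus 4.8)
The plan is to prove the domination claim by induction on the round number $r$, tracking for each amoebot $A_i$ the suffix sum $\Delta_E(A_i) = \sum_{j=i}^{k} E(A_j)$ (with the convention $\Delta_E(A_{k+1}) = 0$). Since energy is never spent during an energy run (Invariant~\ref{inv:energyrun:nospend}), the only events that alter these suffix sums on the path $\mathcal{L}$ are a harvest by the source $A_1$, which raises $\Delta(A_1)$ by one and leaves all other suffix sums fixed, and a transfer from $A_{i-1}$ to $A_i$, which raises $\Delta(A_i)$ by one and leaves all other suffix sums fixed. Hence each $\Delta(A_i)$ is non-decreasing over time in both the sequential run $S$ and in any greedy parallel schedule (Definition~\ref{def:parallelschedule}), and a single greedy parallel round can raise any fixed $\Delta(A_i)$ by at most one. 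We take the greedy parallel schedule $(E_1, \ldots, E_t)$ to be the unique one obtained from $E_1 = E_1^S$ by performing, in every round, every harvest and every enabled parent-to-child transfer simultaneously; if $S$ fully recharges $\mathcal{L}$ before the parallel schedule does, we extend $E_r^S$ by the all-$\capacity$ configuration thereafter, which dominates everything, so it suffices to establish $E_r^S \succeq E_r$ for $1 \le r \le t$.

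The base case $r = 1$ is immediate since $E_1^S = E_1$ is all-empty. For the inductive step, assume $E_r^S \succeq E_r$ and fix an index $i$; we must show $\Delta_{E_{r+1}^S}(A_i) \ge \Delta_{E_{r+1}}(A_i)$. If the sequential run has ``slack'' at $A_i$, i.e.\ $\Delta_{E_r^S}(A_i) \ge \Delta_{E_r}(A_i) + 1$, then monotonicity of $\Delta^S$ and the at-most-one bound on a parallel round already give $\Delta_{E_{r+1}^S}(A_i) \ge \Delta_{E_r^S}(A_i) \ge \Delta_{E_r}(A_i) + 1 \ge \Delta_{E_{r+1}}(A_i)$. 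Otherwise the induction hypothesis forces $\Delta_{E_r^S}(A_i) = \Delta_{E_r}(A_i)$, and it remains to show that if the greedy parallel round raises $\Delta(A_i)$ by one, so does sequential round $r$. Expressing the parallel enabling conditions via suffix sums---$A_1$ harvests iff $E_r(A_1) = \Delta_{E_r}(A_1) - \Delta_{E_r}(A_2) < \capacity$, and $A_{i-1}$ passes to $A_i$ iff $\Delta_{E_r}(A_{i-1}) - \Delta_{E_r}(A_i) \ge 1$ and $\Delta_{E_r}(A_i) - \Delta_{E_r}(A_{i+1}) < \capacity$---and substituting the equality at $A_i$ together with the induction hypothesis at the adjacent indices shows that at the start of sequential round $r$ the responsible amoebot, namely $A_1$ with $g_\harvestenergy$ satisfied when $i = 1$ or $A_{i-1}$ with $g_\shareenergy$ satisfied (for its child $A_i$) when $i \ge 2$, is already enabled.

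The crux is then a ``suppose no progress, derive a contradiction'' argument: we claim this amoebot stays enabled until it acts, so round $r$ cannot end without it performing the harvest (resp.\ the transfer into $A_i$). This is where the one-directional flow of energy on a path---exactly what Lemma~\ref{lem:pathrecharge} bought us---is essential. If no transfer into $A_i$ occurs during round $r$, then $A_i$'s battery is non-increasing over the round (it can only feed $A_{i+1}$), so it stays strictly below $\capacity$; and if $A_1$ never harvests (resp.\ $A_{i-1}$ never transfers into $A_i$) during round $r$, then $A_1$'s (resp.\ $A_{i-1}$'s) battery is non-decreasing over the round, so it stays at least $1$. Consequently $g_\harvestenergy$ for $A_1$ (resp.\ $g_\shareenergy$ for $A_{i-1}$, whose only child is $A_i$) holds continuously throughout round $r$, so by the round definition (cf.\ Lemma~\ref{lem:round}, using that the energy run is uninterrupted so no $\alpha_i^\demand$ execution can end it) the amoebot must execute $\alpha_\energydist$ during round $r$; by the structure of $ops_\energydist$, when it does it performs exactly the harvest step (resp.\ the share step, which necessarily targets $A_i$), contradicting the supposition. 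Hence $\Delta^S(A_i)$ also rises by at least one in round $r$, giving $\Delta_{E_{r+1}^S}(A_i) \ge \Delta_{E_r}(A_i) + 1 = \Delta_{E_{r+1}}(A_i)$ and closing the induction; Lemma~\ref{lem:energyblocksfinite} ensures the run, hence every round, is finite, so these within-a-round arguments are well founded.

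I expect the continuous-enabledness step to be the main obstacle: one must rule out that, over an entire sequential round, the interplay between an amoebot's own battery and its child's battery disables it before it gets a chance to act. The resolution---on a path, energy only ever moves from a parent toward its child, so while the needed transfer is withheld the source/parent's battery can only grow and the child's can only shrink---is precisely why the reduction from trees to paths in Lemma~\ref{lem:pathrecharge} was made, and it is the part I would write most carefully.
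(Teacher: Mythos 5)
Your proposal is correct and follows essentially the same route as the paper's proof: the same induction on rounds over the suffix-sum dominance relation, the same split into the strict-inequality and equality cases, the same suffix-sum algebra (induction hypothesis at the adjacent indices plus equality at $i$) to show the source/parent is enabled at the start of the sequential round, and the same continuous-enabledness argument concluded via Lemma~\ref{lem:round} and Invariant~\ref{inv:energyrun:nospend}. The only differences are cosmetic: you obtain $E_r(A_i) < \capacity$ directly from the parallel transfer's precondition rather than via the paper's descendant-monotonicity observation for greedy schedules, and you spell out the within-round battery monotonicity a bit more explicitly than the paper does.
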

\begin{proof}
    The activation sequence $S$ and initial energy configuration $E_1^S$ yield a unique sequential energy schedule $(E_1^S, \ldots, E_t^S)$.
    Construct a corresponding parallel energy schedule $(E_1, \ldots, E_t)$ as follows.
    First, set $E_1 = E_1^S$.
    Then, for $1 < r \leq t$, obtain $E_r$ from $E_{r-1}$ by performing one \textit{parallel round} in which each amoebot greedily performs the actions of Definition~\ref{def:parallelschedule} if possible.
    We will show $E_r^S \succeq E_r$ for all $1 \leq r \leq t$ by induction on $r$.
    
    Since $E_1 = E_1^S$, we trivially have $E_1^S \succeq E_1$.
    So suppose $r \geq 1$ and for all rounds $1 \leq r' \leq r$ we have $E_{r'}^S \succeq E_{r'}$.
    Considering any amoebot $A_i$, we have $\Delta_{E_r^S}(A_i) \geq \Delta_{E_r}(A_i)$ by the induction hypothesis and want to show that $\Delta_{E_{r+1}^S}(A_i) \geq \Delta_{E_{r+1}}(A_i)$.    
    First suppose the inequality from the induction hypothesis is strict---i.e., $\Delta_{E_r^S}(A_i) > \Delta_{E_r}(A_i)$---meaning strictly more energy has been passed into $A_i, \ldots, A_k$ in the sequential setting than in the parallel one by the start of round $r$.
    No energy is spent in an energy run (Invariant~\ref{inv:energyrun:nospend}), so we know $\Delta_{E_{r+1}^S}(A_i) \geq \Delta_{E_r^S}(A_i)$.
    Because all energy transfers pass one unit of energy either from the external energy source to the source amoebot $A_1$ or from a parent $A_i$ to its child $A_{i+1}$, we have that $\Delta_{E_r^S}(A_i) \geq \Delta_{E_r}(A_i) + 1$.
    But by Definition~\ref{def:parallelschedule}, an amoebot can receive at most one unit of energy per parallel round, so we have:
    \[\Delta_{E_{r+1}^S}(A_i) \geq \Delta_{E_r^S}(A_i) \geq \Delta_{E_r}(A_i) + 1 \geq \Delta_{E_{r+1}}(A_i).\]
    
    Thus, it remains to consider when $\Delta_{E_r^S}(A_i) = \Delta_{E_r}(A_i)$, meaning the amount of energy passed into $A_i, \ldots, A_k$ is exactly the same in the sequential and parallel settings by the start of round $r$.
    It suffices to show that if $A_i$ receives an energy unit in parallel round $r$, then it also does so in the sequential round $r$.
    We first prove that if $A_i$ receives an energy unit in parallel round $r$, then there is at least one unit of energy for $A_i$ to receive in sequential round $r$.
    If $A_i$ is the source amoebot, this is trivial: the external source of energy is its infinite supply.
    Otherwise, $i > 1$ and we must show $E_r^S(A_{i-1}) \geq 1$.
    We have $\Delta_{E_r^S}(A_i) = \Delta_{E_r}(A_i)$ by supposition and $\Delta_{E_r^S}(A_{i-1}) \geq \Delta_{E_r}(A_{i-1})$ by the induction hypothesis, so
    \begin{align*}
        E_r^S(A_{i-1}) &= \sum_{j=i-1}^k E_r^S(A_j) - \sum_{j=i}^k E_r^S(A_j) \\
        &= \Delta_{E_r^S}(A_{i-1}) - \Delta_{E_r^S}(A_i) \\
        &\geq \Delta_{E_r}(A_{i-1}) - \Delta_{E_r}(A_i) \\
        &= \sum_{j=i-1}^k E_r(A_j) - \sum_{j=i}^k E_r(A_j) \\
        &= E_r(A_{i-1}) \geq 1,
    \end{align*}
    where the final inequality follows from the fact that we presumed $A_i$ receives one energy unit in parallel round $r$ which must come from its parent $A_{i-1}$ since $A_i$ is not a source amoebot.
    
    Next, we show that if $A_i$ receives an energy unit in parallel round $r$, then $E_r^S(A_i) \leq \capacity - 1$; i.e., $A_i$ has enough room in its battery to receive an energy unit during sequential round $r$.
    By supposition we have $\Delta_{E_r^S}(A_i) = \Delta_{E_r}(A_i)$ and by the induction hypothesis we have $\Delta_{E_r^S}(A_{i+1}) \geq \Delta_{E_r}(A_{i+1})$.
    Combining these facts, we have
    \begin{align*}
        E_r^S(A_i) &= \sum_{j=i}^k E_r^S(A_j) - \sum_{j=i+1}^k E_r^S(A_j) \\
        &= \Delta_{E_r^S}(A_i) - \Delta_{E_r^S}(A_{i+1}) \\
        &\leq \Delta_{E_r}(A_i) - \Delta_{E_r}(A_{i+1}) \\
        &= \sum_{j=i}^k E_r(A_j) - \sum_{j=i+1}^k E_r(A_j) \\
        &= E_r(A_i) \leq \capacity - 1,
    \end{align*}
    where the final inequality follows from the following observation about how energy is transferred in a parallel schedule.
    It is easy to see from Definition~\ref{def:parallelschedule} that if $j > i$, then $E_{r-1}(A_i) \leq E_{r-1}(A_j)$; i.e., an amoebot can only have as much energy as any one of its descendants in a greedy parallel schedule.
    So if $A_i$ is receiving energy, it cannot have a full battery; otherwise, all of its descendants' batteries must also be full, leaving $A_i$ unable to simultaneously transfer energy to make room for the new energy it is receiving.
    Thus, $A_i$ must have capacity for at least one energy unit at the start of sequential round $r$, as desired.

    Thus, we have shown that if $A_i$ receives a unit of energy in parallel round $r$, then (1) either $i = 1$ or $E_r^S(A_{i-1}) \geq 1$, and (2) $E_r^S(A_i) \leq \capacity - 1$, meaning that at the start of sequential round $r$, there is both an energy unit available to pass to $A_i$ and $A_i$ has sufficient capacity to receive it.
    In other words, either $A_i$ is a source and continuously satisfies $g_\harvestenergy$ or its parent $A_{i-1}$ continuously satisfies $g_\shareenergy$.
    Since no energy is spent in an energy run (Invariant~\ref{inv:energyrun:nospend}), additional activations in sequential round $r$ can only increase the amount of energy available to pass to $A_i$ and increase the space available in $A_i.\battery$.
    Thus, by Lemma~\ref{lem:round}, $A_i$ must receive at least one energy unit in sequential round $r$, proving that $\Delta_{E_{r+1}^S}(A_i) \geq \Delta_{E_{r+1}}(A_i)$ in all cases.
    Since the choice of $A_i$ was arbitrary, we have shown $E_{r+1}^S \succeq E_{r+1}$.
\end{proof}

To conclude the dominance argument, we bound the number of parallel rounds needed to recharge a path of $k$ amoebots.
Combined with Lemma~\ref{lem:dominance}, this gives an upper bound on the worst case number of sequential rounds for any stabilized energy run to do the same.

\begin{lemma} \label{lem:paralleltime}
    Let $(E_1, \ldots, E_t)$ be the greedy parallel energy schedule on a simple path $\mathcal{L}$ of $k$ amoebots where $E_1(A_i) = 0$ and $E_t(A_i) = \capacity$ for all amoebots $A_i \in \mathcal{L}$.
    Then $t = k\capacity = \bigo{k}$.
\end{lemma}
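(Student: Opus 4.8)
The plan is to prove matching lower and upper bounds of $k\capacity$ rounds on the time needed to recharge the path $\mathcal{L}$.

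The lower bound is a straightforward energy-accounting argument. In a parallel energy schedule the only operation that increases the total energy $\sum_{i=1}^k E_r(A_i)$ on the path is the source $A_1$ harvesting, and by Definition~\ref{def:parallelschedule} this adds at most one unit per round. Since every battery starts at $0$ and must reach $\capacity$, a total of $k\capacity$ units must be injected, so at least $k\capacity$ rounds are required.

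For the upper bound, I would show that the source is never idle before the system is fully charged: $E_r(A_1) < \capacity$ for every round $r$ in which $E_r$ is not the all-$\capacity$ configuration, so the source harvests and the total energy strictly increases. The crux is the structural claim that \emph{saturation propagates toward the source}: for all $r$ and all $1 \le i < k$, $E_r(A_i) = \capacity$ implies $E_r(A_{i+1}) = \capacity$ (equivalently, no amoebot ever holds more energy than its descendants, as observed in the proof of Lemma~\ref{lem:dominance}). I would prove this by induction on $r$, with the base case $E_1 \equiv 0$ vacuous. For the inductive step I would analyze how a greedy parallel round changes the two batteries $A_i$ and $A_{i+1}$: amoebot $A_i$ gains at most one unit (harvesting if $i = 1$, otherwise a transfer from $A_{i-1}$) and loses at most one unit (a transfer to $A_{i+1}$), so $E_{r+1}(A_i) = \capacity$ forces either $E_r(A_i) = \capacity$ with no net change, or $E_r(A_i) = \capacity - 1$ with a net gain of one. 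In both cases $A_i$ does not transfer to $A_{i+1}$ during round $r$; since $E_r(A_i) \ge 1$, Definition~\ref{def:parallelschedule} then forces $E_r(A_{i+1}) = \capacity$, and combining this with the induction hypothesis applied to index $i+1$ shows $A_{i+1}$ neither receives nor sends energy in round $r$, hence $E_{r+1}(A_{i+1}) = \capacity$ as well. This inductive step is the main obstacle, since it requires simultaneously accounting for all incoming and outgoing transfers at both $A_i$ and $A_{i+1}$ within a single parallel round, and it is the one place where the ``greedy whenever possible'' semantics of Definition~\ref{def:parallelschedule} are essential.

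With this claim in hand, $E_r(A_1) = \capacity$ forces $E_r(A_i) = \capacity$ for all $i$ by iterating the implication, so the source reaches capacity only once the whole path is charged. Consequently the source harvests in every round until termination, the total energy on the path increases by exactly one per round, and it reaches its maximum value $k\capacity$---which can only happen when every battery is full---after exactly $k\capacity$ rounds. This meets the lower bound, so $t = k\capacity$, and since $\capacity = \Theta(1)$ we conclude $t = \bigo{k}$.
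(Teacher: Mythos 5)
Your proof is correct (for $\capacity \geq 2$) but takes a genuinely different route from the paper's. The paper inducts on $k$: after round $1$ the source acts as a continuous supply for $A_2, \ldots, A_k$, which by the induction hypothesis fully recharge in $(k-1)\capacity$ parallel rounds, after which the source tops itself off, giving $1 + (k-1)\capacity + (\capacity - 1) = k\capacity$. You instead fix the path and induct on rounds to establish the invariant that saturation propagates toward the source, conclude that the source harvests in every round before termination, and finish with exact energy accounting against the trivial $k\capacity$ lower bound. Your route buys an exact, conservation-based count and avoids the paper's somewhat informal step of treating a charged $A_i$ as ``an external energy source'' for the suffix $A_{i+1}, \ldots, A_k$; the paper's induction is shorter. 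Two caveats. First, your Case~B needs $E_r(A_i) = \capacity - 1 \geq 1$, i.e.\ $\capacity \geq 2$: for $\capacity = 1$ the invariant already fails after the first round (configuration $(1,0,\ldots,0)$), the greedy source must alternate harvesting and forwarding, and the exact count $k\capacity$ is off by a constant factor---but the paper's own proof (``$A_i$ will receive a unit of energy \ldots in every subsequent parallel round'') has the same blind spot, and the asymptotic claim $\bigo{k}$ survives either way, so this is not a gap attributable to you. Second, your parenthetical ``equivalently, no amoebot ever holds more energy than its descendants'' is not equivalent to your saturation claim---it is strictly stronger and, as a blanket statement, false at the configuration $(1,0,\ldots,0)$; your induction proves only the weaker saturation claim and never uses the stronger one, so this aside should simply be dropped.
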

\begin{proof}
    Argue by induction on $k$, the number of amoebots in path $\mathcal{L}$.
    If $k = 1$, then $A_1 = A_k$ is the source amoebot that harvests one unit of energy per parallel round from the external energy source by Definition~\ref{def:parallelschedule}.
    Since $A_1$ has no children to which it may pass energy, it is easy to see that it will harvest $\capacity$ energy in exactly $\capacity  = \Theta(1)$ parallel rounds.
    
    Now suppose $k > 1$ and that any path of $j \in \{1, \ldots, k-1\}$ amoebots fully recharges in $j\capacity$ parallel rounds.
    Once an amoebot $A_i$ has received energy for the first time, it follows from Definition~\ref{def:parallelschedule} that $A_i$ will receive a unit of energy from $A_{i-1}$ (or the external energy source, in the case that $i = 1$) in every subsequent parallel round until $A_i.\battery = \capacity$.
    Similarly, Definition~\ref{def:parallelschedule} ensures that $A_i$ will pass a unit of energy to $A_{i+1}$ in every subsequent parallel round until $A_{i+1}.\battery = \capacity$.
    Thus, once $A_i$ receives energy for the first time, $A_i$ effectively acts as an external energy source for the remaining amoebots $A_{i+1}, \ldots, A_k$.
    
    The source amoebot $A_1$ first harvests energy from the external energy source in parallel round $1$ and thus acts as a continuous energy source for $A_2, \ldots, A_k$ in all subsequent rounds.
    By the induction hypothesis, we know $A_2, \ldots, A_k$ will fully recharge in $(k-1)\capacity$ parallel rounds, after which $A_1$ will no longer pass energy to $A_2$.
    The source amoebot $A_1$ harvests one energy unit from the external energy source per parallel round and already has $A_1.\battery = 1$, so in an additional $\capacity - 1$ parallel rounds we have $A_1.\battery = \capacity$.
    Therefore, the path $A_1, \ldots, A_k$ fully recharges in $1 + (k-1)\capacity + \capacity - 1 = k\capacity = \bigo{k}$ parallel rounds, as required.
\end{proof}

Combining the lemmas of this section yields the following bound on the recharge time.

\begin{lemma} \label{lem:rechargetime}
    After at most $\bigo{\numAmoebots}$ rounds of any uninterrupted, stabilized energy run of $\sched^\demand$, all $\numAmoebots$ amoebots have full batteries.
\end{lemma}
\begin{proof}
    Consider any stabilized energy run of $\sched^\demand$.
    By definition, this energy run starts in a configuration where all amoebots belong to stable trees, and by Invariant~\ref{inv:energyrun:stabletrees} the structure of $\forest$ will not change throughout this energy run.
    So consider any (stable) tree $\tree \in \forest$ and suppose, in the worst-case, that all amoebots have initially empty batteries.
    By Lemma~\ref{lem:pathrecharge}, the recharge time for $\tree$ is at most the recharge time for a path $\mathcal{L}$ of $|\tree|$ amoebots.
    Any activation sequence representing a recharge process for $\mathcal{L}$ runs at least as fast as a greedy parallel energy schedule for $\mathcal{L}$ (Lemma~\ref{lem:dominance}), and the latter must fully recharge $\mathcal{L}$ in $\bigo{|\mathcal{L}|} = \bigo{|\tree|}$ rounds (Lemma~\ref{lem:paralleltime}).
    Since $\tree$ contains at most $\numAmoebots$ amoebots, the lemma follows.
\end{proof}

We can now prove Theorem~\ref{thm:main}, concluding our analysis.

\begin{proof}[Proof of Theorem~\ref{thm:main}]
    As in the statement of Theorem~\ref{thm:main}, consider any energy-compatible amoebot algorithm $\alg$ and demand function $\demand : \alg \to \{1, 2, \ldots, \capacity\}$, and let $\alg^\demand$ be the algorithm produced from $\alg$ and $\demand$ by the energy distribution framework.
    Let $C_0$ be any (legal) connected initial configuration for $\alg$ and let $C_0^\demand$ be its extension for $\alg^\demand$ that designates at least one source amoebot and adds the energy distribution variables with their initial values (Table~\ref{tab:frameworkvariables}) to all amoebots.
    Finally, consider any sequential execution $\sched^\demand$ of $\alg^\demand$ starting in $C_0^\demand$.
    Let $\sched^\demand_\alpha$ be its subsequence of $\alpha_i^\demand$ action executions and $\sched_\alpha$ be the corresponding sequence of $\alpha_i$ action executions.
    By Lemma~\ref{lem:equivalence}, $\sched_\alpha$ is a valid sequential execution of the original algorithm $\alg$.
    Since $\alg$ is assumed to be energy-compatible, its sequential executions always terminate.
    Thus, $\sched_\alpha$ is finite and, by extension, so is $\sched^\demand_\alpha$.
    This implies that the overall execution $\sched^\demand$ contains at most a finite number of distinct energy runs.
    Each of these energy runs is finite by Lemmas~\ref{lem:forestblocksfinite} and~\ref{lem:energyblocksfinite}, so we conclude that $\sched^\demand$ in total is finite.
    
    Let $C^\demand$ be the terminating configuration of $\sched^\demand$, but suppose to the contrary that there does not exist a sequential execution of $\alg$ starting in $C_0$ that terminates in the configuration $C$ obtained from $C^\demand$ by removing the energy distribution variables.
    We have already shown that $\sched_\alpha$ is a valid sequential execution of $\alg$ starting in $C_0$.
    Moreover, $\alg^\demand$ only moves amoebots and modifies variables of algorithm $\alg$ during $\alpha_i^\demand$ executions, so all amoebot movements and updates to variables of algorithm $\alg$ are identical in $\sched_\alpha$ and $\sched^\demand$.
    Thus, $\sched_\alpha$ must reach configuration $C$ but---for the sake of contradiction---cannot terminate there; i.e., there must exist an amoebot $A$ for which some action $\alpha_i$ is enabled in $C$ but all amoebots are disabled in $C^\demand$; in particular, the corresponding action $\alpha_i^\demand$ is disabled for $A$ in $C^\demand$.

    The guard $g_i^\demand$ of action $\alpha_i^\demand$ requires three properties: $A$ satisfies guard $g_i$ of action $\alpha_i$, $A$ and its neighbors are not \idle\ or \pruning, and $A$ has at least $\demand(\alpha_i)$ energy.
    We know $A$ satisfies $g_i$ in $C^\demand$ because $\alpha_i$ is enabled for $A$ in $C$.
    No amoebot in $C^\demand$ can be \idle, since the connectivity of $C^\demand$ (Invariant~\ref{inv:reachable:connected}) implies that some amoebot would satisfy $g_\askgrowth$ or $g_\growforest$ and thus be enabled by $\alpha_\energydist$, contradicting $C^\demand$ as a terminating configuration.
    Similarly, no amoebot can be \pruning\ in $C^\demand$ since this amoebot would satisfy $g_\getpruned$.
    So suppose that in $C^\demand$, $A.\battery < \demand(\alpha_i) \leq \capacity$.
    Then $A$ cannot be a source, since it would satisfy $g_\harvestenergy$.
    So $A$ must be \xactive, \asking, or \growing, all of which imply $A$ has a parent in forest $\forest$.
    The connectivity of $C^\demand$ (Invariant~\ref{inv:reachable:connected}) implies that some ancestor of $A$ satisfies $g_\harvestenergy$ or $g_\shareenergy$: either the parent of $A$ satisfies $g_\shareenergy$, or the parent of $A$ has insufficient energy to share but the grandparent of $A$ satisfies $g_\shareenergy$, and so on up to the source root of the tree which, if it does not have sufficient energy to share, must satisfy $g_\harvestenergy$.
    Therefore, we reach a contradiction in all cases, proving that if $C^\demand$ is a terminating configuration for $\sched^\demand$, then $C$ is a terminating configuration for $\sched_\alpha$ and thus there exists a sequential execution of $\alg$ starting in $C_0$ that terminates in $C$.

    We conclude by proving the runtime overhead bound.
    Let $\algruntime$ be the maximum number of action executions in any sequential execution of $\alg$ on $\numAmoebots$ amoebots.
    We know $\algruntime$ is finite because $\alg$ is energy-compatible.
    By Lemma~\ref{lem:equivalence}, any sequential execution of $\alg^\demand$ contains at most $\algruntime + 1$ energy runs, and each energy run terminates in at most $\bigo{\numAmoebots^2}$ rounds by Lemmas~\ref{lem:stabletime} and~\ref{lem:rechargetime}.
    Therefore, we conclude that any sequential execution of $\alg^\demand$ terminates in at most $\bigo{\numAmoebots^2} \cdot (\algruntime + 1) = \bigo{\numAmoebots^2\algruntime}$ rounds.
\end{proof}
\else
Due to space constraints, we highlight only the most important supporting results of this analysis.
All omitted lemmas, invariants, and proofs can be found in Appendix~\ref{app:edfproofs}.

\begin{lemma}\label{lem:equivalence}
    Consider any sequential execution $\sched^\demand$ of $\alg^\demand$ starting in initial configuration $C_0^\demand$ and let $\sched_\alpha^\demand$ denote its subsequence of $\alpha_i^\demand$ action executions.
    Then the corresponding sequence $\sched_\alpha$ of $\alpha_i$ executions is a valid sequential execution of $\alg$ starting in initial configuration $C_0$.
\end{lemma}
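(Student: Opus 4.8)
The plan is to prove, by induction on the number of elementary $\alpha_i^\demand$ steps, the stronger statement that the configuration reached in $\sched^\demand$ after each such step agrees — on amoebot positions and on all variables of $\alg$ — with the configuration reached in $\sched_\alpha$ after the corresponding $\alpha_i$ step, the two differing only in the added energy-distribution variables $\xstate$, $\parent$, $\battery$. Write $C_r^\demand$ for the configuration reached in $\sched^\demand$ immediately after the $r$-th execution in $\sched_\alpha^\demand$ and $C_r$ for the one reached in $\sched_\alpha$ after its $r$-th execution, and abbreviate ``agrees modulo the energy-distribution variables'' by $\cong$. The base case $r = 0$ is immediate from the definition of $C_0^\demand$ as an extension of $C_0$ given in the statement of Theorem~\ref{thm:main}.

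For the inductive step, assume $C_{r-1}^\demand \cong C_{r-1}$. In $\sched^\demand$, everything occurring between $C_{r-1}^\demand$ and the point at which the $r$-th $\alpha_i^\demand$ execution begins is a single (possibly empty) energy run of $\alpha_\energydist$ executions. By inspection of Algorithm~\ref{alg:framework} — equivalently, by Invariant~\ref{inv:energyrun} together with the observation that $\alg$, being energy-agnostic, never references $\xstate$, $\parent$, or $\battery$ — an $\alpha_\energydist$ execution moves no amoebot and writes only to energy-distribution variables. Hence the configuration at the start of the $r$-th $\alpha_i^\demand$ execution is still $\cong C_{r-1}$, and therefore agrees with $C_{r-1}^\demand$ on all of $\alg$'s state.

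It then remains to compare a single step. Suppose amoebot $A$ executes $\alpha_i^\demand$ at that point. The guard $g_i^\demand$ has $g_i$ as a conjunct, and $g_i$ depends only on $A$'s connections and on the $\alg$-variables of $A$ and its neighbors, all of which coincide with their values in $C_{r-1}$; hence $\alpha_i$ is enabled for $A$ in $C_{r-1}$, and since $\alg$ satisfies Convention~\ref{conv:valid} and all other amoebots are inactive in a sequential execution, the execution of $\alpha_i$ by $A$ succeeds. Finally, $ops_i^\demand$ runs the compute phase of $ops_i$ verbatim and, in the move phase, performs exactly the movement operation $M_i$ of $ops_i$ (if any); the remaining instructions — spending $\demand(\alpha_i)$ energy and the \textsc{Prune} and \parent-pointer bookkeeping guarding \Contract, \Pull, and \Push\ movements — touch only the energy-distribution variables. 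Thus $\alpha_i^\demand$ by $A$ has the same effect on positions and $\alg$-variables as $\alpha_i$ by $A$, giving $C_r^\demand \cong C_r$. Iterating over $r$ shows that every step of $\sched_\alpha$ is an enabled, successful $\alg$-transition out of a reachable configuration, so $\sched_\alpha$ is a valid sequential execution of $\alg$ from $C_0$.

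The only delicate point is the bookkeeping claim that the $\alpha_\energydist$ action and the ``wrapper'' lines prepended to each $\alpha_i^\demand$ are invisible to $\alg$; this is settled by reading Algorithm~\ref{alg:framework} line by line — noting that every memory write it introduces targets $\xstate$, $\parent$, or $\battery$ — rather than by any substantive argument, so I do not anticipate a genuine obstacle.
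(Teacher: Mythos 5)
Your proposal is correct and follows essentially the same route as the paper's proof: induction on the index $r$ of $\alpha_i^\demand$ executions, showing $C_r^\demand \cong C_r$ by noting that $\alpha_\energydist$ executions neither move amoebots nor touch $\alg$'s variables, that $g_i^\demand$ contains $g_i$ as a conjunct, and that $ops_i^\demand$ emulates $ops_i$ up to energy-distribution bookkeeping. Your explicit appeal to Convention~\ref{conv:valid} for the success of the isolated $\alpha_i$ execution is a minor elaboration the paper leaves implicit, not a different argument.
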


This lemma implies that any sequential execution $\sched^\demand$ of $\alg^\demand$ contains a finite number of $\alpha_i^\demand$ executions, since the corresponding sequence of $\alpha_i$ executions forms a possible sequential execution of $\alg$, which must terminate because $\alg$ is energy-compatible.
It remains to analyze the \textit{energy runs} in $\sched^\demand$, i.e., the maximal sequences of consecutive $\alpha_\energydist$ executions delineated by $\alpha_i^\demand$ executions.
Formally, an execution of $\alpha_\energydist$ by an amoebot $A$ is \textit{$g$-supported} if predicate $g \in \mathcal{G}$ is satisfied when $A$ is activated and executes $\alpha_\energydist$.
We argue that any predicate $g \in \mathcal{G}$ can support at most a finite number of executions per energy run, implying that all energy runs, and thus all sequential executions of $\alg^\demand$, are finite:

\begin{lemma} \label{lem:energyrunfinite}
    Any energy run of $\sched^\demand$ contains at most a finite number of $g$-supported $\alpha_\energydist$ executions, for any $g \in \mathcal{G}$.
\end{lemma}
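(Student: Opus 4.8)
The plan is to prove the statement one predicate $g \in \mathcal{G}$ at a time, grouping the five predicates into the \emph{forest-maintenance} ones ($g_\getpruned$, $g_\askgrowth$, $g_\growforest$) and the \emph{energy-transfer} ones ($g_\harvestenergy$, $g_\shareenergy$). These two groups can be analyzed independently: the \getpruned, \askgrowth, and \growforest\ blocks only ever modify the $\xstate$ and $\parent$ variables, while the \harvestenergy\ and \shareenergy\ blocks only ever modify $\battery$, so when reasoning about executions supported by one group's predicates we may ignore the operations of the other group.

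For the energy-transfer predicates I would use the fact that no energy is spent during an energy run (Invariant~\ref{inv:energyrun:nospend}). Each $g_\harvestenergy$-supported execution strictly increases the total energy stored in the system, which never decreases and is capped at $\numAmoebots\capacity$, so at most $\numAmoebots\capacity = \bigo{\numAmoebots}$ of them can occur. Each $g_\shareenergy$-supported execution by an amoebot $A$ moves one unit of energy into the subtree of $\forest$ rooted at some child of $A$; since no amoebot moves during an energy run (Invariant~\ref{inv:energyrun:nomove}) this subtree is fixed throughout the run, and since energy never leaves it and its total capacity is at most $\numAmoebots\capacity$, amoebot $A$ can perform at most $\numAmoebots\capacity$ such executions. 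Summing over all amoebots yields a finite bound.

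The forest-maintenance predicates are the crux of the proof and rely on a preliminary observation I would establish first: in a single energy run no amoebot is pruned from and re-adopted into $\forest$ more than a constant number of times (at most eight, one per potential neighbor). This is because no neighbor of $A$ can move during an energy run (Invariant~\ref{inv:energyrun:nomove}), so unboundedly many re-adoptions of $A$ would force some fixed neighbor $B$ to adopt $A$ more than once, hence to enter state \growing\ more than once while sitting in an unstable tree---impossible, because unstable trees are severed from every source amoebot and so no fresh \growing\ state can be generated inside one. With this bound in hand, each predicate follows by a pigeonhole-and-contradiction argument: infinitely many $g_\getpruned$-supported executions would make some amoebot cycle between \pruning\ and \idle\ infinitely often, contradicting the re-adoption bound; infinitely many $g_\askgrowth$-supported executions by one amoebot would force it to return from \asking\ to \xactive\ infinitely often, which (apart from re-adoption, already bounded) requires its parent to perform infinitely many $g_\growforest$-supported executions; and infinitely many $g_\growforest$-supported executions propagate up the stable tree to its source amoebot, whose every successful adoption is permanent by Invariant~\ref{inv:energyrun:stabletrees}, so the source's tree can grow only finitely often.

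The main obstacle is this last chain, and specifically the possibility that an amoebot becomes \asking\ on account of an \idle\ neighbor but, by the time it is granted permission and becomes \growing, that neighbor has already been adopted by another amoebot. This ``ask but fail to adopt'' phenomenon is exactly what the constant re-adoption bound controls: an amoebot can witness an \idle\ neighbor that later escapes it only as many times as that neighbor is re-adopted, i.e., a constant number of times, so it too can happen only finitely often per energy run. Combining the constant per-neighbor bounds with the permanence of stable-tree membership (Invariant~\ref{inv:energyrun:stabletrees}) closes every case.
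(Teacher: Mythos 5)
Your proposal takes essentially the same route as the paper's proof: it likewise splits the claim into the forest-maintenance and energy-transfer predicates, establishes the same key lemma (each amoebot is pruned from and re-adopted into $\forest$ at most a constant number of times per energy run, because a non-moving neighbor would otherwise have to become \growing\ repeatedly inside an unstable tree severed from every source), and handles the ``ask but fail to adopt'' subtlety exactly as the paper does, with the harvest/share counts bounded by total battery capacity since energy is never spent. One minor caveat: your justification that a child's subtree is ``fixed throughout the run'' because no amoebot moves is not literally correct (parent pointers do change during an energy run via pruning and adoption), but the paper's own counting argument for \shareenergy\ is equally informal on this point and the finiteness conclusion is unaffected.
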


Let $C^\demand$ be the terminating configuration of $\sched^\demand$.
We must show that there exists a sequential execution of $\alg$ starting in $C_0$ that terminates in the configuration $C$ obtained from $C^\demand$ by removing the energy distribution variables.
An obvious candidate is the sequence $\sched_\alpha$ of $\alpha_i$ executions corresponding to the $\alpha_i^\demand$ executions in $\sched^\demand$.
Lemma~\ref{lem:equivalence} already implies that $\sched_\alpha$ reaches $C$, and a careful argument involving the guard of $\alpha_\energydist$ shows that it must also terminate there.
The remainder of the analysis characterizes the time required for an \textit{uninterrupted} energy run---i.e., one that is not ended early by an $\alpha_i^\demand$ execution, which only helps the overall progress argument---to collect all amoebots into stable trees rooted at source amoebots and, once this is achieved, to fully recharge all amoebots' batteries.

\begin{lemma} \label{lem:stabletime}
    After at most $\bigo{\numAmoebots^2}$ rounds of any uninterrupted energy run of $\sched^\demand$, all $\numAmoebots$ amoebots belong to stable trees.
\end{lemma}
\begin{lemma} \label{lem:rechargetime}
    After at most $\bigo{\numAmoebots}$ rounds of any uninterrupted, stabilized energy run of $\sched^\demand$, all $\numAmoebots$ amoebots have full batteries.
\end{lemma}

These lemmas imply that every energy run terminates in at most $\bigo{\numAmoebots^2}$ rounds.
The theorem supposes that any sequential execution of $\alg$ terminates in $\algruntime$ action executions, so we know by Lemma~\ref{lem:equivalence} that any sequential execution of $\alg^\demand$ contains at most $\algruntime + 1$ energy runs.
Combining these facts yields the $\bigo{\numAmoebots^2\algruntime}$ runtime bound for $\alg^\demand$.
\fi

\section{Energy-Constrained Leader Election and Shape Formation}  \label{sec:edfcompatible}

With the energy distribution framework defined and its properties analyzed, we now apply it to existing energy-agnostic algorithms for leader election and shape formation and show simulations of their energy-constrained counterparts.
We first make a straightforward observation about \textit{stationary} amoebot algorithms, i.e., those in which amoebots do not move.
These include simple primitives like spanning forest formation~\cite{Daymude2019-computingprogrammable} and binary counters~\cite{Porter2018-collaborativecomputation,Daymude2020-convexhull} as well as the majority of existing algorithms for leader election~\cite{Derakhshandeh2015-leaderelection,Daymude2017-improvedleader,Bazzi2019-stationarydeterministic,Gastineau2019-distributedleader,DiLuna2020-shapeformation,Gastineau2022-leaderelection,Briones2023-invitedpaper}.
It is easily seen that an algorithm that never moves cannot disconnect an initially connected system, and its actions never involve a ``move phase''.
Thus,

\begin{observation}  \label{obs:stationary}
    All stationary amoebot algorithms satisfy Convention~\ref{conv:connect}, and those that do not use \Lock\ or \Unlock\ operations also satisfy Convention~\ref{conv:phases}.
\end{observation}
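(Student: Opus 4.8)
The plan is to verify the two conventions separately, exploiting the defining feature of a stationary algorithm: since no amoebot ever executes a \Contract, \Expand, \Push, or \Pull\ operation, the multiset of nodes occupied by amoebots is invariant across any execution, and each action's operation sequence contains no movement operation at all.

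For Convention~\ref{conv:connect}, I would fix an arbitrary stationary algorithm $\alg$ and any sequential execution starting from a connected configuration $C_0$. Because the actions of $\alg$ contain no movement operations, every configuration $C$ reachable from $C_0$ occupies exactly the same node set as $C_0$; only the contents of public memories may differ. Connectivity of a configuration depends solely on which nodes are occupied (and their adjacency in $\Gtri$), so $C$ is connected precisely because $C_0$ is. Hence every reachable configuration is connected and Convention~\ref{conv:connect} holds.

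For Convention~\ref{conv:phases}, I would additionally assume $\alg$ uses neither \Lock\ nor \Unlock. By the amoebot model, each action's operations form a finite sequence of operations interleaved with a finite amount of private-memory computation; since $\alg$ is stationary and \Lock/\Unlock-free, this sequence consists only of \Connected, \Read, and \Write\ operations. The required decomposition is then immediate: take the entire operation sequence to be the compute phase and take the move phase to be empty. Clause~(1) is satisfied because the sequence is finite and contains only \Connected, \Read, and \Write\ operations; clause~(2) is satisfied because an empty move phase trivially performs ``at most one movement operation''; and the prohibition on \Lock\ and \Unlock\ holds by assumption. Therefore Convention~\ref{conv:phases} holds for such algorithms.

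I do not expect a genuine obstacle here, as this is a routine observation; the only point requiring care is to confirm against the exact wording of Convention~\ref{conv:phases} that a vacuous move phase is admissible and that no ordering constraint among the compute-phase operations is imposed beyond their all preceding the (here nonexistent) movement, both of which are the case.
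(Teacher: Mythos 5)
Your proposal is correct and follows exactly the reasoning the paper uses (the paper only states it informally: a stationary algorithm never moves, so it cannot disconnect an initially connected system, and its actions have no move phase, so absent \Lock/\Unlock\ the phase structure is trivially satisfied). Your write-up simply makes the same two-part argument explicit, including the admissibility of an empty move phase, and no further justification is needed.
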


Observation~\ref{obs:stationary} immediately implies the following about stationary algorithms' compatibility with the energy distribution framework.

\begin{corollary}  \label{cor:stationary}
    Any stationary amoebot algorithm that terminates under every (unfair) sequential execution, comprises only valid actions (i.e., those whose executions always succeed in isolation), and does not use \Lock\ or \Unlock\ operations is energy-compatible.
\end{corollary}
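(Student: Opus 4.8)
The plan is to verify each of the two requirements in Definition~\ref{def:edfcompatible} directly from the corollary's hypotheses together with Observation~\ref{obs:stationary}. Recall that an energy-agnostic algorithm $\alg$ is energy-compatible if (i) every unfair sequential execution of $\alg$ terminates, and (ii) $\alg$ satisfies Conventions~\ref{conv:valid}--\ref{conv:connect}. Requirement~(i) and Convention~\ref{conv:valid} are immediate: the corollary explicitly assumes that every unfair sequential execution terminates and that all actions are valid (i.e., succeed in isolation).

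For the remaining two conventions I would invoke Observation~\ref{obs:stationary}. Since $\alg$ is stationary and uses neither \Lock\ nor \Unlock, Observation~\ref{obs:stationary} yields Convention~\ref{conv:phases} (Phase Structure); concretely, each action consists only of a compute phase of finite private computation interleaved with \Connected, \Read, and \Write\ operations, and its move phase contains no movement operation, so the ``compute-then-move'' structure holds vacuously and the prohibition on \Lock/\Unlock\ is exactly the corollary's hypothesis. Likewise, because $\alg$ is stationary, Observation~\ref{obs:stationary} yields Convention~\ref{conv:connect} (Connectivity): no action ever relocates an amoebot, so no sequential execution starting from a connected configuration can reach a disconnected one.

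Having established requirement~(i) and all of Conventions~\ref{conv:valid}--\ref{conv:connect}, the algorithm is energy-compatible by Definition~\ref{def:edfcompatible}. There is no genuine obstacle here---the corollary is essentially a repackaging of Observation~\ref{obs:stationary} with the ``termination'' and ``valid actions'' clauses of Definition~\ref{def:edfcompatible} spelled out as hypotheses. The only points worth double-checking are that ``stationary'' is strong enough to make the move-phase clause of Convention~\ref{conv:phases} vacuous and to preclude disconnection for Convention~\ref{conv:connect}, both of which were already argued in the discussion preceding Observation~\ref{obs:stationary}.
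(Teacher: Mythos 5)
Your proposal is correct and matches the paper's reasoning: the paper also treats the corollary as an immediate consequence of Observation~\ref{obs:stationary} (giving Conventions~\ref{conv:phases} and~\ref{conv:connect}), with termination and validity (Convention~\ref{conv:valid}) supplied directly by the corollary's hypotheses, so Definition~\ref{def:edfcompatible} is satisfied. Nothing further is needed.
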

 
One such algorithm is \erosionAlg, a deterministic leader election algorithm for hole-free, connected amoebot systems introduced by Di Luna et al.~\cite{DiLuna2020-shapeformation} and extended to the canonical amoebot model and three-dimensional space by Briones et al.~\cite{Briones2023-invitedpaper}.
All amoebots first become leader candidates.
When activated, a candidate uses certain rules regarding the number and relative positions of its neighbors to decide whether to ``erode'', revoking its candidacy without disconnecting or introducing a hole into the remaining set of candidates.
The last remaining candidate is necessarily unique and thus declares itself the leader.

\begin{lemma} \label{lem:leaderelection}
    \erosionAlg\ is energy-compatible.
\end{lemma}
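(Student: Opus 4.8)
The plan is to invoke Corollary~\ref{cor:stationary}, which reduces energy-compatibility of \erosionAlg\ to four checks: (i) \erosionAlg\ is stationary; (ii) every (unfair) sequential execution of \erosionAlg\ terminates; (iii) all of its actions are valid (their executions always succeed in isolation); and (iv) it uses neither \Lock\ nor \Unlock. Properties (i), (iii), and (iv) I expect to dispatch by direct inspection of the algorithm's description in~\cite{Briones2023-invitedpaper}. For (i): an eroding amoebot only revokes its own candidacy and updates bookkeeping variables in public memory; it never performs an \Expand, \Contract, \Push, or \Pull, so no action of \erosionAlg\ has a move phase and an initially connected system can never be disconnected. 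For (iv): the canonical-model version of \erosionAlg\ is expressed purely with \Connected, \Read, and \Write\ operations and does not invoke the concurrency-control primitives. For (iii): each action's guard already verifies the presence (or absence) of exactly those neighbors it subsequently inspects or writes to, and there is no movement that could fail, so an execution in isolation always succeeds.

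The substantive part is (ii), termination under an unfair sequential adversary. I would argue via a monovariant on the set $K$ of remaining leader candidates: $K$ is non-increasing over any execution and strictly shrinks whenever an erosion action is executed. By the design of \erosionAlg\ (carried over from Di Luna et al.\ and proved correct by Briones et al.), a connected, hole-free subcomplex of \Gtri\ with $|K| \ge 2$ always contains an amoebot that can erode without disconnecting $K$ or introducing a hole, and erosion preserves both connectivity and hole-freeness; the unique amoebot remaining once $|K| = 1$ simply declares itself the leader. Hence any sequential execution contains at most $\numAmoebots - 1$ erosion actions, at most one leader-declaration action, and only a bounded number of auxiliary state-transition actions per amoebot; in particular it has finitely many action executions and therefore terminates, no matter which enabled amoebot the adversary chooses at each step. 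Equivalently, one may simply cite the correctness theorem of Briones et al.~\cite{Briones2023-invitedpaper}, which already establishes that \erosionAlg\ elects a leader and halts under the unfair adversary.

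The main obstacle is purely one of careful bookkeeping against the exact action set of \erosionAlg\ in~\cite{Briones2023-invitedpaper}: I must confirm there is no auxiliary action (e.g.\ one that propagates a ``done'' signal or resets a flag) whose guard could be re-satisfied indefinitely without $|K|$ decreasing, which would otherwise defeat the monovariant argument for termination under unfairness. Once every such action is verified to fire only a bounded number of times per amoebot, the rest of the argument is immediate: items (i), (iii), (iv) are inspection, (ii) is the monovariant above, and Corollary~\ref{cor:stationary} then yields energy-compatibility.
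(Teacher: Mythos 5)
Your proposal matches the paper's proof: both reduce to Corollary~\ref{cor:stationary} by noting \erosionAlg\ is stationary, then discharge termination (and validity) by citing the unfair-sequential correctness analysis of Briones et al.~\cite{Briones2023-invitedpaper} and check the absence of \Lock/\Unlock\ by inspection. Your extra monovariant sketch for termination is fine but unnecessary, since the cited theorem already covers it, exactly as the paper does.
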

\begin{proof}
    \erosionAlg\ is clearly stationary---no movement is involved in checking neighbors' positions or revoking candidacy---so it suffices to check the conditions of Corollary~\ref{cor:stationary}.    
    Briones et al.~\cite{Briones2023-invitedpaper} have already shown that any unfair sequential execution of this algorithm elects a leader---and thus terminates---in $\bigo{n}$ rounds.
    This correctness analysis also confirms that no actions of \erosionAlg\ are invalid; otherwise, some action executions would fail.
    Finally, it is easy to verify from the algorithm's pseudocode in~\cite{Briones2023-invitedpaper} that \Lock\ and \Unlock\ are not used, so we are done.
\end{proof}

Combining this lemma, the energy distribution framework's guarantees (Theorem~\ref{thm:main}), and \erosionAlg's correctness and runtime guarantees (Theorem~6.3 of~\cite{Briones2023-invitedpaper}) immediately implies the following theorem.

\begin{theorem} \label{thm:leaderelection}
    For any demand function $\demand : \erosionAlg \to \{1, 2, \ldots, \capacity\}$, the algorithm $\erosionAlg^\demand$ produced by the energy distribution framework deterministically solves the leader election problem for hole-free, connected systems of $\numAmoebots$ amoebots in $\bigo{\numAmoebots^3}$ rounds assuming geometric space, assorted orientations, constant-size memory, and an unfair sequential adversary.
\end{theorem}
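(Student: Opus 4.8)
The plan is to derive Theorem~\ref{thm:leaderelection} as a direct composition of three facts already established: \erosionAlg\ is energy-compatible (Lemma~\ref{lem:leaderelection}), the energy distribution framework preserves behavior and adds only an $\bigo{\numAmoebots^2}$ per-energy-run overhead (Theorem~\ref{thm:main}), and every unfair sequential execution of \erosionAlg\ on $\numAmoebots$ amoebots elects a unique leader in $\bigo{\numAmoebots}$ action executions (Theorem~6.3 of~\cite{Briones2023-invitedpaper}). First I would invoke Lemma~\ref{lem:leaderelection} so that Theorem~\ref{thm:main} applies to $\erosionAlg^\demand$ for any demand function $\demand : \erosionAlg \to \{1,\ldots,\capacity\}$, with $C_0$ ranging over the legal, connected, hole-free initial configurations for \erosionAlg\ and $C_0^\demand$ its extension that designates a source amoebot and initializes the variables of Table~\ref{tab:frameworkvariables}.

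For the runtime bound I would set $\algruntime = \bigo{\numAmoebots}$ from~\cite{Briones2023-invitedpaper} and plug it into Theorem~\ref{thm:main}, which yields termination of every unfair sequential execution of $\erosionAlg^\demand$ within $\bigo{\numAmoebots^2 \cdot \algruntime} = \bigo{\numAmoebots^3}$ rounds. For correctness, I would take any configuration $C^\demand$ in which such an execution terminates; Theorem~\ref{thm:main} supplies an unfair sequential execution of \erosionAlg\ terminating in a configuration $C$ that agrees with $C^\demand$ on amoebot positions and all variables of \erosionAlg. Since \erosionAlg\ solves leader election, $C$ has exactly one amoebot marked leader, and because this marking is an \erosionAlg\ variable (not an energy variable), $C^\demand$ has exactly one leader as well. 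Hole-freeness is maintained throughout because \erosionAlg\ is stationary and $\erosionAlg^\demand$ never moves amoebots outside of $\alpha_i^\demand$ executions, which faithfully emulate the (stationary) actions of \erosionAlg. Determinism is inherited since neither \erosionAlg\ nor the framework is randomized. Finally, the model assumptions transfer directly: \erosionAlg\ holds under geometric space, assorted orientations, constant-size memory, and an unfair sequential adversary, and the framework adds only the constant-domain variables of Table~\ref{tab:frameworkvariables} (using $\capacity = \Theta(1)$), so constant-size memory is preserved while the remaining assumptions are untouched.

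The argument is essentially mechanical, so there is no deep obstacle; the one point requiring care is to make the phrase ``solves the leader election problem'' precise enough that it transfers cleanly across the ``identical modulo energy variables'' correspondence of Theorem~\ref{thm:main}. Concretely, one must confirm (i) that termination is guaranteed for $\erosionAlg^\demand$---which follows from the runtime bound, not from the behavioral-equivalence clause alone---and (ii) that the leader predicate is a variable of \erosionAlg\ and hence identical in $C$ and $C^\demand$, so that ``exactly one leader'' is inherited. Both are immediate once the bookkeeping is spelled out.
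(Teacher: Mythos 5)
Your proposal is correct and follows exactly the paper's route: the paper derives Theorem~\ref{thm:leaderelection} as an immediate composition of Lemma~\ref{lem:leaderelection}, Theorem~\ref{thm:main} with $\algruntime = \bigo{\numAmoebots}$ from Theorem~6.3 of~\cite{Briones2023-invitedpaper}, giving the $\bigo{\numAmoebots^3}$ bound and correctness modulo the energy variables. Your elaboration of the bookkeeping (termination from the runtime bound, the leader predicate being an \erosionAlg\ variable, and preservation of the model assumptions) only spells out details the paper leaves implicit.
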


\begin{figure}[t]
    \centering
    \begin{subfigure}{.24\textwidth}
        \centering
        \includegraphics[width=\textwidth,trim={0 2cm 0 2cm},clip]{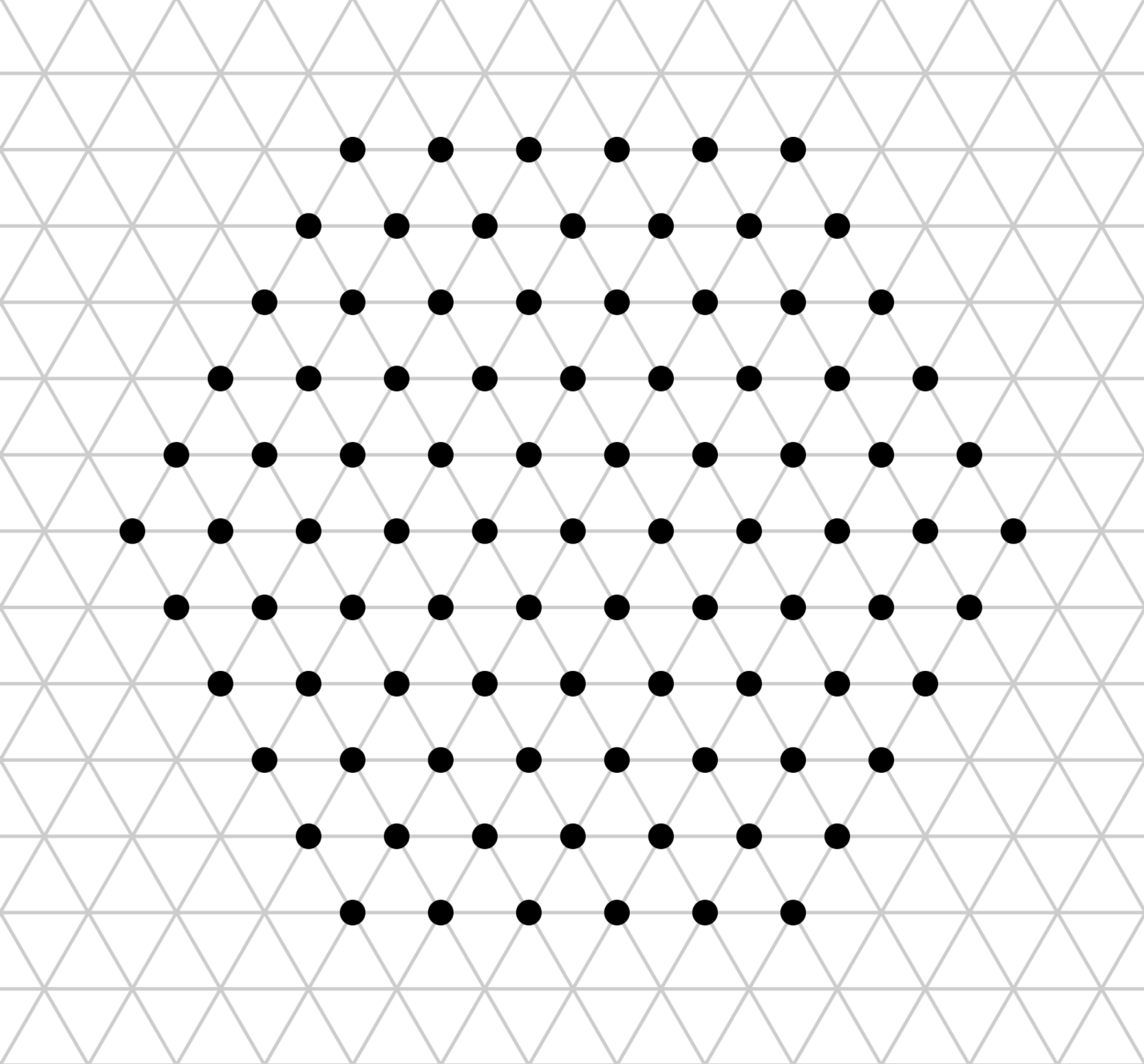}
    \end{subfigure}
    \hfill
    \begin{subfigure}{.24\textwidth}
        \centering
        \includegraphics[width=\textwidth,trim={0 2cm 0 2cm},clip]{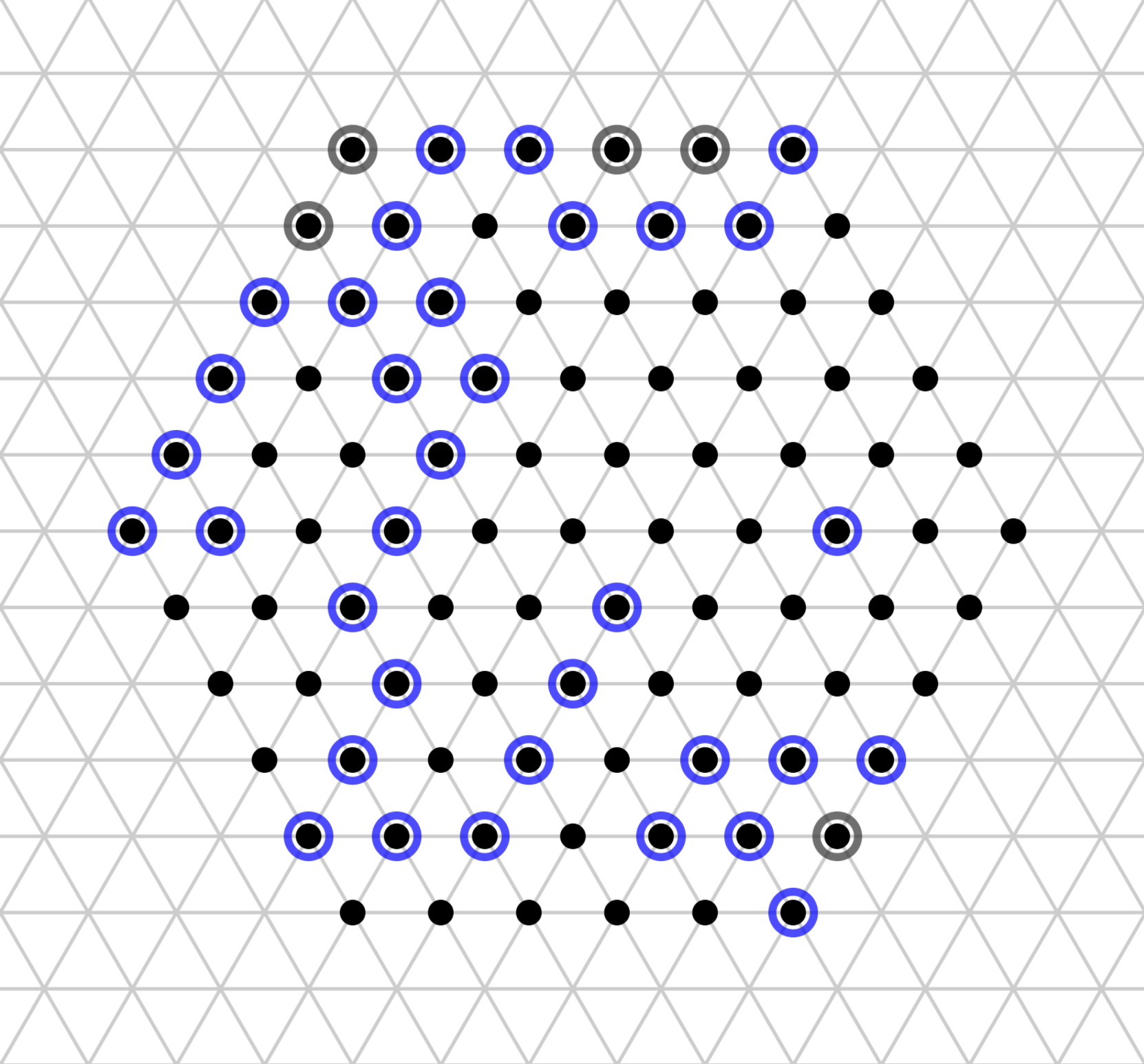}
    \end{subfigure}
    \hfill
    \begin{subfigure}{.24\textwidth}
        \centering
        \includegraphics[width=\textwidth,trim={0 2cm 0 2cm},clip]{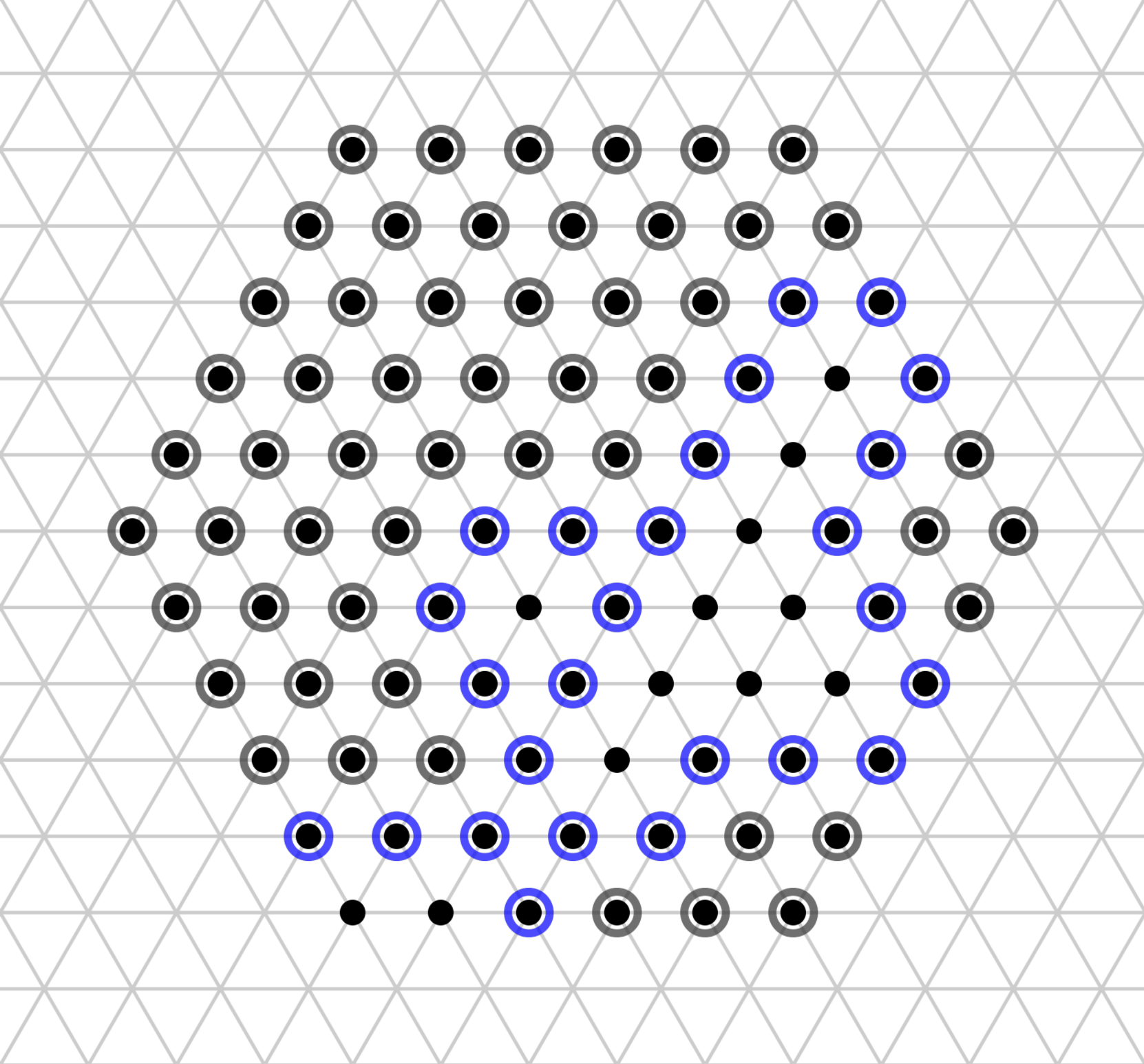}
    \end{subfigure}
    \begin{subfigure}{.24\textwidth}
        \centering
        \includegraphics[width=\textwidth,trim={0 2cm 0 2cm},clip]{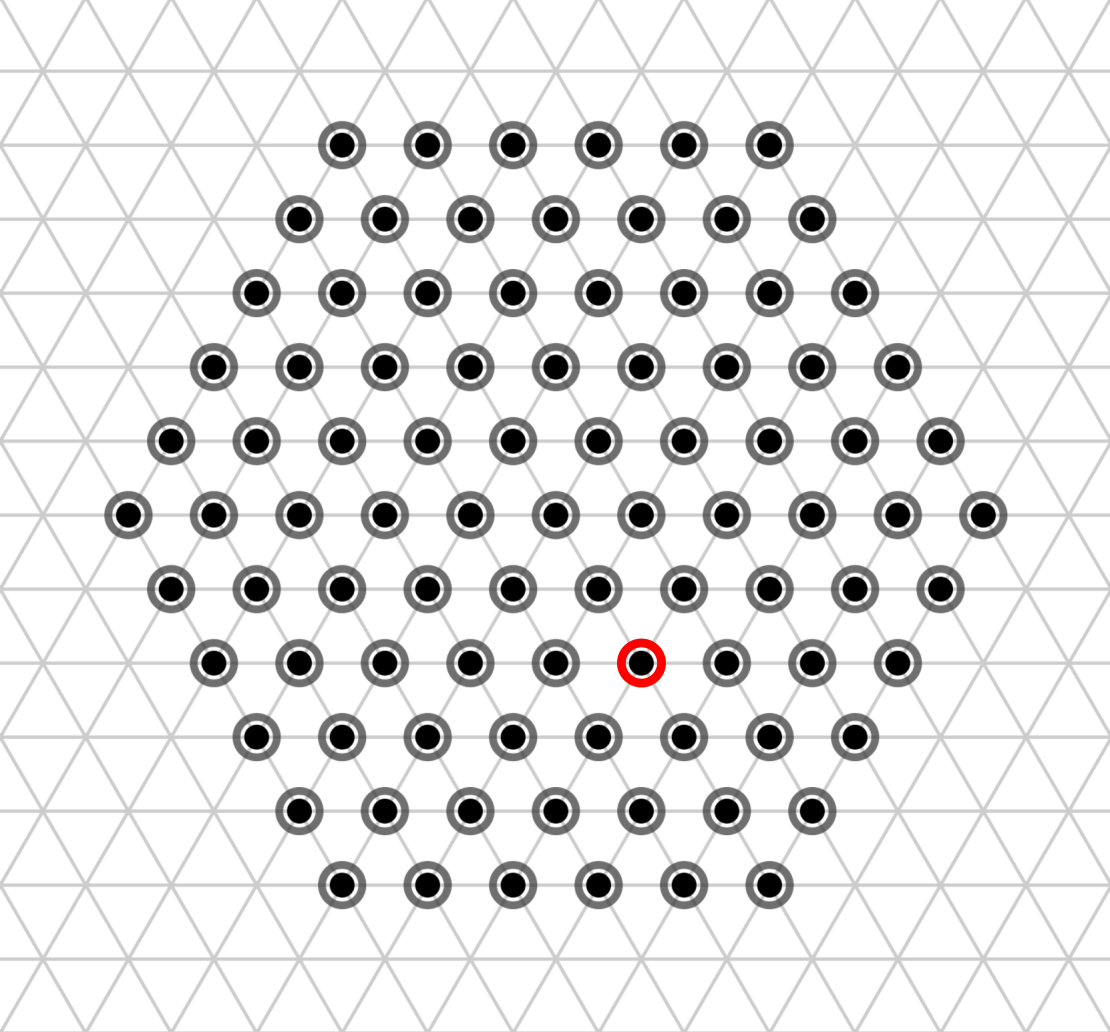}
    \end{subfigure}\\
    \medskip
    \begin{subfigure}{.24\textwidth}
        \centering
        \includegraphics[width=\textwidth,trim={0 2cm 0 2cm},clip]{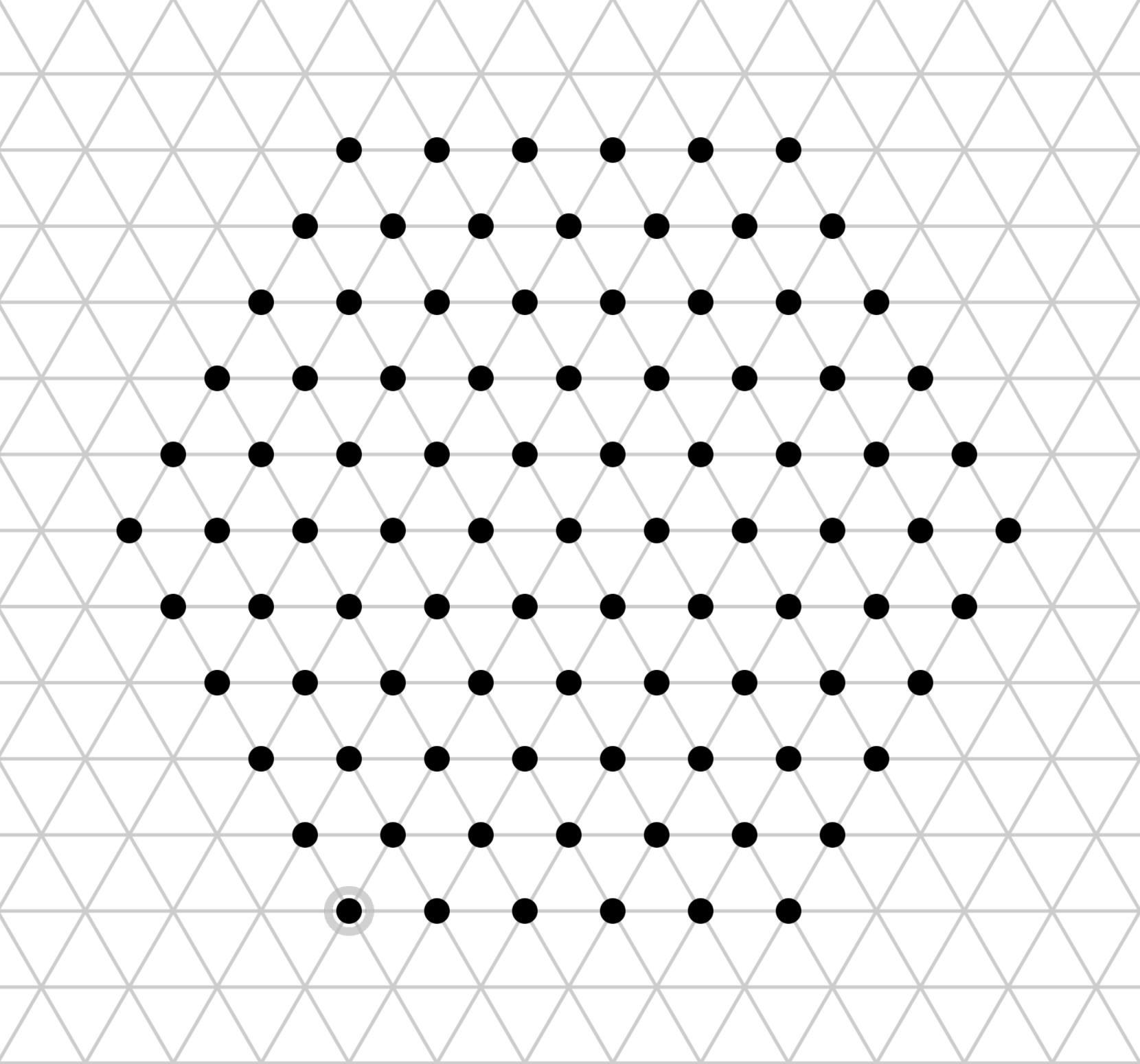}
        \caption{\centering $t = 0$ rounds}
        \label{fig:leadersim:a}
    \end{subfigure}
    \hfill
    \begin{subfigure}{.24\textwidth}
        \centering
        \includegraphics[width=\textwidth,trim={0 2cm 0 2cm},clip]{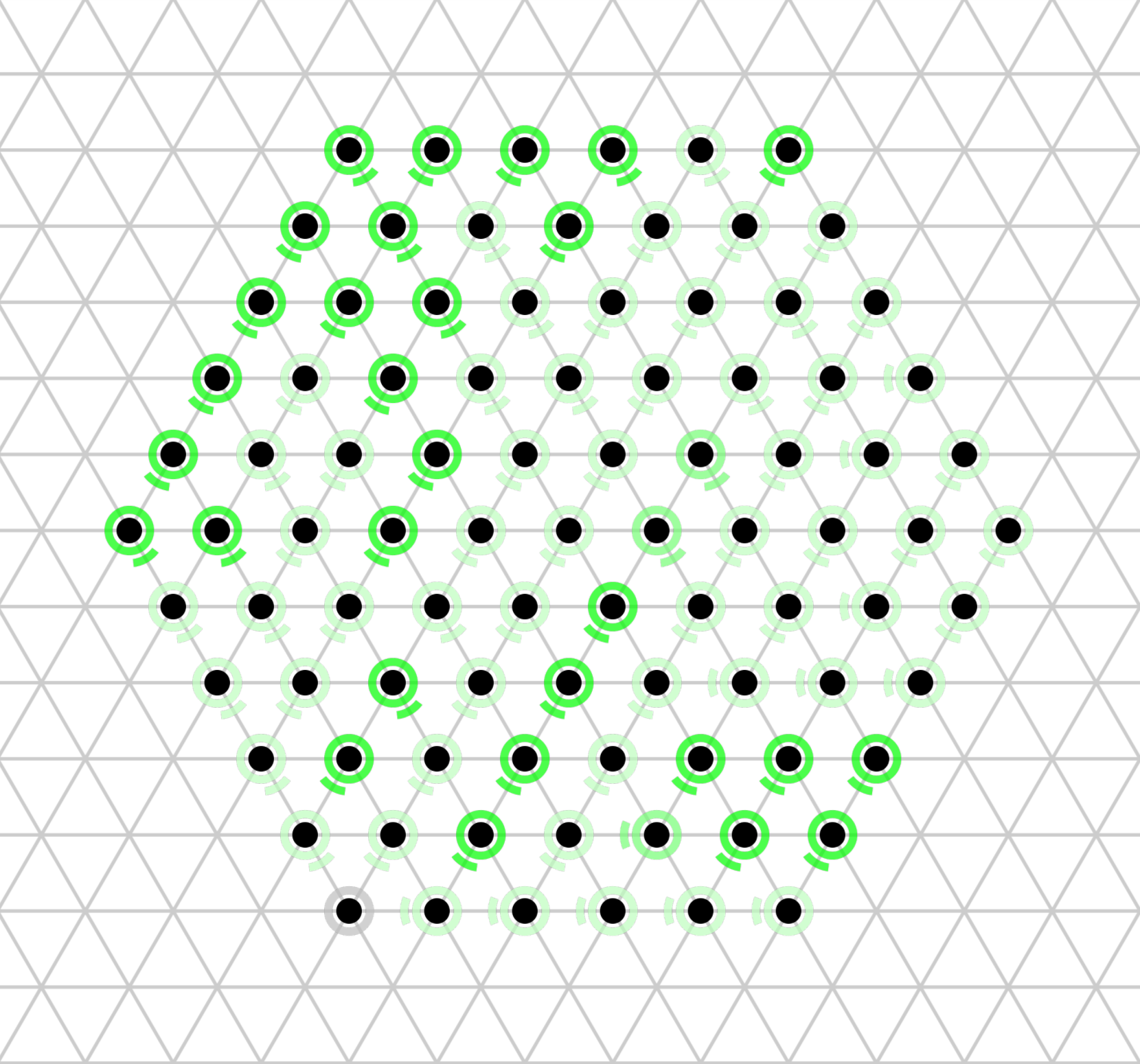}
        \caption{\centering $t = 100$}
        \label{fig:leadersim:b}
    \end{subfigure}
    \hfill
    \begin{subfigure}{.24\textwidth}
        \centering
        \includegraphics[width=\textwidth,trim={0 2cm 0 2cm},clip]{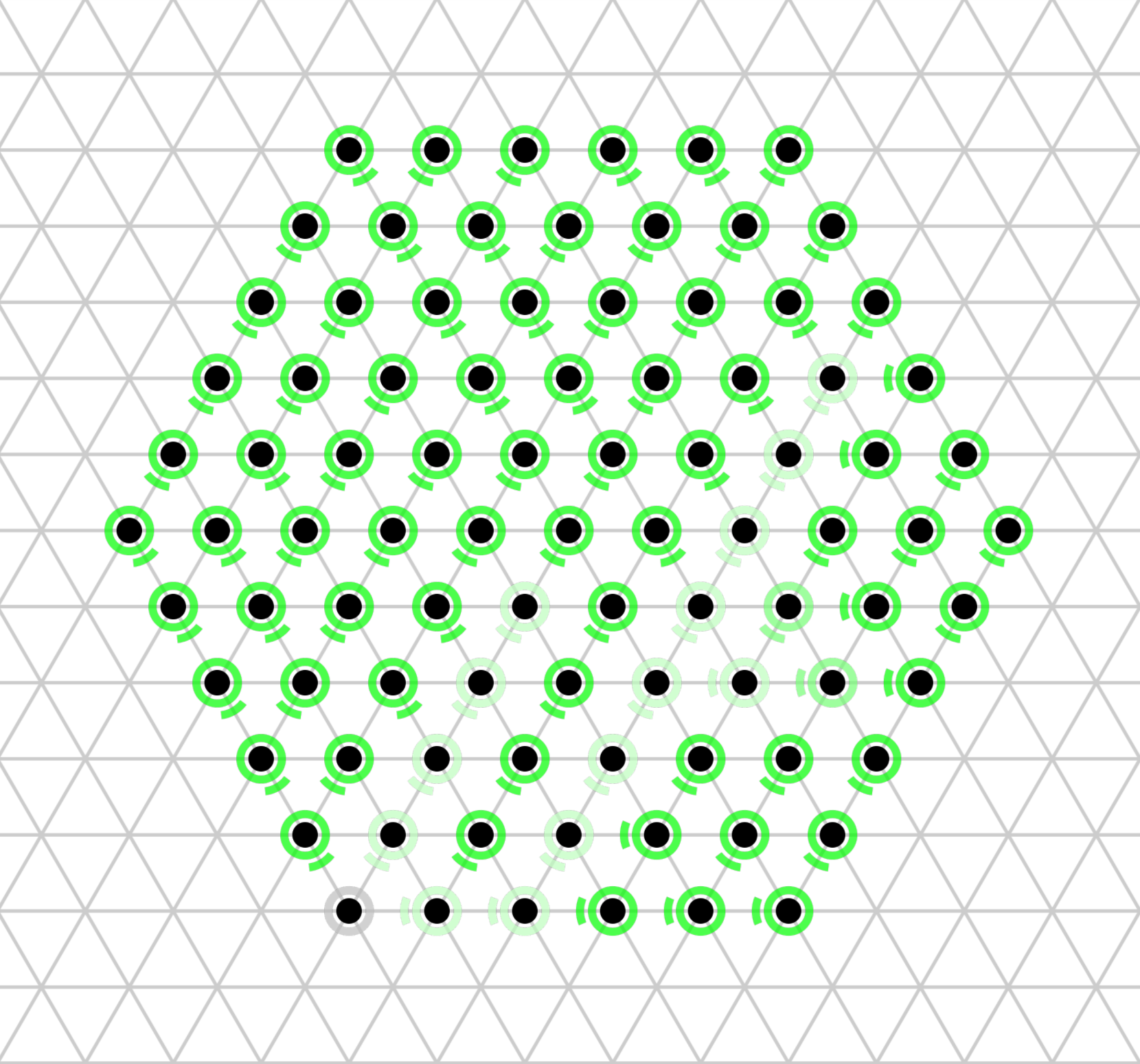}
        \caption{\centering $t = 250$}
        \label{fig:leadersim:c}
    \end{subfigure}
    \hfill
    \begin{subfigure}{.24\textwidth}
        \centering
        \includegraphics[width=\textwidth,trim={0 2cm 0 2cm},clip]{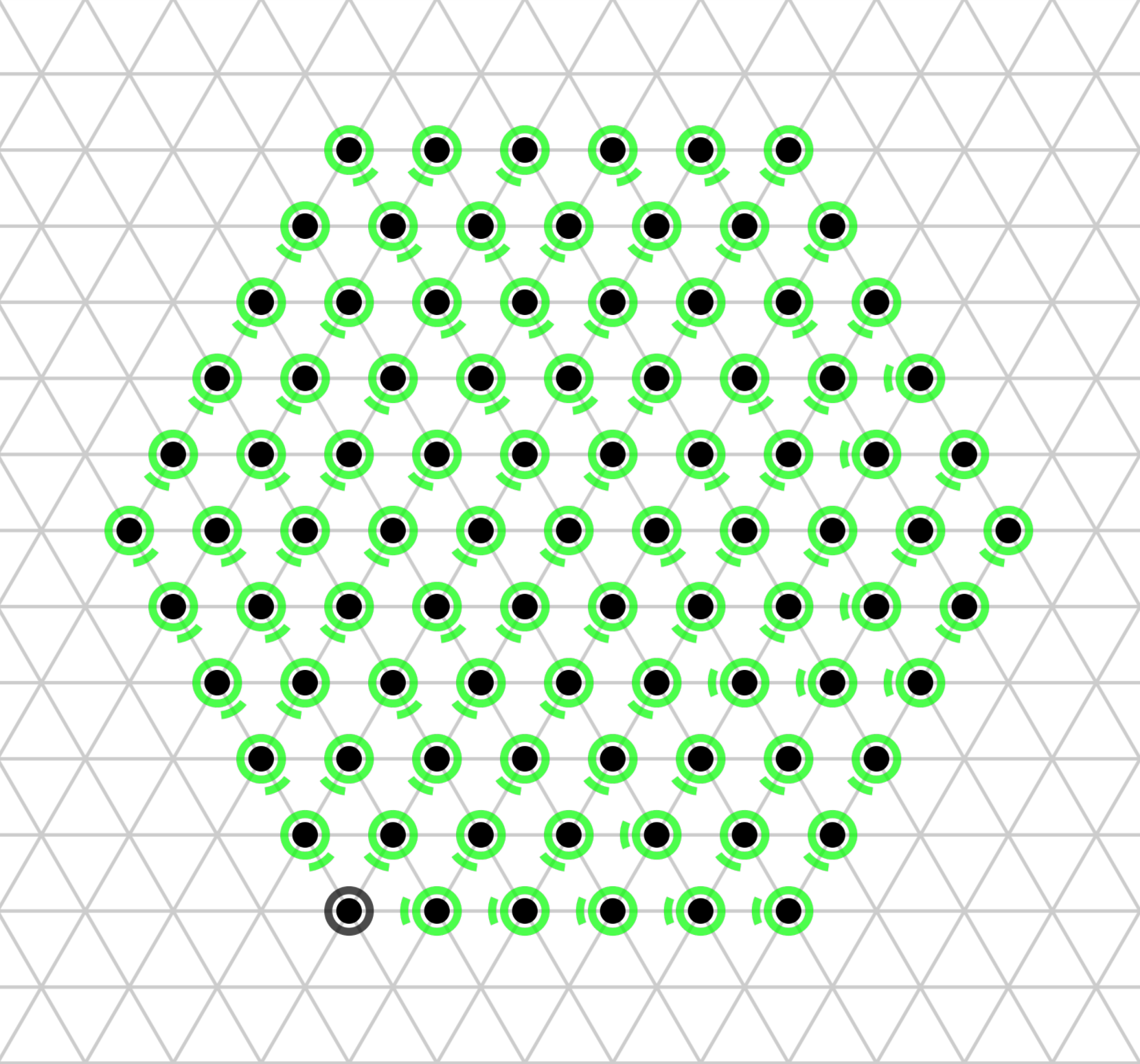}
        \caption{\centering $t = 350$}
        \label{fig:leadersim:d}
    \end{subfigure}
    \caption{\textit{Simulating $\erosionAlg^\demand$.}
    A simulation of $\erosionAlg^\demand$ on $\numAmoebots = 91$ amoebots with one source amoebot, capacity $\capacity = 10$, and demand $\demand(\alpha) = 5$ for all actions $\alpha$.
    Both rows show the same simulation.
    Top: For \erosionAlg, amoebots are initially ``null candidates'' (no color) and eventually declare candidacy (blue); candidates then either erode (dark gray) or become the unique leader (red).
    Bottom: For energy distribution, color opacity indicates energy levels.
    All amoebots are initially \idle\ (no color) except the source (gray/black); amoebots eventually join the forest $\forest$ (green) and distribute energy.}
    \label{fig:leadersim}
\end{figure}

A simulation of $\erosionAlg^\demand$ successfully electing a unique leader under energy constraints is shown in \figtext~\ref{fig:leadersim}.
As the proof of Lemma~\ref{lem:leaderelection} shows, Corollary~\ref{cor:stationary} sets a very low bar for proving stationary algorithms are energy-compatible.
Almost all existing amoebot algorithms are designed to terminate after achieving a desired system behavior, and this property is typically proven as part of their correctness analyses.
Invalid actions are avoided, as their executions would always fail.\footnote{The canonical amoebot model introduced error handling for amoebot algorithm design to deal with operation executions that fail due to concurrency (see Section 2.2 of~\cite{Daymude2023-canonicalamoebot}).
Although error handling could be used to deal with failed executions of invalid actions, no existing amoebot algorithms have taken such a convoluted approach to designing functional algorithms.}
Finally, no existing algorithms use the concurrency control operations \Lock\ and \Unlock\ directly; these are typically reserved for use by the ``concurrency control framework''~\cite{Daymude2023-canonicalamoebot} discussed in the next section.
The only remaining obstacle is that many existing stationary algorithms predate the canonical amoebot model and have not yet been reformulated in guarded action semantics or analyzed under an unfair adversary.
Supposing this obstacle can be overcome without significantly affecting the algorithms' previously proven guarantees, the above discussion shows it is likely that most---if not all---existing stationary amoebot algorithms are energy-compatible.

What about non-stationary amoebot algorithms whose movements make satisfying the phase structure and connectivity conventions (Conventions~\ref{conv:phases} and~\ref{conv:connect}) non-trivial?
Here our example is the \hexagonAlg\ algorithm for basic shape formation, originally introduced by Derakhshandeh et al.~\cite{Derakhshandeh2015-algorithmicframework} and carefully reformulated and analyzed under the canonical amoebot model by Daymude et al.~\cite{Daymude2023-canonicalamoebot}.
The basic idea of this algorithm is to form a hexagon---or as close to one as is possible with the number of amoebots in the system---by extending a spiral that begins at a (pre-defined or elected) seed amoebot.
Thanks to the analysis in~\cite{Daymude2023-canonicalamoebot}, it is easy to show \hexagonAlg\ is compatible with the energy distribution framework.

\begin{lemma} \label{lem:hexagonformation}
    \hexagonAlg\ is energy-compatible.
\end{lemma}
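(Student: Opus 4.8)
The plan is to check the three conditions that make up energy-compatibility (Definition~\ref{def:edfcompatible}): that every unfair sequential execution of \hexagonAlg\ terminates, and that \hexagonAlg\ satisfies the validity, phase-structure, and connectivity conventions (Conventions~\ref{conv:valid}--\ref{conv:connect}). Since \hexagonAlg\ is \emph{not} stationary, Corollary~\ref{cor:stationary} does not apply and each condition must be argued directly; fortunately, the reformulation and analysis of \hexagonAlg\ in the canonical amoebot model by Daymude et al.~\cite{Daymude2023-canonicalamoebot} already supplies almost everything we need.

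For Conventions~\ref{conv:valid} and~\ref{conv:phases}: these are precisely the two conventions required by the concurrency control framework of~\cite{Daymude2023-canonicalamoebot}, and \hexagonAlg\ is shown there to be compatible with that framework. Concretely, its canonical reformulation writes each action as a compute phase (a finite amount of local computation together with a finite sequence of \Connected, \Read, and \Write\ operations) followed by at most one movement operation, and it never uses \Lock\ or \Unlock; moreover all of its actions are valid, since an invalid action would cause an execution failure contradicting the algorithm's already-proven correctness. For termination, Daymude et al.~\cite{Daymude2023-canonicalamoebot} prove that \hexagonAlg\ forms the target hexagon (and hence terminates) within a polynomial number of rounds under an unfair sequential adversary---the setting in which the algorithm is analyzed before being lifted to asynchronous executions via concurrency control---so every unfair sequential execution terminates.

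The only substantive step left is Convention~\ref{conv:connect} (Connectivity), which is specific to the energy distribution framework and is not guaranteed by the concurrency control framework. I would establish it by invoking the structural invariant maintained throughout \hexagonAlg: at every point of execution the retired amoebots form a connected spiral anchored at the seed, and every non-retired amoebot (a root or a follower) holds a parent pointer into a spanning forest rooted at this retired structure; the only movements the algorithm performs---expansions into empty adjacent nodes and push/pull handovers along forest edges---preserve these parent relationships and never detach any amoebot from the structure. Hence the union of the retired spiral and the root/follower forest spans all amoebots and is connected in every reachable configuration, so any configuration reachable from a connected one remains connected, giving Convention~\ref{conv:connect}. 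The main obstacle is isolating the precise invariant from the correctness analysis in~\cite{Daymude2023-canonicalamoebot} that certifies connectivity is preserved by every action of \hexagonAlg\ in a sequential execution; once that invariant is cited (or re-derived in guarded-action form), all three conditions of Definition~\ref{def:edfcompatible} hold and the lemma follows.
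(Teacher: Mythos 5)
Your proposal is correct and follows essentially the same route as the paper: invoke the existing canonical-amoebot analysis of \hexagonAlg\ for termination and for the validity and phase-structure conventions (these being exactly the concurrency control framework's conventions), and then appeal to that analysis again for connectivity. The only difference is that the ``obstacle'' you flag for Convention~\ref{conv:connect} does not exist---the paper simply cites the connectivity-preservation lemma (Lemma~3 of~\cite{Daymude2023-canonicalamoebot}) directly, with no need to re-derive the retired-spiral/follower-forest invariant.
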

\begin{proof}
    Every sequential execution of \hexagonAlg\ must terminate since Lemma~7 of~\cite{Daymude2023-canonicalamoebot} guarantees that any execution of this algorithm---sequential or concurrent---terminates with the amoebot system forming a hexagon.
    Theorem~10 of~\cite{Daymude2023-canonicalamoebot} guarantees that \hexagonAlg\ satisfies the validity and phase structure conventions (Conventions~\ref{conv:valid} and~\ref{conv:phases}), as these were the two conventions borrowed directly from that paper's concurrency control framework.
    Finally, \hexagonAlg\ is guaranteed to maintain the connectivity of an initially connected system configuration by Lemma~3 of~\cite{Daymude2023-canonicalamoebot}, satisfying Convention~\ref{conv:connect}.
\end{proof}

Combining this lemma, the energy distribution framework's guarantees (Theorem~\ref{thm:main}), \hexagonAlg's correctness guarantees (Theorem~8 of~\cite{Daymude2023-canonicalamoebot}), and \hexagonAlg's $\Theta(\numAmoebots^2)$ worst-case work bound~\cite{Derakhshandeh2015-algorithmicframework}, we have:

\begin{theorem} \label{thm:hexagonformation}
    For any demand function $\demand : \hexagonAlg \to \{1, 2, \ldots, \capacity\}$, the algorithm $\hexagonAlg^\demand$ produced by the energy distribution framework deterministically solves the hexagon formation problem for connected systems of $\numAmoebots$ amoebots in $\bigo{\numAmoebots^4}$ rounds assuming geometric space, assorted orientations, constant-size memory, and an unfair sequential adversary.
\end{theorem}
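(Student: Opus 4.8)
The plan is to obtain the theorem as a direct corollary of results already in hand, instantiating the energy distribution framework with $\alg = \hexagonAlg$. First I would invoke Lemma~\ref{lem:hexagonformation} to conclude that \hexagonAlg\ is energy-compatible, which is precisely the hypothesis Theorem~\ref{thm:main} requires. Applying Theorem~\ref{thm:main} to \hexagonAlg\ and an arbitrary demand function $\demand : \hexagonAlg \to \{1, 2, \ldots, \capacity\}$ then delivers two things at once: (i) every unfair sequential execution of $\hexagonAlg^\demand$, starting from any legal connected initial configuration extended with a source amoebot and the energy distribution variables of Table~\ref{tab:frameworkvariables}, terminates, and its terminating configuration agrees—modulo those added variables—with the terminating configuration of some unfair sequential execution of \hexagonAlg; and (ii) the number of rounds is $\bigo{\numAmoebots^2\algruntime}$, where here $\algruntime$ denotes the worst-case number of action executions in a sequential run of \hexagonAlg.

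For correctness I would then chain in \hexagonAlg's own guarantee: Theorem~8 of~\cite{Daymude2023-canonicalamoebot} states that every execution of \hexagonAlg\ terminates with the amoebot system forming a hexagon, so the configuration $C$ produced by Theorem~\ref{thm:main} is a hexagon, and hence so is the terminating configuration of $\hexagonAlg^\demand$ once the energy distribution variables are ignored. I would also note that determinism is inherited—neither \hexagonAlg\ nor the framework (Algorithm~\ref{alg:framework}) uses randomness—and that the stated model assumptions (geometric space, assorted orientations, constant-size memory, unfair sequential adversary) transfer verbatim, since the framework only appends the constant-size variables $\xstate$, $\parent$, and $\battery$ and relies solely on the initial connectivity already required of \hexagonAlg\ (Convention~\ref{conv:connect}, checked in Lemma~\ref{lem:hexagonformation}).

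For the runtime I would substitute \hexagonAlg's $\Theta(\numAmoebots^2)$ worst-case work bound~\cite{Derakhshandeh2015-algorithmicframework} into the $\bigo{\numAmoebots^2\algruntime}$ expression, yielding $\bigo{\numAmoebots^2 \cdot \numAmoebots^2} = \bigo{\numAmoebots^4}$ rounds. The one spot that needs a careful sentence—and the main (mild) obstacle—is justifying that the published ``work $= \Theta(\numAmoebots^2)$'' bound really controls $\algruntime$ in the sense used by Theorem~\ref{thm:main}, i.e.\ the number of guarded-action executions in a sequential run: I would argue that in the guarded-action semantics each amoebot activation performs exactly one action, so the (sequential) work count and the action-execution count coincide up to constants, and that the canonical-model reformulation of \hexagonAlg\ in~\cite{Daymude2023-canonicalamoebot} preserves this asymptotically. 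Everything else is bookkeeping, so the theorem follows.
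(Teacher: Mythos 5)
Your proposal matches the paper's own argument, which likewise obtains the theorem by combining Lemma~\ref{lem:hexagonformation}, Theorem~\ref{thm:main}, \hexagonAlg's correctness guarantee (Theorem~8 of~\cite{Daymude2023-canonicalamoebot}), and the $\Theta(\numAmoebots^2)$ worst-case work bound of~\cite{Derakhshandeh2015-algorithmicframework} to get $\bigo{\numAmoebots^2 \cdot \numAmoebots^2} = \bigo{\numAmoebots^4}$ rounds. Your extra sentence identifying the published work bound with the action-execution count $\algruntime$ is a reasonable (and slightly more careful) piece of bookkeeping that the paper simply takes for granted.
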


\begin{figure}[t]
    \centering
    \begin{subfigure}{.24\textwidth}
        \centering
        \includegraphics[width=\textwidth]{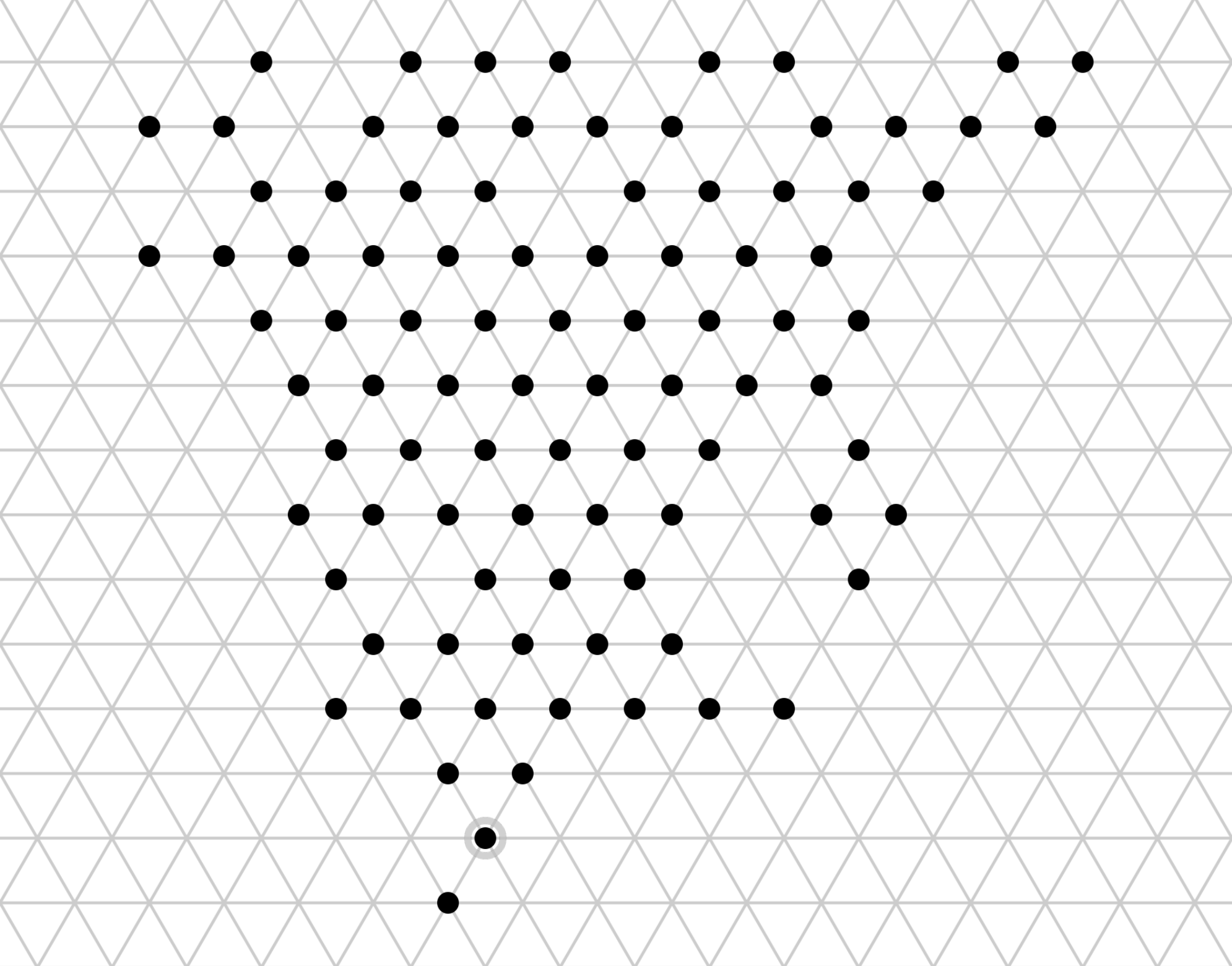}
        \caption{\centering $t = 0$ rounds}
        \label{fig:hexagonsim:a}
    \end{subfigure}
    \hfill
    \begin{subfigure}{.24\textwidth}
        \centering
        \includegraphics[width=\textwidth]{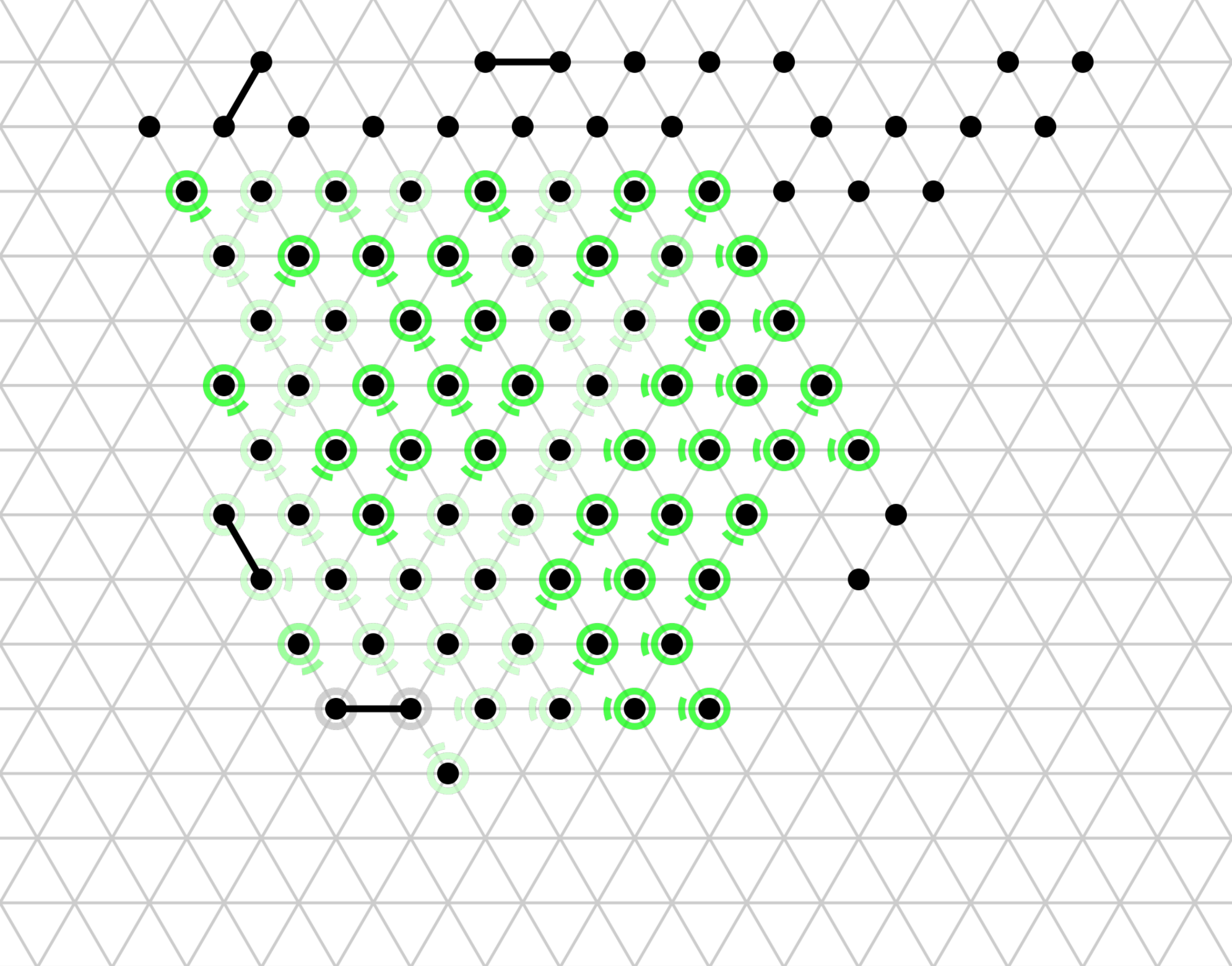}
        \caption{\centering $t = 400$}
        \label{fig:hexagonsim:b}
    \end{subfigure}
    \hfill
    \begin{subfigure}{.24\textwidth}
        \centering
        \includegraphics[width=\textwidth]{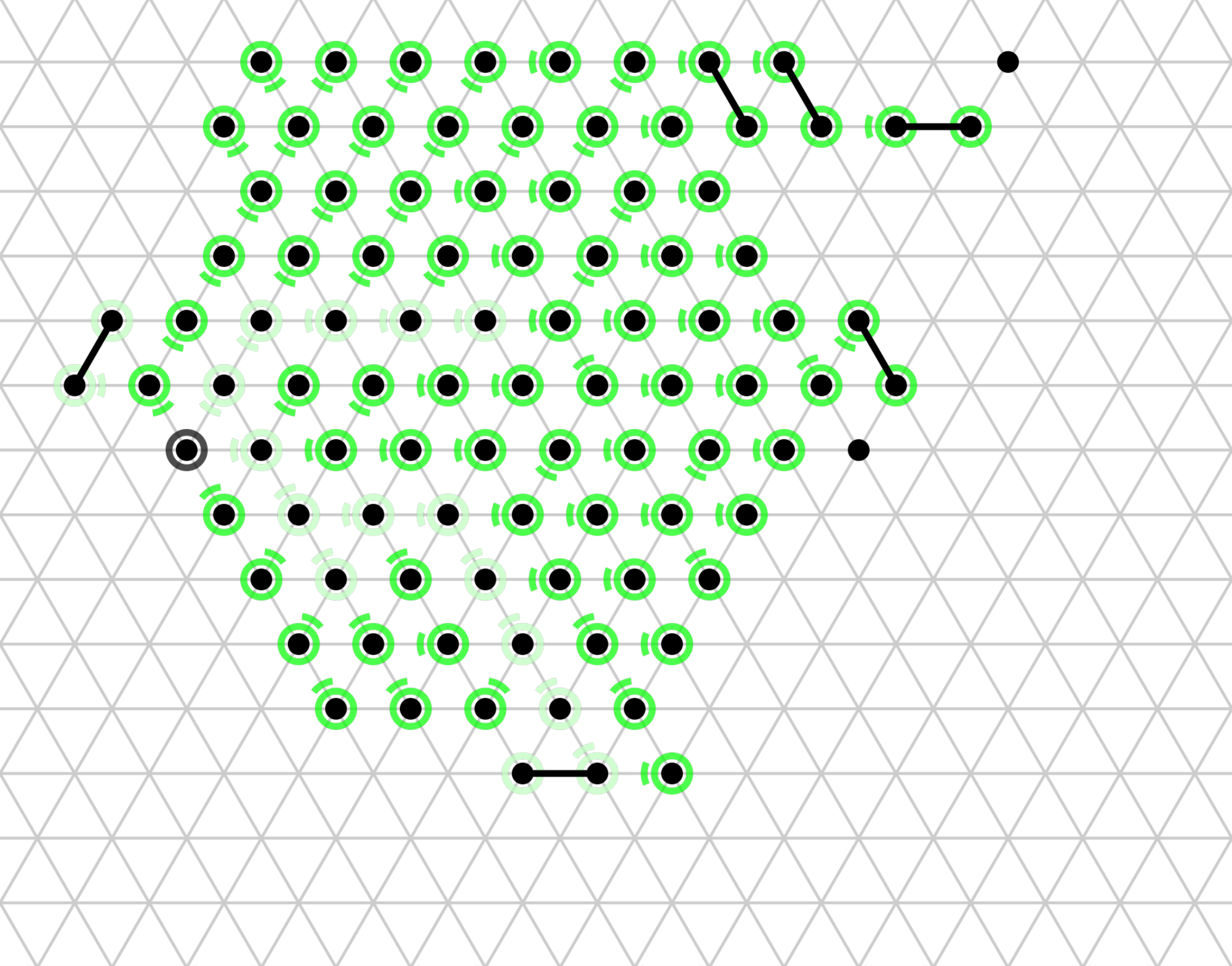}
        \caption{\centering $t = 900$}
        \label{fig:hexagonsim:c}
    \end{subfigure}
    \hfill
    \begin{subfigure}{.24\textwidth}
        \centering
        \includegraphics[width=\textwidth]{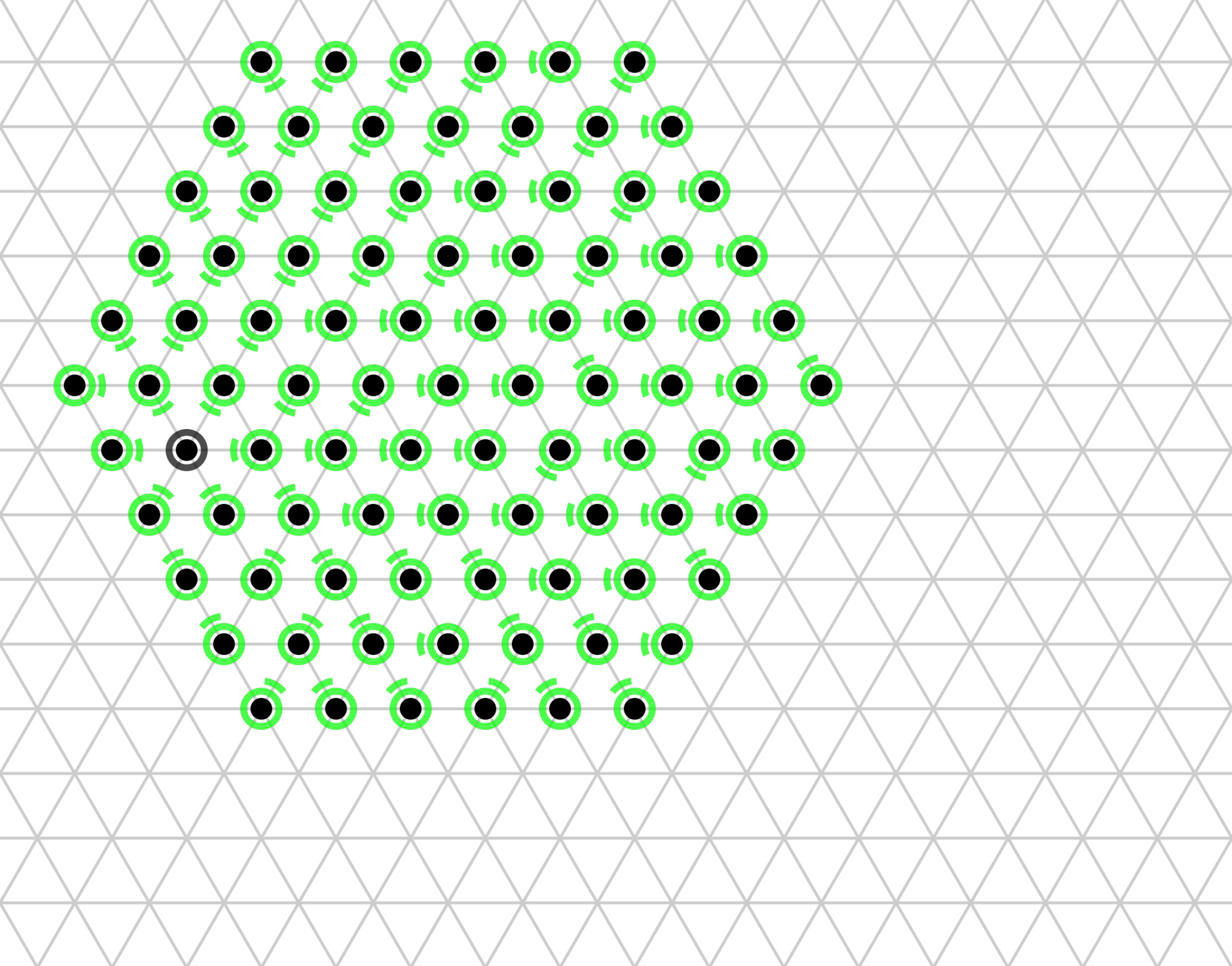}
        \caption{\centering $t = 1200$}
        \label{fig:hexagonsim:d}
    \end{subfigure}
    \caption{\textit{Simulating $\hexagonAlg^\demand$.}
    A simulation of $\hexagonAlg^\demand$ on $\numAmoebots = 91$ amoebots with one source amoebot, capacity $\capacity = 10$, and demand $\demand(\alpha) = 5$ for all actions $\alpha$.
    States from \hexagonAlg\ are not visualized.
    For energy distribution, color opacity indicates energy levels.
    All amoebots are initially \idle\ (no color) except the source (gray/black); amoebots eventually join the forest $\forest$ (green) and distribute energy.}
    \label{fig:hexagonsim}
\end{figure}

\figtext~\ref{fig:hexagonsim} depicts a simulation of $\hexagonAlg^\demand$ forming a hexagon under energy constraints.
We emphasize that \erosionAlg\ and \hexagonAlg\ are not cherry-picked examples with particularly straightforward proofs of energy-compatibility.
On the contrary, we expect that like our two examples, many algorithms already have the ingredients of energy-compatibility proven in their existing correctness analyses.

\begin{figure}[t]
    \centering
    \begin{subfigure}{0.48\textwidth}
        \centering
        \includegraphics[width=\textwidth]{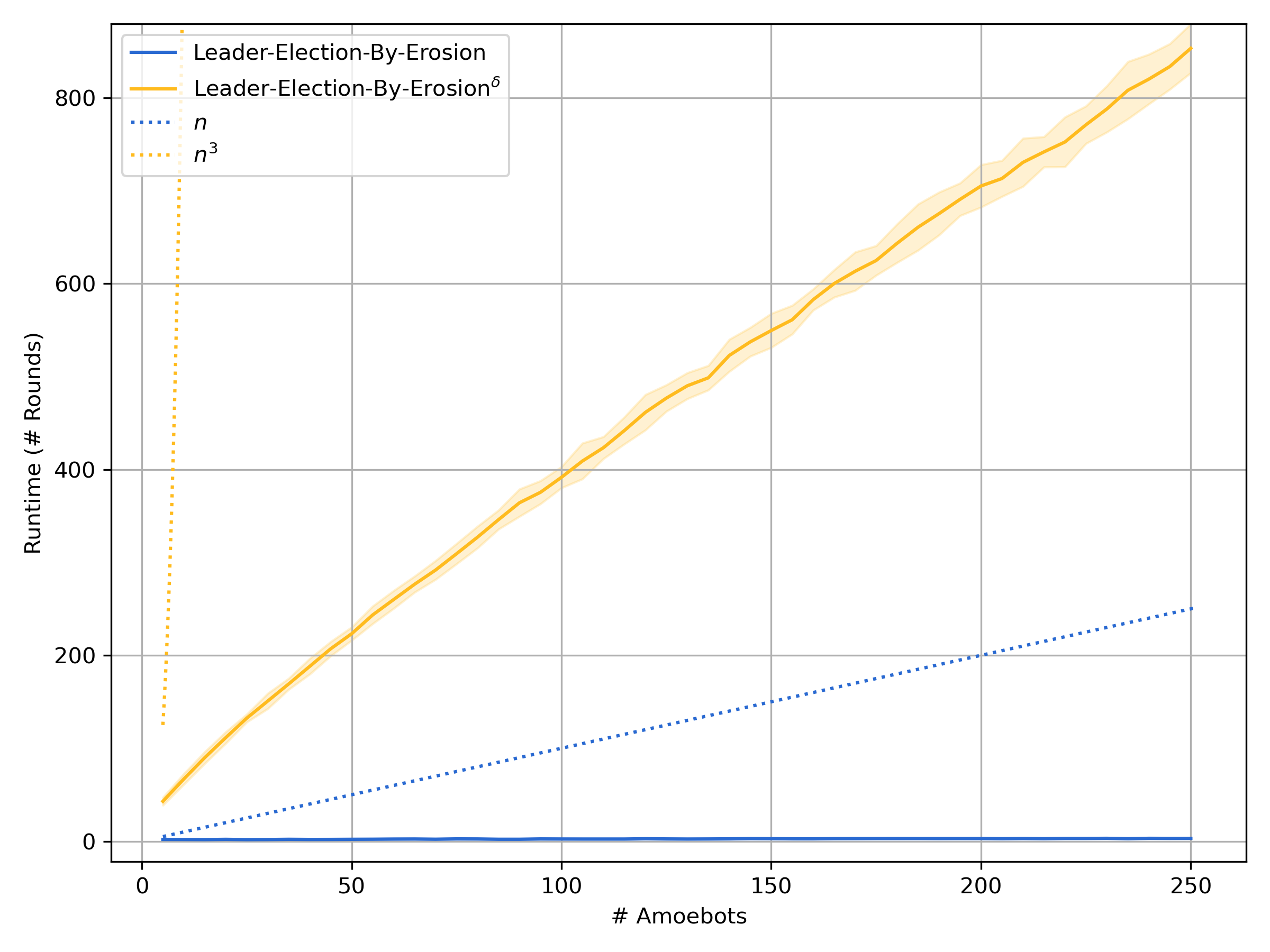}
        \caption{\centering \erosionAlg}
        \label{fig:comptime:erosion}
    \end{subfigure}
    \hfill
    \begin{subfigure}{0.48\textwidth}
        \centering
        \includegraphics[width=\textwidth]{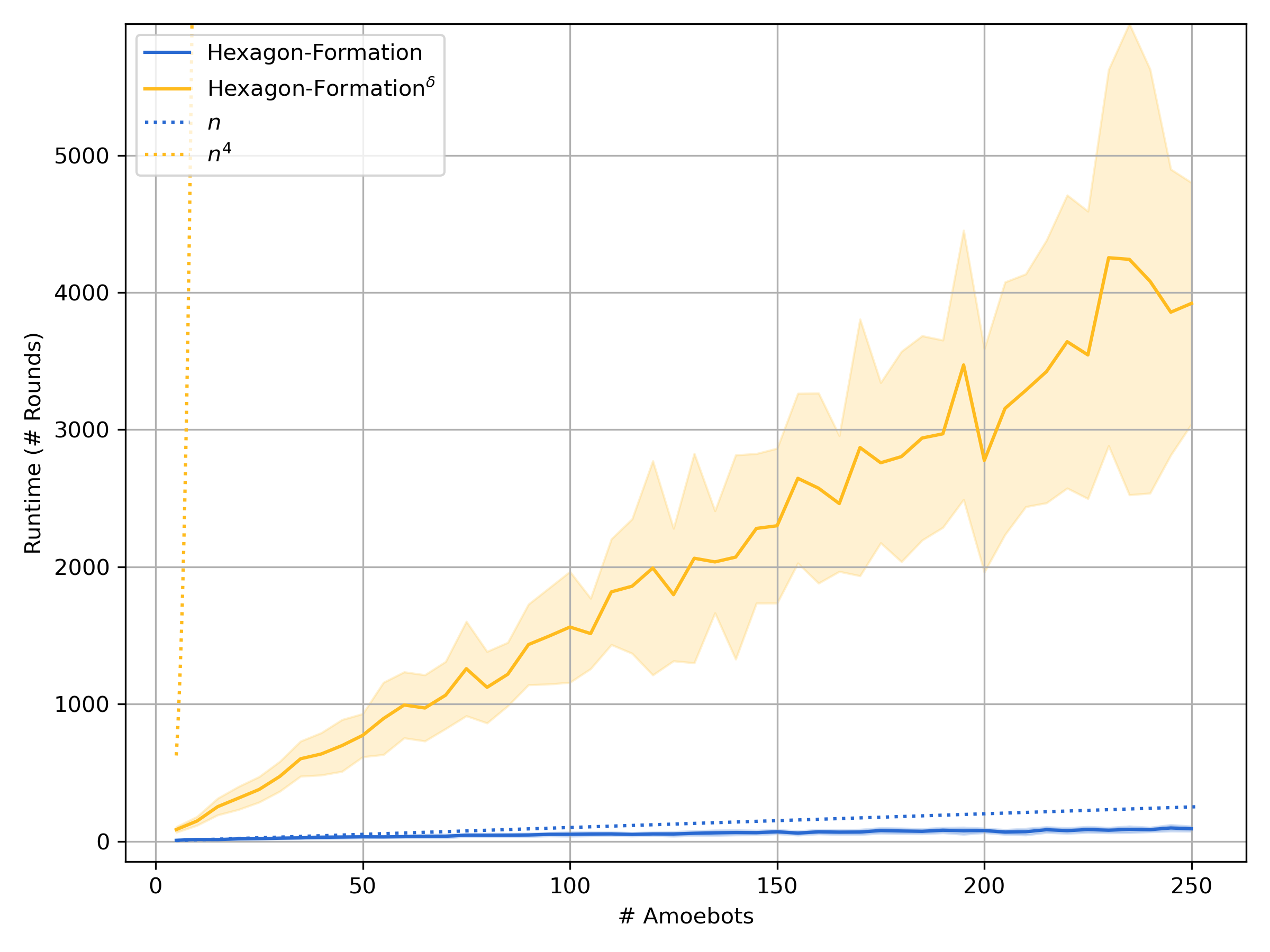}
        \caption{\centering \hexagonAlg}
        \label{fig:comptime:hexagon}
    \end{subfigure}
    \caption{\textit{Runtime Comparisons.}
    The energy-constrained (a) $\erosionAlg^\demand$ and (b) $\hexagonAlg^\demand$ algorithms' runtimes (yellow) and their energy-agnostic counterparts (blue) in terms of sequential rounds.
    Each algorithm was simulated in 25 independent trials per system size $\numAmoebots \in \{5, 10, \ldots, 250\}$; average runtimes are shown as solid lines and one standard deviation is shown as an error tube.
    Relevant asymptotic runtime bounds are shown as dotted lines: the energy-agnostic algorithms both terminate in linear rounds (blue) and the energy-constrained algorithms' bounds are given by Theorems~\ref{thm:leaderelection} and~\ref{thm:hexagonformation} (yellow).}
    \label{fig:comptime}
\end{figure}

We validate the runtime bounds for $\erosionAlg^\demand$ and $\hexagonAlg^\demand$ given in Theorems~\ref{thm:leaderelection} and~\ref{thm:hexagonformation}, respectively, by simulating these algorithms and their energy-agnostic counterparts for a range of system sizes $\numAmoebots$.
\figtext~\ref{fig:comptime} reports their empirical runtimes.
Both energy-constrained algorithms well outperform their theoretical bounds, with $\erosionAlg^\demand$ achieving a near-linear runtime and $\hexagonAlg^\demand$ remaining sub-quadratic.
This suggests that our overhead bound can be optimized further or describes only some pessimistic worst-case scenarios.
\ifnum\version=\ARXIV
In Section~\ref{sec:conclude}, we suggest an open problem whose solution would improve our overhead bound from $\bigo{\numAmoebots^2}$ rounds to $\bigo{\numAmoebots D}$ rounds, where $\sqrt{\numAmoebots} \leq D \leq \numAmoebots$ is the diameter of the amoebot system.
\else\fi

\section{Asynchronous Energy-Constrained Algorithms} \label{sec:concurrency}

Our energy distribution results thus far consider sequential concurrency, in which at most one amoebot can be active at a time (Section~\ref{subsec:model}).
This section details a useful extension of these results to \textit{asynchronous concurrency}, in which arbitrary amoebots can be simultaneously active and their action executions can overlap arbitrarily in time.

There are many hazards of asynchrony that complicate amoebot algorithm design, with concurrent movements and memory updates potentially causing operations to fail or action executions to exhibit unintended behaviors.
To reduce this complexity, one can use the \textit{concurrency control framework} for amoebot algorithms that---analogous to our own energy distribution framework for energy-agnostic/constrained algorithms---transforms any algorithm $\alg$ that terminates under every (unfair) sequential execution and satisfies certain conventions into an algorithm $\alg'$ that achieves equivalent behavior under any asynchronous execution~\cite{Daymude2023-canonicalamoebot}.
Formally, an amoebot algorithm $\alg$ is \textit{concurrency-compatible} if every (unfair) sequential execution of $\alg$ terminates and it satisfies the validity, phase structure, and expansion-robustness conventions.
The first two conventions are identical to Conventions~\ref{conv:valid} and~\ref{conv:phases} of the energy distribution framework.
The third convention, \textit{expansion-robustness}, requires actions to be resilient to concurrent expansions into their neighborhood.


We originally aimed to prove that the energy distribution framework preserves any input algorithm's concurrency-compatibility---i.e., if an algorithm $\alg$ is concurrency-compatible, then so is $\alg^\demand$---and thus the two frameworks can be composed to obtain energy-constrained, asynchronous versions of all energy-compatible, concurrency-compatible algorithms.
But as will become clearer after we formally define expansion-robustness (Definition~\ref{def:expandrobust}), knowing that $\alg$ is expansion-robust is seemingly insufficient for proving that $\alg^\demand$ is also expansion-robust: the former only describes terminating configurations for $\alg$ while the latter requires analyzing possible amoebot movements in all intermediate configurations reached by $\alg^\demand$.
Instead, we focus on a special case of expansion-robustness called \textit{expansion-correspondence} (Definition~\ref{def:expandcorrespond}) that we can prove is preserved by the energy distribution framework (Lemma~\ref{lem:edfconcurrency}).
Although this restriction may appear limiting, the only algorithm known to be non-trivially expansion-robust (\hexagonAlg\ of~\cite{Daymude2023-canonicalamoebot}) was proven to be expansion-robust via expansion-correspondence.
Thus, until an algorithm is discovered to be expansion-robust but not expansion-corresponding, our present focus covers all known concurrency-compatible algorithms.

\ifnum\version=\ARXIV
\begin{algorithm}[t]
    \caption{Expansion-Robust Variant $\alg^E$ of Algorithm $\alg$ for Amoebot $A$} \label{alg:expandrobust}
    \begin{algorithmic}[1]
        \Statex \textbf{Input}: Algorithm $\alg = \{[\alpha_i : g_i \to ops_i] : i \in \{1, \ldots, m\}\}$ satisfying Conventions~\ref{conv:valid} and~\ref{conv:phases}.
        \State Set $\alpha_0^E : (\exists$ port $p$ of $A : A.\xflag_p = \true) \to$ \Write$(\bot, \xflag_p, \false)$.
        \For {each action $[\alpha_i : g_i \to ops_i] \in \alg$}
            \State Set $g_i^E \gets g_i$ with $N(A)$ replaced by $N^E(A)$ and connections defined w.r.t.\ $N^E(A)$.
            \State Set $ops_i^E \gets$ ``Do:
            \Indent
                \For {each port $p$ of $A$} \Write$(\bot, \xflag_p, \false)$.  \Comment{Reset own expand flags.}  \label{alg:expandrobust:resetown}
                \EndFor
                \For {each unique neighbor $B \in \Connected()$}
                    \For{each port $p$ of $B$} \Write$(B, \xflag_p, \false)$.  \Comment{Reset neighbors' expand flags.}  \label{alg:expandrobust:resetnbr}
                    \EndFor
                \EndFor
                \State Execute each operation of $ops_i$ with connections defined w.r.t.\ $N^E(A)$.
                \If {a \Pull\ or \Push\ operation was executed with neighbor $B$}
                    \For {each new port $p$ of $A$ not connected to $B$} \Write$(\bot, \xflag_p, \true)$.  \label{alg:expandrobust:handover1}
                    \EndFor
                    \For {each new port $p$ of $B$ not connected to $A$} \Write$(B, \xflag_p, \true)$.  \label{alg:expandrobust:handover2}
                    \EndFor
                \ElsIf {an \Expand\ operation was successfully executed}
                    \For {each new port $p$ of $A$} \Write$(\bot, \xflag_p, \true)$.  \label{alg:expandrobust:expand}
                    \EndFor
                \ElsIf {an \Expand\ operation failed in its execution} undo $ops_i$.''
                \label{alg:expandrobust:undo}
                \EndIf
            \EndIndent
        \EndFor
        \State \Return $\alg^E = \{[\alpha_i^E : g_i^E \to ops_i^E] : i \in \{0, \ldots, m\}\}$.
    \end{algorithmic}
\end{algorithm}
\else\fi

Formally, let $\alg$ be any amoebot algorithm satisfying Conventions~\ref{conv:valid} and~\ref{conv:phases} and consider its expansion-robust variant $\alg^E$ defined as follows.
Each amoebot $A$ executing $\alg^E$ additionally stores in public memory an \textit{expand flag} $A.\xflag_p$ for each of its ports $p$ that is initially \false, becomes \true\ whenever $A$ expands to reveal a new port $p$, and is reset to \false\ whenever $A$ or one of its neighbors executes a later action.
These expand flags communicate when an amoebot has newly expanded into another amoebot's neighborhood.
\ifnum\version=\ARXIV
Each action $\alpha_i : g_i \to ops_i$ in $\alg$ becomes an action $\alpha_i^E : g_i^E \to ops_i^E$ in $\alg^E$, as detailed in Algorithm~\ref{alg:expandrobust} (reproduced from~\cite{Daymude2023-canonicalamoebot}).\footnote{For the sake of clarity and brevity, we abuse \Connected, \Read, and \Write\ notation slightly by referring directly to the neighboring amoebots and not to the ports which they are connected to.}
\else
Each action $\alpha_i : g_i \to ops_i$ in $\alg$ becomes an action $\alpha_i^E : g_i^E \to ops_i^E$ in $\alg^E$ (see Algorithm~\ref{alg:expandrobust} in Appendix~\ref{app:asyncproofs} for details).
\fi
The main difference is that while an amoebot $A$ executes actions with respect to its full neighborhood $N(A)$ in $\alg$, it does so only with respect to its \textit{established neighborhood} $N^E(A) = \{B \in N(A) : \exists \text{ port $p$ of $B$ connected to $A$ s.t.\ } B.\xflag_p = \false\}$ in $\alg^E$, effectively ignoring its newly expanded neighbors until its next action execution.

\begin{definition} \label{def:expandrobust}
    An amoebot algorithm $\alg$ is \underbar{expansion-robust} if for any (legal) initial system configuration $C_0$ of $\alg$, the following conditions hold:
    \begin{enumerate}
        \item If all sequential executions of $\alg$ starting in $C_0$ terminate, all sequential executions of $\alg^E$ starting in $C_0^E$ (i.e., $C_0$ with all \false\ expand flags) also terminate.

        \item If a sequential execution of $\alg^E$ starting in $C_0^E$ terminates in a configuration $C^E$, some sequential execution of $\alg$ starting in $C_0$ terminates in $C$ (i.e., $C^E$ without expand flags).
    \end{enumerate}
\end{definition}

As alluded to earlier, expansion-robustness only guarantees that sequential executions of $\alg^E$ terminate and do so in a configuration that is reachable by a sequential execution of $\alg$.
This appears to be insufficient to prove $\alg^\demand$ is expansion-robust.
\ifnum\version=\ARXIV
We instead focus on the following property, which we prove is a special case of expansion-robustness in Lemma~\ref{lem:expandcorrespond}.
\else
We instead focus on the following special case of expansion-robustness.
\fi

\begin{definition} \label{def:expandcorrespond}
    An amoebot algorithm $\alg$ is \underbar{expansion-corresponding} if for any (legal) initial system configuration $C_0$ of $\alg$, the following conditions hold:
    \begin{enumerate}
        \item If an action $\alpha_{i \neq 0}^E \in \alg^E$ is enabled for some amoebot $A$ w.r.t.\ $N^E(A)$, then action $\alpha_i \in \alg$ is enabled for $A$ w.r.t.\ $N(A)$.

        \item The executions of $\alpha_{i \neq 0}^E$ w.r.t.\ $N^E(A)$ and $\alpha_i$ w.r.t.\ $N(A)$ by an amoebot $A$ are identical, except the handling of expand flags.
    \end{enumerate}
\end{definition}

\ifnum\version=\ARXIV
\begin{lemma} \label{lem:expandcorrespond}
    If amoebot algorithm $\alg$ is expansion-corresponding, it is also expansion-robust.
\end{lemma}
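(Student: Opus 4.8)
The plan is to prove both conditions of Definition~\ref{def:expandrobust} by building a step-by-step coupling between $\alg^E$ and $\alg$, using conditions~1 and~2 of expansion-correspondence to keep the two executions in lockstep. Fix a legal initial configuration $C_0$ of $\alg$ and an arbitrary sequential execution $\sched^E = (\sigma_1, \sigma_2, \dots)$ of $\alg^E$ starting in $C_0^E$, and let $C_t^E$ be the configuration after step $t$. I would construct, by induction on $t$, a sequential execution $\sched$ of $\alg$ starting in $C_0$ together with configurations $C_0, C_1, \dots$ such that $C_t^E \equiv C_t$, where $\equiv$ denotes equality of amoebot positions and of all variables of $\alg$ (i.e., agreement modulo expand flags). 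The base case holds because $C_0^E$ is $C_0$ with all flags \false. For the step: if $\sigma_{t+1}$ executes $\alpha_0^E$, it only toggles one expand flag, so I append nothing to $\sched$, keep $C_{t+1} = C_t$, and $C_{t+1}^E \equiv C_t^E \equiv C_t$ still holds. If $\sigma_{t+1}$ executes some $\alpha_{i \neq 0}^E$ by an amoebot $A$, then $\alpha_i^E$ is enabled for $A$ in $C_t^E$ w.r.t.\ $N^E(A)$; condition~1 of expansion-correspondence applied to the pair $(C_t^E, C_t)$ makes $\alpha_i$ enabled for $A$ in $C_t$, I append that execution to $\sched$ (it succeeds in isolation by Convention~\ref{conv:valid}), and condition~2 gives $C_{t+1}^E \equiv C_{t+1}$.

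Given this coupling, I would derive condition~1 of expansion-robustness by contradiction: assume every sequential execution of $\alg$ from $C_0$ terminates but some $\sched^E$ from $C_0^E$ is infinite. Its subsequence of $\alpha_{i\neq 0}^E$ steps maps to the sequential execution $\sched$ of $\alg$ above, which is finite by hypothesis, so $\sched^E$ has a last $\alpha_{i\neq 0}^E$ step, say at time $T_0$. After $T_0$ every step executes $\alpha_0^E$, whose guard needs an amoebot holding a \true\ expand flag and whose effect sets exactly one \true\ flag to \false\ and none to \true; moreover no flag can turn \true\ after $T_0$, since flags are raised only by $\Expand$/handover operations inside $\alpha_{i\neq 0}^E$ executions. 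Hence the number of \true\ expand flags is a nonnegative monovariant that strictly decreases past $T_0$, which bounds the tail and contradicts infinitude.

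For condition~2, suppose a sequential execution $\sched^E$ from $C_0^E$ terminates in $C^E$ after $T$ steps. The coupling yields a finite sequential execution $\sched$ of $\alg$ from $C_0$ reaching $C_T \equiv C^E$, i.e.\ $C_T = C$. I would then argue $C$ is terminating for $\alg$: since $C^E$ is terminating, no amoebot can execute $\alpha_0^E$, so all expand flags are \false\ in $C^E$; hence $N^E(A) = N(A)$ for every amoebot $A$, and each transformed guard $g_i^E$ (namely $g_i$ with $N(A)$ replaced by $N^E(A)$) evaluates in $C^E$ exactly as $g_i$ does in $C$. Thus an enabled $\alpha_i$ with $i \geq 1$ in $C$ would give an enabled $\alpha_i^E$ in $C^E$, a contradiction; so $C$ admits no enabled action of $\alg$, and $\sched$ is a sequential execution of $\alg$ terminating in $C$, as required.

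The only delicate point is the auxiliary action $\alpha_0^E$, which has no counterpart in $\alg$: the coupling must simply skip it (safe because it touches only flags), and the termination argument must separately rule out an infinite run of $\alpha_0^E$ steps, which the ``number of \true\ expand flags'' monovariant handles. Everything else is essentially forced by conditions~1 and~2 of expansion-correspondence, which --- unlike expansion-robustness, whose statement only constrains terminating configurations --- are precisely the per-step invariants needed to carry the coupling forward configuration by configuration.
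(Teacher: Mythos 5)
Your proposal is correct and follows essentially the same route as the paper's proof: couple the two executions by skipping $\alpha_0^E$ steps and matching each $\alpha_{i\neq 0}^E$ execution with the corresponding $\alpha_i$ via the two conditions of expansion-correspondence, then rule out an infinite trailing run of $\alpha_0^E$ steps by the finiteness of expand flags. Your additional check that a terminating $C^E$ has all expand flags \false, so $N^E(A)=N(A)$ and any $\alpha_i$ enabled in $C$ would yield an enabled $\alpha_i^E$ in $C^E$, makes explicit a step the paper's proof leaves implicit, but does not change the argument.
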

\begin{proof}
    To prove termination, suppose to the contrary that all sequential executions of $\alg$ starting in $C_0$ terminate, but there exists some infinite sequential execution $\sched^E$ of $\alg^E$ starting in $C_0^E$.
    Algorithm $\alg$ is expansion-corresponding, so there is a sequential execution $\sched$ that is identical to $\sched^E$, modulo executions of $\alpha_0^E$.
    Execution $\sched$ terminates by supposition, so $\sched^E$ must contain an infinite number of $\alpha_0^E$ executions after its final $\alpha_{i \neq 0}^E$ execution.
    But $\alpha_0^E$ executions only reset expand flags, and there are only a finite number of amoebots and a constant number of expand flags per amoebot to reset, a contradiction.

    Correctness follows from the same observation.
    Only $\alpha_{i \neq 0}^E$ executions move amoebots and modify variables of $\alg$.
    Since every sequential execution $\sched^E$ of $\alg^E$ starting in $C_0^E$ represents an identical sequential execution $\sched$ of $\alg$ starting in $C_0$ (after removing the $\alpha_0^E$ executions), and since $\sched^E$ terminates whenever $\sched$ terminates by the above argument, we conclude that they must terminate in configurations that are identical, modulo expand flags.
\end{proof}

Before proving that the energy distribution framework preserves expansion-correspondence, we need one helper lemma characterizing established neighbors in $\alg^\demand$.

\begin{lemma} \label{lem:established}
    During an execution of $(\alg^\demand)^E$, if an amoebot $A$ has a neighbor $B \in N(A)$ that is \idle, \pruning, or a child of $A$, then $B \in N^E(A)$.
\end{lemma}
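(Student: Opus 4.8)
The plan is to unfold the definition $N^{E}(A) = \{B \in N(A) : \exists\text{ port }p\text{ of }B\text{ connected to }A\text{ with }B.\xflag_p = \false\}$ and, in each of the three cases, exhibit at least one port of $B$ incident to $A$ whose expand flag is \false. The one bookkeeping fact I would use throughout is the flag discipline of Algorithm~\ref{alg:expandrobust}: at the start of every non-trivial action $\alpha_{j\neq 0}^{E}$ the executing amoebot writes \false\ to all of its own expand flags and to all expand flags of all its current neighbors, and a flag is written \true\ only on a port that an amoebot reveals by an \Expand\ or by the expanding side of a handover. Since $\alpha_\energydist$ is one of the actions of $\alg^\demand$ passed to Algorithm~\ref{alg:expandrobust}, even $\alpha_\energydist^{E}$ executions clear these flags. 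Hence $B.\xflag_p = \true$ for a port $p$ of $B$ toward $A$ can only happen if $B$ has expanded since the last action executed by $B$ or by any then-neighbor of $B$ (including $A$), and the edge underlying $p$ was created by that expansion.

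The first two cases are clean. If $B$ is \idle, then by inspection of $\alg^\demand$ the only action ever enabled for $B$ is $\alpha_0^{E}$ (which just clears a flag), and the guard $g_i^\demand$ of every transformed action forbids \idle\ neighbors, so no neighbor can pull or push $B$ either; thus $B$ has not expanded since becoming \idle. Either $B$ has been \idle\ since $C_0^\demand$ (all flags \false\ by initialization) or $B$ became \idle\ only by executing a \textsc{Prune}-bearing action (the \getpruned\ branch of $\alpha_\energydist^{E}$ or the \Contract/\Pull\ branch of some $\alpha_i^{\demand,E}$), which begins by clearing all of $B$'s flags and re-flags nothing; either way every flag of $B$ is \false, so $B \in N^{E}(A)$. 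If $B$ is a child of $A$, then $B.\parent$ points at a port $q$ of $B$ toward $A$ that was installed when $A$ adopted $B$ via \growforest\ in $\alpha_\energydist^{E}$ — an execution in which $A$, a neighbor of $B$, cleared all of $B$'s flags and performed no movement, so $B.\xflag_q = \false$ right afterward. For $B$ to still be a child of $A$, $B.\parent$ must still equal $q$; since every contraction of $B$ resets $B.\parent$ to \nil\ (via \textsc{Prune}\ or the \Push/pushed-side logic of Algorithm~\ref{alg:framework}), $B$ has continuously occupied the node incident to $q$, so no later \Expand\ of $B$ re-revealed $q$ and no handover flagged it (handovers never flag the expanding side's new port facing its partner, and re-adoption would again only clear $B$'s flags). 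Hence $B.\xflag_q = \false$ and $B \in N^{E}(A)$.

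It remains to treat $B$ \pruning, and here I would split on how $B$ last became \pruning. If $B$ was set \pruning\ by an ancestor calling \textsc{Prune}, that ancestor is a neighbor of $B$ executing a non-trivial action that cleared all of $B$'s flags; if $B$ was set \pruning\ as the pushed party of a handover, then $B$ \emph{contracted}, revealing no new ports, while the pusher (a neighbor of $B$) cleared all of $B$'s flags; in both sub-cases every flag of $B$ is \false. The remaining sub-cases are the ones where $B$ \emph{itself} expands — $B$ executes the \Push\ branch of an $\alpha_i^{\demand,E}$, or $B$ is the pulled party of a \Pull\ performed by its parent — so $B$ retains its pre-move ports, whose flags were cleared at the start of the triggering action and never re-flagged (only newly revealed ports are flagged); it then suffices to show $A$ is incident to $B$ through one of these surviving ports, which I would establish by cases on whether $A$ was already a neighbor of $B$ before the move, using the geometry of a triangular-lattice handover together with the constraints $g_i^\demand$ places on the pre-move configuration. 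I expect these last sub-cases to be the main obstacle: the flag-clearing bookkeeping no longer closes the argument on its own, and one must rule out $A$ being adjacent to $B$ only through ports freshly revealed by $B$'s expansion. Once all three cases are handled, the lemma follows.
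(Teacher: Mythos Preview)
Your \idle\ and child cases are fine and close in spirit to the paper, which handles \idle\ and \pruning\ together by the contrapositive: any $B\in N(A)\setminus N^E(A)$ must have just expanded into $N(A)$, the guard $g_i^\demand$ forbids the executing amoebot and its established neighbors from being \idle\ or \pruning\ at that moment, and any \emph{later} action making $B$ \idle\ or \pruning\ is executed by $B$ or by a neighbor of $B$ and therefore clears $B$'s expand flags.

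The gap is precisely in the \pruning\ sub-cases you flag as the main obstacle, and your proposed geometric closure will not work. On the triangular lattice there \emph{are} nodes adjacent to $B$'s newly occupied node but adjacent neither to $B$'s old node nor to the handover partner's post-move node; an amoebot $A$ sitting at such a node sees $B$ only through freshly revealed ports carrying $\xflag=\true$, and nothing in $g_i^\demand$ (which constrains only the executing amoebot's \emph{pre-move} established neighborhood, of which such an $A$ is not a member) excludes this. You have in fact put your finger on a genuine subtlety: in the \Push\ branch of Algorithm~\ref{alg:framework} (Lines~\ref{alg:framework:push_start}--\ref{alg:framework:push_end}) the pusher is set \pruning\ \emph{before} the movement, within the same action, so the paper's ``any subsequent action that could make $B$ \idle\ or \pruning\ resets its expand flags'' clause does not literally cover this case either. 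Your plan as written does not close the argument here; resolving it would seem to require either a stronger invariant tying post-handover flags to the \pruning\ state, or a weakening of the statement to only those configurations actually used in the proof of Lemma~\ref{lem:edfconcurrency}.
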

\begin{proof}
    Any neighbor $B \in N(A) \setminus N^E(A)$ expanded into $N(A)$ during an \Expand\ operation by $B$, a \Push\ operation by $B$, or a \Pull\ operation by some other amoebot pulling $B$.
    Any movement in $(\alg^\demand)^E$ occurs in an $(\alpha_i^\demand)^E$ execution, whose guard requires that both the executing amoebot and all its established neighbors are not \idle\ or \pruning.
    Thus, regardless of whether $B$ is initiating the movement (an \Expand\ or \Push) or is participating in it (a \Pull), $B$ cannot be \idle\ or \pruning\ when it enters $N(A)$.
    Any subsequent action execution that could make $B$ \idle\ or \pruning\ must also reset its expand flags (Algorithm~\ref{alg:expandrobust}, Line~\ref{alg:expandrobust:resetnbr}).
    So there are never \idle\ or \pruning\ neighbors in $N(A) \setminus N^E(A)$.

    Next consider any child $B$ of $A$.
    Amoebot $B$ became a child of $A$ when $A$ adopted it during a $g_\growforest$-supported execution of $\alpha_\energydist^E$.
    During this execution, $A$ reset all expand flags of $B$ (Algorithm~\ref{alg:expandrobust}, Line~\ref{alg:expandrobust:resetnbr}).
    As long as $B$ is a child of $A$, its expand flags facing $A$ remain reset.
    Thus, $B \in N^E(A)$.
\end{proof}

We can now prove the main lemma of this section.

\begin{lemma} \label{lem:edfconcurrency}
    For any energy-compatible, expansion-corresponding algorithm $\alg$ and demand function $\demand : \alg \to \{1, 2, \ldots, \capacity\}$, the algorithm $\alg^\demand$ produced from $\alg$ and $\demand$ by the energy distribution framework is concurrency-compatible.
\end{lemma}
\begin{proof}
    By Theorem~\ref{thm:main}, we know that every sequential execution of $\alg^\demand$ terminates.
    It remains to show that $\alg^\demand$ satisfies the validity, phase structure, and expansion-robustness conventions.
    
    By supposition, every action $\alpha_i \in \alg$ in the original algorithm is valid, i.e., its execution is successful whenever it is enabled and all other amoebots are inactive.
    Since the guard $g_i$ of $\alpha_i$ is a necessary condition for the energy-constrained version $\alpha_i^\demand$ to be enabled, we know this validity carries over to the compute and movement phases of $\alpha_i$.
    The only new operations added by the energy distribution framework in the $\alpha_i^\demand$ and $\alpha_\energydist$ actions are \Connected\ operations (which never fail) and \Read\ and \Write\ operations involving existing neighbors.
    All of these must succeed, so every action of $\alg^\demand$ is valid.

    It is easy to see that $\alg^\demand$ satisfies the phase structure convention.
    Its only movements are in the $\alpha_i^\demand$ actions, each of which has at most one movement operation that it executes last.
    Moreover, the energy distribution framework does not add any \Lock\ or \Unlock\ operations.

    It remains to show $\alg^\demand$ is expansion-robust, and by Lemma~\ref{lem:expandcorrespond}, it suffices to show $\alg^\demand$ is expansion-corresponding.
    We first show that if some action of $(\alg^\demand)^E$ is enabled for an amoebot $A$ w.r.t.\ $N^E(A)$, then the corresponding action of $\alg^\demand$ is enabled for $A$ w.r.t.\ $N(A)$.
    We may safely consider only the guard conditions that depend on an amoebot's neighborhood; all others evaluate identically regardless of neighborhood.
    \begin{itemize}
        \item If $(\alpha_i^\demand)^E$ is enabled for an amoebot $A$, then $A$ must satisfy $g_i^E$---i.e., $A$ satisfies the guard $g_i$ of $\alpha_i \in \alg$ w.r.t.\ $N^E(A)$---and neither $A$ nor its established neighbors can be \idle\ or \pruning.
        Algorithm $\alg$ is expansion-corresponding by supposition, so this implies that $A$ must satisfy $g_i$ w.r.t.\ $N(A)$ as well.
        Moreover, Lemma~\ref{lem:established} ensures that if there are no \idle\ or \pruning\ neighbors in $N^E(A)$, there are none in $N(A)$ either.

        \item Suppose $\alpha_\energydist^E$ is enabled for an amoebot $A$ because $A$ has an \idle\ neighbor or an \asking\ child $B \in N^E(A)$, a condition in both $g_\askgrowth$ and $g_\growforest$.
        We know $N^E(A) \subseteq N(A)$, so $\alpha_\energydist$ must be enabled for $A$ w.r.t.\ $N(A)$ as well.

        \item Suppose $\alpha_\energydist^E$ is enabled for an amoebot $A$ because $A$ has a child $B \in N^E(A)$ whose battery is not full, a condition in $g_\shareenergy$.
        By the same argument as above, we have $N^E(A) \subseteq N(A)$, so $\alpha_\energydist$ must be enabled for $A$ w.r.t.\ $N(A)$ as well.
    \end{itemize}

    Finally, we show that the executions of any action of $(\alg^\demand)^E$ w.r.t.\ $N^E(A)$ and the corresponding action of $\alg^\demand$ w.r.t.\ $N(A)$ by the same amoebot $A$ are identical.
    We may safely focus only on the parts of action executions that depend on or interact with an amoebot's neighbors; all others execute identically regardless of neighborhood.
    \begin{itemize}
        \item If $A$ executes an $(\alpha_i^\demand)^E$ action, it emulates the operations of $\alpha_i \in \alg$ w.r.t.\ $N^E(A)$.
        But algorithm $\alg$ is expansion-corresponding by supposition, which immediately implies that an execution of $\alpha_i$ w.r.t.\ $N(A)$ is identical.

        \item If $A$ executes an $(\alpha_i^\demand)^E$ action or the \getpruned\ block of $\alpha_\energydist^E$, it may update its children's \xstate\ and \parent\ variables during $\textsc{Prune}(\,)$.
        By Lemma~\ref{lem:established}, any child of $A$ in $N(A)$ is also in $N^E(A)$, so the same children are pruned.

        \item If $A$ executes the \growforest\ block of $\alpha_\energydist^E$, it adopts all its \idle\ neighbors as an \xactive\ children.
        Any \idle\ neighbor $B \in N^E(A)$ that $A$ adopts must also be adopted when $A$ executes $\alpha_\energydist$ since $N^E(A) \subseteq N(A)$.
        But if there are no \idle\ neighbors in $N^E(A)$ for $A$ to adopt, there cannot be any in $N(A)$ either by Lemma~\ref{lem:established}.
        Thus, either the same \idle\ neighbors or no neighbors are adopted.

        \item If $A$ executes the \growforest\ block of $\alpha_\energydist^E$, it updates any \asking\ children to \growing.
        By Lemma~\ref{lem:established}, any child of $A$ in $N(A)$ is also in $N^E(A)$, so the same children are updated in $\alpha_\energydist$.

        \item If $A$ executes the \shareenergy\ block of $\alpha_\energydist^E$, it transfers an energy unit to one of its children $B \in N^E(A)$ whose battery is not full.
        We know $N^E(A) \subseteq N(A)$, so $B$ is also a possible recipient of this energy in $\alpha_\energydist$. \qedhere
    \end{itemize}
\end{proof}
\else
The main lemma of this section proves that the energy distribution framework preserves expansion-correspondence.
Its proof and supporting results can be found in Appendix~\ref{app:asyncproofs}.

\begin{lemma} \label{lem:edfconcurrency}
    For any energy-compatible, expansion-corresponding algorithm $\alg$ and demand function $\demand : \alg \to \{1, 2, \ldots, \capacity\}$, the algorithm $\alg^\demand$ produced from $\alg$ and $\demand$ by the energy distribution framework is concurrency-compatible.
\end{lemma}
\fi

Lemma~\ref{lem:edfconcurrency} shows that the energy distribution and concurrency control frameworks can be composed to obtain the benefits of both.
Specifically, an amoebot algorithm designer should first design their algorithm without energy constraints and perform the usual safety and liveness analyses with respect to an unfair sequential adversary.
If the algorithm always terminates, then they need only prove their algorithm satisfies the validity, phase structure, and connectivity conventions and argue that their algorithm is expansion-corresponding to automatically obtain an energy-constrained, asynchronous version of their algorithm with equivalent behavior, courtesy of the two frameworks.
The following theorem states this result formally by combining the energy distribution framework's guarantees (Theorem~\ref{thm:main}), the concurrency control framework's guarantees (Theorem~11 of~\cite{Daymude2023-canonicalamoebot}), and Lemma~\ref{lem:edfconcurrency}.
Note that because the runtime overhead of the concurrency control framework is not known, this theorem does not give any overhead bounds.

\begin{theorem} \label{thm:async}
    Consider any energy-compatible, expansion-corresponding amoebot algorithm $\alg$ and demand function $\demand : \alg \to \{1, 2, \ldots, \capacity\}$.
    Let $\alg^\demand$ be the algorithm produced from $\alg$ and $\demand$ by the energy distribution framework (Algorithm~\ref{alg:framework}) and let $(\alg^\demand)'$ be the algorithm produced from $\alg^\demand$ by the concurrency control framework (Algorithm~4 of~\cite{Daymude2023-canonicalamoebot}).
    Let $C_0$ be any (legal) connected initial configuration for $\alg$ and let $(C_0^\demand)'$ be its extension for $(\alg^\demand)'$ that designates at least one source amoebot and adds the energy distribution and concurrency control variables with their initial values (Table~\ref{tab:frameworkvariables} and \texttt{act} and \texttt{awaken} of~\cite{Daymude2023-canonicalamoebot}) to all amoebots.
    Then every asynchronous execution of $(\alg^\demand)'$ starting in $(C_0^\demand)'$ terminates.
    Moreover, if $(C^\demand)'$ is the final configuration of some asynchronous execution of $(\alg^\demand)'$ starting in $(C_0^\demand)'$, then there exists a sequential execution of $\alg$ starting in $C_0$ that terminates in a configuration $C$ that is identical to $(C^\demand)'$ modulo the energy distribution and concurrency control variables.
\end{theorem}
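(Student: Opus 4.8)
The plan is to prove Theorem~\ref{thm:async} purely by composing the three results already established, introducing no new technical machinery; the argument is essentially a bookkeeping chain. First I would invoke Lemma~\ref{lem:edfconcurrency}: since $\alg$ is both energy-compatible and expansion-corresponding by hypothesis, the transformed algorithm $\alg^\demand$ is concurrency-compatible. This is the only substantive ingredient on the energy-distribution side; everything downstream is a matter of feeding configurations through the right theorem.

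Next I would apply the correctness guarantee of the concurrency control framework (Theorem~11 of~\cite{Daymude2023-canonicalamoebot}) with $\alg^\demand$ as its input algorithm. Because $\alg^\demand$ is concurrency-compatible, that theorem guarantees (i) every asynchronous execution of $(\alg^\demand)'$ starting in $(C_0^\demand)'$ terminates, and (ii) for any such execution ending in a configuration $(C^\demand)'$, there is an unfair sequential execution of $\alg^\demand$ starting in $C_0^\demand$ that terminates in the configuration $C^\demand$ obtained from $(C^\demand)'$ by deleting the concurrency control variables. Here I would verify the initial-configuration bookkeeping: by the way $(C_0^\demand)'$ is defined in the statement, deleting its concurrency control variables yields exactly $C_0^\demand$, which is a legal connected initial configuration for $\alg^\demand$ with at least one source amoebot and the energy distribution variables set to their initial values---precisely the hypothesis of Theorem~\ref{thm:main}.

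Then I would feed that sequential execution of $\alg^\demand$ into Theorem~\ref{thm:main}. Since $\alg$ is energy-compatible, Theorem~\ref{thm:main} produces an unfair sequential execution of $\alg$ starting in $C_0$ that terminates in a configuration $C$ identical to $C^\demand$ modulo the energy distribution variables. Composing the two correspondences, $C$ is identical to $(C^\demand)'$ modulo the union of the energy distribution and concurrency control variables, which is exactly the stated conclusion. No quantitative analysis is needed for this last step, and I would explicitly note that no runtime overhead is claimed because the round overhead of the concurrency control framework is unknown.

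The main obstacle---modest as it is---is confirming that the two transformations compose cleanly at the level of variables: the concurrency control framework must treat $\xstate$, $\parent$, and $\battery$ as ordinary public-memory variables (so that its guarantees are unaffected by their presence), and the energy distribution framework must be oblivious to the concurrency control variables \texttt{act} and \texttt{awaken}. Since both transformations are syntactic and act only on guards and operation sequences without inspecting each other's variables, this disjointness is immediate from inspection; the care required is in reading both frameworks precisely rather than in any genuine technical difficulty.
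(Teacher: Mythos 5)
Your proposal is correct and follows exactly the paper's (implicit) argument: it combines Lemma~\ref{lem:edfconcurrency} to get concurrency-compatibility of $\alg^\demand$, Theorem~11 of~\cite{Daymude2023-canonicalamoebot} to pass from asynchronous executions of $(\alg^\demand)'$ to sequential executions of $\alg^\demand$, and Theorem~\ref{thm:main} to pass from those to sequential executions of $\alg$, with the same configuration bookkeeping and the same observation that no runtime overhead can be claimed.
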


We conclude this section by applying Theorem~\ref{thm:async} to the \erosionAlg\ and \hexagonAlg\ algorithms from Section~\ref{sec:edfcompatible}.
Those algorithms were shown to be energy-compatible in Lemmas~\ref{lem:leaderelection} and~\ref{lem:hexagonformation} and expansion-corresponding in Lemma~7.1 of~\cite{Briones2023-invitedpaper} and Theorem~10 of~\cite{Daymude2023-canonicalamoebot}, respectively.
Therefore,

\begin{corollary} \label{cor:energyasyncalgs}
    There exist energy-constrained amoebot algorithms that deterministically solve the leader election problem (for hole-free, connected systems) and the hexagon formation problem (for connected systems) assuming geometric space, assorted orientations, constant-size memory, and an unfair asynchronous adversary---the most general of all adversaries.
\end{corollary}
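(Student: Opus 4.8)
The plan is to apply Theorem~\ref{thm:async} directly to the two algorithms \erosionAlg\ and \hexagonAlg, so that the only work left is to verify that each satisfies the theorem's two hypotheses: being energy-compatible and being expansion-corresponding. Energy-compatibility of both is already in hand: Lemma~\ref{lem:leaderelection} establishes it for \erosionAlg\ (via Corollary~\ref{cor:stationary}, since that algorithm is stationary, terminates under every unfair sequential execution, comprises only valid actions, and uses no \Lock/\Unlock\ operations), and Lemma~\ref{lem:hexagonformation} establishes it for \hexagonAlg\ (borrowing termination, validity, phase structure, and connectivity from the analyses in~\cite{Daymude2023-canonicalamoebot}). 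So the first hypothesis is immediate for both.

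For the second hypothesis I would invoke the existing expansion-correspondence results verbatim: \erosionAlg\ is expansion-corresponding by Lemma~7.1 of~\cite{Briones2023-invitedpaper}, and \hexagonAlg\ is expansion-corresponding by Theorem~10 of~\cite{Daymude2023-canonicalamoebot} (indeed, that is precisely the route by which~\cite{Daymude2023-canonicalamoebot} proves \hexagonAlg's expansion-robustness). With both hypotheses checked, Theorem~\ref{thm:async} immediately yields, for any valid demand function $\demand$, energy-constrained asynchronous algorithms $((\erosionAlg)^\demand)'$ and $((\hexagonAlg)^\demand)'$ whose every asynchronous execution terminates and whose final configuration—after stripping the energy-distribution and concurrency-control bookkeeping variables—coincides with the terminating configuration of some unfair sequential execution of the corresponding energy-agnostic algorithm.

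It then remains only to conclude that these transformed algorithms are \emph{correct}, i.e., they still solve leader election and hexagon formation, respectively. This follows by composing the behavioral equivalence of Theorem~\ref{thm:async} with the base algorithms' safety guarantees: every unfair sequential execution of \erosionAlg\ terminates with a unique leader elected (Theorem~6.3 of~\cite{Briones2023-invitedpaper}), and every unfair sequential execution of \hexagonAlg\ terminates with the system forming a hexagon (Theorem~8 of~\cite{Daymude2023-canonicalamoebot}); since the final configuration of any asynchronous execution of the transformed algorithm is such a terminating configuration, the transformed algorithm inherits the safety property. The model assumptions—geometric space, assorted orientations, constant-size memory—pass through both frameworks unchanged, and the adversary is unfair asynchronous as claimed, which proves the corollary.

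I do not expect a genuine obstacle here; the statement is a corollary and the argument is a bookkeeping chain of already-proven facts. The one point deserving care is that Lemma~\ref{lem:edfconcurrency} (and hence Theorem~\ref{thm:async}) requires the stronger property of \emph{expansion-correspondence} rather than mere expansion-robustness, so the proof should explicitly confirm that Lemma~7.1 of~\cite{Briones2023-invitedpaper} and Theorem~10 of~\cite{Daymude2023-canonicalamoebot} establish expansion-correspondence and not only expansion-robustness—which, upon inspection, they do.
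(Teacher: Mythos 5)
Your proposal is correct and follows the paper's own route exactly: the paper also derives this corollary by combining Theorem~\ref{thm:async} with energy-compatibility (Lemmas~\ref{lem:leaderelection} and~\ref{lem:hexagonformation}) and the expansion-correspondence results of Lemma~7.1 of~\cite{Briones2023-invitedpaper} and Theorem~10 of~\cite{Daymude2023-canonicalamoebot}. Your added remarks---that correctness is inherited because the final asynchronous configuration matches a terminating sequential configuration of the base algorithm, and that expansion-correspondence (not mere expansion-robustness) is what Theorem~\ref{thm:async} needs---are sound elaborations of what the paper leaves implicit.
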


\section{Conclusion}  \label{sec:conclude}

In this work, we introduced the energy distribution framework for amoebot algorithms which transforms any energy-agnostic algorithm into an energy-constrained one with equivalent behavior, provided the original algorithm terminates under an unfair sequential adversary, maintains system connectivity, and follows some basic structural conventions (Theorem~\ref{thm:main}).
We then proved that both the \erosionAlg\ and \hexagonAlg\ algorithms are energy-compatible (Theorems~\ref{thm:leaderelection} and~\ref{thm:hexagonformation}).
Perhaps surprisingly, these proofs were not difficult.
The algorithms' existing correctness and runtime analyses under an unfair sequential adversary provided nearly all that was needed for energy-compatibility, and we expect this would be true for other algorithms as well.
Finally, we proved that if an energy-compatible algorithm is also expansion-corresponding, then its energy-constrained counterpart produced by our framework can be extended to asynchronous concurrency using the concurrency control framework for amoebot algorithms (Theorem~\ref{thm:async}).

The energy-constrained algorithms produced by our framework have an $\bigo{\numAmoebots^2}$ round runtime overhead, though our simulations of $\erosionAlg^\demand$ and $\hexagonAlg^\demand$ suggest that the overhead is much lower in practice.
Comparing Lemmas~\ref{lem:stabletime} and~\ref{lem:rechargetime} reveals the spanning forest maintenance algorithm as the performance bottleneck, which uses $\bigo{\numAmoebots^2}$ rounds in the worst case to prune and rebuild a forest of stable trees.
\ifnum\version=\ARXIV
In particular, amoebots getting permission from their (source) root before adopting children is critical for avoiding non-termination under an unfair adversary (Lemma~\ref{lem:forestblocksfinite}), but requires a number of rounds that is linear in the depth of the tree (Lemma~\ref{lem:growtime}).
\else
In particular, amoebots getting permission from their (source) root before adopting children is critical for avoiding non-termination under an unfair adversary (Lemma~\ref{lem:energyrunfinite}), but requires a number of rounds that is linear in the depth of the tree.
\fi
Improving this bound either requires a new approach to acyclic resource distribution or an optimization of stable tree membership detection.
A shortest-path tree---i.e., one that maintains equality between the in-tree and in-system distances from any amoebot to its root---would bound the depth of any tree by the diameter $D$ of the system.
This would reduce the overall overhead to $\bigo{nD}$ rounds, which is still $\bigo{\numAmoebots^2}$ in the worst case (e.g., a line) but could achieve up to $\bigo{\numAmoebots^{3/2}}$ in the best case (e.g., a regular hexagon).
However, the recent feather tree algorithm~\cite{Kostitsyna2022-briefannouncement} for forming shortest-path forests in amoebot systems only works in stationary systems.
Achieving an algorithm for shortest-path forest maintenance---not just formation---would both improve our present overhead bound and be an interesting contribution in its own right.

\ifnum\version=\SUBMISSION
\clearpage
\else\fi

\bibliographystyle{plainurl}
\bibliography{ref}

\ifnum\version=\SUBMISSION
\clearpage
\appendix

\section{Omitted Analysis of the Energy Distribution Framework} \label{app:edfproofs}

This section contains the technical material omitted from Section~\ref{sec:framework} due to space constraints.

\subparagraph{Analysis Overview.}

We outline our analysis as follows.
We start by considering an arbitrary sequential execution $\sched^\demand$ of $\alg^\demand$ starting in $C_0^\demand$.
One way of conceptualizing $\sched^\demand$ is as a sequence of \textit{energy runs}---i.e., maximal sequences of consecutive $\alpha_\energydist$ executions---that are delineated by sequences of $\alpha_i^\demand$ executions.
In fact, $\sched^\demand$ contains only a finite number of $\alpha_i^\demand$ executions (and thus a finite number of energy runs) because the corresponding sequence of $\alpha_i$ executions forms a possible sequential execution $\sched_\alpha$ of $\alg$ (Lemma~\ref{app:lem:equivalence}), which must terminate because $\alg$ is energy-compatible.
It is exactly this execution $\sched_\alpha$ of $\alg$ that we will argue terminates in a configuration $C$ corresponding to the final configuration $C^\demand$ of $\sched^\demand$.

Of course, we have not yet shown that $\sched^\demand$ terminates at all under an unfair adversary, let alone in a final configuration corresponding to $\sched_\alpha$.
To do so, we will show that any energy run in $\sched^\demand$ is finite (Lemmas~\ref{app:lem:forestblocksfinite} and~\ref{app:lem:energyblocksfinite}); specifically, it either reaches a configuration where $\alpha_\energydist$ is disabled for all $\numAmoebots$ amoebots within $\bigo{\numAmoebots^2}$ rounds, or ends earlier because some $\alpha_i^\demand$ action is executed (Lemmas~\ref{app:lem:stabletime} and~\ref{app:lem:rechargetime}).
Since each energy run terminates within $\bigo{\numAmoebots^2}$ rounds and is delineated by a sequence of $\alpha_i^\demand$ executions, each $\alpha_i^\demand$ execution in $\sched^\demand$ can be mapped to an $\alpha_i$ execution in $\sched_\alpha$, and $\sched_\alpha$ contains at most $\algruntime$ action executions, we conclude that $\sched^\demand$ is not only finite, but terminates within $\bigo{\numAmoebots^2\algruntime}$ rounds.

Once it is established that both $\sched^\demand$ and $\sched_\alpha$ terminate, we argue that their respective final configurations $C^\demand$ and $C$ are identical (modulo the energy distribution variables).
Because every $\alpha_i^\demand$ execution in $\sched^\demand$ corresponds to a possible $\alpha_i$ execution in $\sched_\alpha$ (Lemma~\ref{app:lem:equivalence}), we know that any configuration reachable by $\sched^\demand$ is also reachable by $\sched_\alpha$.
So $\sched_\alpha$ must be able to reach a configuration $C$ corresponding to $C^\demand$, but we need to show that it will also terminate there; i.e., that the energy distribution aspects of $\alg^\demand$ don't impede it from making as much progress as $\alg$.
This will follow from the above energy run arguments, concluding the analysis.

\bigskip

We begin our analysis with two sets of invariants maintained by the energy distribution framework that we will reference repeatedly.
The first set describes useful properties of energy runs, i.e., maximal sequences of consecutive $\alpha_\energydist$ executions.
The second set characterizes all configurations reachable by algorithm $\alg^\demand$.

\begin{invariant} \label{app:inv:energyrun}
    In any energy run of any sequential execution of $\alg^\demand$ starting in $C_0^\demand$,
    \begin{enumerate}[label=(\alph*),ref=\ref{app:inv:energyrun}\alph*,leftmargin=1cm]
        \item \label{app:inv:energyrun:nospend} Energy is only harvested or transferred; it is never spent.

        \item \label{app:inv:energyrun:nomove} No amoebot ever moves.
    
        \item \label{app:inv:energyrun:stabletrees} Any amoebot that belongs to a stable tree of forest $\forest$ (i.e., one that is rooted at a source amoebot) will never change its \parent\ pointer.
    \end{enumerate}
\end{invariant}
\begin{proof}
    We prove each part independently.
    \begin{enumerate}[label=(\alph*),leftmargin=1cm]
        \item The only way for an amoebot to spend energy is during an $\alpha_i^\demand$ execution, which never occurs during an energy run by definition.

        \item The only way for an amoebot to move is during an $\alpha_i^\demand$ execution, which never occurs during an energy run by definition.
    
        \item The \parent\ pointer of an amoebot $A$ is only updated if $A$ contracts or is involved in a handover, calls $\textsc{Prune}(\,)$, or is adopted during \growforest.
        No amoebot moves during an energy run (Invariant~\ref{app:inv:energyrun:nomove}) and stable trees never prune by definition.
        So members of stable trees remain there throughout an energy run.
        \qedhere
    \end{enumerate}
\end{proof}

\begin{invariant} \label{app:inv:reachable}
    Any configuration reached by any sequential execution of $\alg^\demand$ starting in $C_0^\demand$:
    \begin{enumerate}[label=(\alph*),ref=\ref{app:inv:reachable}\alph*,leftmargin=1cm]
        \item \label{app:inv:reachable:connected} is connected.

        \item \label{app:inv:reachable:sources} contains at least one source amoebot.

        \item \label{app:inv:reachable:batteries} maintains $A.\battery \in \{0, 1, \ldots, \capacity\}$ for all amoebots $A$.
    \end{enumerate}
\end{invariant}
\begin{proof}
    We prove each part independently.
    \begin{enumerate}[label=(\alph*),leftmargin=1cm]
        \item The initial configuration $C_0^\demand$ is connected by supposition.
        All amoebot movements in $\alg^\demand$ originate from the movement phases of $\alpha_i$ actions from the original algorithm $\alg$.
        Since $\alg$ satisfies the connectivity convention (Convention~\ref{conv:connect}) by supposition, no configuration reachable from $C_0^\demand$ could ever be disconnected.

        \item The initial configuration $C_0^\demand$ contains at least one source amoebot by supposition.
        By inspection of Algorithm~\ref{alg:framework}, a source amoebot never updates its $\xstate$, so any source amoebot in $C_0^\demand$ remains a source amoebot throughout the execution of $\alg^\demand$.

        \item All amoebot batteries are initially empty in $C_0^\demand$.
        The guards $g_i^\demand$ and predicates $g_\harvestenergy$ and $g_\shareenergy$ ensure that $A.\battery \in [0, \capacity]$.
        Moreover, all changes to $A.\battery$ are integral: the $\alpha_i^\demand$ actions spend $\demand(\alpha_i) \in \{1, 2, \ldots, \capacity\}$ energy, \harvestenergy\ always harvests a single unit of energy into a source amoebot's battery, and \shareenergy\ always transfers a single unit of energy from a parent to one of its children.
        Noting that the battery capacity $\capacity$ is an integer, the invariant follows.
        \qedhere
    \end{enumerate}
\end{proof}

With the invariants in place, we can move on to analyzing sequential executions of $\alg^\demand$ representing any sequence of activations the unfair sequential adversary could have chosen.

\begin{lemma} \label{app:lem:equivalence}
    Consider any sequential execution $\sched^\demand$ of $\alg^\demand$ starting in initial configuration $C_0^\demand$ and let $\sched_\alpha^\demand$ denote its subsequence of $\alpha_i^\demand$ action executions.
    Then the corresponding sequence $\sched_\alpha$ of $\alpha_i$ executions is a valid sequential execution of $\alg$ starting in initial configuration $C_0$.
\end{lemma}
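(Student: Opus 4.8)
The plan is to prove, by induction on $r \geq 0$, that the configuration $C_r^\demand$ reached by the first $r$ executions in $\sched_\alpha^\demand$ (starting from $C_0^\demand$) and the configuration $C_r$ reached by the first $r$ executions in $\sched_\alpha$ (starting from $C_0$) agree on everything except the energy distribution variables $\xstate$, $\parent$, and $\battery$; write $C_r^\demand \cong C_r$ for this relation. Once this is established for all $r$, it follows that each $\alpha_i$ in $\sched_\alpha$ is enabled at the moment it is scheduled and that its execution is well defined; combined with Convention~\ref{conv:valid} (each action of $\alg$ succeeds when executed in isolation, which is the situation along a sequential execution), this is exactly the assertion that $\sched_\alpha$ is a valid sequential execution of $\alg$ starting in $C_0$.

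The base case $r = 0$ is immediate: by the definition of $C_0^\demand$ in Theorem~\ref{thm:main}, $C_0^\demand$ is obtained from $C_0$ by adjoining the energy distribution variables with their initial values, so $C_0^\demand \cong C_0$. For the inductive step, suppose $C_{r-1}^\demand \cong C_{r-1}$. In $\sched^\demand$, the $r$-th $\alpha_i^\demand$ execution is preceded by at most one maximal block of consecutive $\alpha_\energydist$ executions (an energy run). The first thing to check is that such a block changes nothing visible to $\alg$: by inspection of the \getpruned, \askgrowth, \growforest, \harvestenergy, and \shareenergy\ blocks of Algorithm~\ref{alg:framework} (and of the \textsc{Prune} and \textsc{Children} helpers), every write is to $\xstate$, $\parent$, or $\battery$ and no movement operation is ever issued. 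Hence the configuration $C'$ reached just before the $r$-th $\alpha_i^\demand$ execution still satisfies $C' \cong C_{r-1}^\demand \cong C_{r-1}$.

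It then remains to compare the $r$-th $\alpha_i^\demand$ execution from $C'$ with an $\alpha_i$ execution from $C_{r-1}$. By construction in Algorithm~\ref{alg:framework}, $g_i^\demand$ is $g_i$ conjoined with the battery and forest-state conditions, so $g_i^\demand$ implies $g_i$; since $g_i$ depends only on non-energy variables and on the executing amoebot's connectivity, and $C' \cong C_{r-1}$, the action $\alpha_i$ is enabled for the same amoebot in $C_{r-1}$. As for the operations, $ops_i^\demand$ is the write subtracting $\demand(\alpha_i)$ from $\battery$, followed by (possibly) pruning bookkeeping that only touches $\xstate$ and $\parent$ of the executing amoebot and its children, followed by a verbatim copy of the compute and move phases of $ops_i$; the latter reads and writes only non-energy variables and performs the same (at most one) movement operation. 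So the effect on amoebot positions and on variables of $\alg$ is identical to that of executing $\alpha_i$, which gives $C_r^\demand \cong C_r$ and closes the induction.

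The only genuinely delicate point — and the one I would be most careful about — is the claim that intervening $\alpha_\energydist$ executions, as well as the extra pruning steps inserted into $ops_i^\demand$, never perturb amoebot positions or $\alg$'s variables; everything else is routine bookkeeping. This is a finite case check over the five blocks of $\alpha_\energydist$ and the two helper functions, and it is worth isolating (or deriving from) an explicit invariant stating that energy runs neither move amoebots nor touch $\alg$'s variables, so that the induction above can quote it cleanly.
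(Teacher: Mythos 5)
Your proposal is correct and follows essentially the same route as the paper's own proof: an induction on $r$ showing $C_r^\demand \cong C_r$ (identity modulo the energy distribution variables), using that energy runs neither move amoebots nor touch $\alg$'s variables, that $g_i^\demand$ implies $g_i$, and that $\alpha_i^\demand$ emulates $\alpha_i$ on all $\alg$-relevant state. The extra care you take in checking the pruning bookkeeping inside $ops_i^\demand$ is a slightly more explicit version of the paper's "$\alpha_i^\demand$ emulates $\alpha_i$" step, not a different argument.
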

\begin{proof}
    Let $C_r^\demand$ (resp., $C_r$) denote the configuration reached by the first $r$ action executions in $\sched_\alpha^\demand$ starting in $C_0^\demand$ (resp., in $\sched_\alpha$ starting in $C_0$).
    Argue by induction on $r \geq 0$ that $C_r^\demand \cong C_r$; i.e., these configurations are identical with respect to amoebots' positions and the variables of $\alg$.
    This implies that $\sched_\alpha$ is a valid sequential execution of $\alg$ starting in $C_0$, as desired.

    If $r = 0$, then trivially $C_0^\demand \cong C_0$ by definition (see the statement of Theorem~\ref{thm:main}).
    So suppose $r \geq 1$.
    By the induction hypothesis, $C_{r-1}^\demand \cong C_{r-1}$.
    By definition, there is at most one energy run of $\alpha_\energydist$ executions in $\sched^\demand$ between $C_{r-1}^\demand$ and the configuration $C_{r'}^\demand$ in which the $r$-th $\alpha_i^\demand$ execution of $\sched_\alpha^\demand$ is enabled.
    But $\alpha_\energydist$ executions do not move amoebots or modify any variables of algorithm $\alg$, so $C_{r'}^\demand \cong C_{r-1}^\demand \cong C_{r-1}$.
    Also, any amoebot $A$ for which some $\alpha_i^\demand$ action is enabled must also satisfy the guard $g_i$ of action $\alpha_i$, by definition of the guard $g_i^\demand$.
    Thus, if $A$ executes $\alpha_i^\demand$ in $C_{r'}^\demand$, action $\alpha_i$ can also be executed by $A$ in $C_{r-1}$.
    Moreover, any amoebot movements or updates to variables of $\alg$ must be identical in both action executions, since $\alpha_i^\demand$ emulates $\alpha_i$.
    Therefore, $C_r^\demand \cong C_r$.
\end{proof}

Lemma~\ref{app:lem:equivalence} gives us a handle on the $\alpha_i^\demand$ action executions in any sequential execution of $\alg^\demand$, so it remains to analyze the energy runs between them.
In this first series of lemmas, we show that if $\alpha_\energydist$ is continuously enabled for some amoebot $A$ during an energy run, then within one additional round either $A$ is activated or the energy run is ended by some $\alpha_i^\demand$ action execution (Lemma~\ref{app:lem:round}).
Formally, we say an execution of $\alpha_\energydist$ by an amoebot $A$ is \textit{$g$-supported} if predicate $g \in \mathcal{G}$ is satisfied when $A$ is activated and executes $\alpha_\energydist$.
To prove eventual execution, we argue that any predicate $g \in \mathcal{G}$ can support at most a finite number of executions per energy run (Lemmas~\ref{app:lem:forestblocksfinite} and~\ref{app:lem:energyblocksfinite}).
Combining this with the definition of a round from Section~\ref{subsec:model} yields the one round upper bound on how long an $\alpha_\energydist$ action can remain continuously enabled in an energy run.

We begin with the \getpruned, \askgrowth, and \growforest\ blocks that maintain the spanning forest $\forest$.
Recall from Section~\ref{subsec:edf} that amoebots may move and disrupt the forest structure.
Thus, at the start of any energy run, the amoebot system is partitioned into \textit{stable trees} rooted at source amoebots, \textit{unstable trees} rooted at \pruning\ amoebots, and \idle\ amoebots that do not belong to any tree.
In the following lemma, we argue that amoebots cannot be trapped in an infinite loop of pruning and rejoining the forest $\forest$.

\begin{lemma} \label{app:lem:constantrejoin}
    In any energy run of $\sched^\demand$, no amoebot is pruned from and adopted into the forest $\forest$ more than eight times.
\end{lemma}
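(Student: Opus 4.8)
The plan is to argue by contradiction, using the two energy-run invariants to reduce the claim to a statement about a single fixed neighbor repeatedly re-adopting the same amoebot. First I would invoke Invariant~\ref{app:inv:energyrun:stabletrees}: any amoebot that starts the energy run in a stable tree, or joins one during the run, never leaves it. Consequently, if an amoebot $A$ is pruned from and re-adopted into the forest more than eight times, every one of those adoptions must be into an \emph{unstable} tree. Each adoption of $A$ is performed by some neighbor $B$ executing the \growforest\ block of $\alpha_\energydist$ (a $g_\growforest$-supported execution), so $B$ is either \growing\ or a \source; since source amoebots root only stable trees, $B$ must have been \growing\ at the moment it adopted $A$. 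Because no amoebot moves during an energy run (Invariant~\ref{app:inv:energyrun:nomove}), $A$'s neighborhood is fixed and contains at most eight amoebots, so by pigeonhole some fixed neighbor $B$ adopts $A$ at least twice, and therefore $B$ becomes \growing\ at least twice during the energy run.

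The core of the proof is then showing this is impossible: within a single energy run, no amoebot in an unstable tree can become \growing\ more than once. The only way an amoebot becomes \growing\ is for its current parent to pass its own \growing\ state down to its \asking\ children during a $g_\growforest$-supported execution. An unstable tree is, by definition, severed from every source, and (using no-movement again) its ancestor structure cannot grow upward, so no source is ever an ancestor in such a tree; hence the \growing\ state can only ripple downward from the amoebots that were already \growing\ at the start of the energy run. Moreover, when a \growing\ amoebot executes \growforest\ it demotes itself to \xactive, promotes only its \asking\ children to \growing, and explicitly excludes any child it just adopted (which is \xactive, not \asking). Thus each amoebot is reached by exactly one downward "growing wave" and becomes \growing\ at most once — contradicting that $B$ becomes \growing\ twice, and completing the argument.

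The step I expect to be the main obstacle is the claim that the \growing\ state propagates downward only once in an unstable tree; in particular, I must rule out a scenario where $B$'s parent pointer is rewired (via re-adoption of some ancestor of $B$) in a way that reintroduces a \growing\ ancestor above $B$ and restarts the wave. This is handled by the observations that (i) re-adoption only ever sets the new child's state to \xactive, never \growing, and \growforest\ never promotes a freshly adopted \xactive\ child, and (ii) an unstable tree never contains a source; together these ensure no \growing\ ancestor of $B$ can ever (re)appear, so no second wave can occur. Once this is nailed down, the contradiction is immediate.
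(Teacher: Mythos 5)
Your proposal is correct and follows essentially the same route as the paper's proof: use the stable-tree and no-movement invariants to pigeonhole a fixed neighbor $B$ that adopts $A$ twice into an unstable tree, note $B$ cannot be a source there so it must become \growing\ twice, and rule this out because the \growing\ state only propagates downward once in a source-free tree (freshly adopted children are \xactive, not \asking, and a \growing\ amoebot demotes itself to \xactive). The explicit treatment of the parent-pointer rewiring concern matches the paper's observation that no new \growing\ ancestors can be introduced in an unstable tree.
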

\begin{proof}
    By Invariant~\ref{app:inv:energyrun:stabletrees}, any amoebot that was already in a stable tree at the start of the energy run or is adopted into a stable tree during the energy run will remain there throughout the energy run.
    So suppose to the contrary that an amoebot $A$ is pruned from and adopted into unstable trees of the forest $\forest$ more than eight times.
    Since amoebot $A$ can have at most eight neighbors (if it is expanded) and none of these neighbors can move during an energy run (Invariant~\ref{app:inv:energyrun:nomove}), there must exist a neighbor $B$ that adopts $A$ into an unstable tree more than once.
    By the predicate $g_\growforest$ and the fact that $B$ cannot be a source if it is in an unstable tree, this implies that $B$ must become \growing\ multiple times.

    Observe that when a \growing\ amoebot transfers its \xstate\ to its \asking\ children during a $g_\growforest$-supported execution, it excludes any newly adopted child (which is \xactive) and then becomes \xactive.
    Moreover, because unstable trees are severed from source amoebots, no new \growing\ ancestors can be introduced in an unstable tree.
    Thus, the only amoebots that can become \growing\ in an unstable tree are those that had \growing\ ancestors in this tree at the start of the energy run, but even those will become \growing\ at most once.
    So $B$ cannot become \growing\ multiple times to adopt $A$ more than once, a contradiction.
\end{proof}

We next show that all amoebots eventually join and remain in stable trees.

\begin{lemma} \label{app:lem:forestblocksfinite}
    Any energy run of $\sched^\demand$ contains at most a finite number of $g_\getpruned$-, $g_\askgrowth$-, and $g_\growforest$-supported executions of $\alpha_\energydist$.
\end{lemma}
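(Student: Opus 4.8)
The plan is to prove the statement by contradiction, showing that an infinite number of $g$-supported executions (for $g$ among $g_\getpruned$, $g_\askgrowth$, $g_\growforest$) within a single energy run is impossible. First I would observe that these three predicates read only the $\xstate$ and $\parent$ variables of an amoebot and its neighbors, and neither the \harvestenergy\ nor the \shareenergy\ block ever writes those variables; hence it suffices to reason about the \getpruned, \askgrowth, and \growforest\ blocks alone. Because the system is finite, an infinite number of such executions forces a single amoebot $A$ to perform infinitely many of them for one fixed predicate $g$, and the three cases are then handled separately.

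The $g_\getpruned$ case is the simplest: each $g_\getpruned$-supported execution by $A$ turns $A$ from \pruning\ to \idle, so $A$ would be pruned infinitely often, and every pruning after the first must be preceded by an adoption (to escape the \idle\ state), contradicting Lemma~\ref{app:lem:constantrejoin}. For the $g_\askgrowth$ case, each such execution turns $A$ from \xactive\ to \asking; $A$ can only return to \xactive\ either through a prune-then-readopt cycle (which by Lemma~\ref{app:lem:constantrejoin} happens finitely often) or through becoming \growing\ and resetting itself in a $g_\growforest$-supported execution. So infinitely many $g_\askgrowth$-supported executions by $A$ entail infinitely many $g_\growforest$-supported ones, collapsing this case into the next.

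The $g_\growforest$ case is where the real work lies, and I expect the subcase of a source amoebot repeatedly seeing an \asking\ child to be the main obstacle. When $A$ is \growing, executing \growforest\ resets it to \xactive, and the only way for $A$ to become \growing\ again is for its parent to perform a $g_\growforest$-supported execution, which cascades up the tree to its source root; hence it suffices to treat $A$ as a source. A source executing \growforest\ because of an \idle\ neighbor adopts all of its \idle\ neighbors, who by Invariant~\ref{app:inv:energyrun:stabletrees} remain its children for the rest of the energy run, so this can happen at most once per neighbor (at most eight times). That leaves a source $A$ with an \asking\ child infinitely often. Here I would trace the asking signal back to its origin: by $g_\askgrowth$, every asking signal originates at an \xactive\ amoebot with an \idle\ neighbor, so infinitely many signals reaching $A$ force some amoebot $B$ in $A$'s (stable) tree to perform infinitely many $g_\askgrowth$-supported executions as an \xactive\ amoebot with an \idle\ neighbor. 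Since $B$ belongs to a stable tree it never prunes, so it returns to \xactive\ only by becoming \growing\ and resetting in its own $g_\growforest$-supported execution, during which it adopts every current \idle\ neighbor. The delicate point is that $B$ might keep being \asking\ over an \idle\ neighbor that is adopted by someone else and then made \idle\ again through pruning before $B$ gets its turn; but Lemma~\ref{app:lem:constantrejoin} caps how often any neighbor can cycle this way, and every neighbor $B$ does adopt stays a child by Invariant~\ref{app:inv:energyrun:stabletrees}, so after finitely many $g_\askgrowth$-supported executions $B$ has adopted all of its \idle\ neighbors and can no longer satisfy $g_\askgrowth$, a contradiction. Assembling the three cases yields the lemma.
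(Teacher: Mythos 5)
Your proposal is correct and follows essentially the same route as the paper's proof: the same reduction to the three forest blocks, the same per-predicate case analysis using Lemma~\ref{app:lem:constantrejoin} and Invariant~\ref{app:inv:energyrun:stabletrees}, and the same treatment of the delicate source-with-\asking-child subcase by tracing asking signals back to an \xactive\ amoebot with an \idle\ neighbor. No gaps to report.
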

\begin{proof}
    The predicates $g_\getpruned$, $g_\askgrowth$, and $g_\growforest$ depend only on the $\xstate$ and $\parent$ variables, neither of which are updated by the \harvestenergy\ and \shareenergy\ blocks.
    Thus, we may consider only the \getpruned, \askgrowth, and \growforest\ blocks when analyzing executions of $\alpha_\energydist$ supported by their predicates.

    Suppose to the contrary that an energy run of $\sched^\demand$ contains an infinite number of $g_\getpruned$-supported executions.
    With only a finite number of amoebots in the system, there must exist an amoebot $A$ that performs an infinite number of $g_\getpruned$-supported executions.
    Then an infinite number of times, $A$ must start as \pruning\ to satisfy $g_\getpruned$ and end as \idle\ after executing $\getpruned$.
    But by Lemma~\ref{app:lem:constantrejoin}, $A$ can only be pruned from and adopted into the forest a constant number of times in an energy run, a contradiction.

    Suppose instead that an energy run of $\sched^\demand$ contains an infinite number of $g_\askgrowth$-supported executions.
    Again, this implies some amoebot $A$ performs an infinite number of $g_\askgrowth$-supported executions.
    Then an infinite number of times, $A$ must be \xactive\ and have either an \idle\ neighbor or \asking\ child to satisfy $g_\askgrowth$ and then become \asking\ after executing $\askgrowth$.
    One way $A$ can return to \xactive\ from \asking\ is via pruning and later readoption into the forest, but Lemma~\ref{app:lem:constantrejoin} states that this can only happen a constant number of times per energy run.
    The only alternative is for $A$ to become \growing\ during a $g_\growforest$-supported execution by its parent and later reset itself to \xactive\ during its own $g_\growforest$-supported execution.
    So if $A$ performs an infinite number of $g_\askgrowth$-supported executions in this energy run, it must also perform an infinite number of $g_\growforest$-supported executions, which we address in the following final case.

    Suppose to the contrary that an amoebot $A$ executes an infinite number of $g_\growforest$-supported executions in an energy run of $\sched^\demand$.
    At the start of each of these infinite executions, $A$ must either be \growing\ or be a source with an \idle\ neighbor or \asking\ child.
    If $A$ is \growing, then it becomes \xactive\ after executing $\growforest$.
    The only way for $A$ to become \growing\ again is if its parent performs a $g_\growforest$-supported execution, which in turn is only possible if its grandparent performed an earlier $g_\growforest$-supported execution, and so on all the way up to the source amoebot rooting this tree.
    
    So it suffices to analyze the case when $A$ satisfies $g_\growforest$ as a source.
    Each time $A$ performs a $g_\growforest$-supported execution as a source, it adopts all its \idle\ neighbors into its (stable) tree.
    By Invariant~\ref{app:inv:energyrun:stabletrees}, these adopted amoebots will remain children of $A$ throughout this energy run.
    Thus, $A$ can perform a $g_\growforest$-supported execution as a source with an \idle\ neighbor only as many times as the number of its \idle\ neighbors, which is at most six if $A$ is contracted and at most eight if $A$ is expanded.
    
    The remaining possibility is that $A$ performs an infinite number of $g_\growforest$-supported executions as a source with an \asking\ child.
    The predicate $g_\askgrowth$ ensures that every asking signal that reaches $A$ originates at an \xactive\ amoebot with an \idle\ neighbor.
    Again, because there are only a finite number of amoebots in the system, an infinite number of asking signals reaching $A$ implies the existence of an amoebot $B$ in the stable tree rooted at $A$ that performs an infinite number of $g_\askgrowth$-supported executions as an \xactive\ amoebot with an \idle\ neighbor.
    Because $B$ is in a stable tree, the only way it can return to \xactive\ from \asking\ is to become \growing\ during a $g_\growforest$-supported execution by its parent and later reset itself to \xactive\ during its own $g_\growforest$-supported execution.
    During its own $g_\growforest$-supported execution, $B$ adopts any \idle\ neighbors it has.
    But it is not guaranteed that $B$ will have an \idle\ neighbor at the time of its $g_\growforest$-supported execution, even though it had one earlier: some neighbor could be \idle\ at the time $B$ performs its $g_\askgrowth$-supported execution, get adopted by a different amoebot by the time $B$ performs its $g_\growforest$-supported execution, and then become \idle\ again via pruning before $B$ performs its next $g_\askgrowth$-supported execution.
    However, $B$ can only ask but fail to adopt an \idle\ neighbor a constant number of times by Lemma~\ref{app:lem:constantrejoin}.    
    With any adoptee remaining in the stable tree throughout the energy run by Invariant~\ref{app:inv:energyrun:stabletrees} and at most a constant number of \idle\ neighbors to adopt, $B$ can perform at most a constant total number of $g_\askgrowth$-supported executions before adopting all its \idle\ children, a contradiction.

    Therefore, we conclude that the number of $g_\getpruned$-, $g_\askgrowth$-, and $g_\growforest$-supported executions in any energy run is finite, as desired.
\end{proof}

The next lemma is an analogous result for the \harvestenergy\ and \shareenergy\ blocks that move energy throughout the system. 

\begin{lemma} \label{app:lem:energyblocksfinite}
    Any energy run of $\sched^\demand$ contains at most a finite number of $g_\harvestenergy$- and $g_\shareenergy$-supported executions of $\alpha_\energydist$.
\end{lemma}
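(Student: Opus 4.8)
The plan is to bound the number of $g_\harvestenergy$-supported and $g_\shareenergy$-supported executions separately, using the fact that no $\alpha_i^\demand$ action occurs in an energy run (so no energy is ever spent, by Invariant~\ref{inv:energyrun:nospend}, and no amoebot moves, by Invariant~\ref{inv:energyrun:nomove}), and then to close by noting that once every battery is full both predicates are permanently dissatisfied.

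\textbf{Harvest executions.} Each $g_\harvestenergy$-supported execution adds exactly one unit of energy to a source amoebot's battery, and no block of $\alpha_\energydist$ ever removes energy from the system. Since $\alpha_i^\demand$ actions—the only operations that spend energy—do not occur during an energy run, the total energy in the system is non-decreasing throughout the run; it is also bounded, since every battery is at most $\capacity$ (Invariant~\ref{inv:reachable:batteries}) and there are $\numAmoebots$ amoebots, so the total never exceeds $\numAmoebots\capacity$. Hence the run contains at most $\numAmoebots\capacity$ harvest executions.

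\textbf{Share executions.} This is the step I expect to be the main obstacle, because the forest $\forest$ is rearranged by pruning and re-adoption during a run, so the descendant set of a given amoebot is a moving target. The plan is to partition the energy run into maximal \emph{epochs} during which no amoebot's $\parent$ pointer changes. Within a single epoch the forest is fixed, and since energy flows only from a parent to a child and is never spent, the total energy stored in the subtree rooted at any amoebot $A$ can only increase during the epoch; as it is bounded by (size of the subtree)$\,\cdot\capacity \le \numAmoebots\capacity$ and each $g_\shareenergy$-supported execution delivering energy to $A$ raises it by exactly one, $A$ receives at most $\numAmoebots\capacity$ shares per epoch, so each epoch contains at most $\numAmoebots^2\capacity$ share executions. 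It then remains to bound the number of epochs: in an energy run, $\parent$ pointers change only via $\textsc{Prune}$ (set to $\nil$) or during \growforest\ (set during an adoption), since amoebots do not move, and by Lemma~\ref{lem:constantrejoin} each amoebot is pruned from and re-adopted into $\forest$ at most eight times, giving $\bigo{\numAmoebots}$ such changes and hence $\bigo{\numAmoebots}$ epochs. Altogether this yields $\bigo{\numAmoebots^3\capacity}$ share executions—in particular, finitely many.

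\textbf{Wrapping up.} The remaining blocks—\getpruned, \askgrowth, and \growforest—never move energy, so once every battery has been filled, which must occur after finitely many harvest and share executions by the bounds above, $g_\harvestenergy$ and $g_\shareenergy$ are falsified for the rest of the run and no further harvest or share executions occur. (An alternative to the epoch decomposition would be a single global potential of the form $\Phi = \sum_A d_\forest(A)\cdot A.\battery$, where $d_\forest(A)$ is a suitable notion of depth: a share raises $\Phi$ by one, a harvest does not lower it, and the only decreases come from $\textsc{Prune}$/adoption events, each bounded and occurring $\bigo{\numAmoebots}$ times by Lemma~\ref{lem:constantrejoin}, so $\Phi$ being bounded forces the number of shares to be finite. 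The delicate point with this version is choosing $d_\forest$ so that shares inside unstable, non-source-rooted subtrees are also accounted for, which is why I would favor the epoch argument.)
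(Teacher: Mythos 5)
Your proof is correct, and its first and last steps (the $\numAmoebots\capacity$ bound on \harvestenergy\ executions via energy conservation, and the closing remark that full batteries falsify both predicates) coincide with the paper's. Where you genuinely diverge is the bound on \shareenergy-supported executions. The paper argues directly that any amoebot with $d$ descendants in $\forest$ can perform at most $d\capacity$ shares before exceeding its descendants' total battery capacity---a one-sentence conservation argument that tacitly treats each amoebot's descendant set as fixed over the whole run and never invokes Lemma~\ref{lem:constantrejoin}. You flag exactly that looseness (pruning and adoption reshape $\forest$ mid-run) and repair it with an epoch decomposition: within an epoch the \parent\ pointers are frozen, the subtree total of any amoebot is monotone, bounded by $\numAmoebots\capacity$, and incremented by each share it receives, while Lemma~\ref{lem:constantrejoin} (plus the no-movement invariant) caps the number of \parent-pointer changes, hence epochs, at $\bigo{\numAmoebots}$. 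The trade-off: the paper's route is terser and gives the sharper $\bigo{\numAmoebots^2\capacity}$ count of shares, but leans on an implicitly static forest; yours pays an extra factor of $\numAmoebots$ (giving $\bigo{\numAmoebots^3\capacity}$) in exchange for handling the moving descendant sets explicitly, and it imports a dependence on Lemma~\ref{lem:constantrejoin} that the paper's proof of this lemma does not have. Since the statement only claims finiteness---and the later runtime bounds come from Lemmas~\ref{lem:stabletime} and~\ref{lem:rechargetime}, not from the constants here---either argument fully suffices.
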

\begin{proof}
    Energy is never spent in an energy run (Invariant~\ref{app:inv:energyrun:nospend}).
    Thus, since every $g_\harvestenergy$-supported execution harvests a single unit of energy into the system, there can be at most $\numAmoebots\capacity$ such executions before the total harvested energy exceeds the total capacity of all $\numAmoebots$ amoebots' batteries.
    Analogously, since every $g_\shareenergy$-supported execution transfers one unit of energy from some parent amoebot to one of its children in $\forest$, any amoebot with $d$ descendants in $\forest$ can perform at most $d\capacity$ such executions before exceeding the total capacity of its descendants' batteries.
    None of the other blocks (\getpruned, \askgrowth, and \growforest) transfer energy, so once all amoebots' batteries are full, $g_\harvestenergy$ and $g_\shareenergy$ will be continuously dissatisfied for the remainder of the energy run.
\end{proof}

Combining Lemmas~\ref{app:lem:forestblocksfinite} and~\ref{app:lem:energyblocksfinite} shows that any energy run is finite.
But more importantly, they show that the unfair adversary exhibits weak fairness in an energy run.
Since the total number of $\alpha_\energydist$ executions in an energy run is finite, the unfair adversary will eventually be forced to activate any continuously enabled amoebot.
We formalize this result in the next lemma, concluding our arguments on energy run termination.

\begin{lemma} \label{app:lem:round}
    Consider any amoebot $A$ for which $\alpha_\energydist$ is enabled and would remain so until execution in some energy run of $\sched^\demand$.
    Then within one additional round, either $A$ executes $\alpha_\energydist$ or this energy run is ended by some $\alpha_i^\demand$ execution.
\end{lemma}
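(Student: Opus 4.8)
The plan is to argue by contradiction, leveraging the definition of a (sequential) round from Section~\ref{subsec:model} together with the finiteness results of Lemmas~\ref{app:lem:forestblocksfinite} and~\ref{app:lem:energyblocksfinite}. Suppose $\alpha_\energydist$ is enabled for $A$ at the start of some round $r$ of this energy run. If some $\alpha_i^\demand$ execution ends the energy run by the end of round $r+1$, the second disjunct of the statement holds and we are done. So I would instead assume that no $\alpha_i^\demand$ executes through the end of round $r+1$; then the energy run persists through both rounds, and every activation occurring during the remainder of round $r$ and all of round $r+1$ is an execution of $\alpha_\energydist$.

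Next I would suppose toward a contradiction that $A$ is not activated during the remainder of round $r$ nor at any point in round $r+1$. By hypothesis, $A$ remains enabled until it executes $\alpha_\energydist$, so $A$ is continuously enabled throughout rounds $r$ and $r+1$ without ever completing an action. Recalling that a sequential round completes only once every amoebot that was enabled at its start has either completed an action execution or become disabled, it follows that at least one of rounds $r, r+1$ never completes; i.e., at least one of them contains an infinite sequence of activations, all of which are $\alpha_\energydist$ executions by amoebots other than $A$.

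Since there are only finitely many amoebots, some amoebot $B \neq A$ must perform infinitely many $\alpha_\energydist$ executions within this single energy run. Each such execution is $g$-supported for some predicate $g \in \mathcal{G}$, and there are only finitely many predicates in $\mathcal{G}$, so a fixed $g \in \mathcal{G}$ supports infinitely many of $B$'s executions. This contradicts Lemmas~\ref{app:lem:forestblocksfinite} and~\ref{app:lem:energyblocksfinite}, which together establish that every predicate in $\mathcal{G}$ can support only finitely many $\alpha_\energydist$ executions per energy run. Hence $A$ must be activated---and thus execute $\alpha_\energydist$, since it is enabled---within one additional round, unless the energy run is ended first by an $\alpha_i^\demand$ execution.

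I expect the only delicate point to be the bookkeeping around the phrase ``within one additional round'': one must be careful that rounds may be infinite, so the contradiction has to be drawn from the non-completion of round $r$ (or $r+1$) rather than from counting activations directly, and one must note that an $\alpha_i^\demand$ execution genuinely terminates the energy run by definition, so the stated disjunction is exhaustive. Everything else is a routine application of the round definition and the two preceding finiteness lemmas.
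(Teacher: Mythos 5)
Your proposal is correct and follows essentially the same argument as the paper's proof: assume $A$ is never activated, use the round-completion definition to force an infinite sequence of $\alpha_\energydist$ executions within round $r$ or $r+1$, pigeonhole to a single amoebot $B \neq A$ and a single predicate $g \in \mathcal{G}$, and contradict Lemmas~\ref{app:lem:forestblocksfinite} and~\ref{app:lem:energyblocksfinite}. The paper handles the possibly-infinite-round subtlety in exactly the way you anticipate.
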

\begin{proof}
    Suppose $\alpha_\energydist$ is enabled for amoebot $A$ in round $r$.
    If an $\alpha_i^\demand$ execution ends this energy run by the completion of round $r+1$, we are done.
    Otherwise, this energy run extends through the remainder of round $r$ and---if round $r$ is finite---all of round $r+1$.
    
    Suppose to the contrary that $A$ is not activated in the remainder of round $r$ or at any time in round $r+1$.
    Recall from Section~\ref{subsec:model} that a (sequential) round ends once every amoebot that was enabled at its start has either completed an action execution or become disabled.
    By supposition, $A$ will remain enabled until its $\alpha_\energydist$ action is executed.
    So at least one of rounds $r$ and $r+1$ must never complete; i.e., at least one of them contains an infinite sequence of $\alpha_\energydist$ executions by enabled amoebots other than $A$.    
    There are only finitely many amoebots, so there must exist an amoebot $B \neq A$ that performs an infinite number of $\alpha_\energydist$ executions.
    Moreover, there are only five predicates that could support these executions, so there must exist a predicate $g \in \mathcal{G}$ such that $B$ performs an infinite number of $g$-supported executions of $\alpha_\energydist$.
    But Lemmas~\ref{app:lem:forestblocksfinite} and~\ref{app:lem:energyblocksfinite} show that any predicate can support at most a finite number of $\alpha_\energydist$ executions per energy run of $\sched^\demand$, a contradiction.
\end{proof}

With Lemma~\ref{app:lem:round} in place, we now argue about the progress and runtime of energy runs towards their overall goal of distributing energy to deficient amoebots in the system.
This next series of lemmas proves an $\bigo{\numAmoebots^2}$ upper bound on the number of rounds any energy run can take before all $\numAmoebots$ amoebots belong to stable trees (Lemma~\ref{app:lem:stabletime}).
Of course, an energy run could be ended by an $\alpha_i^\demand$ execution before all amoebots join stable trees, but this only helps our overall progress argument.
In the following lemmas, we prove our upper bound for \textit{uninterrupted energy runs} that continue until $\alpha_\energydist$ is disabled for all amoebots.
We first upper bound the time for any unstable tree to be dissolved by pruning.

\begin{lemma} \label{app:lem:prunetime}
    In an uninterrupted energy run of $\sched^\demand$, any amoebot $A$ at depth $d$ of an unstable tree $\tree$ will be pruned (i.e., set its children to \pruning, reset their $\parent$ pointers, and become \idle) within at most $d + 1$ rounds.\footnote{The \textit{depth} of a amoebot $A$ in a tree $\tree$ rooted at an amoebot $R$ is the number of nodes in the $(R,A)$-path in $\tree$ (i.e., the root $R$ is at depth $1$, and so on). The depth of a tree $\tree$ is $\max_{A \in \tree}\{\text{depth of } A\}$.}
\end{lemma}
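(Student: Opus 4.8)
The plan is to induct on the depth $d$ of $A$ in the unstable tree $\tree$, using Lemma~\ref{lem:round} as the engine that turns ``$\alpha_\energydist$ stays enabled for $A$'' into ``$A$ acts within one additional round'' (the second disjunct of that lemma, the energy run ending via an $\alpha_i^\demand$ execution, is vacuous here since the energy run is uninterrupted). The one structural fact I would isolate up front is: \emph{in an uninterrupted energy run, a \pruning\ amoebot remains \pruning\ until it itself executes the \getpruned\ block.} This holds because (i) no $\alpha_i^\demand$ action runs during an energy run, (ii) a \pruning\ amoebot has a \nil\ \parent\ pointer and is not \idle, so no \growforest\ step ever touches it, and (iii) the only block of $\alpha_\energydist$ that rewrites a \pruning\ amoebot's \xstate\ is \getpruned\ (via \textsc{Prune}), executed by the amoebot itself. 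Consequently, any \pruning\ amoebot continuously satisfies $g_\getpruned$, so by Lemma~\ref{lem:round} it performs a $g_\getpruned$-supported execution of $\alpha_\energydist$ within one round.

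For the base case $d = 1$, $A$ is the root of $\tree$, so $A$ is \pruning\ at the start of the energy run by the characterization of unstable trees; the structural fact and Lemma~\ref{lem:round} then give that $A$ executes \textsc{Prune} within one round, hence within $d + 1$ rounds. For the inductive step, assume $d > 1$ and that every amoebot at depth at most $d - 1$ of $\tree$ is pruned within $d$ rounds, and let $P$ be the parent of $A$ in $\tree$, at depth $d - 1$. If $A$ has already been pruned by the end of round $d$ we are done, so assume not. I would then argue that $A$ is still a child of $P$ at the moment $P$ executes \textsc{Prune}: $A$ never rewrites its own \parent\ during an energy run (the only blocks that reset an amoebot's own parent pointer are inside the $\alpha_i^\demand$ actions and the \Push\ handling, none of which runs in an energy run), and $A$ cannot be re-adopted by another amoebot because $A$ is never \idle\ before it is pruned (becoming \idle\ requires executing \textsc{Prune}), so no \growforest\ step reassigns $A.\parent$. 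Hence, when $P$ is pruned (by the end of round $d$, by the inductive hypothesis), its execution of \textsc{Prune} sets $A.\xstate \gets \pruning$; by the structural fact $A$ then continuously satisfies $g_\getpruned$, and Lemma~\ref{lem:round} guarantees $A$ prunes by the end of round $d + 1$, completing the induction.

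The main obstacle is not the induction skeleton but the bookkeeping behind the structural fact and the ``$A$ is still a child of $P$'' claim: one must carefully enumerate which operations can write to the \xstate\ and \parent\ variables of $A$ and of $P$, observe that during an energy run only $\alpha_\energydist$ blocks are available, and combine this with Invariant~\ref{inv:energyrun:nomove} (no amoebot moves in an energy run) to conclude that membership in $\tree$, parent pointers within $\tree$, and \pruning\ status all evolve monotonically in exactly the way the recursion needs. Once this is in place, the depth recursion together with Lemma~\ref{lem:round} closes the argument immediately.
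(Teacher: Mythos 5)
Your proposal is correct and follows essentially the same route as the paper's proof: induction on the depth $d$, with Lemma~\ref{lem:round} converting the continuously satisfied $g_\getpruned$ into an execution within one additional round at each level. The extra bookkeeping you supply (that a \pruning\ amoebot stays \pruning\ until it executes \getpruned\ itself, and that $A$ remains a child of its parent until the parent prunes) is exactly what the paper leaves implicit, and your justification of it is sound.
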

\begin{proof}
    Argue by induction on $d$, the depth of $A$ in $\tree$.
    If $d = 1$, $A$ is the root of the unstable tree $\tree$ and thus must be \pruning\ by definition.
    So $A$ continuously satisfies $g_\getpruned$ since only a \pruning\ amoebot can change its own \xstate.
    By Lemma~\ref{app:lem:round}, $A$ will be activated and perform a $g_\getpruned$-supported execution within $d = 1$ additional round.
    Now suppose $d > 1$ and that every amoebot at depth at most $d - 1$ in $\tree$ is pruned within $d$ rounds.
    If $A$ is also pruned by round $d$, we are done.
    Otherwise, $A$ has been \pruning\ since at least the end of round $d$ when its parent in $\tree$ performed its own $g_\getpruned$-supported execution.
    So $A$ again continuously satisfies $g_\getpruned$ and must be activated by the end of round $d + 1$ by Lemma~\ref{app:lem:round}.
    Thus, in all cases, $A$ is pruned in at most $d + 1$ rounds.
\end{proof}

Once all unstable trees are dissolved, the newly \idle\ amoebots need to be adopted into stable trees.
Recall that members of stable trees must become \asking\ and then \growing\ before they can adopt their \idle\ neighbors as \xactive\ children.

\begin{lemma} \label{app:lem:growtime}
    In an uninterrupted energy run of $\sched^\demand$, any \asking\ amoebot $A$ at depth $d$ of a stable tree $\tree$ will become \growing\ within at most $2d - 2$ rounds.
\end{lemma}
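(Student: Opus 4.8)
The plan is to track a single ``asking signal'' as it travels from $A$ up to the source root of $\tree$ and the returning ``growing signal'' back down to $A$, bounding each leg with the one-round lemma (Lemma~\ref{app:lem:round}). Two facts set the stage. First, the structure of $\tree$ is frozen during an energy run (Invariant~\ref{app:inv:energyrun:stabletrees}), so $A$'s depth $d$ and its ancestor chain up to the source are fixed. Second, since $A$ lies in a stable tree it never becomes \pruning, and inspection of $\alpha_\energydist$ shows the only state transition out of \asking\ is \asking\ $\to$ \growing; hence $A$ (and, later, each ancestor that turns \asking) stays \asking\ until it is promoted to \growing. The worst case is that every non-source ancestor of $A$ is \xactive, and I will analyze that; if some ancestor is already \asking\ or \growing, the corresponding signal has merely made more progress and the bound still holds. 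Because the energy run is uninterrupted, the ``ended by some $\alpha_i^\demand$ execution'' alternative in Lemma~\ref{app:lem:round} never triggers, so each invocation of that lemma really does yield an activation within one round.

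For the upward leg, note that $A$'s parent (depth $d-1$) is \xactive\ with the \asking\ child $A$, so it continuously satisfies $g_\askgrowth$ and, by Lemma~\ref{app:lem:round}, becomes \asking\ within one round; once \asking\ in a stable tree it can execute neither $g_\askgrowth$ (which needs \xactive) nor $g_\growforest$ (which needs \growing\ or source), so it stays put and passes the signal to its own parent. Iterating, each of the $d-2$ ancestors strictly between $A$ and the source turns \asking\ within one further round, so after at most $d-2$ rounds the source has an \asking\ child.

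For the source's response and the downward leg: the source now continuously satisfies $g_\growforest$ (source with an \asking\ child), so within one more round it sets all of its \asking\ children to \growing. Thereafter each \growing\ amoebot continuously satisfies $g_\growforest$ and, within one more round, promotes its \asking\ children to \growing\ before resetting itself to \xactive. Since every ancestor along $A$'s path to the source turned \asking\ during the upward leg and---by the same transition argument---is still \asking\ when its parent becomes \growing, the \growing\ state marches down the fixed path and reaches $A$ within at most $d-1$ additional rounds. Summing the legs gives $(d-2)+1+(d-1)=2d-2$ rounds, as claimed.

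The main obstacle I expect is scheduling bookkeeping rather than any single clever step: one must argue that the chain of \asking\ ancestors persists exactly until the returning \growing\ signal passes through it (no ancestor leaves \asking\ prematurely) and that every amoebot to which Lemma~\ref{app:lem:round} is applied is continuously enabled with the intended supporting predicate. Both reduce to the tight constraints on state transitions inside a stable tree, together with a careful reading of which of the five $\alpha_\energydist$ blocks each involved amoebot can execute at each moment---in particular noting that a diversion such as an ancestor firing $g_\growforest$ because of an \idle\ neighbor can only help, since that block promotes \asking\ children regardless of why it fires.
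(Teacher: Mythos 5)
Your proposal is correct and follows essentially the same route as the paper's proof: an upward propagation of the asking signal taking at most $d-2$ rounds (one per \xactive\ ancestor, via Lemma~\ref{app:lem:round}), one round for the source to grant permission, and a downward \growing\ propagation of at most $d-1$ rounds, summing to $2d-2$. Your extra bookkeeping (ancestors cannot leave \asking\ prematurely in a stable tree, and a $g_\growforest$-supported execution triggered for other reasons still promotes \asking\ children) only makes explicit what the paper leaves implicit.
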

\begin{proof}
    Recall that asking signals are propagated to the source root of a stable tree by \xactive\ parents performing $g_\askgrowth$-supported executions when they have \asking\ children.
    In the worst case, all non-source ancestors of $A$ are \xactive; i.e., no progress has been made towards propagating this asking signal.
    Since $A$ is in a stable tree and thus can't become \pruning, $A$ remains \asking\ until it becomes \growing.
    Thus, the \xactive\ parent of $A$ continuously satisfies $g_\askgrowth$ and will become \asking\ within one additional round by Lemma~\ref{app:lem:round}.
    Any \xactive\ ancestor of $A$ with an \asking\ child also continuously satisfies $g_\askgrowth$ and thus will become \asking\ within one additional round by Lemma~\ref{app:lem:round}.
    There are $d - 2$ \xactive\ ancestors strictly between $A$ and the source amoebot rooting this stable tree, so within at most $d - 2$ rounds the source amoebot will have an \asking\ child.
    The source amoebot will continuously satisfy $g_\growforest$ because of its \asking\ child, so it will make all its \asking\ children \growing\ within one additional round by Lemma~\ref{app:lem:round}.
    Similarly, \growing\ amoebots continuously satisfy $g_\growforest$ and pass their \growing\ \xstate\ to their \asking\ children within one additional round by Lemma~\ref{app:lem:round}.
    So $A$ must become \growing\ within another $d - 1$ additional rounds, for a total of at most $(d - 2) + 1 + (d - 1) = 2d - 2$ rounds.
\end{proof}

Combining Lemmas~\ref{app:lem:prunetime} and~\ref{app:lem:growtime} yields an upper bound on the time an uninterrupted energy run requires to organize all amoebots into stable trees.

\begin{lemma} \label{app:lem:stabletime}
    After at most $\bigo{\numAmoebots^2}$ rounds of any uninterrupted energy run of $\sched^\demand$, all $\numAmoebots$ amoebots belong to stable trees.
\end{lemma}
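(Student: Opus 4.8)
The plan is to combine the two time bounds already in hand---Lemma~\ref{app:lem:prunetime} for dissolving unstable trees and Lemma~\ref{app:lem:growtime} for propagating the asking/growing signal through a stable tree---with the structural invariants guaranteeing that the configuration stays connected (Invariant~\ref{app:inv:reachable:connected}) and always contains a source amoebot (Invariant~\ref{app:inv:reachable:sources}). A key preliminary observation is that no amoebot moves during an energy run (Invariant~\ref{app:inv:energyrun:nomove}), so the only amoebots ever outside stable trees are those present at the start of the run---members of unstable trees and \idle\ amoebots---and by Invariant~\ref{app:inv:energyrun:stabletrees} no amoebot ever leaves a stable tree once it has joined one. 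Hence the argument only needs to shrink the initial ``outside'' population, never fight against its regrowth.

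First I would dispose of the unstable trees. Since the system always contains a source amoebot, any unstable tree spans at most $\numAmoebots - 1$ amoebots and so has depth at most $\numAmoebots - 1$; by Lemma~\ref{app:lem:prunetime}, every amoebot at depth $d$ of an unstable tree is pruned within $d+1 \le \numAmoebots$ rounds, so after at most $\numAmoebots$ rounds the system consists solely of stable trees and \idle\ amoebots. Next I would argue that one \idle\ amoebot is absorbed into a stable tree every $\bigo{\numAmoebots}$ rounds, as long as any remain. By connectivity and the existence of a source, whenever an \idle\ amoebot exists there is one, call it $A$, adjacent to a stable tree. Since \idle\ amoebots execute no actions, the relevant stable-tree neighbor $B$ of $A$ (necessarily \xactive, \asking, or \growing) keeps $A$ as a witness: if $B$ is \xactive\ it continuously satisfies $g_\askgrowth$ and becomes \asking\ within one round by Lemma~\ref{app:lem:round}; an \asking\ neighbor of $A$ has depth at most $\numAmoebots - 1$, so by Lemma~\ref{app:lem:growtime} it becomes \growing\ within $\bigo{\numAmoebots}$ rounds; and a \growing\ amoebot continuously satisfies $g_\growforest$, so within one further round (Lemma~\ref{app:lem:round}) it performs a $g_\growforest$-supported execution, which adopts all of its \idle\ neighbors---and since the first such \growing\ amoebot has $A$ in its neighborhood, at least one \idle\ amoebot is adopted.

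Finally I would tally the rounds: at most $\numAmoebots - 1$ amoebots are initially outside stable trees, pruning costs $\bigo{\numAmoebots}$ rounds, and each of the at most $\numAmoebots - 1$ subsequent absorptions costs $\bigo{\numAmoebots}$ rounds, for a total of $\numAmoebots + (\numAmoebots - 1)\cdot\bigo{\numAmoebots} = \bigo{\numAmoebots^2}$ rounds, and by Invariant~\ref{app:inv:energyrun:stabletrees} every absorbed amoebot stays put, so ``all $\numAmoebots$ amoebots belong to stable trees'' once this count is reached. The step I expect to be the main obstacle is making the progress argument airtight: one must ensure the asking/growing signal eventually ``consumed'' to adopt $A$ is not repeatedly stalled or hijacked in a way that resets the $\bigo{\numAmoebots}$-round clock. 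This is exactly where the continuously-enabled guarantee of Lemma~\ref{app:lem:round} and the permanence guarantee of Invariant~\ref{app:inv:energyrun:stabletrees} do the heavy lifting---together with the fact that the \idle\ amoebot $A$ itself serves as the persistent witness keeping its chosen stable-tree neighbor continuously enabled---so that the $(d-2)+1+(d-1)$-round accounting of Lemma~\ref{app:lem:growtime} genuinely applies end-to-end.
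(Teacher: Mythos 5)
Your proposal is correct and mirrors the paper's own proof: it uses the same two-stage decomposition (prune all unstable trees in at most $\numAmoebots$ rounds via Lemma~\ref{app:lem:prunetime}, then show one \idle\ amoebot adjacent to a stable tree is adopted every $\bigo{\numAmoebots}$ rounds via Lemmas~\ref{app:lem:round} and~\ref{app:lem:growtime}), with the identical final tally $\numAmoebots + (\numAmoebots-1)\cdot\bigo{\numAmoebots} = \bigo{\numAmoebots^2}$.
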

\begin{proof}
    If all amoebots already belong to stable trees, we are done.
    So suppose at least one amoebot is \idle\ or in an unstable tree.
    The system always contains at least one source amoebot (Invariant~\ref{app:inv:reachable:sources}), so the depth of any unstable tree is at most $\numAmoebots - 1$.
    By Lemma~\ref{app:lem:prunetime}, all members of unstable trees will be pruned and become \idle\ within at most $\numAmoebots$ rounds.
    
    Since the system remains connected (Invariant~\ref{app:inv:reachable:connected}) and always contains a source amoebot (Invariant~\ref{app:inv:reachable:sources}), there must exist an \idle\ amoebot $A$ that has at least one neighbor in a stable tree.
    \idle\ amoebots do not execute any actions, so at least one of its \xactive\ neighbors will continuously satisfy $g_\askgrowth$ and become \asking\ within one additional round by Lemma~\ref{app:lem:round}.
    The depth of any of these \asking\ neighbors of $A$ in their respective stable trees can be at most $\numAmoebots - 1$, counting all amoebots except $A$.
    So by Lemma~\ref{app:lem:growtime}, at least one of these \asking\ neighbors of $A$ will become \growing\ within at most $2(\numAmoebots - 1) - 2 \leq 2\numAmoebots$ rounds.
    \growing\ amoebots continuously satisfy $g_\growforest$, so within one additional round a \growing\ neighbor of $A$ will attempt to adopt an \idle\ neighbor by Lemma~\ref{app:lem:round}.
    The first such \growing\ neighbor must succeed in an adoption because $A$ is in its neighborhood.
    
    Thus, at least one \idle\ amoebot is adopted into a stable tree every $\bigo{\numAmoebots}$ rounds.
    There can be at most $\numAmoebots - 1$ amoebots initially outside stable trees, so we conclude that all amoebots are adopted into stable trees within $\numAmoebots + (\numAmoebots - 1) \cdot \bigo{\numAmoebots} = \bigo{\numAmoebots^2}$ rounds.
\end{proof}

Lemma~\ref{app:lem:stabletime} shows that after at most $\bigo{\numAmoebots^2}$ rounds of any energy run, all amoebots will belong to stable trees.
By Invariant~\ref{app:inv:energyrun:stabletrees}, they will remain there throughout the energy run; in particular, no amoebot will execute $g_\getpruned$-, $g_\askgrowth$-, or $g_\growforest$-supported executions after this point of the energy run.
For convenience, we refer to these sub-runs as \textit{stabilized energy runs}.
This next series of lemmas proves an $\bigo{n}$ upper bound on the \textit{recharge time}, i.e., the worst case number of rounds any stabilized energy run can take to fully recharge all $\numAmoebots$ amoebots, i.e., $A.\battery = \capacity$ for all amoebots $A$ (Lemma~\ref{app:lem:rechargetime}).

We make four observations that simplify this analysis, w.l.o.g.
First, we again consider uninterrupted energy runs as it only helps our overall progress argument if some $\alpha_i^\demand$ execution ends an energy run earlier.
Second, we assume all amoebots have initially empty batteries as this can only increase the recharge time.
Third, it suffices to analyze the recharge time of any one stable tree $\tree$ since trees are not reconfigured and do not interact in stabilized energy runs.
Fourth and finally, we show in the following lemma that the recharge time for $\tree$ is at most the recharge time for a simple path of the same number of amoebots.

\begin{lemma} \label{app:lem:pathrecharge}
    Suppose $\tree$ is a (stable) tree of $k$ amoebots rooted at a source amoebot $A_1$.
    If all amoebots in $\tree$ have initially empty batteries, then the recharge time for $\tree$ is at most the recharge time for a simple path $\mathcal{L} = (A_1, \ldots, A_k)$ in which $A_1$ is a source amoebot, $A_i.\parent = A_{i-1}$ for all $1 < i \leq k$, and all $k$ amoebots have initially empty batteries.
\end{lemma}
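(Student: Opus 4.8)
The plan is to prove the lemma by a structural induction that gradually ``straightens'' the tree $\tree$ into the path $\mathcal{L}$ without ever decreasing the worst-case recharge time. For any tree $\mathcal{U}$ of $k$ amoebots rooted at a source $A_1$ with initially empty batteries, and any activation sequence $S$ representing an uninterrupted, stabilized energy run, let $t_S(\mathcal{U})$ be the number of rounds until every amoebot in $\mathcal{U}$ has a full battery, and set $t(\mathcal{U}) = \max_S\{t_S(\mathcal{U})\}$; note $t(\mathcal{U})$ is well defined and finite because Lemma~\ref{lem:energyblocksfinite} bounds the number of $\alpha_\energydist$ executions per energy run. The goal is $t(\tree) \leq t(\mathcal{L})$.

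I would induct (in reverse) on $\ell$, the length of the \emph{maximum non-branching path} $(A_1,\ldots,A_\ell)$ of $\tree$ emanating from the source, i.e.\ the longest prefix in which each $A_i$ has a unique child. If $\ell = k$, then $\tree$ is itself a simple path rooted at a source and $t(\tree) = t(\mathcal{L})$. For the inductive step, suppose $\ell < k$ and that every tree on the same $k$ amoebots, rooted at $A_1$, whose maximum non-branching path has length at least $\ell+1$, has recharge time at most $t(\mathcal{L})$. Let $A_\ell$ be the closest amoebot to the source with two or more children $B_1,\ldots,B_c$ ($c \geq 2$; such an $A_\ell$ exists since $\ell < k$), and form $\tree'$ from $\tree$ by reassigning $B_i.\parent$ from $A_\ell$ to $B_1$ for each $2 \leq i \leq c$. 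Then $B_1$ is the unique child of $A_\ell$ in $\tree'$, so $(A_1,\ldots,A_\ell,B_1)$ is a non-branching path of length $\ell+1$, and the induction hypothesis gives $t(\tree') \leq t(\mathcal{L})$. It remains to show $t(\tree) \leq t(\tree')$.

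For this, given an arbitrary activation sequence $S = (s_1,\ldots,s_f)$ realizing the worst-case recharge of $\tree$ (with $s_f$ the first activation after which $\tree$ is fully recharged; $f$ exists by Lemma~\ref{lem:energyblocksfinite}), I would build an activation sequence $S'$ for $\tree'$ by appending to $S'$, for each $s_i$, a short block $s_i'$ that activates the same amoebot as $s_i$ and possibly one extra. In all cases except one, $s_i$ is a legal activation with the identical effect in $\tree'$, so take $s_i' = (s_i)$. The exceptional case is a $g_\shareenergy$-supported transfer of one unit from $A_\ell$ to some child $B_j$ with $j \geq 2$, which is impossible directly in $\tree'$ since $B_j$ is now a child of $B_1$; I would replace it by a pair of $g_\shareenergy$ transfers routed through $B_1$ (first $A_\ell \to B_1$ then $B_1 \to B_j$ if $B_1.\battery < \capacity$, otherwise the reverse order), each of which is valid because the executing amoebot satisfies $g_\shareenergy$. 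Because energy is never spent during an energy run (Invariant~\ref{inv:energyrun:nospend}) and all batteries start empty, an easy induction on $i$ shows that after $s_i$ in $\tree$ and after the block $s_i'$ in $\tree'$ every amoebot holds the same amount of energy; hence $\tree'$ finishes recharging exactly at block $s_f'$. Since each $s_i'$ activates at least the amoebot $s_i$ does, $S'$ contains at least as many rounds as $S$, giving $t_S(\tree) \leq t_{S'}(\tree')$; taking the max over $S$ yields $t(\tree) \leq t(\tree') \leq t(\mathcal{L})$, completing the induction and the proof.

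\textbf{Expected main obstacle.}
The delicate point is the simulation in the last paragraph: one must verify that the rerouted pair of transfers through $B_1$ is always admissible (that $B_1$ either has spare capacity or spare energy at that moment, so one of the two orderings satisfies $g_\shareenergy$), and that inserting these extra activations never lets $\tree'$ recharge \emph{faster} than $\tree$ on a per-round basis. Both hinge on the ``no energy is spent'' invariant keeping the two battery profiles in lockstep, and on the round-counting observation that replacing one activation by two can only slow an execution down, never speed it up.
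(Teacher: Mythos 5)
Your proposal is correct and follows essentially the same route as the paper's proof: the same reverse induction on the length of the maximum non-branching path, the same reattachment of $B_2,\ldots,B_c$ under $B_1$ to form $\tree'$, and the same simulation argument that reroutes each $A_\ell \to B_j$ transfer through $B_1$ (choosing the order of the two transfers according to whether $B_1$'s battery is full) while invoking Invariant~\ref{inv:energyrun:nospend} to keep the battery profiles in lockstep and the round count nondecreasing. The obstacle you flag is handled in the paper exactly as you anticipate, so no changes are needed.
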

\begin{proof}
    Consider any tree $\mathcal{U}$ of $k$ amoebots rooted at a source amoebot $A_1$ and any sequence of amoebot activations $S$ representing an uninterrupted, stabilized energy run in which all amoebots' batteries are initially empty.
    Let $t_S(\mathcal{U})$ denote the number of rounds required to fully recharge all amoebots in $\mathcal{U}$ with respect to $S$ and let $t(\mathcal{U}) = \max_S\{t_S(\mathcal{U})\}$ denote the worst-case recharge time for $\mathcal{U}$.
    With this notation, our goal is to show that $t(\tree) \leq t(\mathcal{L})$.

    The \textit{maximum non-branching path} of a tree $\mathcal{U}$ is the longest directed path $(A_1, \ldots, A_\ell)$ starting at the source amoebot such that $A_{i+1}$ is the only child of $A_i$ in $\mathcal{U}$ for all $1 \leq i < \ell$.
    We argue by (reverse) induction on $\ell$, the length of the maximum non-branching path of $\tree$.
    If $\ell = k$, then $\tree$ and $\mathcal{L}$ are both simple paths of $k$ amoebots with initially empty batteries and thus $t(\tree) = t(\mathcal{L})$.
    So suppose $\ell < k$ and $t(\mathcal{U}) \leq t(\mathcal{L})$ for any tree $\mathcal{U}$ that comprises the same $k$ amoebots as $\tree$ with initially empty batteries, is rooted at amoebot $A_1$, and has at least $\ell + 1$ amoebots in its maximum non-branching path.
    Our goal is to modify the $\parent$ pointers in $\tree$ to form another tree $\tree'$ that has exactly one more amoebot in its maximum non-branching path and satisfies $t(\tree) \leq t(\tree')$.
    Since $\tree'$ has exactly $\ell + 1$ amoebots in its maximum non-branching path, the induction hypothesis implies that $t(\tree) \leq t(\tree') \leq t(\mathcal{L})$.

    We construct $\tree'$ from $\tree$ as follows.
    Let $(A_1, \ldots, A_\ell)$ be a maximum non-branching path of $\tree$, where $A_\ell$ is the ``closest'' amoebot to $A_1$ with multiple children, say $B_1, \ldots, B_c$ for some $c \geq 2$.
    Note that such an $A_\ell$ must exist because $\ell < k$.
    We form $\tree'$ by reassigning $B_i.\parent$ from $A_\ell$ to $B_1$ for each $2 \leq i \leq c$.
    Then $B_1$ is the only child of $A_\ell$ in $\tree'$, and thus $(A_1, \ldots, A_\ell, B_1)$ is the maximum non-branching path of $\tree'$ which has length $\ell + 1$.
    By the induction hypothesis, $t(\tree') \leq t(\mathcal{L})$.
    So it suffices to show that $t(\tree) \leq t(\tree')$.

    Consider any activation sequence $S = (s_1, \ldots, s_f)$ representing an uninterrupted, stabilized energy run where $s_f$ is the first amoebot activation after which all amoebots in $\tree$ have fully recharged batteries.
    Note that Lemma~\ref{app:lem:energyblocksfinite} implies $S$ has finite length and hence $s_f$ exists.
    We must show that there exists an activation sequence $S'$ such that $t_S(\tree) \leq t_{S'}(\tree')$.
    We construct $S'$ from $S$ so that the flow of energy through $\tree'$ mimics that of $\tree$.
    For each $s_i \in S$, we append a corresponding subsequence of activations $s_i'$ to the end of $S'$ that activates the same amoebot as $s_i$ and possibly some others as well, if needed.

    In almost all cases, $s_i$ is valid and has the same effect in both $\tree$ and $\tree'$, so we simply add $s_i' = (s_i)$ to $S'$.
    However, any activations $s_i$ in which $A_\ell$ passes energy to a child $B_j$, for $2 \leq j \leq c$, cannot be performed directly in $\tree'$ since $B_j$ is a child of $B_1$---not of $A_\ell$---in $\tree'$.
    We instead add a pair of activations $s_i' = (s_i^1, s_i^2)$ to $S'$ that have the effect of passing energy from $A_\ell$ to $B_j$ but use $B_1$ as an intermediary.
    There are two cases.
    If the battery of $B_1$ is not full (i.e., $B_1.\battery < \capacity$) just before $s_i$, then $s_i^1$ is a $g_\shareenergy$-supported execution of $\alpha_\energydist$ by $A$ passing a unit of energy to $B_1$ and $s_i^2$ is a $g_\shareenergy$-supported execution of $\alpha_\energydist$ by $B_1$ passing a unit of energy to $B_j$.
    Otherwise, these executions are reversed: $B_1$ passes a unit of energy to $B_j$ in $s_i^1$ and $A$ passes a unit of energy to $B_1$ in $s_i^2$.
    In any case, these activations are valid as their respective amoebots satisfy $g_\shareenergy$.
    
    Since all amoebots start with empty batteries and no energy is ever spent in an energy run (Invariant~\ref{app:inv:energyrun:nospend}), this construction of $S'$ ensures all amoebots' battery levels in $\tree$ and $\tree'$ are the same after each $s_i \in S$ and $s_i' \in S'$, respectively, for all $1 \leq i \leq f$.
    Thus, amoebots in $\tree$ and $\tree'$ only finish recharging after $s_f$ and $s_f'$, respectively.
    Each $s_i'$ activates the same amoebot as $s_i$ does and possibly one additional amoebot, so the number of rounds in $S'$ must be at least that in $S$.
    Therefore, we have $t_S(\tree) \leq t_{S'}(\tree')$, and since the choice of $S$ was arbitrary, we have $t(\tree) \leq t(\tree')$, as desired.
\end{proof}

By Lemma~\ref{app:lem:pathrecharge}, it suffices to analyze the case where $\tree$ is a simple path of $k$ amoebots with initially empty batteries.
To bound the recharge time, we use a \textit{dominance argument} between the sequential setting of stabilized energy runs and a parallel setting that is easier to analyze.
First, we prove that for any stabilized energy run, there exists a parallel version that makes at most as much progress towards recharging the system in the same number of rounds (Lemma~\ref{app:lem:dominance}).
We then upper bound the recharge time in parallel rounds (Lemma~\ref{app:lem:paralleltime}).
Combining these results gives an upper bound on the recharge time in sequential rounds.

Let an \textit{energy configuration} $E$ of the path $\mathcal{L} = (A_1, \ldots, A_k)$ encode the battery values of each amoebot $A_i$ as $E(A_i)$.
An \textit{energy schedule} is a sequence of energy configurations $(E_1, \ldots, E_t)$.
Given any sequence of amoebot activations $S$ representing a stabilized energy run, we define a \textit{sequential energy schedule} $(E_1^S, \ldots, E_t^S)$ where $E_r^S$ is the energy configuration of the path $\mathcal{L}$ at the start of sequential round $r$ in $S$.
Our dominance argument compares these schedules to parallel energy schedules, defined below.

\begin{definition} \label{app:def:parallelschedule}
    A \underline{parallel energy schedule} $(E_1, \ldots, E_t)$ is a schedule such that for all energy configurations $E_r$ and amoebots $A_i$ we have $E_r(A_i) \in [0, \capacity]$ and, for every $1 \leq r < t$, $E_{r+1}$ is reached from $E_r$ using the following for each amoebot $A_i$:
    \begin{itemize}
        \item $E_r(A_1) < \capacity$, so the source amoebot $A_1$ harvests energy from the external source with:
        \[E_{r+1}(A_1) = E_r(A_1) + 1\]
        
        \item $E_r(A_i) \geq 1$ and $E_r(A_{i+1}) < \capacity$, so $A_i$ passes energy to its child $A_{i+1}$ with:
        \[E_{r+1}(A_i) = E_r(A_i) - 1, \quad
        E_{r+1}(A_{i+1}) = E_r(A_{i+1}) + 1\]
    \end{itemize}
    Such a schedule is \underline{greedy} if the above actions are taken in parallel whenever possible.
\end{definition}

For an amoebot $A_i$ in an energy configuration $E$, let $\Delta_E(A_i) = \sum_{j=i}^k E(A_j)$ denote the total amount of energy in the batteries of amoebots $A_i, \ldots, A_k$ in $E$.
For any two battery configurations $E$ and $E'$, we say $E$ \textit{dominates} $E'$---denoted $E \succeq E'$---if and only if $\Delta_E(A_i) \geq \Delta_{E'}(A_i)$ for all amoebots $A_i \in \mathcal{L}$.

\begin{lemma} \label{app:lem:dominance}
    Given any activation sequence $S$ representing an uninterrupted, stabilized energy run on a simple path $\mathcal{L}$ of $k$ amoebots starting in an energy configuration $E_1^S$ in which all amoebots have empty batteries, there exists a greedy parallel energy schedule $(E_1, \ldots, E_t)$ with $E_1 = E_1^S$ such that $E_r^S \succeq E_r$ for all $1 \leq r \leq t$.
\end{lemma}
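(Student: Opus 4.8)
The plan is to build the greedy parallel schedule directly out of $S$ and then prove the domination claim $E_r^S \succeq E_r$ by induction on the round index $r$. Concretely, I would set $E_1 = E_1^S$ (all batteries empty by hypothesis) and, for each $1 \le r < t$, obtain $E_{r+1}$ from $E_r$ by performing exactly one greedy parallel round in the sense of Definition~\ref{def:parallelschedule}, where $t$ is taken to be the number of sequential rounds in $S$ (finite by Lemma~\ref{lem:energyblocksfinite}). The base case $E_1^S \succeq E_1$ is trivial. For the inductive step I would assume $E_r^S \succeq E_r$, i.e.\ $\Delta_{E_r^S}(A_i) \ge \Delta_{E_r}(A_i)$ for every amoebot $A_i$, fix an arbitrary $A_i$, and show $\Delta_{E_{r+1}^S}(A_i) \ge \Delta_{E_{r+1}}(A_i)$.

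I would split the inductive step on whether the hypothesis is strict at $A_i$. If $\Delta_{E_r^S}(A_i) > \Delta_{E_r}(A_i)$, then by integrality $\Delta_{E_r^S}(A_i) \ge \Delta_{E_r}(A_i) + 1$; since energy is never spent during an energy run (Invariant~\ref{inv:energyrun:nospend}) we get $\Delta_{E_{r+1}^S}(A_i) \ge \Delta_{E_r^S}(A_i)$, while a single parallel round raises the suffix sum $\Delta(A_i)$ by at most one (internal transfers $A_j \to A_{j+1}$ with $j \ge i$ leave it unchanged, and $A_i$ receives at most one unit from outside the suffix), so $\Delta_{E_{r+1}}(A_i) \le \Delta_{E_r}(A_i) + 1 \le \Delta_{E_{r+1}^S}(A_i)$, and we are done.

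The substance is the equality case $\Delta_{E_r^S}(A_i) = \Delta_{E_r}(A_i)$. If $A_i$ receives no energy in parallel round $r$, the suffix sum $\Delta(A_i)$ is unchanged in the parallel schedule and the inequality follows from the hypothesis, so it suffices to handle the case where $A_i$ does receive a unit in parallel round $r$ and show the same happens in sequential round $r$. I would establish two algebraic facts from the hypothesis and the definition of suffix sums: (a) a unit is available to pass to $A_i$, namely when $i>1$ that $E_r^S(A_{i-1}) = \Delta_{E_r^S}(A_{i-1}) - \Delta_{E_r^S}(A_i) \ge \Delta_{E_r}(A_{i-1}) - \Delta_{E_r}(A_i) = E_r(A_{i-1}) \ge 1$ (and trivially, via the external source, when $i=1$); and (b) $A_i$ has room, i.e.\ $E_r^S(A_i) = \Delta_{E_r^S}(A_i) - \Delta_{E_r^S}(A_{i+1}) \le \Delta_{E_r}(A_i) - \Delta_{E_r}(A_{i+1}) = E_r(A_i) \le \capacity - 1$. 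Given (a) and (b), at the start of sequential round $r$ either $A_i$ is a source satisfying $g_\harvestenergy$ or its parent $A_{i-1}$ satisfies $g_\shareenergy$; since no energy is spent during the run, further activations within round $r$ cannot destroy the available unit at $A_{i-1}$ or the free capacity at $A_i$, so the relevant predicate remains continuously satisfied and Lemma~\ref{lem:round} forces $A_i$ to receive a unit during sequential round $r$. Hence $\Delta_{E_{r+1}^S}(A_i) \ge \Delta_{E_r^S}(A_i) + 1 = \Delta_{E_r}(A_i) + 1 \ge \Delta_{E_{r+1}}(A_i)$, and since $A_i$ was arbitrary this gives $E_{r+1}^S \succeq E_{r+1}$, closing the induction.

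The main obstacle I expect is the last inequality in fact (b): justifying $E_r(A_i) \le \capacity - 1$ requires first proving the structural invariant that a greedy parallel schedule keeps battery levels monotone non-decreasing down the path (each amoebot holds no more energy than any of its descendants), and then arguing that a receiving amoebot cannot be saturated, since saturating it would force all of its descendants to be saturated and leave it no way to offload the incoming unit. A secondary subtlety is making the appeal to Lemma~\ref{lem:round} airtight, which again rests on the no-spend invariant (Invariant~\ref{inv:energyrun:nospend}) to guarantee that concurrent activations in the same round can only preserve, never break, the enabledness of $g_\shareenergy$ or $g_\harvestenergy$ at the relevant amoebot.
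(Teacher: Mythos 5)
Your proposal is correct and follows essentially the same route as the paper's proof: the same greedy parallel construction, the same induction with the strict/equality case split, the identical suffix-sum algebra for facts (a) and (b), the same monotonicity observation (a receiving amoebot in a greedy parallel schedule cannot be saturated since its descendants would all be full), and the same appeal to Lemma~\ref{lem:round} via Invariant~\ref{inv:energyrun:nospend} to force the sequential transfer within the round. No meaningful differences to report.
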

\begin{proof}
    The activation sequence $S$ and initial energy configuration $E_1^S$ yield a unique sequential energy schedule $(E_1^S, \ldots, E_t^S)$.
    Construct a corresponding parallel energy schedule $(E_1, \ldots, E_t)$ as follows.
    First, set $E_1 = E_1^S$.
    Then, for $1 < r \leq t$, obtain $E_r$ from $E_{r-1}$ by performing one \textit{parallel round} in which each amoebot greedily performs the actions of Definition~\ref{app:def:parallelschedule} if possible.
    We will show $E_r^S \succeq E_r$ for all $1 \leq r \leq t$ by induction on $r$.
    
    Since $E_1 = E_1^S$, we trivially have $E_1^S \succeq E_1$.
    So suppose $r \geq 1$ and for all rounds $1 \leq r' \leq r$ we have $E_{r'}^S \succeq E_{r'}$.
    Considering any amoebot $A_i$, we have $\Delta_{E_r^S}(A_i) \geq \Delta_{E_r}(A_i)$ by the induction hypothesis and want to show that $\Delta_{E_{r+1}^S}(A_i) \geq \Delta_{E_{r+1}}(A_i)$.    
    First suppose the inequality from the induction hypothesis is strict---i.e., $\Delta_{E_r^S}(A_i) > \Delta_{E_r}(A_i)$---meaning strictly more energy has been passed into $A_i, \ldots, A_k$ in the sequential setting than in the parallel one by the start of round $r$.
    No energy is spent in an energy run (Invariant~\ref{app:inv:energyrun:nospend}), so we know $\Delta_{E_{r+1}^S}(A_i) \geq \Delta_{E_r^S}(A_i)$.
    Because all energy transfers pass one unit of energy either from the external energy source to the source amoebot $A_1$ or from a parent $A_i$ to its child $A_{i+1}$, we have that $\Delta_{E_r^S}(A_i) \geq \Delta_{E_r}(A_i) + 1$.
    But by Definition~\ref{app:def:parallelschedule}, an amoebot can receive at most one unit of energy per parallel round, so we have:
    \[\Delta_{E_{r+1}^S}(A_i) \geq \Delta_{E_r^S}(A_i) \geq \Delta_{E_r}(A_i) + 1 \geq \Delta_{E_{r+1}}(A_i).\]
    
    Thus, it remains to consider when $\Delta_{E_r^S}(A_i) = \Delta_{E_r}(A_i)$, meaning the amount of energy passed into $A_i, \ldots, A_k$ is exactly the same in the sequential and parallel settings by the start of round $r$.
    It suffices to show that if $A_i$ receives an energy unit in parallel round $r$, then it also does so in the sequential round $r$.
    We first prove that if $A_i$ receives an energy unit in parallel round $r$, then there is at least one unit of energy for $A_i$ to receive in sequential round $r$.
    If $A_i$ is the source amoebot, this is trivial: the external source of energy is its infinite supply.
    Otherwise, $i > 1$ and we must show $E_r^S(A_{i-1}) \geq 1$.
    We have $\Delta_{E_r^S}(A_i) = \Delta_{E_r}(A_i)$ by supposition and $\Delta_{E_r^S}(A_{i-1}) \geq \Delta_{E_r}(A_{i-1})$ by the induction hypothesis, so
    \begin{align*}
        E_r^S(A_{i-1}) &= \sum_{j=i-1}^k E_r^S(A_j) - \sum_{j=i}^k E_r^S(A_j) \\
        &= \Delta_{E_r^S}(A_{i-1}) - \Delta_{E_r^S}(A_i) \\
        &\geq \Delta_{E_r}(A_{i-1}) - \Delta_{E_r}(A_i) \\
        &= \sum_{j=i-1}^k E_r(A_j) - \sum_{j=i}^k E_r(A_j) \\
        &= E_r(A_{i-1}) \geq 1,
    \end{align*}
    where the final inequality follows from the fact that we presumed $A_i$ receives one energy unit in parallel round $r$ which must come from its parent $A_{i-1}$ since $A_i$ is not a source amoebot.
    
    Next, we show that if $A_i$ receives an energy unit in parallel round $r$, then $E_r^S(A_i) \leq \capacity - 1$; i.e., $A_i$ has enough room in its battery to receive an energy unit during sequential round $r$.
    By supposition we have $\Delta_{E_r^S}(A_i) = \Delta_{E_r}(A_i)$ and by the induction hypothesis we have $\Delta_{E_r^S}(A_{i+1}) \geq \Delta_{E_r}(A_{i+1})$.
    Combining these facts, we have
    \begin{align*}
        E_r^S(A_i) &= \sum_{j=i}^k E_r^S(A_j) - \sum_{j=i+1}^k E_r^S(A_j) \\
        &= \Delta_{E_r^S}(A_i) - \Delta_{E_r^S}(A_{i+1}) \\
        &\leq \Delta_{E_r}(A_i) - \Delta_{E_r}(A_{i+1}) \\
        &= \sum_{j=i}^k E_r(A_j) - \sum_{j=i+1}^k E_r(A_j) \\
        &= E_r(A_i) \leq \capacity - 1,
    \end{align*}
    where the final inequality follows from the following observation about how energy is transferred in a parallel schedule.
    It is easy to see from Definition~\ref{app:def:parallelschedule} that if $j > i$, then $E_{r-1}(A_i) \leq E_{r-1}(A_j)$; i.e., an amoebot can only have as much energy as any one of its descendants in a greedy parallel schedule.
    So if $A_i$ is receiving energy, it cannot have a full battery; otherwise, all of its descendants' batteries must also be full, leaving $A_i$ unable to simultaneously transfer energy to make room for the new energy it is receiving.
    Thus, $A_i$ must have capacity for at least one energy unit at the start of sequential round $r$, as desired.

    Thus, we have shown that if $A_i$ receives a unit of energy in parallel round $r$, then (1) either $i = 1$ or $E_r^S(A_{i-1}) \geq 1$, and (2) $E_r^S(A_i) \leq \capacity - 1$, meaning that at the start of sequential round $r$, there is both an energy unit available to pass to $A_i$ and $A_i$ has sufficient capacity to receive it.
    In other words, either $A_i$ is a source and continuously satisfies $g_\harvestenergy$ or its parent $A_{i-1}$ continuously satisfies $g_\shareenergy$.
    Since no energy is spent in an energy run (Invariant~\ref{app:inv:energyrun:nospend}), additional activations in sequential round $r$ can only increase the amount of energy available to pass to $A_i$ and increase the space available in $A_i.\battery$.
    Thus, by Lemma~\ref{app:lem:round}, $A_i$ must receive at least one energy unit in sequential round $r$, proving that $\Delta_{E_{r+1}^S}(A_i) \geq \Delta_{E_{r+1}}(A_i)$ in all cases.
    Since the choice of $A_i$ was arbitrary, we have shown $E_{r+1}^S \succeq E_{r+1}$.
\end{proof}

To conclude the dominance argument, we bound the number of parallel rounds needed to recharge a path of $k$ amoebots.
Combined with Lemma~\ref{app:lem:dominance}, this gives an upper bound on the worst case number of sequential rounds for any stabilized energy run to do the same.

\begin{lemma} \label{app:lem:paralleltime}
    Let $(E_1, \ldots, E_t)$ be the greedy parallel energy schedule on a simple path $\mathcal{L}$ of $k$ amoebots where $E_1(A_i) = 0$ and $E_t(A_i) = \capacity$ for all amoebots $A_i \in \mathcal{L}$.
    Then $t = k\capacity = \bigo{k}$.
\end{lemma}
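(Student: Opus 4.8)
The plan is to prove the stronger exact statement $t = k\capacity$ by induction on $k$, the number of amoebots on the path $\mathcal{L} = (A_1, \ldots, A_k)$; the asymptotic bound $\bigo{k}$ then follows immediately since $\capacity = \Theta(1)$. For the base case $k = 1$, the lone amoebot $A_1$ is the source and, by Definition~\ref{app:def:parallelschedule}, harvests exactly one unit of energy in every parallel round (it has no child to forward to), so $A_1.\battery$ increases by one per round and reaches $\capacity$ after precisely $\capacity$ rounds.

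For the inductive step, I would first isolate the structural fact about greedy parallel schedules that does the real work: once an amoebot $A_i$ receives its first unit of energy in some round $r_0$, then in every round $r > r_0$ it both receives one unit from $A_{i-1}$ (or harvests one, if $i = 1$) for as long as $A_i.\battery < \capacity$ and forwards one unit to $A_{i+1}$ for as long as $A_{i+1}.\battery < \capacity$. This is a direct check of the two rules of Definition~\ref{app:def:parallelschedule}: having received once, $A_i$ now has positive battery, so in a greedy schedule it must pass energy downstream whenever the downstream neighbor has room, while its upstream supply remains continuously available. Consequently, from round $r_0$ onward $A_i$ behaves toward the suffix path $(A_{i+1}, \ldots, A_k)$ exactly as an external energy source behaves toward a source amoebot. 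Applying this with $i = 1$: the source $A_1$ harvests in round $1$, so starting from round $2$ the suffix $(A_2, \ldots, A_k)$ --- a path of $k - 1$ amoebots that are all empty at the start of round $2$ (none of them could have received energy earlier), fed one unit per round at its head --- evolves exactly like a greedy parallel schedule on $k - 1$ amoebots. By the induction hypothesis it fully recharges in $(k-1)\capacity$ rounds, i.e., $A_2, \ldots, A_k$ are all full at the end of round $1 + (k-1)\capacity$. After that round $A_1$ no longer forwards energy, and since it already holds one unit and harvests one per round, it needs $\capacity - 1$ further rounds to fill. Summing gives $t = 1 + (k-1)\capacity + (\capacity - 1) = k\capacity$, as desired.

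The step I expect to be the main obstacle is making the ``$A_1$ acts as an external source for the suffix'' reduction fully rigorous, i.e., certifying that the induction hypothesis --- which is phrased for a path whose head harvests from an inexhaustible external supply --- really does apply verbatim to $(A_2, \ldots, A_k)$, whose ``supply'' is just $A_1$ delivering a single unit per round. The resolution is that in a greedy schedule a source never does anything other than emit one unit per round anyway, so the two situations induce identical dynamics on the suffix; the clean way to record this is a short auxiliary claim that the restriction of the schedule to $(A_2, \ldots, A_k)$, re-indexed to start at round $2$, is itself a greedy parallel schedule with all-empty initial configuration, together with the mild timing remark that $A_2$ receives nothing before round $2$. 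Everything else --- the base case and the closing arithmetic --- is routine.
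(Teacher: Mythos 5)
Your proposal is correct and follows essentially the same route as the paper's proof: induction on $k$ with the same base case, the same key observation that once $A_1$ first harvests it acts as a continuous external source for the suffix $(A_2,\ldots,A_k)$, and the same closing arithmetic $1 + (k-1)\capacity + (\capacity-1) = k\capacity$. The extra care you suggest for formalizing the "source-for-the-suffix" reduction is a refinement of, not a departure from, the paper's argument.
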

\begin{proof}
    Argue by induction on $k$, the number of amoebots in path $\mathcal{L}$.
    If $k = 1$, then $A_1 = A_k$ is the source amoebot that harvests one unit of energy per parallel round from the external energy source by Definition~\ref{app:def:parallelschedule}.
    Since $A_1$ has no children to which it may pass energy, it is easy to see that it will harvest $\capacity$ energy in exactly $\capacity = \Theta(1)$ parallel rounds.
    
    Now suppose $k > 1$ and that any path of $j \in \{1, \ldots, k-1\}$ amoebots fully recharges in $j\capacity$ parallel rounds.
    Once an amoebot $A_i$ has received energy for the first time, it follows from Definition~\ref{app:def:parallelschedule} that $A_i$ will receive a unit of energy from $A_{i-1}$ (or the external energy source, in the case that $i = 1$) in every subsequent parallel round until $A_i.\battery = \capacity$.
    Similarly, Definition~\ref{app:def:parallelschedule} ensures that $A_i$ will pass a unit of energy to $A_{i+1}$ in every subsequent parallel round until $A_{i+1}.\battery = \capacity$.
    Thus, once $A_i$ receives energy for the first time, $A_i$ effectively acts as an external energy source for the remaining amoebots $A_{i+1}, \ldots, A_k$.
    
    The source amoebot $A_1$ first harvests energy from the external energy source in parallel round $1$ and thus acts as a continuous energy source for $A_2, \ldots, A_k$ in all subsequent rounds.
    By the induction hypothesis, we know $A_2, \ldots, A_k$ will fully recharge in $(k-1)\capacity$ parallel rounds, after which $A_1$ will no longer pass energy to $A_2$.
    The source amoebot $A_1$ harvests one energy unit from the external energy source per parallel round and already has $A_1.\battery = 1$, so in an additional $\capacity - 1$ parallel rounds we have $A_1.\battery = \capacity$.
    Therefore, the path $A_1, \ldots, A_k$ fully recharges in $1 + (k-1)\capacity + \capacity - 1 = k\capacity = \bigo{k}$ parallel rounds, as required.
\end{proof}

Combining the lemmas of this section yields the following bound on the recharge time.

\begin{lemma} \label{app:lem:rechargetime}
    After at most $\bigo{\numAmoebots}$ rounds of any uninterrupted, stabilized energy run of $\sched^\demand$, all $\numAmoebots$ amoebots have full batteries.
\end{lemma}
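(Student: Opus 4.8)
The plan is to obtain this bound by composing the three preceding lemmas. First I would unpack what a \emph{stabilized} energy run is: by definition it begins in a configuration in which every amoebot already belongs to a stable tree of $\forest$, and by Invariant~\ref{app:inv:energyrun:stabletrees} the forest structure is frozen for the remainder of the run (no amoebot moves, prunes, or is adopted), so the stable trees neither merge nor split and may be analyzed independently. Since the worst case only gets worse as initial batteries get emptier, I would assume without loss of generality that every amoebot starts the run with an empty battery, so that it suffices to bound the recharge time of a single arbitrary stable tree $\tree \in \forest$, noting $|\tree| \le \numAmoebots$.

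Next I would chain the reductions. By Lemma~\ref{app:lem:pathrecharge}, the worst-case recharge time of $\tree$ is at most that of a simple path $\mathcal{L}$ of $|\tree|$ amoebots with empty initial batteries. By Lemma~\ref{app:lem:dominance}, any activation sequence realizing an uninterrupted stabilized energy run on $\mathcal{L}$ is dominated round-for-round by a greedy parallel energy schedule on $\mathcal{L}$ (the sequential suffix-sums $\Delta$ are pointwise at least the parallel ones at every round); hence the sequential run has fully recharged $\mathcal{L}$ no later than the parallel schedule does. By Lemma~\ref{app:lem:paralleltime}, the greedy parallel schedule fully recharges a $k$-amoebot path in exactly $k\capacity = \bigo{k}$ parallel rounds, using $\capacity = \Theta(1)$. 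Composing these, the recharge time of $\tree$ is $\bigo{|\tree|} = \bigo{\numAmoebots}$. Because the frozen trees of $\forest$ do not interact during a stabilized run, they recharge in parallel and the system-wide bound is the maximum over all trees, still $\bigo{\numAmoebots}$.

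I do not anticipate a genuine obstacle here, since Lemmas~\ref{app:lem:pathrecharge}, \ref{app:lem:dominance}, and~\ref{app:lem:paralleltime} do the real work and this lemma is essentially their composition. The only points requiring care are bookkeeping ones: an interrupting $\alpha_i^\demand$ execution merely ends a run early and so cannot hurt the bound (hence restricting to uninterrupted runs is w.l.o.g.), and Lemma~\ref{app:lem:energyblocksfinite} guarantees the underlying sequential run is finite, so the dominance induction actually reaches a terminal, fully recharged configuration at which the comparison with the parallel schedule can be read off.
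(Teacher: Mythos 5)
Your proposal is correct and follows essentially the same route as the paper's proof: restrict to a single stable tree with empty batteries (justified by Invariant~\ref{app:inv:energyrun:stabletrees} and the worst-case argument), reduce to a simple path via Lemma~\ref{app:lem:pathrecharge}, and compose the dominance argument (Lemma~\ref{app:lem:dominance}) with the parallel recharge bound $k\capacity = \bigo{k}$ of Lemma~\ref{app:lem:paralleltime}. Your additional bookkeeping remarks (uninterrupted runs w.l.o.g.\ and finiteness via Lemma~\ref{app:lem:energyblocksfinite}) are consistent with how the paper frames the surrounding lemmas, so no gaps remain.
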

\begin{proof}
    Consider any stabilized energy run of $\sched^\demand$.
    By definition, this energy run starts in a configuration where all amoebots belong to stable trees, and by Invariant~\ref{app:inv:energyrun:stabletrees} the structure of $\forest$ will not change throughout this energy run.
    So consider any (stable) tree $\tree \in \forest$ and suppose, in the worst-case, that all amoebots have initially empty batteries.
    By Lemma~\ref{app:lem:pathrecharge}, the recharge time for $\tree$ is at most the recharge time for a path $\mathcal{L}$ of $|\tree|$ amoebots.
    Any activation sequence representing a recharge process for $\mathcal{L}$ runs at least as fast as a greedy parallel energy schedule for $\mathcal{L}$ (Lemma~\ref{app:lem:dominance}), and the latter must fully recharge $\mathcal{L}$ in $\bigo{|\mathcal{L}|} = \bigo{|\tree|}$ rounds (Lemma~\ref{app:lem:paralleltime}).
    Since $\tree$ contains at most $\numAmoebots$ amoebots, the lemma follows.
\end{proof}

We can now prove Theorem~\ref{thm:main}, concluding our analysis.

\begin{proof}[Proof of Theorem~\ref{thm:main}]
    As in the statement of Theorem~\ref{thm:main}, consider any energy-compatible amoebot algorithm $\alg$ and demand function $\demand : \alg \to \{1, 2, \ldots, \capacity\}$, and let $\alg^\demand$ be the algorithm produced from $\alg$ and $\demand$ by the energy distribution framework.
    Let $C_0$ be any (legal) connected initial configuration for $\alg$ and let $C_0^\demand$ be its extension for $\alg^\demand$ that designates at least one source amoebot and adds the energy distribution variables with their initial values (Table~\ref{tab:frameworkvariables}) to all amoebots.
    Finally, consider any sequential execution $\sched^\demand$ of $\alg^\demand$ starting in $C_0^\demand$.
    Let $\sched^\demand_\alpha$ be its subsequence of $\alpha_i^\demand$ action executions and $\sched_\alpha$ be the corresponding sequence of $\alpha_i$ action executions.
    By Lemma~\ref{app:lem:equivalence}, $\sched_\alpha$ is a valid sequential execution of the original algorithm $\alg$.
    Since $\alg$ is assumed to be energy-compatible, its sequential executions always terminate.
    Thus, $\sched_\alpha$ is finite and, by extension, so is $\sched^\demand_\alpha$.
    This implies that the overall execution $\sched^\demand$ contains at most a finite number of distinct energy runs.
    Each of these energy runs is finite by Lemmas~\ref{app:lem:forestblocksfinite} and~\ref{app:lem:energyblocksfinite}, so we conclude that $\sched^\demand$ in total is finite.
    
    Let $C^\demand$ be the terminating configuration of $\sched^\demand$, but suppose to the contrary that there does not exist a sequential execution of $\alg$ starting in $C_0$ that terminates in the configuration $C$ obtained from $C^\demand$ by removing the energy distribution variables.
    We have already shown that $\sched_\alpha$ is a valid sequential execution of $\alg$ starting in $C_0$.
    Moreover, $\alg^\demand$ only moves amoebots and modifies variables of algorithm $\alg$ during $\alpha_i^\demand$ executions, so all amoebot movements and updates to variables of algorithm $\alg$ are identical in $\sched_\alpha$ and $\sched^\demand$.
    Thus, $\sched_\alpha$ must reach configuration $C$ but---for the sake of contradiction---cannot terminate there; i.e., there must exist an amoebot $A$ for which some action $\alpha_i$ is enabled in $C$ but all amoebots are disabled in $C^\demand$; in particular, the corresponding action $\alpha_i^\demand$ is disabled for $A$ in $C^\demand$.

    The guard $g_i^\demand$ of action $\alpha_i^\demand$ requires three properties: $A$ satisfies guard $g_i$ of action $\alpha_i$, $A$ and its neighbors are not \idle\ or \pruning, and $A$ has at least $\demand(\alpha_i)$ energy.
    We know $A$ satisfies $g_i$ in $C^\demand$ because $\alpha_i$ is enabled for $A$ in $C$.
    No amoebot in $C^\demand$ can be \idle, since the connectivity of $C^\demand$ (Invariant~\ref{app:inv:reachable:connected}) implies that some amoebot would satisfy $g_\askgrowth$ or $g_\growforest$ and thus be enabled by $\alpha_\energydist$, contradicting $C^\demand$ as a terminating configuration.
    Similarly, no amoebot can be \pruning\ in $C^\demand$ since this amoebot would satisfy $g_\getpruned$.
    So suppose that in $C^\demand$, $A.\battery < \demand(\alpha_i) \leq \capacity$.
    Then $A$ cannot be a source, since it would satisfy $g_\harvestenergy$.
    So $A$ must be \xactive, \asking, or \growing, all of which imply $A$ has a parent in forest $\forest$.
    The connectivity of $C^\demand$ (Invariant~\ref{app:inv:reachable:connected}) implies that some ancestor of $A$ satisfies $g_\harvestenergy$ or $g_\shareenergy$: either the parent of $A$ satisfies $g_\shareenergy$, or the parent of $A$ has insufficient energy to share but the grandparent of $A$ satisfies $g_\shareenergy$, and so on up to the source root of the tree which, if it does not have sufficient energy to share, must satisfy $g_\harvestenergy$.
    Therefore, we reach a contradiction in all cases, proving that if $C^\demand$ is a terminating configuration for $\sched^\demand$, then $C$ is a terminating configuration for $\sched_\alpha$ and thus there exists a sequential execution of $\alg$ starting in $C_0$ that terminates in $C$.

    We conclude by proving the runtime overhead bound.
    Let $\algruntime$ be the maximum number of action executions in any sequential execution of $\alg$ on $\numAmoebots$ amoebots.
    We know $\algruntime$ is finite because $\alg$ is energy-compatible.
    By Lemma~\ref{app:lem:equivalence}, any sequential execution of $\alg^\demand$ contains at most $\algruntime + 1$ energy runs, and each energy run terminates in at most $\bigo{\numAmoebots^2}$ rounds by Lemmas~\ref{app:lem:stabletime} and~\ref{app:lem:rechargetime}.
    Therefore, we conclude that any sequential execution of $\alg^\demand$ terminates in at most $\bigo{\numAmoebots^2} \cdot (\algruntime + 1) = \bigo{\numAmoebots^2\algruntime}$ rounds.
\end{proof}

\section{Omitted Analysis of Concurrency-Compatibility} \label{app:asyncproofs}

This section contains the technical material omitted from Section~\ref{sec:concurrency} due to space constraints.

\begin{algorithm}[t]
    \caption{Expansion-Robust Variant $\alg^E$ of Algorithm $\alg$ for Amoebot $A$} \label{alg:expandrobust}
    \begin{algorithmic}[1]
        \Statex \textbf{Input}: Algorithm $\alg = \{[\alpha_i : g_i \to ops_i] : i \in \{1, \ldots, m\}\}$ satisfying Conventions~\ref{conv:valid} and~\ref{conv:phases}.
        \State Set $\alpha_0^E : (\exists$ port $p$ of $A : A.\xflag_p = \true) \to$ \Write$(\bot, \xflag_p, \false)$.
        \For {each action $[\alpha_i : g_i \to ops_i] \in \alg$}
            \State Set $g_i^E \gets g_i$ with $N(A)$ replaced by $N^E(A)$ and connections defined w.r.t.\ $N^E(A)$.
            \State Set $ops_i^E \gets$ ``Do:
            \Indent
                \For {each port $p$ of $A$} \Write$(\bot, \xflag_p, \false)$.  \Comment{Reset own expand flags.}  \label{alg:expandrobust:resetown}
                \EndFor
                \For {each unique neighbor $B \in \Connected()$}
                    \For{each port $p$ of $B$} \Write$(B, \xflag_p, \false)$.  \Comment{Reset neighbors' expand flags.}  \label{alg:expandrobust:resetnbr}
                    \EndFor
                \EndFor
                \State Execute each operation of $ops_i$ with connections defined w.r.t.\ $N^E(A)$.
                \If {a \Pull\ or \Push\ operation was executed with neighbor $B$}
                    \For {each new port $p$ of $A$ not connected to $B$} \Write$(\bot, \xflag_p, \true)$.  \label{alg:expandrobust:handover1}
                    \EndFor
                    \For {each new port $p$ of $B$ not connected to $A$} \Write$(B, \xflag_p, \true)$.  \label{alg:expandrobust:handover2}
                    \EndFor
                \ElsIf {an \Expand\ operation was successfully executed}
                    \For {each new port $p$ of $A$} \Write$(\bot, \xflag_p, \true)$.  \label{alg:expandrobust:expand}
                    \EndFor
                \ElsIf {an \Expand\ operation failed in its execution} undo $ops_i$.''
                \label{alg:expandrobust:undo}
                \EndIf
            \EndIndent
        \EndFor
        \State \Return $\alg^E = \{[\alpha_i^E : g_i^E \to ops_i^E] : i \in \{0, \ldots, m\}\}$.
    \end{algorithmic}
\end{algorithm}

\begin{lemma} \label{app:lem:expandcorrespond}
    If amoebot algorithm $\alg$ is expansion-corresponding, it is also expansion-robust.
\end{lemma}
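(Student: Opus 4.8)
The plan is to verify the two conditions in the definition of expansion-robustness (Definition~\ref{def:expandrobust}) directly, leveraging the fact that expansion-correspondence (Definition~\ref{def:expandcorrespond}) puts the actions $\alpha_{i\neq 0}^E$ of $\alg^E$ in lockstep with the actions $\alpha_i$ of $\alg$. The only action of $\alg^E$ with no counterpart in $\alg$ is the bookkeeping action $\alpha_0^E$, whose sole effect is to clear an expand flag; this observation will carry both conditions.

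For the termination condition, I would argue by contradiction: assume every sequential execution of $\alg$ from $C_0$ terminates, yet some sequential execution $\sched^E$ of $\alg^E$ from $C_0^E$ is infinite. Deleting the $\alpha_0^E$ executions from $\sched^E$ yields a sequence of $\alpha_{i\neq 0}^E$ executions; by expansion-correspondence (part 1 guarantees each such action is enabled exactly when its $\alg$-counterpart is enabled, and part 2 guarantees the executions coincide modulo expand flags), this sequence induces a legitimate sequential execution $\sched$ of $\alg$ from $C_0$. Since $\sched$ terminates, $\sched^E$ contains only finitely many $\alpha_{i\neq 0}^E$ executions, so all but finitely many executions in $\sched^E$ are $\alpha_0^E$ executions. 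But after the last $\alpha_{i\neq 0}^E$ execution no new expand flag is ever set to \true\ (only the movement blocks of $\alpha_{i\neq 0}^E$ set flags), and each $\alpha_0^E$ execution clears at least one flag; with finitely many amoebots and a constant number of ports each, only finitely many $\alpha_0^E$ executions remain, contradicting infiniteness of $\sched^E$.

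For the correctness condition, I would reuse the correspondence $\sched^E \leftrightarrow \sched$ just constructed: since only the $\alpha_{i\neq 0}^E$ executions move amoebots or touch variables of $\alg$ (the $\alpha_0^E$ executions touch only expand flags, which are not variables of $\alg$), any configuration reached by $\sched^E$ agrees with the corresponding configuration of $\sched$ on positions and $\alg$-variables. In particular, if $\sched^E$ terminates in $C^E$, then $\sched$ terminates in the configuration $C$ obtained by stripping expand flags, giving a sequential execution of $\alg$ from $C_0$ ending in $C$, as required.

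The main obstacle I anticipate is purely bookkeeping rather than conceptual: being precise about what ``the sequence of $\alpha_{i\neq 0}^E$ executions induces a valid sequential execution of $\alg$'' means — in particular, checking that after each prefix the intermediate configurations of $\alg^E$ and $\alg$ agree on everything except expand flags, so that the next $\alpha_{i\neq 0}^E$ action really is enabled for the same amoebot under $N(A)$ in $\alg$ — and invoking the two clauses of Definition~\ref{def:expandcorrespond} at the right granularity (per action, per amoebot). Once that correspondence is stated cleanly, both the finite-$\alpha_0^E$ argument and the configuration-equivalence argument are short.
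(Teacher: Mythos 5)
Your proposal is correct and follows essentially the same route as the paper's proof: termination by contradiction using the fact that only finitely many $\alpha_0^E$ executions (which merely clear expand flags over finitely many amoebots and ports) can follow the last $\alpha_{i \neq 0}^E$ execution, and correctness by noting that only $\alpha_{i \neq 0}^E$ executions move amoebots or modify variables of $\alg$, so the terminating configurations agree modulo expand flags. The bookkeeping concern you raise about the induced execution of $\alg$ is handled in the paper at exactly the same level of detail, by directly invoking expansion-correspondence.
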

\begin{proof}
    To prove termination, suppose to the contrary that all sequential executions of $\alg$ starting in $C_0$ terminate, but there exists some infinite sequential execution $\sched^E$ of $\alg^E$ starting in $C_0^E$.
    Algorithm $\alg$ is expansion-corresponding, so there is a sequential execution $\sched$ that is identical to $\sched^E$, modulo executions of $\alpha_0^E$.
    Execution $\sched$ terminates by supposition, so $\sched^E$ must contain an infinite number of $\alpha_0^E$ executions after its final $\alpha_{i \neq 0}^E$ execution.
    But $\alpha_0^E$ executions only reset expand flags, and there are only a finite number of amoebots and a constant number of expand flags per amoebot to reset, a contradiction.

    Correctness follows from the same observation.
    Only $\alpha_{i \neq 0}^E$ executions move amoebots and modify variables of $\alg$.
    Since every sequential execution $\sched^E$ of $\alg^E$ starting in $C_0^E$ represents an identical sequential execution $\sched$ of $\alg$ starting in $C_0$ (after removing the $\alpha_0^E$ executions), and since $\sched^E$ terminates whenever $\sched$ terminates by the above argument, we conclude that they must terminate in configurations that are identical, modulo expand flags.
\end{proof}

Before proving that the energy distribution framework preserves expansion-correspondence, we need one helper lemma characterizing established neighbors in $\alg^\demand$.

\begin{lemma} \label{app:lem:established}
    During an execution of $(\alg^\demand)^E$, if an amoebot $A$ has a neighbor $B \in N(A)$ that is \idle, \pruning, or a child of $A$, then $B \in N^E(A)$.
\end{lemma}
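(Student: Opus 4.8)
The plan is to prove the three cases essentially independently, in each case reducing to the single observation that an unestablished neighbor is one that has \emph{recently expanded} toward $A$, and then ruling that situation out. First I would recall the semantics of expand flags from Algorithm~\ref{alg:expandrobust}: $B \in N(A) \setminus N^E(A)$ holds exactly when some port $p$ of $B$ facing $A$ has $B.\xflag_p = \true$, and this can only arise if $B$ expanded to reveal a new port toward $A$ after both $B$'s and $A$'s most recent action executions. Such an expansion is caused only by $B$ performing an \Expand\ or a \Push, or by a third amoebot issuing a \Pull\ that forces $B$ to expand.

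Next I would use that every movement in $(\alg^\demand)^E$ occurs inside an $(\alpha_i^\demand)^E$ execution, and that the energy-constrained guard $g_i^\demand$ (inherited by $g_i^E$) forbids the executing amoebot and all of its established neighbors from being \idle\ or \pruning. Hence, at the instant $B$ enters $N(A)$, $B$ is neither \idle\ nor \pruning, whether it initiated the move (\Expand/\Push) or merely participated in it (\Pull). To finish the \idle/\pruning\ part I then need that $B$ cannot afterward become \idle\ or \pruning\ while keeping an expand flag toward $A$ raised: any such transition is performed by a \textsc{Prune} call or by the \Push-transformation of Algorithm~\ref{alg:framework} (Lines~\ref{alg:framework:alphai_prune} and~\ref{alg:framework:push_start}--\ref{alg:framework:push_end}), and each of these happens within an action execution whose operations reset the expand flags of all of $B$'s neighbors, in particular those facing $A$ (Algorithm~\ref{alg:expandrobust}, Line~\ref{alg:expandrobust:resetnbr}). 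So an \idle\ or \pruning\ neighbor of $A$ is always established.

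For the child case I would argue that $B$ became a child of $A$ during a $g_\growforest$-supported execution of $\alpha_\energydist^E$, during which $A$ resets every expand flag of $B$ (again Algorithm~\ref{alg:expandrobust}, Line~\ref{alg:expandrobust:resetnbr}). It then remains to check that $B$ cannot re-raise a flag toward $A$ without first ceasing to be $A$'s child: $B$ raises such a flag only by expanding via \Expand\ or \Push, both of which, under the framework's transformation, clear $B.\parent$ (Lines~\ref{alg:framework:alphai_prune} and~\ref{alg:framework:push_start}--\ref{alg:framework:push_end}), and a \Pull\ forcing $B$ to expand is issued by a neighbor whose action likewise clears the pertinent parent pointers. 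Thus as long as $B$ is a child of $A$, its flag toward $A$ stays down, so $B \in N^E(A)$.

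I expect the main obstacle to be the bookkeeping in these movement cases: making airtight the claim that \emph{every} event that turns a neighbor \idle/\pruning, or that causes it to re-expand toward $A$, is packaged inside an action execution resetting the relevant expand flags, and confirming that the handover transformations in Algorithm~\ref{alg:framework} sever the parent relation before any such re-expansion could matter. The rest is a direct appeal to the guards $g_i^\demand$ and to the flag-reset lines of Algorithm~\ref{alg:expandrobust}.
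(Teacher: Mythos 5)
Your proposal follows essentially the same route as the paper's proof: unestablished neighbors can only arise from a recent expansion (an \Expand\ or \Push\ by $B$, or a \Pull\ dragging $B$), the guards of the $(\alpha_i^\demand)^E$ actions rule out \idle\ and \pruning\ at the moment $B$ enters $N(A)$, any later transition of $B$ to \idle\ or \pruning\ is packaged inside an action execution that resets $B$'s expand flags, and the child case rests on the flag reset performed during $A$'s $g_\growforest$-supported adoption of $B$. One caveat concerns the extra justification you add for why a child's flag toward $A$ stays down (a step the paper simply asserts): your auxiliary claims are not accurate about the framework. An \Expand\ movement does \emph{not} clear $B.\parent$ (Lines~\ref{alg:framework:alphai_prune} and~\ref{alg:framework:push_start}--\ref{alg:framework:push_end} cover only \Contract/\Pull\ and \Push), and a \Pull\ issued by a third amoebot clears the parent pointers only of the \emph{puller's} children, so a child $B$ of $A$ can indeed expand, or be pulled, while remaining $A$'s child, and it then raises \true\ flags on its newly revealed ports, possibly including one facing $A$. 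The conclusion nevertheless stands, for the reason the paper's one-line assertion implicitly relies on: membership in $N^E(A)$ only requires \emph{some} port of $B$ connected to $A$ with a \false\ flag, and the pre-existing port through which $B$ is $A$'s child keeps its flag \false\ (it is reset at adoption, and again at Line~\ref{alg:expandrobust:resetown} of any action $B$ itself executes, while only newly revealed ports are ever flagged); the node carrying that port is not vacated as long as $B$ remains a child, since every movement that would vacate it first clears $B.\parent$. Arguing via this observation, rather than via parent-pointer clearing on \Expand/\Pull, closes the bookkeeping you flagged as the main obstacle.
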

\begin{proof}
    Any neighbor $B \in N(A) \setminus N^E(A)$ expanded into $N(A)$ during an \Expand\ operation by $B$, a \Push\ operation by $B$, or a \Pull\ operation by some other amoebot pulling $B$.
    Any movement in $(\alg^\demand)^E$ occurs in an $(\alpha_i^\demand)^E$ execution, whose guard requires that both the executing amoebot and all its established neighbors are not \idle\ or \pruning.
    Thus, regardless of whether $B$ is initiating the movement (an \Expand\ or \Push) or is participating in it (a \Pull), $B$ cannot be \idle\ or \pruning\ when it enters $N(A)$.
    Any subsequent action execution that could make $B$ \idle\ or \pruning\ must also reset its expand flags (Algorithm~\ref{alg:expandrobust}, Line~\ref{alg:expandrobust:resetnbr}).
    So there are never \idle\ or \pruning\ neighbors in $N(A) \setminus N^E(A)$.

    Next consider any child $B$ of $A$.
    Amoebot $B$ became a child of $A$ when $A$ adopted it during a $g_\growforest$-supported execution of $\alpha_\energydist^E$.
    During this execution, $A$ reset all expand flags of $B$ (Algorithm~\ref{alg:expandrobust}, Line~\ref{alg:expandrobust:resetnbr}).
    As long as $B$ is a child of $A$, its expand flags facing $A$ remain reset.
    Thus, $B \in N^E(A)$.
\end{proof}

We can now prove the main lemma of this section.

\begin{lemma} \label{app:lem:edfconcurrency}
    For any energy-compatible, expansion-corresponding algorithm $\alg$ and demand function $\demand : \alg \to \{1, 2, \ldots, \capacity\}$, the algorithm $\alg^\demand$ produced from $\alg$ and $\demand$ by the energy distribution framework is concurrency-compatible.
\end{lemma}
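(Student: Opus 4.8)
The plan is to check the three defining requirements of concurrency-compatibility for $\alg^\demand$ one at a time: that every unfair sequential execution of $\alg^\demand$ terminates, and that $\alg^\demand$ satisfies the validity, phase structure, and expansion-robustness conventions. Termination comes for free from Theorem~\ref{thm:main}, since $\alg$ is energy-compatible. Validity is a short check: each original action $\alpha_i$ is valid by supposition, and its guard $g_i$ is a conjunct of $g_i^\demand$, so whenever $\alpha_i^\demand$ is enabled the compute and move phases of $\alpha_i$ succeed in isolation; the only operations the framework newly introduces into the $\alpha_i^\demand$ and $\alpha_\energydist$ actions are \Connected\ operations (which never fail) and \Read/\Write\ operations on $A$ itself or on existing neighbors, all of which succeed when no other amoebot is active. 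The phase structure convention is equally immediate: every $\alpha_i^\demand$ consists of a compute phase followed by at most one movement operation performed last, $\alpha_\energydist$ performs no movement, and the framework adds no \Lock\ or \Unlock\ operations.

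The real content is expansion-robustness, and I would not attack it directly. Instead I would invoke Lemma~\ref{app:lem:expandcorrespond} and prove only that $\alg^\demand$ is expansion-corresponding, i.e., that it meets the two conditions of Definition~\ref{def:expandcorrespond}. The key tool is Lemma~\ref{app:lem:established}: any neighbor of $A$ that is \idle, \pruning, or a child of $A$ always lies in the established neighborhood $N^E(A)$, so the ``new'' neighbors in $N(A)\setminus N^E(A)$ are never of these kinds. With that in hand, the enabledness condition becomes a case analysis. For $(\alpha_i^\demand)^E$: the guard $g_i^\demand$ is a conjunction of (i) the original guard $g_i$ evaluated against the neighborhood, for which expansion-correspondence of $\alg$ gives that $g_i$ holding w.r.t.\ $N^E(A)$ implies it holds w.r.t.\ $N(A)$; (ii) the requirement that $A$ and all its neighbors are not \idle\ or \pruning, for which Lemma~\ref{app:lem:established} says that the absence of such neighbors in $N^E(A)$ forces their absence in all of $N(A)$; and (iii) the neighborhood-independent battery condition. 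For $\alpha_\energydist^E$: every neighbor-dependent disjunct of $g_\energydist$ is existential over \idle\ neighbors, \asking\ children, or children with non-full batteries, so since $N^E(A)\subseteq N(A)$, satisfaction w.r.t.\ $N^E(A)$ yields satisfaction w.r.t.\ $N(A)$.

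For the execution-equality condition I would again go block by block. The compute and move phases of $(\alpha_i^\demand)^E$ emulate $\alpha_i$ w.r.t.\ the neighborhood, so by expansion-correspondence of $\alg$ they coincide with those of $\alpha_i^\demand$; the \textsc{Prune} calls and \parent\ updates triggered by a movement act only on $A$'s children, which agree by Lemma~\ref{app:lem:established}. In $\alpha_\energydist^E$, the \getpruned\ block and the pruning inside \growforest\ likewise act on children; \growforest's adoption step acts on \idle\ neighbors, and Lemma~\ref{app:lem:established} guarantees the same (possibly empty) set is adopted whether computed from $N^E(A)$ or $N(A)$, with the adoptees' expand flags reset so that they stay established; the update of \asking\ children to \growing\ again touches children; \harvestenergy\ is neighborhood-free; and \shareenergy\ feeds a child with non-full battery, where the candidate set is identical in $N^E(A)$ and $N(A)$. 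Both conditions verified, $\alg^\demand$ is expansion-corresponding, hence expansion-robust by Lemma~\ref{app:lem:expandcorrespond}, and the lemma follows.

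The main obstacle, and the reason the argument rests on Lemma~\ref{app:lem:established} rather than a one-line syntactic observation, is that a freshly expanded neighbor $B\in N(A)\setminus N^E(A)$ is invisible to $A$ in $(\alg^\demand)^E$; I therefore have to exclude the possibility that such a $B$ is \idle, \pruning, or a child of $A$ at that moment, since otherwise $A$ would be blocked by, prune, feed, or be fed by $B$ differently in $\alg^\demand$ than in $(\alg^\demand)^E$. Establishing that exclusion is exactly the job of Lemma~\ref{app:lem:established}, which uses that the guard $g_i^\demand$ forbids any movement involving an \idle\ or \pruning\ amoebot and that adoption resets the adoptee's expand flags; granting that lemma, the remainder is a mechanical pass through the five blocks of $\alpha_\energydist$ and the transformed actions $\alpha_i^\demand$.
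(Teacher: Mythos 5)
Your proposal is correct and follows essentially the same route as the paper's proof: termination via Theorem~\ref{thm:main}, short checks of validity and phase structure, and then expansion-correspondence (via Lemma~\ref{app:lem:expandcorrespond}) established by the same guard/execution case analysis resting on Lemma~\ref{app:lem:established} and $N^E(A)\subseteq N(A)$. No gaps to report.
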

\begin{proof}
    By Theorem~\ref{thm:main}, we know that every sequential execution of $\alg^\demand$ terminates.
    It remains to show that $\alg^\demand$ satisfies the validity, phase structure, and expansion-robustness conventions.
    
    By supposition, every action $\alpha_i \in \alg$ in the original algorithm is valid, i.e., its execution is successful whenever it is enabled and all other amoebots are inactive.
    Since the guard $g_i$ of $\alpha_i$ is a necessary condition for the energy-constrained version $\alpha_i^\demand$ to be enabled, we know this validity carries over to the compute and movement phases of $\alpha_i$.
    The only new operations added by the energy distribution framework in the $\alpha_i^\demand$ and $\alpha_\energydist$ actions are \Connected\ operations (which never fail) and \Read\ and \Write\ operations involving existing neighbors.
    All of these must succeed, so every action of $\alg^\demand$ is valid.

    It is easy to see that $\alg^\demand$ satisfies the phase structure convention.
    Its only movements are in the $\alpha_i^\demand$ actions, each of which has at most one movement operation that it executes last.
    Moreover, the energy distribution framework does not add any \Lock\ or \Unlock\ operations.

    It remains to show $\alg^\demand$ is expansion-robust, and by Lemma~\ref{app:lem:expandcorrespond}, it suffices to show $\alg^\demand$ is expansion-corresponding.
    We first show that if some action of $(\alg^\demand)^E$ is enabled for an amoebot $A$ w.r.t.\ $N^E(A)$, then the corresponding action of $\alg^\demand$ is enabled for $A$ w.r.t.\ $N(A)$.
    We may safely consider only the guard conditions that depend on an amoebot's neighborhood; all others evaluate identically regardless of neighborhood.
    \begin{itemize}
        \item If $(\alpha_i^\demand)^E$ is enabled for an amoebot $A$, then $A$ must satisfy $g_i^E$---i.e., $A$ satisfies the guard $g_i$ of $\alpha_i \in \alg$ w.r.t.\ $N^E(A)$---and neither $A$ nor its established neighbors can be \idle\ or \pruning.
        Algorithm $\alg$ is expansion-corresponding by supposition, so this implies that $A$ must satisfy $g_i$ w.r.t.\ $N(A)$ as well.
        Moreover, Lemma~\ref{app:lem:established} ensures that if there are no \idle\ or \pruning\ neighbors in $N^E(A)$, there are none in $N(A)$ either.

        \item Suppose $\alpha_\energydist^E$ is enabled for an amoebot $A$ because $A$ has an \idle\ neighbor or an \asking\ child $B \in N^E(A)$, a condition in both $g_\askgrowth$ and $g_\growforest$.
        We know $N^E(A) \subseteq N(A)$, so $\alpha_\energydist$ must be enabled for $A$ w.r.t.\ $N(A)$ as well.

        \item Suppose $\alpha_\energydist^E$ is enabled for an amoebot $A$ because $A$ has a child $B \in N^E(A)$ whose battery is not full, a condition in $g_\shareenergy$.
        By the same argument as above, we have $N^E(A) \subseteq N(A)$, so $\alpha_\energydist$ must be enabled for $A$ w.r.t.\ $N(A)$ as well.
    \end{itemize}

    Finally, we show that the executions of any action of $(\alg^\demand)^E$ w.r.t.\ $N^E(A)$ and the corresponding action of $\alg^\demand$ w.r.t.\ $N(A)$ by the same amoebot $A$ are identical.
    We may safely focus only on the parts of action executions that depend on or interact with an amoebot's neighbors; all others execute identically regardless of neighborhood.
    \begin{itemize}
        \item If $A$ executes an $(\alpha_i^\demand)^E$ action, it emulates the operations of $\alpha_i \in \alg$ w.r.t.\ $N^E(A)$.
        But algorithm $\alg$ is expansion-corresponding by supposition, which immediately implies that an execution of $\alpha_i$ w.r.t.\ $N(A)$ is identical.

        \item If $A$ executes an $(\alpha_i^\demand)^E$ action or the \getpruned\ block of $\alpha_\energydist^E$, it may update its children's \xstate\ and \parent\ variables during $\textsc{Prune}(\,)$.
        By Lemma~\ref{app:lem:established}, any child of $A$ in $N(A)$ is also in $N^E(A)$, so the same children are pruned.

        \item If $A$ executes the \growforest\ block of $\alpha_\energydist^E$, it adopts all its \idle\ neighbors as an \xactive\ children.
        Any \idle\ neighbor $B \in N^E(A)$ that $A$ adopts must also be adopted when $A$ executes $\alpha_\energydist$ since $N^E(A) \subseteq N(A)$.
        But if there are no \idle\ neighbors in $N^E(A)$ for $A$ to adopt, there cannot be any in $N(A)$ either by Lemma~\ref{app:lem:established}.
        Thus, either the same \idle\ neighbors or no neighbors are adopted.

        \item If $A$ executes the \growforest\ block of $\alpha_\energydist^E$, it updates any \asking\ children to \growing.
        By Lemma~\ref{app:lem:established}, any child of $A$ in $N(A)$ is also in $N^E(A)$, so the same children are updated in $\alpha_\energydist$.

        \item If $A$ executes the \shareenergy\ block of $\alpha_\energydist^E$, it transfers an energy unit to one of its children $B \in N^E(A)$ whose battery is not full.
        We know $N^E(A) \subseteq N(A)$, so $B$ is also a possible recipient of this energy in $\alpha_\energydist$. \qedhere
    \end{itemize}
\end{proof}

\else\fi

\end{document}